\tikzset{>=to} 
\tikzset{squig/.style={->, line join=round, decorate,
                       decoration={zigzag, segment length=4, amplitude=.9, post=lineto, post length=2pt}}} 
\definecolor{rouge}{rgb}{0.95,0.1,0.15}
\definecolor{forestgreen}{rgb}{0.13,0.54,0.13}
\definecolor{bleu}{rgb}{0.1,0.1,0.9}
\newcommand{\invisible}[1]{} 
\m@th\displaystyle{##}$\hfil}{$\m@th\displaystyle{##}$\hfil}{\lbrace}{\rbrace}
\m@th\displaystyle{##}$\hfil}{{##}\hfil}{\lbrace}{\rbrace}
\setlist[itemize]{leftmargin=*}                  
\setlist[enumerate]{leftmargin=*,                
                    label=\textup{(\roman*)}}    
\DeclareSymbolFont{largesymbols}{OMX}{zplm}{m}{n} 
\let\originalleft\left     
\let\originalright\right
\renewcommand{\left}{\mathopen{}\mathclose\bgroup\originalleft}
\renewcommand{\right}{\aftergroup\egroup\originalright}
\numberwithin{equation}{section}
\newcolumntype{C}{>{$}c<{$}}              
\newcolumntype{D}{>{$\displaystyle}l<{$}} 
\newcommand{\fld}[1]{\mathbb{#1}} 
\newcommand{\NN}{\fld{N}}
\newcommand{\CC}{\fld{C}}
\newcommand{\qnum}[1]{\left[ #1 \right]_q}
\newcommand{\qnumber}[2]{\left[ #1 \right]_{#2}}
\newcommand{\alg}[1]{\mathcal{#1}}  
\newcommand{\tl}[1]{\mathsf{TL}_{#1}}
\newcommand{\bnk}[1]{\mathsf{b}_{#1}} %
\newcommand{\Mod}[1]{\mathsf{#1}}   
\newcommand{\Cell}[2]{\Mod{S}_{#1}^{#2}} 
\newcommand{\Radi}[2]{\Mod{R}_{#1}^{#2}} 
\newcommand{\Irre}[2]{\Mod{I}_{#1}^{#2}} 
\newcommand{\Proj}[2]{\Mod{P}_{#1}^{#2}} 
\newcommand{\CellGen}[1]{\Mod{S}^{#1}}
\DeclareMathOperator{\id}{id}
\DeclareMathOperator{\Id}{Id}
\newcommand{\modSB}{\ \operatorname{mod}\,} 
\newcommand{\modY}{\textup{\,mod\,}} 
\newcommand{\Rmod}[1]{\textup{mod-}\alg{#1}}
\newcommand{\gram}{\mathcal{G}}
\newcommand{\base}{\mathfrak{B}}
\newcommand{\Fam}{\mathfrak{F}} 
\DeclareMathOperator{\res}{res} 
\DeclareMathOperator{\sg}{sg} 
\newcommand{\Decompmat}{\mathbf{D}} 
\newcommand{\Cartanmat}{\mathbf{C}} 
\newcommand\mydots{\ifmmode\ldots\else\makebox[1em][c]{.\hfil.\hfil.}\fi} 
\newcommand{\fcolon}{%
  \mathrel{\mathpalette\fcolon@\relax}%
}
\newcommand{\fcolon@}[2]{%
  \sbox\z@{$\m@th#1:$}%
  \vbox to\ht\z@{%
    \hbox{$\m@th#1.$}%
    \vss
    \hbox{$\m@th#1.$}%
    \vss
    \hbox{$\m@th#1.$}%
  }%
}
\newlength{\lengthoftimes}
\newcommand{\pointx}{\begin{tikzpicture}[scale=0.25] \draw[very thick] (\lengthoftimes,\lengthoftimes) -- (-\lengthoftimes,-\lengthoftimes); \draw[very thick] (\lengthoftimes,-\lengthoftimes) -- (-\lengthoftimes,\lengthoftimes); \end{tikzpicture}}
\DeclareMathOperator{\im}{im}
\DeclareMathOperator{\rad}{rad}
\theoremstyle{plain}
\newtheorem{theorem}{Theorem}[section]
\newtheorem{lemma}[theorem]{Lemma}
\newtheorem{proposition}[theorem]{Proposition}
\newtheorem{corollary}[theorem]{Corollary}
\newtheorem{definition}[theorem]{Definition}
\renewcommand{\qedsymbol}{\rule{0.5em}{0.5em}} 
\newcommand{\scn}{0.14} 
\newcommand{\vecteura}{\begin{tikzpicture}[baseline={(current bounding box.center)},scale=\scn]
	\draw[line width = \tlLineWidth] (0,0.5) -- (0,6.5);
	\draw[line width = \tlLineWidth] (0,5) -- (2.5,5);
	\draw[line width = \tlLineWidth] (0,6) -- (2.5,6);
	\draw[line width = \tlLineWidth] (0,4) -- (1,4);
	\draw[line width = \tlLineWidth] (1,4) .. controls (2.5,4) and (2.5,1) .. (1,1);
	\draw[line width = \tlLineWidth] (0,1) -- (1,1);
	\draw[line width = \tlLineWidth] (0,3) -- (1,3);
	\draw[line width = \tlLineWidth] (1,3) .. controls (1.7,3) and (1.7,2) .. (1,2);
	\draw[line width = \tlLineWidth] (0,2) -- (1,2);
	\draw[line width = \tlLineWidth] (1,0.5) -- (0.8,0.5) -- (0.8,2.5) -- (1,2.5) -- cycle;
	\fill[fill = \tlWeldFill] (1,0.5) -- (0.8,0.5) -- (0.8,2.5) -- (1,2.5) -- cycle;
\end{tikzpicture}}
\newcommand{\vecteurar}{\begin{tikzpicture}[baseline={(current bounding box.center)},scale=\scn]
	\draw[line width = \tlLineWidth] (0,0.5) -- (0,6.5);
	\draw[line width = \tlLineWidth] (0,5) -- (-2.5,5);
	\draw[line width = \tlLineWidth] (0,6) -- (-2.5,6);
	\draw[line width = \tlLineWidth] (0,4) -- (-1,4);
	\draw[line width = \tlLineWidth] (-1,4) .. controls (-2.5,4) and (-2.5,1) .. (-1,1);
	\draw[line width = \tlLineWidth] (0,1) -- (-1,1);
	\draw[line width = \tlLineWidth] (0,3) -- (-1,3);
	\draw[line width = \tlLineWidth] (-1,3) .. controls (-1.7,3) and (-1.7,2) .. (-1,2);
	\draw[line width = \tlLineWidth] (0,2) -- (-1,2);
	\draw[line width = \tlLineWidth] (-1,0.5) -- (-0.8,0.5) -- (-0.8,2.5) -- (-1,2.5) -- cycle;
	\fill[fill = \tlWeldFill] (-1,0.5) -- (-0.8,0.5) -- (-0.8,2.5) -- (-1,2.5) -- cycle;
\end{tikzpicture}}
\newcommand{\vecteurb}{\begin{tikzpicture}[baseline={(current bounding box.center)},scale=\scn]
	\draw[line width = \tlLineWidth] (0,0.5) -- (0,6.5);
	\draw[line width = \tlLineWidth] (0,1) -- (2.5,1);
	\draw[line width = \tlLineWidth] (0,6) -- (2.5,6);
	\draw[line width = \tlLineWidth] (0,4) -- (1,4);
	\draw[line width = \tlLineWidth] (1,4) .. controls (1.7,4) and (1.7,5) .. (1,5);
	\draw[line width = \tlLineWidth] (0,5) -- (1,5);
	\draw[line width = \tlLineWidth] (0,3) -- (1,3);
	\draw[line width = \tlLineWidth] (1,3) .. controls (1.7,3) and (1.7,2) .. (1,2);
	\draw[line width = \tlLineWidth] (0,2) -- (1,2);
	\draw[line width = \tlLineWidth] (1,0.5) -- (0.8,0.5) -- (0.8,2.5) -- (1,2.5) -- cycle;
	\fill[fill = \tlWeldFill] (1,0.5) -- (0.8,0.5) -- (0.8,2.5) -- (1,2.5) -- cycle;
\end{tikzpicture}
}
\newcommand{\vecteurbr}{\begin{tikzpicture}[baseline={(current bounding box.center)},scale=\scn]
	\draw[line width = \tlLineWidth] (0,0.5) -- (0,6.5);
	\draw[line width = \tlLineWidth] (0,1) -- (-2.5,1);
	\draw[line width = \tlLineWidth] (0,6) -- (-2.5,6);
	\draw[line width = \tlLineWidth] (0,4) -- (-1,4);
	\draw[line width = \tlLineWidth] (-1,4) .. controls (-1.7,4) and (-1.7,5) .. (-1,5);
	\draw[line width = \tlLineWidth] (0,5) -- (-1,5);
	\draw[line width = \tlLineWidth] (0,3) -- (-1,3);
	\draw[line width = \tlLineWidth] (-1,3) .. controls (-1.7,3) and (-1.7,2) .. (-1,2);
	\draw[line width = \tlLineWidth] (0,2) -- (-1,2);
	\draw[line width = \tlLineWidth] (-1,0.5) -- (-0.8,0.5) -- (-0.8,2.5) -- (-1,2.5) -- cycle;
	\fill[fill = \tlWeldFill] (-1,0.5) -- (-0.8,0.5) -- (-0.8,2.5) -- (-1,2.5) -- cycle;
\end{tikzpicture}
}
\newcommand{\vecteurc}{\begin{tikzpicture}[baseline={(current bounding box.center)},scale=\scn]
	\draw[line width = \tlLineWidth] (0,0.5) -- (0,6.5);
	\draw[line width = \tlLineWidth] (0,1) -- (2.5,1);
	\draw[line width = \tlLineWidth] (0,6) -- (2.5,6);
	\draw[line width = \tlLineWidth] (0,5) -- (1,5);
	\draw[line width = \tlLineWidth] (1,5) .. controls (2.5,5) and (2.5,2) .. (1,2);
	\draw[line width = \tlLineWidth] (0,2) -- (1,2);
	\draw[line width = \tlLineWidth] (0,4) -- (1,4);
	\draw[line width = \tlLineWidth] (1,4) .. controls (1.7,4) and (1.7,3) .. (1,3);
	\draw[line width = \tlLineWidth] (0,3) -- (1,3);
	\draw[line width = \tlLineWidth] (1,0.5) -- (0.8,0.5) -- (0.8,2.5) -- (1,2.5) -- cycle;
	\fill[fill = \tlWeldFill] (1,0.5) -- (0.8,0.5) -- (0.8,2.5) -- (1,2.5) -- cycle;
\end{tikzpicture}

}
\newcommand{\vecteurcr}{\begin{tikzpicture}[baseline={(current bounding box.center)},scale=\scn]
	\draw[line width = \tlLineWidth] (0,0.5) -- (0,6.5);
	\draw[line width = \tlLineWidth] (0,1) -- (-2.5,1);
	\draw[line width = \tlLineWidth] (0,6) -- (-2.5,6);
	\draw[line width = \tlLineWidth] (0,5) -- (-1,5);
	\draw[line width = \tlLineWidth] (-1,5) .. controls (-2.5,5) and (-2.5,2) .. (-1,2);
	\draw[line width = \tlLineWidth] (0,2) -- (-1,2);
	\draw[line width = \tlLineWidth] (0,4) -- (-1,4);
	\draw[line width = \tlLineWidth] (-1,4) .. controls (-1.7,4) and (-1.7,3) .. (-1,3);
	\draw[line width = \tlLineWidth] (0,3) -- (-1,3);
	\draw[line width = \tlLineWidth] (-1,0.5) -- (-0.8,0.5) -- (-0.8,2.5) -- (-1,2.5) -- cycle;
	\fill[fill = \tlWeldFill] (-1,0.5) -- (-0.8,0.5) -- (-0.8,2.5) -- (-1,2.5) -- cycle;
\end{tikzpicture}

}
\newcommand{\vecteurd}{\begin{tikzpicture}[baseline={(current bounding box.center)},scale=\scn]
	\draw[line width = \tlLineWidth] (0,0.5) -- (0,6.5);
	\draw[line width = \tlLineWidth] (0,1) -- (2.5,1);
	\draw[line width = \tlLineWidth] (0,4) -- (2.5,4);
	\draw[line width = \tlLineWidth] (0,6) -- (1,6);
	\draw[line width = \tlLineWidth] (1,6) .. controls (1.7,6) and (1.7,5) .. (1,5);
	\draw[line width = \tlLineWidth] (0,5) -- (1,5);
	\draw[line width = \tlLineWidth] (0,3) -- (1,3);
	\draw[line width = \tlLineWidth] (1,3) .. controls (1.7,3) and (1.7,2) .. (1,2);
	\draw[line width = \tlLineWidth] (0,2) -- (1,2);
	\draw[line width = \tlLineWidth] (1,0.5) -- (0.8,0.5) -- (0.8,2.5) -- (1,2.5) -- cycle;
	\fill[fill = \tlWeldFill] (1,0.5) -- (0.8,0.5) -- (0.8,2.5) -- (1,2.5) -- cycle;
\end{tikzpicture}

}
\newcommand{\vecteurdr}{\begin{tikzpicture}[baseline={(current bounding box.center)},scale=\scn]
	\draw[line width = \tlLineWidth] (0,0.5) -- (0,6.5);
	\draw[line width = \tlLineWidth] (0,1) -- (-2.5,1);
	\draw[line width = \tlLineWidth] (0,4) -- (-2.5,4);
	\draw[line width = \tlLineWidth] (0,6) -- (-1,6);
	\draw[line width = \tlLineWidth] (-1,6) .. controls (-1.7,6) and (-1.7,5) .. (-1,5);
	\draw[line width = \tlLineWidth] (0,5) -- (-1,5);
	\draw[line width = \tlLineWidth] (0,3) -- (-1,3);
	\draw[line width = \tlLineWidth] (-1,3) .. controls (-1.7,3) and (-1.7,2) .. (-1,2);
	\draw[line width = \tlLineWidth] (0,2) -- (-1,2);
	\draw[line width = \tlLineWidth] (-1,0.5) -- (-0.8,0.5) -- (-0.8,2.5) -- (-1,2.5) -- cycle;
	\fill[fill = \tlWeldFill] (-1,0.5) -- (-0.8,0.5) -- (-0.8,2.5) -- (-1,2.5) -- cycle;
\end{tikzpicture}
}
\newcommand{\vecteure}{\begin{tikzpicture}[baseline={(current bounding box.center)},scale=\scn]
	\draw[line width = \tlLineWidth] (0,0.5) -- (0,6.5);
	\draw[line width = \tlLineWidth] (0,1) -- (2.5,1);
	\draw[line width = \tlLineWidth] (0,2) -- (2.5,2);
	\draw[line width = \tlLineWidth] (0,6) -- (1,6);
	\draw[line width = \tlLineWidth] (1,6) .. controls (1.7,6) and (1.7,5) .. (1,5);
	\draw[line width = \tlLineWidth] (0,5) -- (1,5);
	\draw[line width = \tlLineWidth] (0,3) -- (1,3);
	\draw[line width = \tlLineWidth] (1,3) .. controls (1.7,3) and (1.7,4) .. (1,4);
	\draw[line width = \tlLineWidth] (0,4) -- (1,4);
	\draw[line width = \tlLineWidth] (1,0.5) -- (0.8,0.5) -- (0.8,2.5) -- (1,2.5) -- cycle;
	\fill[fill = \tlWeldFill] (1,0.5) -- (0.8,0.5) -- (0.8,2.5) -- (1,2.5) -- cycle;
\end{tikzpicture}
}
\newcommand{\vecteurer}{\begin{tikzpicture}[baseline={(current bounding box.center)},scale=\scn]
	\draw[line width = \tlLineWidth] (0,0.5) -- (0,6.5);
	\draw[line width = \tlLineWidth] (0,1) -- (-2.5,1);
	\draw[line width = \tlLineWidth] (0,2) -- (-2.5,2);
	\draw[line width = \tlLineWidth] (0,6) -- (-1,6);
	\draw[line width = \tlLineWidth] (-1,6) .. controls (-1.7,6) and (-1.7,5) .. (-1,5);
	\draw[line width = \tlLineWidth] (0,5) -- (-1,5);
	\draw[line width = \tlLineWidth] (0,3) -- (-1,3);
	\draw[line width = \tlLineWidth] (-1,3) .. controls (-1.7,3) and (-1.7,4) .. (-1,4);
	\draw[line width = \tlLineWidth] (0,4) -- (-1,4);
	\draw[line width = \tlLineWidth] (-1,0.5) -- (-0.8,0.5) -- (-0.8,2.5) -- (-1,2.5) -- cycle;
	\fill[fill = \tlWeldFill] (-1,0.5) -- (-0.8,0.5) -- (-0.8,2.5) -- (-1,2.5) -- cycle;
\end{tikzpicture}
}
\newcommand{\vecteurf}{\begin{tikzpicture}[baseline={(current bounding box.center)},scale=\scn]
	\draw[line width = \tlLineWidth] (0,0.5) -- (0,6.5);
	\draw[line width = \tlLineWidth] (0,1) -- (2.5,1);
	\draw[line width = \tlLineWidth] (0,2) -- (2.5,2);
	\draw[line width = \tlLineWidth] (0,6) -- (1,6);
	\draw[line width = \tlLineWidth] (1,6) .. controls (2.5,6) and (2.5,3) .. (1,3);
	\draw[line width = \tlLineWidth] (0,3) -- (1,3);
	\draw[line width = \tlLineWidth] (0,4) -- (1,4);
	\draw[line width = \tlLineWidth] (1,4) .. controls (1.7,4) and (1.7,5) .. (1,5);
	\draw[line width = \tlLineWidth] (0,5) -- (1,5);
	\draw[line width = \tlLineWidth] (1,0.5) -- (0.8,0.5) -- (0.8,2.5) -- (1,2.5) -- cycle;
	\fill[fill = \tlWeldFill] (1,0.5) -- (0.8,0.5) -- (0.8,2.5) -- (1,2.5) -- cycle;
\end{tikzpicture}
}
\newcommand{\vecteurfr}{\begin{tikzpicture}[baseline={(current bounding box.center)},scale=\scn]
	\draw[line width = \tlLineWidth] (0,0.5) -- (0,6.5);
	\draw[line width = \tlLineWidth] (0,1) -- (-2.5,1);
	\draw[line width = \tlLineWidth] (0,2) -- (-2.5,2);
	\draw[line width = \tlLineWidth] (0,6) -- (-1,6);
	\draw[line width = \tlLineWidth] (-1,6) .. controls (-2.5,6) and (-2.5,3) .. (-1,3);
	\draw[line width = \tlLineWidth] (0,3) -- (-1,3);
	\draw[line width = \tlLineWidth] (0,4) -- (-1,4);
	\draw[line width = \tlLineWidth] (-1,4) .. controls (-1.7,4) and (-1.7,5) .. (-1,5);
	\draw[line width = \tlLineWidth] (0,5) -- (-1,5);
	\draw[line width = \tlLineWidth] (-1,0.5) -- (-0.8,0.5) -- (-0.8,2.5) -- (-1,2.5) -- cycle;
	\fill[fill = \tlWeldFill] (-1,0.5) -- (-0.8,0.5) -- (-0.8,2.5) -- (-1,2.5) -- cycle;
\end{tikzpicture}
}
\newcommand{\vvecteura}{\begin{tikzpicture}[baseline={(current bounding box.center)},scale=\scn]
	\draw[line width = \tlLineWidth] (0,0.5) -- (0,6.5);
	\draw[line width = \tlLineWidth] (0,5) -- (2.5,5);
	\draw[line width = \tlLineWidth] (0,6) -- (2.5,6);
	\draw[line width = \tlLineWidth] (0,4) -- (1,4);
	\draw[line width = \tlLineWidth] (1,4) .. controls (2.5,4) and (2.5,1) .. (1,1);
	\draw[line width = \tlLineWidth] (0,1) -- (1,1);
	\draw[line width = \tlLineWidth] (0,3) -- (1,3);
	\draw[line width = \tlLineWidth] (1,3) .. controls (1.7,3) and (1.7,2) .. (1,2);
	\draw[line width = \tlLineWidth] (0,2) -- (1,2);
	\draw[line width = \tlLineWidth] (0.6,0.5) -- (0.8,0.5) -- (0.8,2.5) -- (0.6,2.5) -- cycle;
	\fill[fill = \tlWeldFill] (0.6,0.5) -- (0.8,0.5) -- (0.8,2.5) -- (0.6,2.5) -- cycle;
\end{tikzpicture}}
\newcommand{\vvecteurar}{\begin{tikzpicture}[baseline={(current bounding box.center)},scale=\scn]
	\draw[line width = \tlLineWidth] (0,0.5) -- (0,6.5);
	\draw[line width = \tlLineWidth] (0,5) -- (-2.5,5);
	\draw[line width = \tlLineWidth] (0,6) -- (-2.5,6);
	\draw[line width = \tlLineWidth] (0,4) -- (-1,4);
	\draw[line width = \tlLineWidth] (-1,4) .. controls (-2.5,4) and (-2.5,1) .. (-1,1);
	\draw[line width = \tlLineWidth] (0,1) -- (-1,1);
	\draw[line width = \tlLineWidth] (0,3) -- (-1,3);
	\draw[line width = \tlLineWidth] (-1,3) .. controls (-1.7,3) and (-1.7,2) .. (-1,2);
	\draw[line width = \tlLineWidth] (0,2) -- (-1,2);
    \draw[line width = \tlLineWidth] (-0.6,0.5) -- (-0.8,0.5) -- (-0.8,2.5) -- (-0.6,2.5) -- cycle;
	\fill[fill = \tlWeldFill] (-0.6,0.5) -- (-0.8,0.5) -- (-0.8,2.5) -- (-0.6,2.5) -- cycle;
\end{tikzpicture}}
\newcommand{\vvecteurb}{\begin{tikzpicture}[baseline={(current bounding box.center)},scale=\scn]
	\draw[line width = \tlLineWidth] (0,0.5) -- (0,6.5);
	\draw[line width = \tlLineWidth] (0,1) -- (2.5,1);
	\draw[line width = \tlLineWidth] (0,6) -- (2.5,6);
	\draw[line width = \tlLineWidth] (0,4) -- (1,4);
	\draw[line width = \tlLineWidth] (1,4) .. controls (1.7,4) and (1.7,5) .. (1,5);
	\draw[line width = \tlLineWidth] (0,5) -- (1,5);
	\draw[line width = \tlLineWidth] (0,3) -- (1,3);
	\draw[line width = \tlLineWidth] (1,3) .. controls (1.7,3) and (1.7,2) .. (1,2);
	\draw[line width = \tlLineWidth] (0,2) -- (1,2);
	\draw[line width = \tlLineWidth] (0.6,0.5) -- (0.8,0.5) -- (0.8,2.5) -- (0.6,2.5) -- cycle;
	\fill[fill = \tlWeldFill] (0.6,0.5) -- (0.8,0.5) -- (0.8,2.5) -- (0.6,2.5) -- cycle;
\end{tikzpicture}
}
\newcommand{\vvecteurbr}{\begin{tikzpicture}[baseline={(current bounding box.center)},scale=\scn]
	\draw[line width = \tlLineWidth] (0,0.5) -- (0,6.5);
	\draw[line width = \tlLineWidth] (0,1) -- (-2.5,1);
	\draw[line width = \tlLineWidth] (0,6) -- (-2.5,6);
	\draw[line width = \tlLineWidth] (0,4) -- (-1,4);
	\draw[line width = \tlLineWidth] (-1,4) .. controls (-1.7,4) and (-1.7,5) .. (-1,5);
	\draw[line width = \tlLineWidth] (0,5) -- (-1,5);
	\draw[line width = \tlLineWidth] (0,3) -- (-1,3);
	\draw[line width = \tlLineWidth] (-1,3) .. controls (-1.7,3) and (-1.7,2) .. (-1,2);
	\draw[line width = \tlLineWidth] (0,2) -- (-1,2);
	\draw[line width = \tlLineWidth] (-0.6,0.5) -- (-0.8,0.5) -- (-0.8,2.5) -- (-0.6,2.5) -- cycle;
	\fill[fill = \tlWeldFill] (-0.6,0.5) -- (-0.8,0.5) -- (-0.8,2.5) -- (-0.6,2.5) -- cycle;
\end{tikzpicture}
}
\newcommand{\vvecteurc}{\begin{tikzpicture}[baseline={(current bounding box.center)},scale=\scn]
	\draw[line width = \tlLineWidth] (0,0.5) -- (0,6.5);
	\draw[line width = \tlLineWidth] (0,1) -- (2.5,1);
	\draw[line width = \tlLineWidth] (0,6) -- (2.5,6);
	\draw[line width = \tlLineWidth] (0,5) -- (1,5);
	\draw[line width = \tlLineWidth] (1,5) .. controls (2.5,5) and (2.5,2) .. (1,2);
	\draw[line width = \tlLineWidth] (0,2) -- (1,2);
	\draw[line width = \tlLineWidth] (0,4) -- (1,4);
	\draw[line width = \tlLineWidth] (1,4) .. controls (1.7,4) and (1.7,3) .. (1,3);
	\draw[line width = \tlLineWidth] (0,3) -- (1,3);
	\draw[line width = \tlLineWidth] (0.6,0.5) -- (0.8,0.5) -- (0.8,2.5) -- (0.6,2.5) -- cycle;
	\fill[fill = \tlWeldFill] (0.6,0.5) -- (0.8,0.5) -- (0.8,2.5) -- (0.6,2.5) -- cycle;
\end{tikzpicture}

}
\newcommand{\vvecteurcr}{\begin{tikzpicture}[baseline={(current bounding box.center)},scale=\scn]
	\draw[line width = \tlLineWidth] (0,0.5) -- (0,6.5);
	\draw[line width = \tlLineWidth] (0,1) -- (-2.5,1);
	\draw[line width = \tlLineWidth] (0,6) -- (-2.5,6);
	\draw[line width = \tlLineWidth] (0,5) -- (-1,5);
	\draw[line width = \tlLineWidth] (-1,5) .. controls (-2.5,5) and (-2.5,2) .. (-1,2);
	\draw[line width = \tlLineWidth] (0,2) -- (-1,2);
	\draw[line width = \tlLineWidth] (0,4) -- (-1,4);
	\draw[line width = \tlLineWidth] (-1,4) .. controls (-1.7,4) and (-1.7,3) .. (-1,3);
	\draw[line width = \tlLineWidth] (0,3) -- (-1,3);
	\draw[line width = \tlLineWidth] (-0.6,0.5) -- (-0.8,0.5) -- (-0.8,2.5) -- (-0.6,2.5) -- cycle;
	\fill[fill = \tlWeldFill] (-0.6,0.5) -- (-0.8,0.5) -- (-0.8,2.5) -- (-0.6,2.5) -- cycle;
\end{tikzpicture}
}
\newcommand{\vvecteurd}{\begin{tikzpicture}[baseline={(current bounding box.center)},scale=\scn]
	\draw[line width = \tlLineWidth] (0,0.5) -- (0,6.5);
	\draw[line width = \tlLineWidth] (0,1) -- (2.5,1);
	\draw[line width = \tlLineWidth] (0,4) -- (2.5,4);
	\draw[line width = \tlLineWidth] (0,6) -- (1.5,6);
	\draw[line width = \tlLineWidth] (1.5,6) .. controls (2.3,6) and (2.3,5) .. (1.5,5);
	\draw[line width = \tlLineWidth] (0,5) -- (1.5,5);
	\draw[line width = \tlLineWidth] (0,3) -- (0.5,3);
	\draw[line width = \tlLineWidth] (0.5,3) .. controls (1.2,3) and (1.2,2) .. (0.5,2);
	\draw[line width = \tlLineWidth] (0,2) -- (0.5,2);
	\draw[line width = \tlLineWidth] (0.5,0.5) -- (0.7,0.5) -- (0.7,2.5) -- (0.5,2.5) -- cycle;
	\fill[fill = \tlWeldFill] (0.5,0.5) -- (0.7,0.5) -- (0.7,2.5) -- (0.5,2.5) -- cycle;
	\draw[line width = \tlLineWidth] (1.3,0.5) -- (1.5,0.5) -- (1.5,5.5) -- (1.3,5.5) -- cycle;
	\fill[fill = \tlWeldFill] (1.3,0.5) -- (1.5,0.5) -- (1.5,5.5) -- (1.3,5.5) -- cycle;
\end{tikzpicture}
}
\newcommand{\vvecteurdr}{\begin{tikzpicture}[baseline={(current bounding box.center)},scale=\scn]
	\draw[line width = \tlLineWidth] (0,0.5) -- (0,6.5);
	\draw[line width = \tlLineWidth] (0,1) -- (-2.5,1);
	\draw[line width = \tlLineWidth] (0,4) -- (-2.5,4);
	\draw[line width = \tlLineWidth] (0,6) -- (-1.5,6);
	\draw[line width = \tlLineWidth] (-1.5,6) .. controls (-2.2,6) and (-2.2,5) .. (-1.5,5);
	\draw[line width = \tlLineWidth] (0,5) -- (-1.5,5);
	\draw[line width = \tlLineWidth] (0,3) -- (-0.5,3);
	\draw[line width = \tlLineWidth] (-0.5,3) .. controls (-1.2,3) and (-1.2,2) .. (-0.5,2);
	\draw[line width = \tlLineWidth] (0,2) -- (-0.5,2);
	\draw[line width = \tlLineWidth] (-0.5,0.5) -- (-0.7,0.5) -- (-0.7,2.5) -- (-0.5,2.5) -- cycle;
	\fill[fill = \tlWeldFill] (-0.5,0.5) -- (-0.7,0.5) -- (-0.7,2.5) -- (-0.5,2.5) -- cycle;
	\draw[line width = \tlLineWidth] (-1.3,0.5) -- (-1.5,0.5) -- (-1.5,5.5) -- (-1.3,5.5) -- cycle;
	\fill[fill = \tlWeldFill] (-1.3,0.5) -- (-1.5,0.5) -- (-1.5,5.5) -- (-1.3,5.5) -- cycle;
\end{tikzpicture}
}
\newcommand{\vvecteure}{\begin{tikzpicture}[baseline={(current bounding box.center)},scale=\scn]
	\draw[line width = \tlLineWidth] (0,0.5) -- (0,6.5);
	\draw[line width = \tlLineWidth] (0,1) -- (2.5,1);
	\draw[line width = \tlLineWidth] (0,2) -- (2.5,2);
	\draw[line width = \tlLineWidth] (0,6) -- (1.5,6);
	\draw[line width = \tlLineWidth] (1.5,6) .. controls (2.2,6) and (2.2,5) .. (1.5,5);
	\draw[line width = \tlLineWidth] (0,5) -- (1.5,5);
	\draw[line width = \tlLineWidth] (0,3) -- (0.5,3);
	\draw[line width = \tlLineWidth] (0.5,3) .. controls (1.2,3) and (1.2,4) .. (0.5,4);
	\draw[line width = \tlLineWidth] (0,4) -- (0.5,4);
	\draw[line width = \tlLineWidth] (0.6,0.5) -- (0.8,0.5) -- (0.8,2.5) -- (0.6,2.5) -- cycle;
	\fill[fill = \tlWeldFill] (0.6,0.5) -- (0.8,0.5) -- (0.8,2.5) -- (0.6,2.5) -- cycle;
	\draw[line width = \tlLineWidth] (1.3,0.5) -- (1.5,0.5) -- (1.5,5.5) -- (1.3,5.5) -- cycle;
	\fill[fill = \tlWeldFill] (1.3,0.5) -- (1.5,0.5) -- (1.5,5.5) -- (1.3,5.5) -- cycle;
\end{tikzpicture}
}
\newcommand{\vvecteurer}{\begin{tikzpicture}[baseline={(current bounding box.center)},scale=\scn]
	\draw[line width = \tlLineWidth] (0,0.5) -- (0,6.5);
	\draw[line width = \tlLineWidth] (0,1) -- (-2.5,1);
	\draw[line width = \tlLineWidth] (0,2) -- (-2.5,2);
	\draw[line width = \tlLineWidth] (0,6) -- (-1.5,6);
	\draw[line width = \tlLineWidth] (-1.5,6) .. controls (-2.2,6) and (-2.2,5) .. (-1.5,5);
	\draw[line width = \tlLineWidth] (0,5) -- (-1.5,5);
	\draw[line width = \tlLineWidth] (0,3) -- (-0.5,3);
	\draw[line width = \tlLineWidth] (-0.5,3) .. controls (-1.2,3) and (-1.2,4) .. (-0.5,4);
	\draw[line width = \tlLineWidth] (0,4) -- (-0.5,4);
    \draw[line width = \tlLineWidth] (-0.6,0.5) -- (-0.8,0.5) -- (-0.8,2.5) -- (-0.6,2.5) -- cycle;
	\fill[fill = \tlWeldFill] (-0.6,0.5) -- (-0.8,0.5) -- (-0.8,2.5) -- (-0.6,2.5) -- cycle;
    \draw[line width = \tlLineWidth] (-1.3,0.5) -- (-1.5,0.5) -- (-1.5,5.5) -- (-1.3,5.5) -- cycle;
	\fill[fill = \tlWeldFill] (-1.3,0.5) -- (-1.5,0.5) -- (-1.5,5.5) -- (-1.3,5.5) -- cycle;
\end{tikzpicture}
}
\newcommand{\vvecteurf}{\begin{tikzpicture}[baseline={(current bounding box.center)},scale=\scn]
	\draw[line width = \tlLineWidth] (0,0.5) -- (0,6.5);
	\draw[line width = \tlLineWidth] (0,1) -- (2.5,1);
	\draw[line width = \tlLineWidth] (0,2) -- (2.5,2);
	\draw[line width = \tlLineWidth] (0,6) -- (1,6);
	\draw[line width = \tlLineWidth] (1,6) .. controls (2.5,6) and (2.5,3) .. (1,3);
	\draw[line width = \tlLineWidth] (0,3) -- (1,3);
	\draw[line width = \tlLineWidth] (0,4) -- (1,4);
	\draw[line width = \tlLineWidth] (1,4) .. controls (1.7,4) and (1.7,5) .. (1,5);
	\draw[line width = \tlLineWidth] (0,5) -- (1,5);
	\draw[line width = \tlLineWidth] (0.6,0.5) -- (0.8,0.5) -- (0.8,2.5) -- (0.6,2.5) -- cycle;
	\fill[fill = \tlWeldFill] (0.6,0.5) -- (0.8,0.5) -- (0.8,2.5) -- (0.6,2.5) -- cycle;
    \draw[line width = \tlLineWidth] (1.3,0.5) -- (1.5,0.5) -- (1.5,3.5) -- (1.3,3.5) -- cycle;
	\fill[fill = \tlWeldFill] (1.3,0.5) -- (1.5,0.5) -- (1.5,3.5) -- (1.3,3.5) -- cycle;
\end{tikzpicture}
}
\newcommand{\vvecteurfr}{\begin{tikzpicture}[baseline={(current bounding box.center)},scale=\scn]
	\draw[line width = \tlLineWidth] (0,0.5) -- (0,6.5);
	\draw[line width = \tlLineWidth] (0,1) -- (-2.5,1);
	\draw[line width = \tlLineWidth] (0,2) -- (-2.5,2);
	\draw[line width = \tlLineWidth] (0,6) -- (-1,6);
	\draw[line width = \tlLineWidth] (-1,6) .. controls (-2.5,6) and (-2.5,3) .. (-1,3);
	\draw[line width = \tlLineWidth] (0,3) -- (-1,3);
	\draw[line width = \tlLineWidth] (0,4) -- (-1,4);
	\draw[line width = \tlLineWidth] (-1,4) .. controls (-1.7,4) and (-1.7,5) .. (-1,5);
	\draw[line width = \tlLineWidth] (0,5) -- (-1,5);
	\draw[line width = \tlLineWidth] (-0.6,0.5) -- (-0.8,0.5) -- (-0.8,2.5) -- (-0.6,2.5) -- cycle;
	\fill[fill = \tlWeldFill] (-0.6,0.5) -- (-0.8,0.5) -- (-0.8,2.5) -- (-0.6,2.5) -- cycle;
	  \draw[line width = \tlLineWidth] (-1.3,0.5) -- (-1.5,0.5) -- (-1.5,3.5) -- (-1.3,3.5) -- cycle;
	\fill[fill = \tlWeldFill] (-1.3,0.5) -- (-1.5,0.5) -- (-1.5,3.5) -- (-1.3,3.5) -- cycle;
\end{tikzpicture}
}
\begin{document}

\title[Representation theory of seam algebras]{The representation theory of seam algebras}

\author[A Langlois-R\'emillard]{Alexis Langlois-R\'emillard}

\address[Alexis Langlois-R\'emillard]{{\rm The work was done while at }
D\'{e}partement de math\'ematiques et statistique \\
Universit\'{e} de Mont\-r\'{e}al \\
Qu\'{e}bec, Canada, H3C~3J7. {\rm Now at } Department of Applied Mathematics, Computer Science and Statistics, Faculty of Sciences, Ghent University, Krijgslaan 281-S9, 9000 Gent, Belgium.
}

\email{Alexis.LangloisRemillard@UGent.be}

\author[Y Saint-Aubin]{Yvan Saint-Aubin}

\address[Yvan Saint-Aubin]{
D\'{e}partement de math\'{e}matiques et de statistique \\
Universit\'{e} de Montr\'{e}al \\
Qu\'{e}bec, Canada, H3C~3J7.
}

\email{yvan.saint-aubin@umontreal.ca}

\date{\today}

\keywords{Temperley-Lieb algebra,
Temperley-Lieb seam algebra,
boundary seam algebra,
seam algebra,
cellular algebra,
cellular module,
standard module,
projective module,
principal indecomposable module,
non-semisimple associative algebras.}

\begin{abstract}The boundary seam algebras $\bnk{n,k}(\beta=q+q^{-1})$ were introduced by Morin-Duchesne, Ridout and Rasmussen to formulate algebraically a large class of boundary conditions for two-dimensional statistical loop models. The representation theory of these algebras $\bnk{n,k}(\beta=q+q^{-1})$ is given: their irreducible, standard (cellular) and principal modules are constructed and their structure explicited in terms of their composition factors and of non-split short exact sequences. The dimensions of the irreducible modules and of the radicals of standard ones are also given. The methods proposed here might be applicable to a large family of algebras, for example to those introduced recently by Flores and Peltola, and Cramp\'e and Poulain d'Andecy.
\end{abstract}

\maketitle

\tableofcontents

\onehalfspacing

%
%


\section{Introduction} \label{sec:Intro}

This article describes the basic representation theory of the family of boundary seam algebras $\bnk{n,k}(\beta=q+q^{-1})$, for $n\geq 1$, $k\geq 2$ and $q\in \mathbb C^{\times}$:  their irreducible, standard (cellular) and principal modules are constructed and their structure explicited in terms of their composition factors and of short exact sequences.

The boundary seam algebras, or seam algebras for short, were introduced by Morin-Duchesne, Ridout and Rasmussen \cite{MDRR}. One of their goals was to cast, in an algebraic setting, various boundary conditions of two-dimensional statistical loop models discovered earlier in a heuristic way (see for example \cite{PearceRasmussenZuber}). Not only did the authors define diagrammatically the seam algebras, give them a presentation through generators and relations and prove equivalence between the definitions, but they also introduced standard modules over $\bnk{n,k}$ and computed the Gram determinant of an invariant bilinear form on these modules. All these tools will be used here. Their paper went on with numerical computation of the spectra of the loop transfer matrices under these various boundary conditions. It indicated a potentially rich representation theory.

In its simplest formulation, the seam algebra $\bnk{n,k}$ is the subset of the Temperley-Lieb algebra $\tl{n+k}(\beta)$ \cite{TL} obtained by left- and right-multiplying all its elements by a Wenzl-Jones projector \cite{Jones83,Wenzl88} acting on $k$ of the $n+k$ points. Even though this formulation appears first in \cite{MDRR}, the need for some algebraic structure of this type was stressed before by Jacobsen and Saleur \cite{JacobsenSaleurConformal}. The main goal of their paper was also the study of various boundary conditions for loop models. In a short section at the end of their paper, these authors observed that the blob algebra (see below) can be realized by adding ``ghost'' strings to link diagrams (their cabling construction) and ``tying'' them with the first ``real'' string with a projector. But their goal did not require a formal definition of a new algebra. The definition of the seam algebra $\bnk{n,k}$ will be given in section \ref{sec:results} and it will be seen there that it is actually a quotient of the blob algebra.

So the seam algebras are yet another variation of the original Temperley-Lieb family. The representation theory of various Temperley-Lieb families has been studied, displaying remarkable richness and diversity: the blob algebra \cite{blob,MarStr00}, the affine algebra \cite{GL-Aff}, the Motzkin algebra \cite{BenkartHalverson}, the dilute family \cite{BSA}, etc. Of course it is interesting to see what the representation theory of the seam family hides. And, even though this is a sufficiently intriguing question to justify the present work, there is yet another justification.

In recent years, new families of Temperley-Lieb algebras have been introduced, some having a similar definition to the seam family: they are obtained by left- and right-multiplication of a $\tl N$, for some $N$, by non-overlapping Wenzl-Jones projectors $P_{i_1},P_{i_2}, \dots, P_{i_k}$ with $\sum_{1\leq j\leq k}i_j=N$. In a sense the seam algebras are the simplest examples of these new families. Two examples of the latter will underline their diverse applications: {\em (i)} the valence algebras were introduced by Flores and Peltola \cite{Peltola18} to characterize monodromy invariant correlation functions in certain conformal field theories and {\em (ii)} another family of Temperley-Lieb algebras was defined by Cramp\'e and Poulain d'Andecy \cite{CPA} to understand the centralisers of tensor representations of classical and quantum $sl(N)$. The present paper goes beyond describing the basic representation theory of the seam algebras: it outlines a method that might help study the representation theory of several other families of algebras.

The main results are stated in section \ref{sec:results}, which also gives the definitions of the Temperley-Lieb algebras, the Wenzl-Jones projectors and, of course, the boundary seam algebras. Section \ref{sec:cellularity} is on cellular algebras, a family introduced by Graham and Lehrer \cite{GraCel96} to which the seam algebras belong, as will be shown. The proofs there are given so that their generalization to other families like those mentioned above should be straightforward. Section \ref{sec:qARootOfUnity} is devoted to the representation theory of the $\bnk{n,k}(\beta=q+q^{-1})$ when $q$ is a root of unity. This is the difficult case. Section \ref{sec:conclusion} concludes the paper by outlining the key steps of the method in view of applications to other families.


\section{Definitions and main results}\label{sec:results}

The boundary seam algebras provide examples of algebras obtained from the Temperley-Lieb algebras by left- and right-multiplication by an idempotent. It is natural to put in parallel the basic definitions of $\tl n$ (section \ref{sec:defTLn}) and $\bnk{n,k}$ (section \ref{sec:defbnk}) and their representation theory (section \ref{sec:repTLn} for $\tl n$ and \ref{sec:repbnk} for $\bnk{n,k}$). This last section states the main results that will be proved in sections \ref{sec:cellularity} and \ref{sec:qARootOfUnity}.

%
%
\subsection{The family of Temperley-Lieb algebras}\label{sec:defTLn}

The most appropriate definition of the Temperley-Lieb algebras $\tl n(\beta)$, for the purpose at hand, is its diagrammatic one. An $(n,d)$-diagram is defined as a diagram drawn within a rectangle with $n$ marked points on its left side and $d$ on its right one, these $n+d$ points being connected pairwise by non-crossing links. Two $(n,d)$-diagrams differing only by an isotopy are identified. The set of formal $\mathbb C$-linear combinations of $(n,n)$-diagrams will be denoted $\tl n$. A composition of an $(n,d)$-diagram with a $(m,e)$-diagram is defined whenever $d=m$. Then it is the $(n,e)$-diagram obtained by concatenation and removal of closed loops created by the identification of the middle $d=m$ points, each loop being replaced by an overall factor $\beta\in\mathbb C$. (See \cite{RSA} for examples.) The vector space $\tl n$ together with this composition is the Temperley-Lieb algebra $\tl n(\beta)$. For each $n\in\mathbb N_{>0}$ and $\beta\in\mathbb C$, $\tl n(\beta)$ is an associative unital $\mathbb C$-algebra with the identity diagram $\Id$ shown below. It can be proved that $\tl n(\beta)$, as an algebra, is generated by the identity $\Id$ and the following diagrams $E_i$, $1\leq i< n$:
\begin{equation*}
\Id\ =\ \begin{tlNdiagram}{7}
    \linkH{7}\linkH{6}\linkH{5}
    \linkH{2}\linkH{1}
    \tlDots{M}{3}
    \tlLabel{R}{7}{1}\tlLabel{R}{6}{2}\tlLabel{R}{4}{\vdots}\tlLabel{R}{1}{n}
  \end{tlNdiagram}\qquad \textrm{and}\qquad
E_i\ =\ \begin{tlNdiagram}{7}
    \linkH{7}\linkH{6}
    \linkH{1}
    \tlDots{M}{5}\tlLabel{R}{5}{\fcolon}
    \linkH{5}
    \linkV{L}{3}{4}
    \linkV{R}{3}{4}
    \linkH{2}
    \tlDots{M}{1}
    \tlLabel{R}{7}{1}\tlLabel{R}{6}{2}\tlLabel{R}{4}{i}\tlLabel{R}{3}{i+1}\tlLabel{R}{1}{n}\tlLabel{R}{2}{\fcolon}
  \end{tlNdiagram}\ .
\end{equation*}
The dimension of $\tl n(\beta)$ is equal to the Catalan number $C_n=\frac1{n+1}\left(\begin{smallmatrix}2n\\n\end{smallmatrix}\right)$. The parameter $\beta$ is often written as $\beta=q+q^{-1}$ where $q\in\mathbb C^\times$. Here are the $C_4=14$ diagrams spanning $\tl 4$.
\begin{gather}\label{eq:baseTLQuatre}
\  \Id\ =\ \begin{tikzpicture}[baseline={(current bounding box.center)},scale=0.33]
       \draw [line width=0.3mm] (0,-0.7) -- (0,3+.7) ;
       \draw [line width=0.3mm] (2,-0.7) -- (2,3+.7) ;
        \draw (0,0) -- (2,0);       	
        \draw (0,1) -- (2,1);       	
        \draw (0,2) -- (2,2);       	
        \draw (0,3) -- (2,3);       	
    \end{tikzpicture}\ , \notag \\
\  
   \begin{tikzpicture}[baseline={(current bounding box.center)},scale=0.33]
       \draw [line width=0.3mm] (0,-0.7) -- (0,3+.7) ;
       \draw [line width=0.3mm] (2,-0.7) -- (2,3+.7) ;
        \draw (0,0) -- (2,0);       	
        \draw (0,1) -- (2,1);       	
        \draw (0,2) .. controls (3/4,2) and (3/4,3) .. (0,3);       	
        \draw (2,2) .. controls (5/4,2) and (5/4,3) .. (2,3);
    \end{tikzpicture}\ , \quad 
    \begin{tikzpicture}[baseline={(current bounding box.center)},scale=0.33]
       \draw [line width=0.3mm] (0,-0.7) -- (0,3+.7) ;
       \draw [line width=0.3mm] (2,-0.7) -- (2,3+.7) ;
        \draw (0,0) -- (2,0);       	
        \draw (2,2) .. controls (5/4,2) and (5/4,1) .. (2,1);       	
        \draw (0,2) .. controls (3/4,2) and (3/4,3) .. (0,3);       	
        \draw (0,1) .. controls (3/4,1) and (5/4,3) .. (2,3);
    \end{tikzpicture}\ , \quad 
    \begin{tikzpicture}[baseline={(current bounding box.center)},scale=0.33]
       \draw [line width=0.3mm] (0,-0.7) -- (0,3+.7) ;
       \draw [line width=0.3mm] (2,-0.7) -- (2,3+.7) ;
        \draw (2,0) .. controls (5/4,0) and (5/4,1) .. (2,1);       	
        \draw (0,2) .. controls (3/4,2) and (3/4,3) .. (0,3);       	
        \draw (0,0) .. controls (3/4,0) and (5/4,2) .. (2,2);
        \draw (0,1) .. controls (3/4,1) and (5/4,3) .. (2,3);
    \end{tikzpicture}\ , \quad  
    \begin{tikzpicture}[baseline={(current bounding box.center)},scale=0.33]
       \draw [line width=0.3mm] (0,-0.7) -- (0,3+.7) ;
       \draw [line width=0.3mm] (2,-0.7) -- (2,3+.7) ;
        \draw (0,0) -- (2,0);       	
        \draw (0,3) .. controls (3/4,3) and (5/4,1) .. (2,1);
        \draw (0,2) .. controls (3/4,2) and (3/4,1) .. (0,1);       	
        \draw (2,2) .. controls (5/4,2) and (5/4,3) .. (2,3);
    \end{tikzpicture}\ , \quad 
    \begin{tikzpicture}[baseline={(current bounding box.center)},scale=0.33]
       \draw [line width=0.3mm] (0,-0.7) -- (0,3+.7) ;
       \draw [line width=0.3mm] (2,-0.7) -- (2,3+.7) ;
        \draw (0,0) -- (2,0);       	
        \draw (0,3) -- (2,3);
        \draw (0,2) .. controls (3/4,2) and (3/4,1) .. (0,1);       	
        \draw (2,2) .. controls (5/4,2) and (5/4,1) .. (2,1);
    \end{tikzpicture}\ , \quad 
    \begin{tikzpicture}[baseline={(current bounding box.center)},scale=0.33]
       \draw [line width=0.3mm] (0,-0.7) -- (0,3+.7) ;
       \draw [line width=0.3mm] (2,-0.7) -- (2,3+.7) ;
        \draw (0,3) -- (2,3);
        \draw (0,0) .. controls (3/4,0) and (5/4,2) .. (2,2);
        \draw (0,2) .. controls (3/4,2) and (3/4,1) .. (0,1);       	
        \draw (2,1) .. controls (5/4,1) and (5/4,0) .. (2,0);
    \end{tikzpicture}\ , \quad 
    \begin{tikzpicture}[baseline={(current bounding box.center)},scale=0.33]
       \draw [line width=0.3mm] (0,-0.7) -- (0,3+.7) ;
       \draw [line width=0.3mm] (2,-0.7) -- (2,3+.7) ;
        \draw (0,3) .. controls (3/4,3) and (5/4,1) .. (2,1);
        \draw (0,2) .. controls (3/4,2) and (5/4,0) .. (2,0);       	
        \draw (0,1) .. controls (3/4,1) and (3/4,0) .. (0,0);       	
        \draw (2,2) .. controls (5/4,2) and (5/4,3) .. (2,3);
    \end{tikzpicture}\ , \quad 
    \begin{tikzpicture}[baseline={(current bounding box.center)},scale=0.33]
       \draw [line width=0.3mm] (0,-0.7) -- (0,3+.7) ;
       \draw [line width=0.3mm] (2,-0.7) -- (2,3+.7) ;
        \draw (0,3) -- (2,3);
        \draw (0,2) .. controls (3/4,2) and (5/4,0) .. (2,0);       	
        \draw (0,1) .. controls (3/4,1) and (3/4,0) .. (0,0);       	
        \draw (2,2) .. controls (5/4,2) and (5/4,1) .. (2,1);
    \end{tikzpicture}\ , \quad 
    \begin{tikzpicture}[baseline={(current bounding box.center)},scale=0.33]
       \draw [line width=0.3mm] (0,-0.7) -- (0,3+.7) ;
       \draw [line width=0.3mm] (2,-0.7) -- (2,3+.7) ;
       \draw [dotted] (1,-0.7) -- (1,3+.7) ;
        \draw (0,3) -- (2,3);
        \draw (0,2) -- (2,2);       	
        \draw (0,1) .. controls (3/4,1) and (3/4,0) .. (0,0);       	
        \draw (2,1) .. controls (5/4,1) and (5/4,0) .. (2,0);
    \end{tikzpicture}\ , \quad \\
\  \begin{tikzpicture}[baseline={(current bounding box.center)},scale=0.33]
       \draw [line width=0.3mm] (0,-0.7) -- (0,3+.7) ;
       \draw [line width=0.3mm] (2,-0.7) -- (2,3+.7) ;
        \draw (0,0) .. controls (3/4,0) and (3/4,1) .. (0,1);       	
        \draw (0,2) .. controls (3/4,2) and (3/4,3) .. (0,3);       	
        \draw (2,0) .. controls (5/4,0) and (5/4,1) .. (2,1);
        \draw (2,2) .. controls (5/4,2) and (5/4,3) .. (2,3);
    \end{tikzpicture}\ , \quad 
\begin{tikzpicture}[baseline={(current bounding box.center)},scale=0.33]
       \draw [line width=0.3mm] (0,-0.7) -- (0,3+.7) ;
       \draw [line width=0.3mm] (2,-0.7) -- (2,3+.7) ;
        \draw (0,0) .. controls (3/4,0) and (3/4,1) .. (0,1);       	
        \draw (0,2) .. controls (3/4,2) and (3/4,3) .. (0,3);       	
        \draw (2,1) .. controls (1.25,1) and (1.25,2) .. (2,2);
       	\draw (2,0) .. controls (1,0) and (1,3) .. (2,3);
    \end{tikzpicture}\ , \quad
\begin{tikzpicture}[baseline={(current bounding box.center)},scale=0.33]
       \draw [line width=0.3mm] (0,-0.7) -- (0,3+.7) ;
       \draw [line width=0.3mm] (2,-0.7) -- (2,3+.7) ;
       	\draw (0,0) .. controls (1,0) and (1,3) .. (0,3);
        \draw (0,1) .. controls (3/4,1) and (3/4,2) .. (0,2);
        \draw (2,0) .. controls (5/4,0) and (5/4,1) .. (2,1);
        \draw (2,2) .. controls (5/4,2) and (5/4,3) .. (2,3);
    \end{tikzpicture}\ , \quad
    \begin{tikzpicture}[baseline={(current bounding box.center)},scale=0.33]
       \draw [line width=0.3mm] (0,-0.7) -- (0,3+.7) ;
       \draw [line width=0.3mm] (2,-0.7) -- (2,3+.7) ;
       	\draw (0,0) .. controls (1,0) and (1,3) .. (0,3);
        \draw (0,1) .. controls (3/4,1) and (3/4,2) .. (0,2);
        \draw (2,1) .. controls (1.25,1) and (1.25,2) .. (2,2);
       	\draw (2,0) .. controls (1,0) and (1,3) .. (2,3);
    \end{tikzpicture}\ .\notag
\end{gather}
The diagrams have been gathered so that the first line presents the only diagram with $4$ links crossing from left to right, the second those that have $2$ such links, and the third those that have none. These crossing links are also called through lines or defects.

An elementary observation on concatenation will be crucial: the number of links in an $(n,d)$-diagram that cross from left to right cannot increase upon concatenation with any other diagram. More precisely, if an $(n,d)$-diagram contains $k$ such crossing links and a $(d,m)$-diagram contains $l$ ones, then their concatenation has at most $\min(k,l)$ such links. An $(n,d)$-diagram that has $d$ crossing links is said to be monic (and then $n\geq d$) and an $(n,d)$-diagram with $n$ crossing links is called {\em epic} (and then $n\leq d$). The concatenation of a monic $(n,d)$-diagram with an epic $(d,n)$-one is an $(n,n)$-diagram with precisely $d$ crossing links. The last diagram of the second line above has a dotted vertical line in the middle. It stresses the fact that all diagrams of this second line are concatenations of a monic $(4,2)$-diagram with an epic $(2,4)$-diagram. Similarly, the diagrams of the bottom line are concatenations of a $(4,0)$-diagram with a $(0,4)$-diagram. The single diagram of the top line can also be seen as the concatenation of two epic and monic $(4,4)$-diagrams, that is twice the diagram $\Id$.

The Wenzl-Jones projector $P_n$ \cite{Jones83,Wenzl88} is an element of $\tl n(\beta)$ that will play a crucial role in the definition of the seam algebras. (It is also known as the $q$-symmetrizer in other applications \cite{KL}.) It is constructed recursively as
\begin{equation}
      P_1 = \Id,\qquad      P_k = P_{k-1} - {\qnum{k-1}\over \qnum{k}} P_{k-1}E_{n-k+1}P_{k-1}, \quad 
      \textrm{for $1< k\leq n$},
\end{equation}
where the $q$-numbers $\qnum m=(q^m-q^{-m})/(q-q^{-1})$ were used. Note that this recursive definition of $P_n$ fails whenever $\qnum k=0$ for some $2\leq k\leq n$, that is, whenever $q$ is an $2\ell$-root of unity for some $\ell\leq n$. The Wenzl-Jones projector, when it exists, has remarkable properties. 
\begin{proposition}[\cite{KL}] 
For $\beta=q+q^{-1}$ with $q$ not a $2\ell$-root of unity for any $\ell\leq n$, the Wenzl-Jones projector $P_n\in\tl n(\beta)$ is the unique non-zero element of $\tl n(\beta)$ such that
\begin{equation*} 
{P_n}^2 = P_n \qquad\textrm{and}\qquad P_n E_j = E_j P_n = 0, \quad \text{for all }1\leq j<n.
\end{equation*}
\end{proposition}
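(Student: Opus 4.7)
The plan is to split the claim into existence (the element $P_n$ built by the recursion satisfies the two relations) and uniqueness (any non-zero $Q \in \tl{n}(\beta)$ with these properties coincides with $P_n$). For existence, I would prove by induction on $k \in \{1, \ldots, n\}$ that $P_k^2 = P_k$ and $P_k E_j = E_j P_k = 0$ for every $j \in \{n-k+1, \ldots, n-1\}$, so that the case $k = n$ is exactly the statement. The base $k = 1$ is trivial, and the inductive step hinges on the auxiliary identity $E_{n-k+1} P_{k-1} E_{n-k+1} = (\qnum{k}/\qnum{k-1})\, E_{n-k+1}$, which itself follows from the Temperley-Lieb relations and the inductive form of the annihilation. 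Once this identity is in hand, expanding $P_k^2$ from the recursive definition and using $P_{k-1}^2 = P_{k-1}$ makes the quadratic term cancel against the two cross terms, leaving $P_k$; the annihilation $P_k E_{n-k+1} = 0$ is a two-line calculation of the same flavour, while $P_k E_j = 0$ for $j > n-k+1$ is immediate from the inductive hypothesis. The assumption that $\qnum{k} \neq 0$ for $2 \leq k \leq n$ is exactly what the recursion (and the auxiliary identity) require.

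For uniqueness I would exploit two structural features of $\tl{n}(\beta)$. First, the only $(n,n)$-diagram with $n$ through-lines is $\Id$, so $\tl{n} = \mathbb{C}\,\Id \oplus \mathcal{I}$, where $\mathcal{I}$ denotes the span of the diagrams with strictly fewer through-lines; since concatenation cannot increase the number of through-lines, $\mathcal{I}$ is a two-sided ideal. Second, by planarity (the innermost-cup argument), every such diagram $D$ has a pair of adjacent boundary points joined on its left, giving $D = E_j D'$ for some $j$ and $D' \in \tl{n}$; dually $D = D'' E_{j'}$. Hence $Q E_j = E_j Q = 0$ for all $j$ entails $Q \mathcal{I} = 0 = \mathcal{I} Q$. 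Writing $Q = \alpha\, \Id + R$ with $R \in \mathcal{I}$, the calculation $Q^2 = Q(\alpha\,\Id + R) = \alpha Q + QR = \alpha Q$ together with $Q^2 = Q$ and $Q \neq 0$ forces $\alpha = 1$. The same reasoning gives $P_n = \Id + R'$ with $R' \in \mathcal{I}$. Finally, $P_n Q = P_n + P_n R = P_n$ since $P_n R = 0$, while $P_n Q = Q + R' Q = Q$ since $R' Q = 0$; comparing the two shows $Q = P_n$.

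The main obstacle is really the auxiliary identity $E_{n-k+1} P_{k-1} E_{n-k+1} = (\qnum{k}/\qnum{k-1})\, E_{n-k+1}$ driving the existence step: it is the only point where the Temperley-Lieb relations are exploited in an essential way, and it is most cleanly proved by a simultaneous induction with the main recursive claim (as in \cite{KL}). Everything else in the argument is either a formal expansion using $P_{k-1}^2 = P_{k-1}$ or the ideal-theoretic manipulation of $\mathcal{I}$ outlined above, both of which are routine once the identity is established.
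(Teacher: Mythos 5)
The paper does not actually prove this proposition — it is quoted from \cite{KL}, and the surrounding properties are explicitly "listed without proof" — so there is no in-paper argument to compare against; your proposal is the standard Kauffman--Lins/Wenzl argument, and its uniqueness half (decompose $\tl n = \CC\,\Id \oplus \mathcal I$ with $\mathcal I$ the ideal spanned by diagrams with fewer than $n$ through lines, deduce $Q\mathcal I = \mathcal I Q = 0$ from the cup factorization, read off the coefficient of $\Id$, then compute $P_nQ$ two ways) is correct and complete as sketched.

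The existence half, however, rests on an auxiliary identity that is false as you have stated it. The correct statement is
\begin{equation*}
E_{n-k+1}\,P_{k-1}\,E_{n-k+1} \;=\; \frac{\qnum{k}}{\qnum{k-1}}\,E_{n-k+1}\,P_{k-2},
\end{equation*}
i.e.\ the right-hand side carries a factor $P_{k-2}$ that your version omits (this is exactly the first of the identities the paper displays right after the proposition, with $k$ shifted by one; you can check the failure of your version already at $k=4$, where $E_{n-3}P_2 = E_{n-3} - \tfrac{1}{\qnum{2}}E_{n-3}E_{n-1} \neq E_{n-3}$). Your overall computation happens to survive this slip, because wherever the identity is invoked it sits inside a sandwich $P_{k-1}(\cdots)P_{k-1}$, and there the extra $P_{k-2}$ is absorbed via $P_{k-2}P_{k-1} = P_{k-1}P_{k-2} = P_{k-1}$ (together with $[E_{n-k+1},P_{k-2}]=0$, as they act on disjoint strands). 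But since you single out this identity as \emph{the} crux and propose to establish it by a simultaneous induction, that induction cannot close for the statement you wrote down — the inductive step produces the $P_{k-2}$ whether you want it or not. Replace the lemma by its correct form (and add $P_{k-2}P_{k-1}=P_{k-1}$ and the commutation just mentioned to your toolkit) and the existence argument goes through exactly as you describe.
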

\noindent In fact for $1\leq k\leq n$, the $P_k$ used to define $P_n$ share some of these properties. For this reason, they will also be referred to as Wenzl-Jones projectors. The properties they share are listed without proof. (See \cite{MDRR, KL} and references therein.) First, like $P_n$, the Wenzl-Jones projector $P_k$ is an idempotent. Second, $P_k$ acts trivially on the $n-k$ top links. More precisely, $P_k$ is a linear combination of $(n,n)$-diagrams, each of which has a through line between the first sites on its left and right sides, a through line between the second sites, all the way to a through line between the $(n-k)$-th sites. Thus $P_k$ commutes with the generators $E_i$ for $i\leq n-k-1$. Third, $P_k$ annihilates the $E_i$ with $i\geq n-k+1$. These properties are summed up as
\begin{align}\label{eq:propWJ}
{P_k}^2 & = P_k,\notag \\
P_k E_i & = E_i P_k, \qquad \textrm{when }i\leq n-k-1,\\
P_k E_i & = E_i P_k = 0, \qquad \textrm{when }i\geq n-k+1.\notag
\end{align}
Finally the following identities will also be used:
\begin{align*}
E_{n-k}P_kE_{n-k} &= {\qnum{k+1}\over \qnum{k}}E_{n-k}P_{k-1};\\
P_k &= {1\over \qnum{k}}\left( \qnum{k} P_{k-1} - \qnum{k-1} P_{k-1}E_{n-k+1} + \qnum{k-2}P_{k-1}E_{n-k+1}E_{n-k+2} + \cdots\right.\\
 &\qquad\qquad \cdots  +  \left. (-1)^{k-1}\qnum{1} P_{k-1}E_{n-k+1} E_{n-k+2} \dots E_{n}\right).
\end{align*}
The projector $P_k$ will be represented diagrammatically by
\begin{equation*}
P_k := 
  \begin{tlNdiagram}{4}
    \linkH{1}\linkH{2}
    \tlWeld{M}{1}{2}
    \linkH{3}\linkH{4}
    \tlDots{M}{3}\tlDots{R}{1}\tlDots{L}{1}
    \tlLabel{R}{3}{n-k}\tlLabel{R}{4}{1}\tlLabel{R}{1}{n}\tlLabel{R}{2}{n-k+1}
  \end{tlNdiagram}\ , \quad \textrm{and thus, for example }
P_2 = \ 
  \begin{tlNdiagram}{4}
    \linkH{1}\linkH{2}\linkH{3}\linkH{4}
    \tlWeld{M}{1}{2}
  \end{tlNdiagram} \ = \ 
  \begin{tlNdiagram}{4}
    \linkH{1}\linkH{2}\linkH{3}\linkH{4}
  \end{tlNdiagram} - {1\over \qnum{2}} \ 
  \begin{tlNdiagram}{4}
    \linkV{L}{1}{2}
    \linkV{R}{1}{2}
    \linkH{3}\linkH{4}
  \end{tlNdiagram}\ .
\end{equation*}
With this notation, the last identity reads
\begin{equation}\label{eq:WJDecompositionFormula}
\begin{tlNdiagram}{4}
  \linkH{1}\linkH{2}\linkH{3}\linkH{4}
  \tlDots{L}{1}\tlDots{R}{1}
  \tlLabel{L}{4}{1}\tlLabel{L}{1}{k}
  \tlWeld{M}{1}{4}
\end{tlNdiagram} = {1\over \qnum{k}}\left(\qnum{k} \ 
\begin{tlNdiagram}{4}
  \linkH{4}\linkH{2}\linkH{3}\linkH{1}
  \tlDots{L}{1}\tlDots{R}{1}
  \tlWeld{M}{1}{3}
\end{tlNdiagram} - \qnum{k-1} \ 
\begin{tlNdiagram}{4}
  \linkV{R}{4}{3}\linkV{L}{4}{3}
  \linkH{2}\linkH{1}
  \tlDots{M}{1}
  \tlWeld{L}{1}{3}
\end{tlNdiagram} + \qnum{k-2} \ 
\begin{tlNdiagram}{4}
  \linkV{L}{4}{3}\linkV{R}{3}{2}
  \linkD{2}{4}\linkH{1}
  \tlDots{R}{1}
  \tlWeld{L}{1}{3}
\end{tlNdiagram}
+ \dots  + (-1)^{k-1}\qnum{1}\ 
\begin{tlNdiagram}{4}
  \linkV{R}{1}{2}\linkV{L}{4}{3}
  \linkD{2}{4}\linkD{1}{3}
  \tlDots{R}{3}
  \tlWeld{L}{1}{3}
\end{tlNdiagram}\right).
\end{equation}

%
%
\subsection{The family of boundary seam algebras}\label{sec:defbnk}

The definition of the {\em boundary seam algebras} $\bnk{n,k}(\beta)$ uses the above definitions of $\tl n$ and of the Wenzl-Jones projectors. It is the subset of $\tl{n+k}(\beta)$
$$\bnk{n,k}(\beta)=\langle \id, e_j \mid 1\leq j\leq n\rangle\subset \tl{n+k}(\beta),$$
where $\id=P_k\in\tl{n+k}$ and $e_j=P_kE_jP_k$ for $1\leq j< n$ and $e_n=\qnum k P_k E_nP_k$\footnote{To ease readability, we shall use capital letters for generators and elements of $\tl{n+k}$ and small ones for those of $\bnk{n,k}$.}. (The content of the present section follows that of \cite{MDRR}.) Clearly this subset is closed under addition and multiplication. It is thus an associative unital algebra, with $\id$ as its identity, but it is not a subalgebra of $\tl{n+k}$ since the identities $\Id$ and $\id$ of these two algebras are not the same. The $k$ bottom points on both left and right sides of elements of $\bnk{n,k}$ are called {\em boundary points} and the other, {\em bulk points}. (The choice of word comes from the physical interest for boundary seam algebras and will not concern us.) Due to the fact that $P_k$ might not be defined, the range of the two integers $n,k$ and of the complex number $q$ (with $\beta=q+q^{-1}$) will be restricted as follows:
\begin{equation}\label{eq:restPara}
\begin{aligned}
\text{\em (i)} & \textrm{\ $n\geq 1$, $k\geq 2$ and}\\
\text{\em (ii)} & \textrm{\ $q$ is not a root of unity or, if it is and $\ell$ is the smallest positive integer such that $q^{2\ell}=1$, then $\ell>k$.}
\end{aligned}
\end{equation}
Putting $n=0$ leads to the one-dimensional ideal $\mathbb C P_k\subset \tl k$,
and the cases $k=0$ or $1$ correspond to the Temperley-Lieb algebras $\tl n$
and $\tl{n+1}$ respectively. With these conditions, the definition is equivalent to left- and right-multiplication by $P_k$, namely: $\bnk{n,k}\simeq P_k \tl{n+k}P_k$. The dimension of $\bnk{n,k}$ is
$$\dim\bnk{n,k}=\begin{pmatrix}2n\\ n\end{pmatrix}-\begin{pmatrix}2n\\ n-k-1\end{pmatrix}.$$

Both $\tl n$ and $\bnk{n,k}$ can be defined through generators and relations. For $\tl n(\beta)$, $\beta\in\mathbb C$, the generators are $\Id$ and $E_i$, 
$1\leq i<n$, with relations
\begin{equation}\label{eq:tlnRel}
	\begin{aligned}
	 && \Id\, E_i &= E_i\, \Id , &&\\
	 E_i^2 &= \beta E_i, & & & E_iE_j &= E_jE_i, \quad |i-j|>1,\\
	 E_iE_{i+1}E_i &= E_i, & & & E_iE_{i-1}E_i &= E_i,
	\end{aligned}
\end{equation}
as long as the indices $i$, $i-1$, $i+1$, and $j$ are in $\{1,2,\dots, n-1\}$. The generators for $\bnk{n,k}$ are $\id$ and $e_i,1\leq i\leq n,$ with relations 
\begin{equation}\label{eq:relbnkalg}
	\begin{aligned}
		&&\id\, e_i &= e_i\, \id,  &&\\
		e_i^2 &= \beta e_i, \quad i<n, & & & e_ie_j &= e_je_i, \quad |i-j|>1,\\
		e_ie_{i+1}e_i &= e_i, \quad i<n-1, &&& e_ie_{i-1}e_i &= e_i, \quad i\leq n-1,\\
		e_n^2 &= \qnum{k+1} e_n,  & & & e_{n-1}e_ne_{n-1} &= \qnum{k}e_{n-1}.
	\end{aligned}
\end{equation}
for indices belonging to $\{1,2,\dots, n\}$, and the following relation when $n>k$:
\begin{equation}\label{eq:relbnkyk}
\Big(\prod_{j=0}^{k} e_{n-j}\Big) y_k = \sum_{i=0}^{k-1} (-1)^i \qnum{k-i} \Big( \prod_{j=i+2}^k e_{n-j}\Big)y_k,
\end{equation}
where the $y_t$ are given recursively by
\begin{gather}
y_0 = \qnum{k} \id,  \qquad
y_1= e_n,\nonumber \\
\qnum{k-t}(-1)^{t}y_{t+1}= \Big(\prod_{j=0}^t e_{n-j}\Big) y_t + \sum_{i=0}^{t-1}(-1)\qnum{k-i}\Big(\prod_{j=i+2}^te_{n-j}\Big)y_t. \label{defykt}
\end{gather}

Isomorphisms between the two presentations (diagrammatic, and through generators and relations) are explicited in \cite{RSA} for $\tl n$ and in \cite{MDRR} for $\bnk{n,k}$. For the family of $\bnk{n,k}$'s, the defining relations \eqref{eq:relbnkalg} and \eqref{eq:relbnkyk} allow one to enlarge the domain of the parameters \eqref{eq:restPara}. However, due to isomorphisms between some pairs $\bnk{n,k}$ and $\bnk{n,k'}$, the domains \eqref{eq:restPara} cover almost all boundary seam algebras defined through generators and relations. Only the family $\bnk{n,m\ell}(\beta=q+q^{-1})$, where $\ell$ is the smallest positive integer such that $q^{2\ell}=1$ and $m$ is a positive integer, is missing. It was shown in \cite{MDRR} that the study of this family can be reduced to that of $\bnk{n,\ell}(\beta)$. Little is known about these algebras and it is not clear that the method proposed here applies to them.

The fact that the relations \eqref{eq:relbnkalg} are the defining relations for the blob algebra was Jacobsen's and Saleur's key observation in \cite{JacobsenSaleurConformal} that we alluded to in our introduction. It allowed them, amongst other things, to conjecture in \cite{JacobsenSaleurCombinatorial} Gram determinant formulas for the blob algebra that were later proved by Dubail \cite{Dubail}. But the algebra $\bnk{n,k}(\beta)$ is {\em not} the blob algebra. Indeed the use of the elements $\id=P_k$, $e_j=P_kE_jP_k$ for $1\leq j< n$ and $e_n=\qnum k P_k E_nP_k$ of $\tl{n+k}(\beta)$ to define $\bnk{n,k}(\beta)$ adds a new relation that is not satisfied by the elements of the corresponding blob algebra. As noted in \cite{MDRR}, the discrepancy is seen most easily in the case $k=1$. Indeed the generators of $\bnk{n,1} \simeq \tl{n+1}$ satisfy the additional relation $e_ne_{n-1}e_n=e_n$, but those of the blob algebra {\em do not}.  When $n>k>1$, this simple additional relation is replaced by the more complicated \eqref{eq:relbnkyk}. Thus $\bnk{n,k}(\beta)$ is the quotient of the blob algebra by the ideal generated by this relation. Again, as noted in the introduction, Jacobsen and Saleur did not need a formal definition of the algebra generated by the above $e_j$'s. Such a need appeared in \cite{MDRR} where the precise relationship between blob and seam algebras was then made explicit.

%
%
\subsection{The representation theory of the Temperley-Lieb algebras}\label{sec:repTLn}

The representation theory of $\tl n$ was constructed using three different approaches by Goodman and Wenzl \cite{GoodWenzl93}, Martin \cite{Martin}, and Graham and Lehrer \cite{GL-Aff}. Later Ridout and Saint-Aubin \cite{RSA} gave a self-contained presentation of these results, partially inspired by Graham's and Lehrer's approach and results by Westbury \cite{WesRep95}. Here are the main statements.

Let $n\in\mathbb N$ and $q\in\mathbb C^\times$ be fixed. The {\em standard} or {\em cellular modules} $\Cell n d$ are modules over the algebra $\tl n(\beta=q+q^{-1})$. Such modules are defined for each $d$ in the set
\begin{equation}\label{def:delta}
\Delta_n=\{d\in\mathbb N\ |\ 0\leq d\leq n \textrm{ and } d\equiv n\modY 2\}.
\end{equation}
The module $\Cell n d$ is the $\mathbb C$-linear span of monic $(n,d)$-diagrams. The action of an $(n,n)$-diagram in $\tl n(\beta)$ on a monic $(n,d)$-diagram is the composition of diagrams described in section \ref{sec:defTLn} (with each closed loop replaced by factor of $\beta$) with the following additional rule: if the $(n,d)$-diagram obtained from the concatenation is not monic, the result is set to zero. Their dimension is
$$\dim \Cell n d= \begin{pmatrix}n\\ (n-d)/2\end{pmatrix}-\begin{pmatrix}n\\ (n-d-2)/2\end{pmatrix}.$$

Each of these cellular modules $\Cell n d$ carries an invariant bilinear form $\langle \,\cdot, \cdot\,\rangle=\langle \,\cdot, \cdot\,\rangle_n^d$ defined on $(n,d)$-diagrams and extended bilinearly. For a pair $(v,w)$ of monic $(n,d)$-diagrams, the composition $v^*w$ is first drawn. Here $v^*$ stands for the reflection of $v$ through a vertical mirror. It is thus a $(d,n)$-diagram, and $v^*w$ is a $(d,d)$-diagram. If it is monic, it is equal to $\beta^p\Id$, for some integer $p$, and $\langle v,w\rangle$ is defined to be $\beta^p$. If it is non-monic, then $\langle v,w\rangle=0$. This bilinear form is symmetric and {\em invariant} in the sense that, for any $A\in\tl n$, then $\langle v,Aw\rangle=\langle A^*v,w\rangle$ where, again, $A^*$ is the left-right reflection of $A$. This bilinear form can be identically zero. For $\tl n(\beta)$, this occurs only when $n$ is even, $d=0$ and $\beta=0$. Thus the set
\begin{equation}\label{def:deltaZero}
\Delta_{n}^0=\{d\in\Delta_n\ |\  \langle \,\cdot, \cdot\,\rangle_n^d\not\equiv 0 \}\subset \Delta_n
\end{equation}
is identical to $\Delta_n$ unless $n$ is even and $\beta=0$. A (non-zero) invariant bilinear form carries representation-theoretic information because its radical 
$$\Radi n d=\{v \in\Cell n d\ |\ \langle v,w\rangle_n^d=0\textrm{ for all }w\in \Cell n d\}$$
is a submodule. For the Temperley-Lieb algebras, it gives even more information.
\begin{proposition}[\cite{GL-Aff}]\label{prop:radicalTLn} The radical $\Radi n d$ of the non-zero bilinear form $\langle \,\cdot, \cdot\,\rangle_n^d$ is the Jacobson radical of $\Cell n d$, that is the intersection of its maximal submodules, and $\Irre n d := \Cell n d/ \Radi n d$ is irreducible.
\end{proposition}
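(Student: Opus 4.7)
The plan is to prove the proposition by showing that $\Radi n d$ is the unique maximal submodule of $\Cell n d$, from which both assertions follow at once. The route passes through a single diagrammatic identity that tightly links the action of $\tl n$ to the bilinear form.

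First I would verify that $\Radi n d$ is indeed a submodule. If $v \in \Radi n d$ and $A \in \tl n$, the invariance property gives $\langle A v, w \rangle = \langle v, A^* w \rangle = 0$ for every $w \in \Cell n d$, so $A v \in \Radi n d$. Because the form is nonzero by hypothesis, $\Radi n d$ is moreover a \emph{proper} submodule.

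The heart of the argument is the identity
\begin{equation*}
  (u w^*) \cdot v = \langle w, v \rangle\, u \qquad \text{in } \Cell n d,
\end{equation*}
valid for every monic $(n,d)$-diagram $u$ and all $v, w \in \Cell n d$. I would first establish it for monic $v, w$ by inspecting the concatenation $u \cdot (w^* v)$ via associativity: the $(d,d)$-diagram $w^* v$ either equals $\beta^p \Id$ for some $p$, in which case the concatenation recovers $\beta^p u$ and $\langle w, v \rangle = \beta^p$ by definition; or it has strictly fewer than $d$ through-lines, in which case $u(w^* v)$ is a non-monic $(n,d)$-diagram and so is zero in $\Cell n d$, matching $\langle w, v \rangle = 0$. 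Extending bilinearly in $v$ and then in $w$, using the linearity of $w \mapsto w^*$, yields the identity in full generality.

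With this in hand, any submodule $N$ of $\Cell n d$ not contained in $\Radi n d$ must coincide with $\Cell n d$: pick $v \in N$ with $\langle w, v \rangle \neq 0$ for some $w$; then for every monic $(n,d)$-diagram $u$ the identity gives $u = \langle w, v \rangle^{-1} (u w^*) \cdot v \in N$, and since such diagrams span $\Cell n d$ we get $N = \Cell n d$. Consequently $\Radi n d$ contains every proper submodule, is the unique maximal submodule, coincides with the Jacobson radical, and $\Irre n d = \Cell n d / \Radi n d$ is irreducible. The main obstacle is the key identity itself: one must carefully align the dichotomy ``$w^* v$ monic versus non-monic'' with the dichotomy ``$\langle w, v \rangle$ nonzero versus zero'', which is the point at which the specific combinatorics of the Temperley–Lieb diagrams enters decisively.
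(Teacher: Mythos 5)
Your proof is correct and follows the standard Graham--Lehrer argument that the paper invokes: your key identity $(uw^*)\cdot v=\langle w,v\rangle\,u$ is exactly the specialization to $\tl n$ of property (iii) of proposition \ref{prop:Propertiescellbili}, and your deduction that any submodule not contained in $\Radi n d$ is all of $\Cell n d$ is the same mechanism underlying proposition \ref{prop:kernelbiliisJacobsonRadical}. The diagrammatic verification of the identity via the monic/non-monic dichotomy for $w^*v$ is handled correctly, so nothing is missing.
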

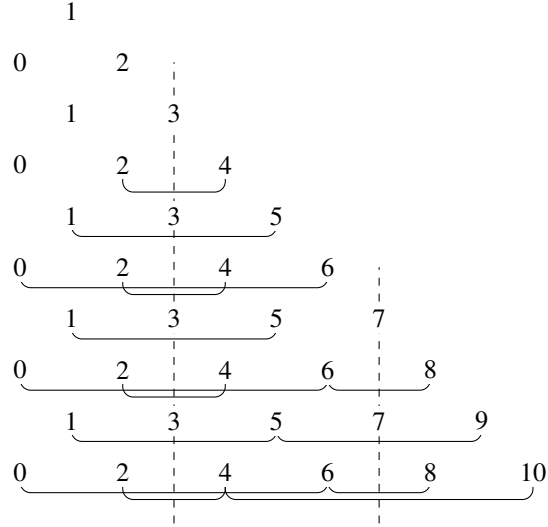
\begin{figure}[h!]
\begin{center}
\begin{tikzpicture}[baseline={(current bounding box.center)},every node/.style={fill=white,rectangle,inner sep=2pt},scale=1/2.2]
	\draw[rounded corners] (3,0) -- (3,-0.8) -- (6,-0.8) -- (6,0) ;
	\draw[rounded corners] (15,0) -- (15,-0.8) -- (6,-0.8) -- (6,0) ;
	\draw[rounded corners] (0,0) -- (0,-0.6) -- (9,-0.6) -- (9,0) ;
	\draw[rounded corners] (12,0) -- (12,-0.6) -- (9,-0.6) -- (9,0) ;
	\draw[rounded corners] (1.5,1.5) -- (1.5,0.9) -- (7.5,0.9) -- (7.5,1.5) ;
	\draw[rounded corners] (13.5,1.5) -- (13.5,0.9) -- (7.5,0.9) -- (7.5,1.5) ;
	\draw[rounded corners] (3,3) -- (3,2.2) -- (6,2.2) -- (6,3) ;
	\draw[rounded corners] (0,3) -- (0,2.4) -- (9,2.4) -- (9,3) ;
	\draw[rounded corners] (12,3) -- (12,2.4) -- (9,2.4) -- (9,3) ;
	\draw[rounded corners] (1.5,4.5) -- (1.5,3.9) -- (7.5,3.9) -- (7.5,4.5) ;
	\draw[rounded corners] (3,6) -- (3,5.2) -- (6,5.2) -- (6,6) ;
	\draw[rounded corners] (0,6) -- (0,5.4) -- (9,5.4) -- (9,6) ;
	\draw[rounded corners] (1.5,7.5) -- (1.5,6.9) -- (7.5,6.9) -- (7.5,7.5) ;
	\draw[rounded corners] (3,9) -- (3,8.2) -- (6,8.2) -- (6,9) ;
	\draw[dashed] (4.5,-1.5) -- (4.5,12);
	\draw[dashed] (10.5,-1.5) -- (10.5,6);
	\node at (0,0) {$0$};
	\node at (3,0) {$2$};
	\node at (6,0) {$4$};
	\node at (9,0) {$6$};
	\node at (12,0) {$8$};
	\node at (15,0) {$10$};
	\node at (1.5,1.5) {$1$};
	\node at (4.5,1.5) {$3$};
	\node at (7.5,1.5) {$5$};
	\node at (10.5,1.5) {$7$};
	\node at (13.5,1.5) {$9$};
	\node at (0,3) {$0$};
	\node at (3,3) {$2$};
	\node at (6,3) {$4$};
	\node at (9,3) {$6$};
	\node at (12,3) {$8$};
	\node at (1.5,4.5) {$1$};
	\node at (4.5,4.5) {$3$};
	\node at (7.5,4.5) {$5$};
	\node at (10.5,4.5) {$7$};
	\node at (0,6) {$0$};
	\node at (3,6) {$2$};
	\node at (6,6) {$4$};
	\node at (9,6) {$6$};
	\node at (1.5,7.5) {$1$};
	\node at (4.5,7.5) {$3$};
	\node at (7.5,7.5) {$5$};
	\node at (0,9) {$0$};
	\node at (3,9) {$2$};
	\node at (6,9) {$4$};
	\node at (1.5,10.5) {$1$};
	\node at (4.5,10.5) {$3$};
	\node at (0,12) {$0$};
	\node at (3,12) {$2$};
	\node at (1.5,13.5) {$1$};
\end{tikzpicture}
\end{center} 
\caption{Bratteli diagram for $\tl n$ for $\ell=4$ with $n=1$ as the top line.}\label{fig:bratteli}
\end{figure}

One way to identify whether $\Radi n d$ is zero or not is to compute the determinant of the {\em Gram matrix} $\mathcal G_n^d$ of $\langle \,\cdot, \cdot\,\rangle_n^d$, that is the matrix representing $\langle \,\cdot, \cdot\,\rangle_n^d$, say in the basis of monic $(n,d)$-diagrams. This determinant is 
$$\det \mathcal G_n^d=\prod_{j=1}^{(n-d)/2}\Big(\frac{\qnum{d+j+1}}{\qnum{j}}\Big)^{\dim \Cell n{d+2j}}.$$
Clearly $\Radi n d$ might be non-trivial only when $\qnum{d+j+1}$ is zero for some $j$, namely, when $q$ is some root of unity. This observation is important and justifies the introduction of some vocabulary.

The set $\Delta_n=\{d\in\mathbb N\ |\ 0\leq d\leq n \textrm{ and } d\equiv n\modY 2\}$ is partitioned as follows. If $q$ is not a root of unity, each element of $\Delta_n$ forms its own class in this partition. Suppose that $q$ is a root of unity and let $\ell$ be the smallest positive integer $\ell$ such that $q^{2\ell}=1$. The letter $\ell$ will be reserved for this integer throughout. If $d\in\Delta_n$ is such that $d+1\equiv 0\modY \ell$, and thus $\qnum{d+1}=0$, then $d$ is said to be {\em critical} and it forms its own class $[d]$ in the partition of $\Delta_n$. If $d$ is not critical, the class $[d]$ consists of images of $d$ in $\Delta_n$ generated by reflections through mirrors positioned at critical integers. In other words, $[d]$ is the {\em orbit of $d$} under the group generated by these reflections. This information is represented visually in a Bratteli diagram in figure \ref{fig:bratteli} for $\ell=4$. Each line of the Bratteli diagram  contains the elements of the set $\Delta_n$, starting with $\Delta_1$ on the top line. The vertical dashed lines on the diagram go through the critical $d$'s. Elements of the classes for non-critical $d$'s are joined by curly brackets. For $\ell=4$, the partition of $\Delta_9$ is $\{3\}\cup\{7\}\cup\{1,5,9\}$ and that of $\Delta_{10}$ is $\{0,6,8\}\cup\{2,4,10\}$. Finally, for $d$ a non-critical element of $\Delta_n$, its immediate left and right neighbors in $[d]$ are denoted respectively by $d^-$ and $d^+$. These neighbors might not exist.

The parameter $q$ will be called {\em generic} (for $\tl n(\beta=q+q^{-1})$) if it is not a root of unity or, if it is, when all orbits $[d]$ are singletons. An example of the latter case occurs for $\tl 3$ when $\ell=4$ as can be seen in the Bratteli diagram: on the third line there are no curly brackets and the partition of $\Delta_3$ is $\{\{1\},\{3\}\}$. Note that the condition of genericity can be restated as: $q$ is not a root of unity or $\ell>n$. The latter formulation is usually the one used in the description of the $\tl n$, but it will turn out that the former will be the one appropriate for the seam algebras. When $q$ is not generic, it will be referred to (somewhat abusively) as being a root of unity and, in this case, it will be understood that the second condition (all orbits $[d]$ are singletons) does not hold. 
 
The following theorem is extracted from the foundational papers \cite{GoodWenzl93,Martin,GL-Aff}. The projective cover of the irreducible $\Irre n d$ will be denoted by $\Proj n d$.
\begin{theorem}\label{thm:tln} (a) If $q$ is generic, then $\tl n(\beta=q+q^{-1})$ is semisimple, the cellular modules are irreducible and the set $\{\Cell n d=\Irre n d \mid d\in\Delta_{n}^0\}$ forms a complete set of non-isomorphic irreducible modules.

\noindent (b) If $q$ is a root of unity (with $n\geq \ell$), then $\tl n(\beta=q+q^{-1})$ is not semisimple. The set $\{\Irre n d=\Cell n d/\Radi n d\mid d\in\Delta_{n}^0\}$ forms a complete set of non-isomorphic irreducible modules. If $d\in\Delta_{n}^0$ is critical, then $\Cell n d = \Irre n d = \Proj n d$. If $d$ is not critical, then the two short sequences
\begin{align*}
	0 \longrightarrow \Irre n{d^+}\longrightarrow\ & \Cell n d\longrightarrow \Irre n d\longrightarrow 0\\
	0 \longrightarrow \Cell n{d^-}\longrightarrow\ & \Proj n d\longrightarrow \Cell n d\longrightarrow 0
\end{align*}
are exact and non-split. If $d^+$ is not in $\Delta_n$, then $\Irre n{d^+}$ is set to zero in the first sequence and, if $d^-$ is not in $\Delta_n$, then $\Cell n{d^-}$ is set to zero in the second. As indicated by the first exact sequence, if $\Radi n d$ is not zero, then it is isomorphic to $\Irre n{d^+}$.
\end{theorem}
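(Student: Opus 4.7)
The plan is to apply the Graham–Lehrer machinery of cellular algebras, whose hypotheses are verified for $\tl n$ in the next section of the paper. The general theory provides for free the following facts that will be used as black boxes: each $\Irre n d = \Cell n d / \Radi n d$ is either zero or absolutely simple; the nonzero $\Irre n d$ (namely those with $d \in \Delta_n^0$) form a complete set of pairwise non-isomorphic simples; any homomorphism $\varphi \colon \Cell n{d'} \to \Cell n d$ with $d' \neq d$ has $\im \varphi \subseteq \Radi n d$; and every projective indecomposable admits a cell filtration whose multiplicities are governed by the BGG-type reciprocity $[\Proj n d : \Cell n{d'}] = [\Cell n{d'} : \Irre n d]$. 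What remains is therefore to decide when $\Radi n d$ vanishes, describe it otherwise, and assemble the projective covers.

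For part (a), I would read off the Gram determinant formula factor by factor: if $q$ is generic, every $\qnum{d+j+1}$ and every $\qnum j$ appearing in the product is nonzero, so $\det \mathcal G_n^d \neq 0$ for all $d \in \Delta_n^0$. Hence $\Radi n d = 0$ and $\Cell n d = \Irre n d$, and semisimplicity follows either by invoking the general criterion for cellular algebras (all Gram forms non-degenerate $\Rightarrow$ semisimple) or by checking $\sum_{d \in \Delta_n^0}(\dim \Cell n d)^2 = \dim \tl n$ directly.

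For part (b), the critical case is disposed of first: if $d$ is critical, its orbit is the singleton $\{d\}$, so no standard $\Cell n{d'}$ in its block sits above $d$; the Hom lemma then forces $\Radi n d = 0$, whence $\Cell n d = \Irre n d$, and BGG reciprocity reduces the cell filtration of $\Proj n d$ to the single term $\Cell n d$. For non-critical $d$ the strategy is to produce an explicit nonzero map $\varphi \colon \Cell n{d^+} \to \Cell n d$ — for instance through a diagrammatic intertwiner built from monic $(n, d^+)$-diagrams and Wenzl-Jones-type combinations, or via an induction on $n$ exploiting the restriction and induction functors between $\tl{n-1}$ and $\tl n$. By the Hom lemma $\im \varphi \subseteq \Radi n d$; since $\Cell n{d^+}$ has simple head $\Irre n{d^+}$, the image contains $\Irre n{d^+}$; and the Gram determinant gives $\dim \Radi n d = \dim \Irre n{d^+}$, forcing the isomorphism $\Radi n d \cong \Irre n{d^+}$ and yielding the first short exact sequence. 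Non-splitness is immediate: a splitting would give $\Cell n d$ two maximal submodules and a reducible head, contradicting the simplicity of its unique top composition factor $\Irre n d$.

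The projective cover is then obtained by feeding the composition multiplicities into BGG reciprocity. The first sequence together with the critical case yields $[\Cell n{d'} : \Irre n d] = 1$ exactly when $d' \in \{d, d^-\}$ and zero otherwise; hence $\Proj n d$ has a cell filtration with exactly the two standard quotients $\Cell n{d^-}$ (bottom) and $\Cell n d$ (top), producing the second exact sequence. A splitting there would make $\Irre n d$ appear twice in the head of $\Proj n d$, contradicting the definition of a projective cover. The main obstacle in this plan is the construction of the nonzero homomorphism $\varphi \colon \Cell n{d^+} \to \Cell n d$: the Hom lemma, the Gram determinant, the reciprocity and the non-splitness arguments are all essentially formal once $\varphi$ is in hand, but exhibiting $\varphi$ requires either a careful diagrammatic formula or a delicate induction on $n$ through the branching rules — this is where all the actual work in the non-semisimple case lives.
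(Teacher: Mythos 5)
First, a remark on the comparison itself: the paper offers no proof of this theorem --- it is imported from the literature (the sentence introducing it reads ``extracted from the foundational papers'' \cite{GoodWenzl93,Martin,GL-Aff}, with \cite{RSA} as the self-contained reference) and is then used as an \emph{input} to prove the seam-algebra analogue, \cref{thm:bnk}. Your architecture --- cellular formalism, Gram determinants, an explicit intertwiner $\Cell n{d^+}\to\Cell n d$, then reciprocity for the projectives --- is exactly the architecture of those references and of the paper's own sections \ref{sec:cellularity}--\ref{sec:qARootOfUnity}, so the route is the right one. Two steps you treat as automatic, however, are not.

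In the critical case you write that ``the Hom lemma forces $\Radi n d=0$''. It does not: that lemma only says that images of morphisms from other cell modules land \emph{inside} $\Radi n d$; it gives no upper bound on $\Radi n d$ itself. The honest argument is either the linkage principle (composition factors of $\Radi n d$ are $\Irre n e$ with $e>d$ cell-linked to $d$, and a critical $d$ has singleton orbit) or, more elementarily, the observation that every ratio $\qnum{d+j+1}/\qnum{j}$ in the Gram determinant is nonzero when $d+1\equiv 0\modY\ell$, because numerator and denominator vanish simultaneously --- precisely the computation the paper carries out for the seam algebras in \cref{prop:criticalissimple}. More seriously, the identity $\dim\Radi n d=\dim\Irre n{d^+}$ is not ``given by the Gram determinant'': the vanishing of $\det\mathcal G_n^d$ only shows $\Radi n d\neq 0$, and even the order of vanishing only bounds $\dim\ker\mathcal G_n^d$ from above. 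What is actually needed is the recursive block-diagonalization of the Gram matrix (the $\tl n$ analogue of \cref{prop:recursivegram}, i.e.\ Prop.~5.1 of \cite{RSA}), which yields $\dim\Radi n d=\dim\Radi{n-1}{d-1}+\dim\Radi{n-1}{d+1}$ (with $\Cell{n-1}{d+1}$ replacing the second radical when $\qnum{d+2}=0$) and hence, by induction on $n$, the desired equality; only then does the injection $\Irre n{d^+}\hookrightarrow\Radi n d$ supplied by your map $\varphi$ close the first exact sequence. Two smaller points: you rightly flag that constructing $\varphi$ is where the real work lives (it is the Graham--Lehrer morphism of \cref{prop:MorphiGL}); and your non-splitness argument for the projective sequence misidentifies the head of $\Cell n{d^-}$ as contributing $\Irre n d$ --- it contributes $\Irre n{d^-}$ --- but the indecomposability of $\Proj n d$ gives non-splitness in one line regardless.
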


%
%
\subsection{The representation theory of the seam algebras}\label{sec:repbnk}

The representation theory of the boundary seam algebras $\bnk{n,k}$ was launched in \cite{MDRR} by constructing the analogues of the cellular modules of the Temperley-Lieb algebras. The cellular modules $\Cell{n,k}d$ over $\bnk{n,k}(\beta)$ are spanned by the set $\base_{n,k}^d$ of non-zero elements of $\{P_k w\,|\, w\textrm{ a monic }(n+k,d)\textrm{-diagram}\}$. Because of the second relation in \eqref{eq:propWJ}, any $P_kw$ with a monic $(n+k,d)$-diagram $w$ that has a link between the boundary points, that is the bottom $k$ points, is zero. So the dimension of the $\Cell{n,k}d$, also found in \cite{MDRR}, is smaller than that of $\Cell{n+k}d$:
$$\dim\Cell{n,k}d=\begin{pmatrix}n\\ (n+k-d)/2\end{pmatrix}-\begin{pmatrix}n\\ (n-k-d-2)/2\end{pmatrix}\leq \dim \Cell{n+k}d.$$
Morin-Duchesne {\em et al.} also defined a bilinear form $\langle\,\cdot\, ,\,\cdot\,\rangle_{n,k}^d:\Cell{n,k}d\times\Cell{n,k}d\to \mathbb C$. It mimics the definition of the bilinear form on $\Cell{n}d$ defined in the previous section. The bilinear pairing $\langle P_kv,P_kw\rangle_{n,k}^d$ is the factor in front of the monic $(d,d)$-diagram in the concatenation $(P_kv)^*(P_kw)=v^*P_kw$. Like the bilinear form on $\Cell{n}d$, it is symmetric and invariant in the same sense. Morin-Duchesne {\em et al.} succeeded in computing the determinant of the Gram matrix in the basis $\base_{n,k}^d$.
\begin{proposition}[Prop.~D.4, \cite{MDRR}]\label{prop:MDRRGramDetFormula}
The determinant of Gram matrix of the bilinear form $\langle\,\cdot\, ,\,\cdot\,\rangle_{n,k}^d$ in the basis $\base_{n,k}^d$ is 
\begin{equation}\label{eq:Gramdeterminantbnk}
\det \gram_{n,k}^d = 	\prod_{j=1}^{ \lfloor k/2 \rfloor}
				\left({\qnum{j}\over \qnum{k-j+1}}\right)^{\dim \Cell{n,k-2j}{d}} 
			\prod_{j=1}^{{n+k-d\over 2}} 
				\left({ \qnum{d+j+1}\over \qnum{j}}\right)^{\dim \Cell{n,k}{d+2j}}.
\end{equation}
\end{proposition}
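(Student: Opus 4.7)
I would identify $\Cell{n,k}d$ with $P_k\Cell{n+k}d$ inside the Temperley--Lieb cellular module, and then exploit the Gram determinant
$$\det\mathcal G_{n+k}^d\;=\;\prod_{j=1}^{(n+k-d)/2}\Bigl(\frac{\qnum{d+j+1}}{\qnum{j}}\Bigr)^{\dim\Cell{n+k}{d+2j}}$$
recalled in \S\ref{sec:repTLn}. Because $P_k$ is a left--right symmetric (hence $\langle\cdot,\cdot\rangle_{n+k}^d$-selfadjoint) idempotent of $\tl{n+k}$, and the basis $\base_{n,k}^d$ is composed of the non-zero $P_kw$, the equality $\langle P_kv,P_kw\rangle_{n,k}^d=\langle P_kv,P_kw\rangle_{n+k}^d$ holds by $P_k^2=P_k$. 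The seam Gram form is therefore the restriction of the Temperley--Lieb form to $P_k\Cell{n+k}d$.

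\textbf{Orthogonal telescoping.} The chain $P_1=\Id,P_2,\dots,P_k$ (with $P_j$ acting on the bottom $j$ of the $n+k$ strands) gives nested subspaces $\im P_k\subset\im P_{k-1}\subset\cdots\subset\im P_1=\Cell{n+k}d$, and successive orthogonal complementation produces
$$\Cell{n+k}d\;=\;\Cell{n,k}d\;\oplus\;\bigoplus_{j=1}^{k-1}Y_j,\qquad Y_j\;:=\;\im P_j\cap\ker P_{j+1},$$
which is orthogonal with respect to $\langle\cdot,\cdot\rangle_{n+k}^d$. Indeed, for $j_1<j_2$ one has $Y_{j_2}\subset\im P_{j_1+1}$ and $Y_{j_1}\subset\ker P_{j_1+1}$, and selfadjointness of $P_{j_1+1}$ implies $\langle Y_{j_1},Y_{j_2}\rangle=0$. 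Consequently
$$\det\mathcal G_{n+k}^d\;=\;\det\gram_{n,k}^d\cdot\prod_{j=1}^{k-1}\det\gram_{Y_j},$$
and the problem reduces to computing each $\det\gram_{Y_j}$.

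\textbf{Identifying $Y_j$.} Restricted to $\im P_j=\Cell{n+k-j,j}d$, the Wenzl--Jones recursion makes $P_{j+1}$ act as $\id-\frac{\qnum j}{\qnum{j+1}}A_j$ with $A_j:=P_jE_{n+k-j}\big|_{\im P_j}$. The identity $E_{n+k-j}P_jE_{n+k-j}=\frac{\qnum{j+1}}{\qnum j}E_{n+k-j}P_{j-1}$ combined with the disjointness of the supports of $E_{n+k-j}$ and $P_{j-1}$ and with $P_jP_{j-1}=P_j$ yields $A_j^2=(\qnum{j+1}/\qnum j)A_j$. Hence $Y_j$ is the $(\qnum{j+1}/\qnum j)$-eigenspace of $A_j$, and it can be written as $P_jE_{n+k-j}\Cell{n+k-j+1,j-1}d$. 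A diagrammatic analysis of the forced cup at positions $n+k-j$ and $n+k-j+1$ identifies $Y_j$, as a vector space, with a smaller seam cellular module, and the eigenvalue equation $A_jv=(\qnum{j+1}/\qnum j)v$ translates into a uniform scalar factor between the restricted Gram form on $Y_j$ and the intrinsic Gram form of the identified module. Substitution of the Temperley--Lieb formula and telescoping of the resulting ratios $\qnum{j+1}/\qnum j$ should then reassemble both products of \eqref{eq:Gramdeterminantbnk}.

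\textbf{Main difficulty.} The delicate step is the third one: making the identification $Y_j\cong\Cell{n,k-2j'}d$ fully explicit with the correct scaling factor, and verifying the combinatorial identity that re-expresses the exponents $\dim Y_j=\dim\Cell{n+k-j,j}d-\dim\Cell{n+k-j-1,j+1}d$ in a way that, after combination with the TL exponents $\dim\Cell{n+k}{d+2j}$, yields precisely the exponents $\dim\Cell{n,k-2j}d$ and $\dim\Cell{n,k}{d+2j}$ of \eqref{eq:Gramdeterminantbnk}. An alternative avoiding this bookkeeping is a direct induction on $k$: the recursion $P_k=P_{k-1}-\frac{\qnum{k-1}}{\qnum k}P_{k-1}E_{n+1}P_{k-1}$ relates $\det\gram_{n,k}^d$ to $\det\gram_{n+1,k-1}^d$ via a single factor of $\qnum{k}/\qnum{k-1}$ (to an appropriate power), with base case $\bnk{n,1}\cong\tl{n+1}$ handled by the classical Temperley--Lieb formula.
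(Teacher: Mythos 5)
First, a point of reference: the paper does not prove this proposition at all --- it is quoted from \cite{MDRR} (Prop.~D.4 there) and used as an external input, its derivation being described as a ``tour de force''. So your attempt cannot be checked against an internal argument; it has to stand on its own, and as it stands it is a strategy outline rather than a proof. The opening steps are sound: the seam form is indeed the restriction of $\langle\,\cdot\,,\,\cdot\,\rangle_{n+k}^d$ to $P_k\Cell{n+k}{d}$ (exactly what the paper uses in Lemma \ref{lem:RadeViseRadV}), the nested projectors do yield an orthogonal decomposition $\Cell{n+k}{d}=\im P_k\oplus\bigoplus_{j}Y_j$, and the computation $A_j^2=(\qnum{j+1}/\qnum{j})A_j$ is correct.

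Three gaps are genuine. (i) Gram determinants are basis-dependent: to pass from the orthogonal decomposition to $\det\gram_{n+k}^d=\det\gram_{n,k}^d\cdot\prod_j\det\gram_{Y_j}$ you must produce bases of $\im P_k$ and of each $Y_j$ whose union is related to the monic-diagram basis of $\Cell{n+k}{d}$ by a unimodular matrix; this is precisely the unitriangularity issue the paper takes pains over in Lemma \ref{prop:recursivegram}, and you do not address it. (ii) The proposed identification of $Y_j$ is wrong as stated. A dimension count gives $\dim Y_j=\dim\Cell{n+k-j,j}{d}-\dim\Cell{n+k-j-1,j+1}{d}=\dim\Cell{n+k-j-1,j-1}{d}$, which in general differs from $\dim\Cell{n,k-2j'}{d}$ for every $j'$: for $n=2$, $k=4$, $d=2$ one finds $\dim Y_1=9-6=3$ while $\dim\Cell{2,2}{2}=2$. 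Hence the factor $\bigl(\qnum{j}/\qnum{k-j+1}\bigr)^{\dim\Cell{n,k-2j}{d}}$ cannot come from a single block $Y_j$ carrying a uniformly rescaled seam Gram form; each $\det\gram_{Y_j}$ is itself a large product, and the ``reassembly'' is a substantial unproven combinatorial identity, not bookkeeping. (iii) The fallback induction on $k$ cannot proceed ``via a single factor of $\qnum{k}/\qnum{k-1}$ to an appropriate power'': comparing the closed formula for $(n,k)$ with that for $(n+1,k-1)$ changes every factor of the first product and every exponent of the second, so their ratio is not a power of one $q$-ratio and the inductive step as described does not close. To complete any of these routes you would essentially have to redo the computation of Appendix~D of \cite{MDRR}.
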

\noindent The result of this {\em tour de force} will be useful in what follows. As before, the radical of the bilinear form is defined as
$$\Radi{n,k}d=\{v\in\Cell{n,k}d\,|\, \langle v,w\rangle_{n,k}^d=0\textrm{ for all }w\in\Cell{n,k}d\}.$$

Here are the main results of the present paper. Let
\begin{equation}\label{eq:DeltaBnk}
\Delta_{n,k}=\{d\in\mathbb N\,|\, 0\leq d\leq n+k,\, d\equiv n+k\modY 2\textrm{ and }n+d\geq k\}
\end{equation}
and
\begin{equation}\label{eq:DeltaBnk0}
\Delta_{n,k}^0=\{d\in \Delta_{n,k}\,|\, \textrm{the bilinear form $\langle\,\cdot\, ,\,\cdot\,\rangle_{n,k}^d$ is not identically zero}\}.
\end{equation}
The set $\Delta_{n,k}$ is partitioned exactly as $\Delta_{n+k}$ is. If $q$ is not a root of unity, every element $d$ of $\Delta_{n,k}$ is alone in its class $[d]=\{d\}$. Let $q$ be a root of unity and $\ell$ the smallest positive integer such that $q^{2\ell}=1$. If $d$ is such that $\qnum{d+1}=0$, then $d$ is called critical and $d$ is alone in its class $[d]$. Otherwise the classes $[d]$ are the {\em orbits} of $d$ under the group generated by reflections through mirrors positioned at critical integers. The $n$-th line in the Bratteli diagram of figure \ref{fig:bratteli2} presents the elements in $\bnk{n,k=8}$ when $\ell=4$. The points in the shadowed region fail to satisfy the inequality $n+d\geq k$ and are thus excluded from $\Delta_{n,k}$. Elements of a given orbit $[d]$ are joined pairwise by curly brackets. The partitions are easily readable from the diagram. For example the partition of $\Delta_{6,8}$ at $\ell=4$ is $\{\{2,4, 10, 12\},\{6, 8, 14\}\}$.

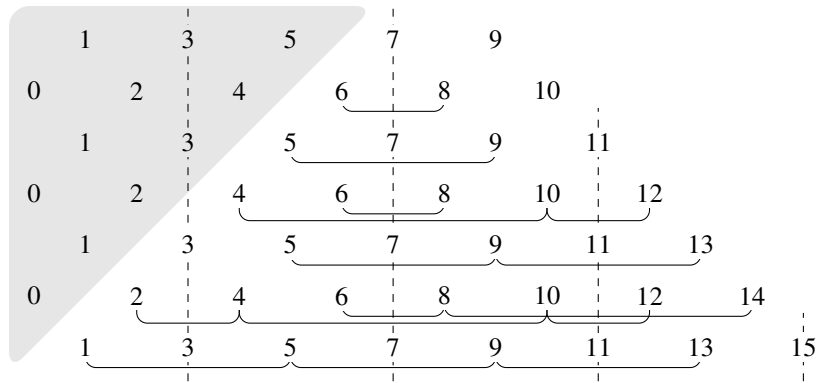
\begin{figure}[h!]
\begin{center}
\begin{tikzpicture}[baseline={(current bounding box.center)},every node/.style={fill=white,rectangle,inner sep=2pt},scale=1/2.2]
\filldraw[fill=gray!20,draw=white, rounded corners=8pt] (4.5,3) -- (10.0,8.5) -- (-0.75,8.5) -- (-0.75,-2.25) -- (4.5,3) ;
	\draw[rounded corners] (13.5,-1.5) -- (13.5,-2.1) -- (19.5,-2.1) -- (19.5,-1.5) ;
	\draw[rounded corners] (13.5,-1.5) -- (13.5,-2.1) -- (7.5,-2.1) -- (7.5,-1.5) ;
	\draw[rounded corners] (1.5,-1.5) -- (1.5,-2.1) -- (7.5,-2.1) -- (7.5,-1.5) ;
	\draw[rounded corners] (15,0) -- (15,-0.8) -- (18,-0.8) -- (18,0) ;
	\draw[rounded corners] (6,0) -- (6,-0.8) -- (15,-0.8) -- (15,0) ;
	\draw[rounded corners] (6,0) -- (6,-0.8) -- (3,-0.8) -- (3,0) ;
	\draw[rounded corners] (12,0) -- (12,-0.6) -- (9,-0.6) -- (9,0) ;
	\draw[rounded corners] (12,0) -- (12,-0.6) -- (21,-0.6) -- (21,0) ;
	\draw[rounded corners] (13.5,1.5) -- (13.5,0.9) -- (19.5,0.9) -- (19.5,1.5) ;
	\draw[rounded corners] (13.5,1.5) -- (13.5,0.9) -- (7.5,0.9) -- (7.5,1.5) ;
	\draw[rounded corners] (15,3) -- (15,2.2) -- (18,2.2) -- (18,3) ;
	\draw[rounded corners] (6,3) -- (6,2.2) -- (15,2.2) -- (15,3) ;
	\draw[rounded corners] (12,3) -- (12,2.4) -- (9,2.4) -- (9,3) ;
	\draw[rounded corners] (7.5,4.5) -- (7.5,3.9) -- (13.5,3.9) -- (13.5,4.5) ;
	\draw[rounded corners] (9,6) -- (9,5.4) -- (12,5.4) -- (12,6) ;
	\draw[dashed] (4.5,-2.5) -- (4.5,8.5);
	\draw[dashed] (10.5,-2.5) -- (10.5,8.5);
	\draw[dashed] (16.5,-2.5) -- (16.5,5.5);
	\draw[dashed] (22.5,-2.5) -- (22.5,-0.5);
\node at (1.5,-1.5) {$1$}; \node at (4.5,-1.5) {$3$}; \node at (7.5,-1.5) {$5$}; \node at (10.5,-1.5) {$7$};
\node at (13.5,-1.5) {$9$}; \node at (16.5,-1.5) {$11$}; \node at (19.5,-1.5) {$13$}; \node at (22.5,-1.5) {$15$};
\node[fill=gray!20] at (0,0) {$0$}; \node at (3,0) {$2$}; \node at (6,0) {$4$}; \node at (9,0) {$6$};
\node at (12,0) {$8$}; \node at (15,0) {$10$}; \node at (18,0) {$12$}; \node at (21,0) {$14$};
\node[fill=gray!20] at (1.5,1.5) {$1$}; \node at (4.5,1.5) {$3$}; \node at (7.5,1.5) {$5$}; \node at (10.5,1.5) {$7$};
\node at (13.5,1.5) {$9$}; \node at (16.5,1.5) {$11$}; \node at (19.5,1.5) {$13$};
\node[fill=gray!20] at (0,3) {$0$}; \node[fill=gray!20] at (3,3) {$2$}; \node at (6,3) {$4$}; \node at (9,3) {$6$};
\node at (12,3) {$8$}; \node at (15,3) {$10$}; \node at (18,3) {$12$}; 
\node[fill=gray!20] at (1.5,4.5) {$1$}; \node[fill=gray!20] at (4.5,4.5) {$3$}; \node at (7.5,4.5) {$5$}; \node at (10.5,4.5) {$7$};
\node at (10.5,4.5) {$7$}; \node at (13.5,4.5) {$9$}; \node at (16.5,4.5) {$11$};
\node[fill=gray!20] at (0,6) {$0$}; \node[fill=gray!20] at (3,6) {$2$}; \node[fill=gray!20] at (6,6) {$4$}; \node at (9,6) {$6$};
\node at (12,6) {$8$}; \node at (15,6) {$10$};
\node[fill=gray!20] at (1.5,7.5) {$1$}; \node[fill=gray!20] at (4.5,7.5) {$3$}; \node[fill=gray!20] at (7.5,7.5) {$5$};
\node at (10.5,7.5) {$7$}; \node at (13.5,7.5) {$9$};
\end{tikzpicture}
\end{center} 
 \caption{Bratteli diagram for $\bnk{n,8}$ with $\ell=4$ with $n=1$ as the top line.}\label{fig:bratteli2}
\end{figure}
As for the Temperley-Lieb algebras, the parameter $q$ is called generic if the partition of $\Delta_{n,k}$ contains only singletons. If $q$ is not a root of unity, this is automatically true. If it is not, the Bratteli diagram may be used to quickly construct possible non-trivial orbits and decide whether $q$ is generic. If $q$ is not generic, then it will be referred to as being a root of unity and will not include the cases when the partition of $\Delta_{n,k}$ only contains singletons.

With this definition of genericity, the representation theory of the family of seam algebras $\bnk{n,k}$ mimics perfectly that of the Temperley-Lieb algebra.
\begin{theorem}\label{thm:bnk} (a) If $q$ is generic, then $\bnk{n,k}(\beta=q+q^{-1})$ is semisimple, the cellular modules are irreducible and the set $\{\Cell{n,k} d=\Irre{n,k} d\mid d\in\Delta_{n,k}^0\}$ forms a complete set of non-isomorphic irreducible modules.

\noindent (b) If $q$ is a root of unity (and the partition of $\Delta_{n,k}$ contains at least one orbit $[d]$ of more than one element), then $\bnk{n,k}(\beta=q+q^{-1})$ is not semisimple. The set $\{\Irre{n,k} d=\Cell{n,k} d/\Radi{n,k} d\mid d\in\Delta_{n,k}^0\}$ forms a complete set of non-isomorphic irreducible modules. If $d\in\Delta_{n,k}^0$ is critical, then $\Cell{n,k} d = \Irre{n,k} d = \Proj{n,k} d$. If $d$ is not critical, then the two short sequences
\begin{align*}
	0 \longrightarrow \Irre{n,k}{d^+}\longrightarrow\ & \Cell{n,k}d\longrightarrow \Irre{n,k}d\longrightarrow 0\\
	0 \longrightarrow \Cell{n,k}{d^-}\longrightarrow\ & \Proj{n,k}d\longrightarrow \Cell{n,k}d\longrightarrow 0
\end{align*}
are exact and non-split. If $d^+$ is not in $\Delta_{n,k}$, then $\Irre{n,k}{d^+}$ is set to zero in the first sequence and, if $d^-$ is not in $\Delta_{n,k}$, then $\Cell{n,k}{d^-}$ is set to zero in the second. If $\Radi{n,k}d$ is not zero, then it is isomorphic to $\Irre{n,k}{d^+}$.
\end{theorem}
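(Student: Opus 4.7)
The plan is to combine two ingredients: the cellularity of $\bnk{n,k}$, to be established in Section~\ref{sec:cellularity}, and the isomorphism $\bnk{n,k}\simeq P_k\tl{N}P_k$ with $N=n+k$. The latter furnishes a Schur-type functor $F\colon M\mapsto P_kM$ from $\tl{N}$-modules to $\bnk{n,k}$-modules which is exact (as $P_k^{2}=P_k$) and preserves projectivity. Part~(a) will then be handled directly with Proposition~\ref{prop:MDRRGramDetFormula}, and part~(b) by transferring Theorem~\ref{thm:tln} across $F$.

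For part~(a), I would verify that the genericity hypothesis forces every factor of $\det\gram_{n,k}^{d}$ to be non-zero for $d\in\Delta_{n,k}^{0}$. The factors $\qnum{j}$ and $\qnum{k-j+1}$ with $1\le j\le\lfloor k/2\rfloor$ are non-zero thanks to the restriction $\ell>k$ in~\eqref{eq:restPara}. A vanishing factor $\qnum{d+j+1}$ with $1\le j\le(n+k-d)/2$ would place a critical value $c=d+j$ between $d$ and $d+2j\le n+k$, and the reflected point $d+2j$ would then populate $[d]$, contradicting the singleton hypothesis. Hence $\det\gram_{n,k}^{d}\neq 0$, so $\Radi{n,k}{d}=0$ and $\Cell{n,k}{d}=\Irre{n,k}{d}$; the Graham--Lehrer semisimplicity criterion for cellular algebras then closes part~(a).

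For part~(b), I would first build the dictionary $F(\Cell{N}{d})=\Cell{n,k}{d}$ (for $d\in\Delta_{n,k}$, zero otherwise), which is immediate from the basis $\base_{n,k}^{d}=\{P_kw\}$. The key technical step is $F(\Radi{N}{d})=\Radi{n,k}{d}$: using $P_k^{*}=P_k$ and invariance of $\langle\cdot,\cdot\rangle_{N}^{d}$, for any $v=P_kv\in\Cell{n,k}{d}$ and any $u\in\Cell{N}{d}$ one has $\langle v,u\rangle_{N}^{d}=\langle v,P_ku\rangle_{N}^{d}=\langle v,P_ku\rangle_{n,k}^{d}$, so the two radical conditions coincide on $P_k\Cell{N}{d}$. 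Exactness of $F$ then gives $F(\Irre{N}{d})=\Irre{n,k}{d}$, non-zero precisely for $d\in\Delta_{n,k}^{0}$; standard facts on idempotent truncations identify the non-zero images as a complete set of pairwise non-isomorphic simples of $\bnk{n,k}$ and give $F(\Proj{N}{d})=\Proj{n,k}{d}$ for $d\in\Delta_{n,k}^{0}$. The critical case reduces by applying $F$ to $\Cell{N}{d}=\Irre{N}{d}=\Proj{N}{d}$, and for non-critical $d$, applying $F$ to the two short exact sequences of Theorem~\ref{thm:tln}(b) produces the two sequences claimed, after the easy observation that $d^{+}$ coincides in $\Delta_{N}$ and $\Delta_{n,k}$ (since $d^{+}>d\geq k-n$) while $d^{-}$ either lies in $\Delta_{n,k}$ or satisfies $P_k\Cell{N}{d^{-}}=0$, matching the stated conventions.

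Non-splitness must be argued intrinsically, since exactness of $F$ does not preserve it: $\Proj{n,k}{d}$ is indecomposable as the projective cover of $\Irre{n,k}{d}$, so the second sequence cannot split whenever $\Cell{n,k}{d^{-}}\neq 0$; and $\Cell{n,k}{d}$ has simple head $\Irre{n,k}{d}$ by cellularity, so it is indecomposable and the first sequence cannot split either. The main obstacle I expect is the dictionary step $F(\Irre{N}{d})=\Irre{n,k}{d}$ with the sharp non-vanishing criterion $d\in\Delta_{n,k}^{0}$: reconciling the Gram determinant formula with the orbit combinatorics and the action of $P_k$ on composition factors is the technical heart of the argument, and once in place, the remaining assertions follow essentially formally from standard properties of idempotent truncations and cellular modules.
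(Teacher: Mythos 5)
Most of your outline coincides with the paper's actual proof: part (a) is handled exactly as you describe via the Gram determinant \eqref{eq:Gramdeterminantbnk} and the orbit combinatorics, and for the cell modules in part (b) the paper also uses the exact restriction functor $\res_{P_k}$, proves your ``key technical step'' $\res_{P_k}(\Radi{n+k}{d})\simeq\Radi{n,k}{d}$ as lemma \ref{lem:RadeViseRadV}, and pushes the first exact sequence of theorem \ref{thm:tln}(b) through $\res_{P_k}$ (proposition \ref{prop:ExactSequenceMorphiGLbnk}); your non-splitness argument via the simple head of $\Cell{n,k}{d}$ is a legitimate shortcut past the paper's cyclicity proposition \ref{thm:cyclic} for $d\in\Delta_{n,k}^0$, and the edge cases you flag ($d^+$, $d^-$ leaving $\Delta_{n,k}$) are handled correctly.

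The genuine gap is in the projective covers. You assert as a ``standard fact on idempotent truncations'' that $F=\res_{P_k}$ preserves projectivity and that $F(\Proj{n+k}{d})=\Proj{n,k}{d}$, and you obtain the second exact sequence by applying $F$ to the corresponding sequence for $\tl{n+k}$. This is not a standard fact and it fails for general idempotent truncations: $eP$ is a direct summand of $(eA)^m$, and $eA$ need not be projective as a left $eAe$-module (the classical instance being the Schur functor, which sends projectives to Young modules, which are generally not projective over $k\Sigma_r$). Whether $P_k\tl{n+k}$ is projective over $\bnk{n,k}$ is essentially equivalent to the structural information you are trying to prove, so it cannot be assumed. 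The paper sidesteps this entirely: once the composition factors of the cell modules are known, it computes the decomposition matrix $\Decompmat$, uses Graham and Lehrer's identity $\Cartanmat=\Decompmat^t\Decompmat$ (theorem \ref{thm:CartanisDTD}) together with the unitriangularity of $\Decompmat$ (proposition \ref{prop:Disunitriangular}) to list the composition factors of each $\Proj{n,k}{d}$, and then invokes Mathas's cell-module filtration of projectives (lemma \ref{lem:FiltrationProjectiveModuleCellular}) to force the filtration $0\subset\Cell{n,k}{d^-}\subset\Proj{n,k}{d}$, which yields \eqref{eq:sesProjBnk} intrinsically. You would need either this cellular-algebra argument or a genuine proof that $\res_{P_k}$ sends $\Proj{n+k}{d}$ to the projective cover of $\Irre{n,k}{d}$ (for instance via the left adjoint $\tl{n+k}P_k\otimes_{\bnk{n,k}}-$ and a multiplicity count) to close part (b).
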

\noindent This theorem will be proved over the next two sections.


%
%

\section{Cellularity of $\bnk{n,k}$}\label{sec:cellularity}
One way to prove theorem \ref{thm:bnk} is to reveal the cellular structure of the Temperley-Lieb algebras. Cellular algebras were defined by Graham and Lehrer \cite{GraCel96} in part to better understand the bases defined by Kazhdan and Lusztig for the Hecke algebras \cite{Kazhdan79}. Many families of algebras have now been proved to be cellular. The goal of this section is to show that the algebras $\bnk{n,k}(q+q^{-1})$ are cellular and to identify the values of $q\in\mathbb C^{\times}$ at which they fail to be semisimple.

%
%
\subsection{The cellular data for $\tl{n}$}\label{sub:tlnIsCellular}

The definition of cellular algebras is best understood on an example. We recall this definition and give the {\em cellular datum} for the Temperley-Lieb algebra as example.
\begin{definition}[Graham and Lehrer, \cite{GraCel96}]\label{dfn:cellularity}
 Let $R$ be a commutative associative unitary ring. An $R$-algebra $\alg{A}$ is called {\em cellular} if it admits a cellular datum $(\Delta,M,C,*)$ consisting of the following:
 \begin{enumerate}
  \item a finite partially-ordered set $\Delta$ and, for each $d\in\Delta$, a finite set $M(d)$;
  \item an injective map $C:\bigsqcup_{d\in\Delta} M(d)\times M(d) \to \alg{A}$ whose image is an $R$-basis of $\alg{A}$, with the notation  $C^{d}(s,t)$ for the image under $C$ of the pair $(s,t)\in M(d)\times M(d)$;
  \item an anti-involution $* : \alg{A} \to \alg{A}$ such that
  	\begin{equation}\label{eq:axiomcellularityinvolution}
   C^{d}(s,t)^* = C^{d}(t,s), \qquad \text{for all } s,t\in M(d);
  \end{equation}
  \item if $d\in\Delta$ and $s,t\in M(d)$, then for any $a\in\alg{A}$,
 \begin{equation}\label{eq:axiomcellularity}
   aC^{d}(s,t) \equiv \sum_{s'\in M(d)} r_a(s',s) C^{d}(s',t) \modSB \alg{A}^{<d},
 \end{equation}
where $A^{<d} = \left\langle C^{e}(p,q) \mid e<d; p,q\in M(e)\right\rangle_R$ and  $r_a(s',s)\in R$ is independent of $t$.
 \end{enumerate}
\end{definition}
\noindent The involution $*$, together with \eqref{eq:axiomcellularity}, yields the equation:

\begin{equation}\label{eq:staronaxiomcellularity}
  C^{d}(t,s)a^* \equiv \sum_{s'\in M(d)} r_a(s',s) C^{d}(t,s') \modSB \alg{A}^{<d},\qquad \textrm{ for all }s,t\in M(d)\textrm{ and } a\in\alg A.
\end{equation}

Here is the cellular datum for the Temperley-Lieb algebra $\tl{n}(\beta)$. Throughout, the commutative ring $R$ will be taken to be the complex field $\mathbb C$.
Let $\Delta_n$ be the (totally-)ordered set 
\begin{equation}
 \Delta_n =\begin{cases}
           \{ 0 \leq 2 \leq \dots \leq n-2 \leq n\}, & \text{if } n \text{ is even;}\\
           \{ 1 \leq 3 \leq \dots \leq n-2 \leq n\}, & \text{if } n \text{ is odd.}
          \end{cases}
\end{equation}
For $d\in \Delta_n$, the set of monic $(n,d)$-diagrams will be taken as $M(d)$, the anti-involution $*$ is the reflection of diagrams through a vertical mirror, and the map $C:\bigsqcup_{d\in\Delta} M(d)\times M(d) \to \tl{n}$ is defined, for $s,t\in M(d)$, to be the $(n,n)$-diagram $C^{d}(s,t) = st^*\in\tl{n}(\beta)$. The basis of $\tl 4$ given in \eqref{eq:baseTLQuatre} shows indeed that the elements of the set $\Delta_4=\{0,2,4\}$ are precisely the number of links crossing from left to right and each line of \eqref{eq:baseTLQuatre} is actually the images by $C$ of $M(4)\times M(4)$, $M(2)\times M(2)$ and $M(0)\times M(0)$, respectively.

The axioms will now be checked. First, the anti-involution respects the equation \eqref{eq:axiomcellularityinvolution} since $(C^d(s,t))^* = (st^*)^* = ts^* = C^d(t,s)$. The application $C$ is injective and surjective on the $\CC$-basis of $(n,n)$-diagrams of $\tl n$. Indeed, the possible number of through lines of any $(n,n)$-diagram lies in $\Delta_n$; an $(n,n)$-diagram with $d$ through lines thus decomposes into a monic $(n,d)$-diagram and an epic $(d,n)$-diagram. For example, the dotted line of the last diagram with $d=2$ of \eqref{eq:baseTLQuatre} splits this $(4,4)$-diagram into a monic $(4,2)$-diagram (the left half) and an epic $(2,4)$-diagram (the right one).  

As observed in section \ref{sec:defTLn}, concatenation cannot increase the number of the links crossing diagrams. By definition, the subset $\tl{n}^{<d}$ is spanned by diagrams with less than $d$ through lines. Thus the axiom \eqref{eq:axiomcellularity} is trivially verified as it simply reasserts this property of concatenation: the multiplication of any element $A\in\tl n$ with the element $C^d(s,t)$ with $d$ through lines gives an element with $d$ through lines (that might be zero) plus, maybe, other diagrams in $\tl n^{<d}$. The coefficient $r_a(s',s)$ in the axiom is then the factor $\beta^m$ coming from the loops closed upon concatenation, and is indeed independent of $t$.  

The cellular structure of an algebra $\alg A$ gives a family of modules, called the cellular modules. 
\begin{definition}\label{dfn:CellularModuleGeneral}
Let $\alg{A}$ be an $R$-algebra with cellular datum $(\Delta,M,C,*)$ and let $d\in\Delta$. The cellular module $\CellGen{d}$ is a free $R$-module with basis $\{v_s\mid s\in M(d)\}$ with $\alg{A}$-action given by
\begin{equation}\label{eq:AactiononCellModule}
a\cdot v_s := \sum_{s'\in M(d)} r_a(s',s)v_{s'}, \quad \text{for all } a\in \alg{A},
\end{equation}
where $r_a(s',s)$ is the element of $R$ defined in axiom \eqref{eq:axiomcellularity}.
\end{definition}

For the (cellular) Temperley-Lieb algebra $\tl n(\beta)$, the cellular module $S^d$ just defined coincides with the standard module $\Cell n d$ defined in section \ref{sec:repTLn}. (This can be checked easily or see \cite{RSA}.) 

The coefficients $r_a(s',s)$ defined through axiom \eqref{eq:axiomcellularity} are used to construct cellular modules, but they are even richer. If $p, s, t$ and $u\in M_{\alg A}(d)$ for some $d\in\Delta_{\alg A}$, then equations \eqref{eq:axiomcellularity} and \eqref{eq:staronaxiomcellularity} lead to two distinct expressions for the product $C(p,s)C(t,u)$:
\begin{equation}\label{eq:twoexpressions}
\sum_{t'}r_{C(p,s)}(t',t)C(t',u)\equiv \sum_{s'}r_{C(u,t)}(s',s)C(p,s')\modY \alg A^{<d}.
\end{equation}
Since the image of $C$ is a basis of $\alg A$, only one term may contribute in each sum, namely the term $t'=p$ in the first and the term $s'=u$ in the second. Thus
$$r_{C(p,s)}(p,t)=r_{C(u,t)}(u,s).$$
Since the left member is independent of $u$, and the right one of $p$, it follows that both of these coefficients depend only on $s$ and $t$. This fact is emphasized by writing
$$C(p,s)C(t,u)\equiv r^d(s,t)C(p,u)\modY \alg A^{<d},$$
with $r^d(s,t):=r_{C(p,s)}(p,t)=r_{C(u,t)}(u,s)$.
\begin{definition}\label{def:biliGeneric}A bilinear form $\langle\,\cdot\, ,\, \cdot\,\rangle_{\alg A}^d:\Cell{\alg A}d\times \Cell{\alg A}d\rightarrow R$ on the cellular module $\Cell{\alg A}d$ is defined by $\langle v_s, v_t\rangle=r^d(s,t)$.
\end{definition}
This bilinear form plays a central role in the theory of cellular algebra because of the following result.
\begin{proposition}[Graham and Lehrer, Prop.~ 2.4, \cite{GraCel96}] \label{prop:Propertiescellbili}
The bilinear form $\langle\,\cdot\, ,\, \cdot\,\rangle_{\alg A}^d$ on $\Cell{\alg A}d$, $d\in\Delta_{\alg A}$, has the following properties.
 \begin{enumerate}
  \item It is symmetric: $\langle x, y\rangle_{\alg A}^d=\langle y,x\rangle_{\alg A}^d$ for all $x,y\in \Cell{\alg A}d$.
  \item It is invariant: $\langle a^*x, y\rangle_{\alg A}^d=\langle x,ay\rangle_{\alg A}^d$ for all $x,y\in \Cell{\alg A}d$ and $a\in\alg A$.
  \item If $x\in \CellGen{d}$ and $s,t\in M(d)$, then $C^{d}(s,t)x = \langle v_t,x\rangle_{\alg A}^d v_s.$
 \end{enumerate}
\end{proposition}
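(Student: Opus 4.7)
All three items will follow from the basis-independence observation sketched in the paragraph just above Definition \ref{def:biliGeneric}: linear independence of the basis $\{C^d(s',t') : s',t' \in M(d)\}$ in the quotient $\alg A/\alg A^{<d}$ forces $r_{C^d(p,s)}(t',t) = 0$ whenever $t' \neq p$, and $r_{C^d(p,s)}(p,t) = r^d(s,t) = \langle v_s, v_t\rangle$. I will establish (iii) first, then (i), and finally (ii); the only non-formal step is a preliminary ideal lemma needed for (ii).

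For (iii), substituting the observation into the defining formula \eqref{eq:AactiononCellModule} for the action gives
\[
 C^d(s,t)\cdot v_p \;=\; \sum_{p'\in M(d)} r_{C^d(s,t)}(p',p)\, v_{p'} \;=\; r^d(t,p)\, v_s \;=\; \langle v_t, v_p\rangle\, v_s,
\]
and extending by linearity in $x$ finishes the argument. For (i), I apply the anti-involution $*$ to the defining congruence $C^d(p,s)\, C^d(t,u) \equiv r^d(s,t)\, C^d(p,u) \modSB \alg A^{<d}$. Since $C^e(a,b)^* = C^e(b,a)$ still lies in $\alg A^{<d}$ whenever $e<d$, the involution preserves $\alg A^{<d}$, and one obtains $C^d(u,t)\, C^d(s,p) \equiv r^d(s,t)\, C^d(u,p) \modSB \alg A^{<d}$. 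Comparing this with the defining congruence for $r^d(t,s)$, namely $C^d(u,t)\,C^d(s,p) \equiv r^d(t,s)\, C^d(u,p) \modSB \alg A^{<d}$, and again invoking linear independence in the quotient, one concludes $r^d(s,t) = r^d(t,s)$.

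For (ii), the plan is first to show that $\alg A^{<d}$ is a two-sided ideal of $\alg A$: the left-ideal property comes from axiom \eqref{eq:axiomcellularity} applied at each level $e<d$ (the explicit sum $\sum_{p'} r_a(p',p)\,C^e(p',q)$ and the error term $\alg A^{<e}$ both sit inside $\alg A^{<d}$), and the right-ideal property follows from \eqref{eq:staronaxiomcellularity} in the same fashion. This forces $b\,C^d(s,t)\in \alg A^{<d}$ for every $b\in \alg A^{<d}$, whence $r_b(s',s)=0$ for all $s,s'\in M(d)$ and thus $\alg A^{<d}$ acts as zero on $\Cell{\alg A}d$. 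With this tool, I compute $C^d(t,s)\cdot(a\cdot x)$ for $a\in\alg A$ and $x\in\Cell{\alg A}d$ in two different ways. A direct application of (iii) gives $C^d(t,s)\cdot(a\cdot x) = \langle v_s, a\cdot x\rangle\, v_t$. On the other hand, \eqref{eq:staronaxiomcellularity} provides $C^d(t,s)\, a \equiv \sum_{s'} r_{a^*}(s',s)\, C^d(t,s') \modSB \alg A^{<d}$, and the triviality of the $\alg A^{<d}$-action together with a second use of (iii) on each summand yields $(C^d(t,s)\, a)\cdot x = \sum_{s'} r_{a^*}(s',s)\,\langle v_{s'}, x\rangle\, v_t = \langle a^* v_s, x\rangle\, v_t$. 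Matching the coefficient of $v_t$ on the two expressions gives the invariance.

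The only real obstacle I anticipate is the ideal lemma needed for (ii); modulo that, all three items reduce to mechanical manipulations of the cellular axioms and the basis-independence observation.
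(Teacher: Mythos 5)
The paper does not actually prove this proposition: it is imported verbatim from Graham and Lehrer \cite{GraCel96} (their Prop.~2.4) and used as a black box, so there is no in-paper argument to compare yours against. Your proof is correct and is essentially the standard Graham--Lehrer derivation. Item (iii) follows exactly as you say from the observation that only the term $t'=p$ survives in $C^d(p,s)C^d(t,u)\equiv\sum_{t'}r_{C^d(p,s)}(t',t)C^d(t',u)$, by linear independence of the $C^d(\cdot,\cdot)$ modulo $\alg A^{<d}$. Item (i) follows from applying $*$ to the defining congruence, using that $*$ permutes the cellular basis by \eqref{eq:axiomcellularityinvolution} and hence preserves $\alg A^{<d}$. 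For item (ii), your preliminary lemma that $\alg A^{<d}$ is a two-sided ideal (from \eqref{eq:axiomcellularity} and \eqref{eq:staronaxiomcellularity} applied at each level $e<d$, together with $\alg A^{<e}\subseteq\alg A^{<d}$) is exactly the right tool, and it does force $\alg A^{<d}$ to annihilate $\CellGen{d}$. The one ingredient you use silently is that the action \eqref{eq:AactiononCellModule} is associative, i.e.\ that $C^{d}(t,s)\cdot(a\cdot x)=(C^{d}(t,s)\,a)\cdot x$; this is what lets you compare the two computations. It is covered by definition \ref{dfn:CellularModuleGeneral}, which asserts that $\CellGen{d}$ is an $\alg A$-module, but strictly speaking it is itself a small consequence of axiom \eqref{eq:axiomcellularity} (namely $r_{ab}(s'',s)=\sum_{s'}r_a(s'',s')r_b(s',s)$, again by linear independence modulo $\alg A^{<d}$), and a fully self-contained write-up would record that identity. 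With that footnote, all three items are established.
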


Computing $\langle\,\cdot\, ,\, \cdot\,\rangle_{n}^d$ on the $\tl n$-module $\Cell n d$ is straightforward. The elements $p, s, t$ and $u$ in \eqref{eq:twoexpressions} are then elements of $M(d)$, that is, they are monic $(n,d)$-diagrams. In the $(n,n)$-diagram 
$$C(p,s)C(u,t)=p(s^*t)u^*\equiv r^d(s,t)C(p,u)\modY \tl n^{<d},$$
the $(d,d)$-subdiagram $(s^*t)$ may be either
\begin{enumerate}
\item non-monic: then $C(p,s)C(t,u)$ belongs to $\tl n^{<d}$ and $r^d(s,t)=0$; or
\item monic: then $(s^*t)$ is a multiple of the identity $(d,d)$-diagram and the factor is $\beta^m$ where $m$ is the number of loops closed upon concatenation of $s^*$ with $t$. In this case, $r^d(s,t)=\beta^m$.
\end{enumerate}
This bilinear form is precisely the one defined in section \ref{sec:repTLn}. Proposition \ref{prop:radicalTLn} stated there is in fact a general result that holds for the bilinear form defined in definition \ref{def:biliGeneric}. Let 
\begin{equation}\label{eq:deltazero}
\Delta_{\alg A}^0=\{ d\in\Delta_{\alg A}\ |\ \langle\,\cdot\, ,\, \cdot\,\rangle_{\alg A}^d\textrm{ is not identically zero}\}.
\end{equation}
\begin{proposition}[Graham and Lehrer, Prop.~3.2, \cite{GraCel96}]\label{prop:kernelbiliisJacobsonRadical}
  Let $\alg{A}$ be a cellular algebra and let $d \in \Delta_{\alg A}^0$. 
  
\noindent (a) The radical of the bilinear form $\langle\,\cdot\, ,\, \cdot\,\rangle_{\alg A}^d$ defined by $\Radi{}{d} = \{ x\in \CellGen{d} \mid \langle x,y\rangle_{\alg A}^d = 0 \textrm{ for all } y \in \CellGen{d}\}$ is the Jacobson radical of $\CellGen{d}$ and the quotient $\Irre{}{d} := \CellGen{d}/\Radi{}{d}$ is irreducible.

\noindent (b) The set $\{ \Irre{}{d} \mid d\in\Delta_{\alg A}^0\}$ forms a complete set of equivalence classes of irreducible modules of $\alg A$.
\end{proposition}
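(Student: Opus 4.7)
The plan is to establish (a) and (b) in turn, using only Proposition \ref{prop:Propertiescellbili} and the cellular axioms. A preliminary observation that I will use repeatedly: the cellular axioms make $\alg A^{<d}$ a two-sided ideal, so for $a\in \alg A^{<d}$ the product $aC^d(s,t)$ lies in $\alg A^{<d}$; this forces the coefficients $r_a(s',s)$ of axiom (iv) to vanish, and hence $\alg A^{<d}$ annihilates the cellular module $\CellGen{d}$.

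For part (a), invariance of the form immediately makes $\Radi{}{d}$ an $\alg A$-submodule since $\langle ax,y\rangle = \langle x,a^*y\rangle = 0$ whenever $x\in \Radi{}{d}$. I then prove that $\Radi{}{d}$ is the unique maximal submodule, which simultaneously identifies it as the Jacobson radical and yields irreducibility of $\Irre{}{d}$. Given any $x \in \CellGen{d}\setminus \Radi{}{d}$, there exists $t$ with $\langle v_t, x\rangle\neq 0$ (expand in the basis $\{v_t\}$ any $y$ for which $\langle x,y\rangle\neq 0$); property (iii) of Proposition \ref{prop:Propertiescellbili} then gives $C^d(s,t)\cdot x = \langle v_t, x\rangle\, v_s$ for every $s\in M(d)$, so $\alg A\cdot x = \CellGen{d}$ and every proper submodule must lie inside $\Radi{}{d}$.

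For part (b), non-isomorphism of the $\Irre{}{d}$ for $d\in\Delta_{\alg A}^0$ follows from the annihilator filtration. If $d<e$ in $\Delta_{\alg A}^0$, then $\alg A^{<e}$ annihilates $\Irre{}{e}$ by the preliminary observation, while $\alg A^{<e}\supseteq \alg A^{\leq d}$ contains the elements $C^d(s,t)$; a repeat of the property (iii) argument applied to a nonzero class $\bar x\in \Irre{}{d}$ shows these act non-trivially on $\Irre{}{d}$. For exhaustion, I let $L$ be a simple $\alg A$-module and take $d$ to be the smallest element of $\Delta_{\alg A}$ with $\alg A^{\leq d}L\neq 0$ (which exists since $\alg A L = L$). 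Minimality forces $\alg A^{<d}L=0$, and some $C^d(s,t)m\neq 0$ for $m\in L$. The assignment $\varphi(v_{s'}) := C^d(s',t)m$ then defines an $\alg A$-linear map $\varphi\colon \CellGen{d}\to L$, using axiom (iv) and the fact that $\alg A^{<d}m=0$; it is non-zero and therefore surjective by simplicity of $L$.

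The main obstacle will be to ensure that $d\in \Delta_{\alg A}^0$, for otherwise part (a) does not apply and I cannot conclude $\ker\varphi \subseteq \Radi{}{d}$. I plan to argue by contradiction: were the form on $\CellGen{d}$ identically zero, property (iii) would force $C^d(u,v)\cdot \CellGen{d}=0$ for all $u,v$, hence $\alg A^{\leq d}\CellGen{d}\subseteq \alg A^{<d}\CellGen{d}=0$, whence $\alg A^{\leq d}L = \varphi(\alg A^{\leq d}\CellGen{d})=0$, contradicting the minimality of $d$. With $d\in\Delta_{\alg A}^0$ secured, part (a) gives $\ker\varphi\subseteq \Radi{}{d}$, and the induced surjection $\Irre{}{d}\to L$ is an isomorphism by simplicity of both modules.
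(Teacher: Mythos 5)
The paper does not actually prove this proposition — it is quoted from Graham and Lehrer \cite{GraCel96} — so there is no internal proof to compare yours against; what you have written is the standard Graham--Lehrer/Mathas argument. Part (a) and the exhaustion half of (b) are sound: the preliminary observation that $\alg{A}^{<d}$ annihilates $\CellGen{d}$, the use of property (iii) of proposition \ref{prop:Propertiescellbili} to show that any $x\notin\Radi{}{d}$ generates $\CellGen{d}$ (so that $\Radi{}{d}$ is the unique maximal submodule, proper precisely because $d\in\Delta_{\alg A}^0$), and the construction of the surjection $\varphi\colon\CellGen{d}\to L$ from a minimal $d$ with $\alg{A}^{\leq d}L\neq 0$, including the contradiction showing that this $d$ lies in $\Delta_{\alg A}^0$, are all correct. (At the very end the induced map goes the other way: since $L$ is simple, $\ker\varphi$ is a maximal submodule contained in the proper submodule $\Radi{}{d}$, hence equal to it, whence $L\simeq\Irre{}{d}$; this is cosmetic.)

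The one genuine gap is in the pairwise non-isomorphism part of (b). You only treat comparable pairs (``if $d<e$\,\dots''), but $\Delta_{\alg A}$ is in general only a \emph{partially} ordered set, and your annihilator criterion says nothing when $d$ and $e$ are incomparable, since then $C^{d}(s,t)\notin\alg{A}^{<e}$. The standard repair is the ideal-membership argument: if $\Irre{}{d}\simeq\Irre{}{e}$ then some $C^{d}(s,t)$ acts non-trivially on $\Irre{}{e}$, hence on $\CellGen{e}$, so $C^{d}(s,t)C^{e}(u,v)\not\equiv 0 \modSB \alg{A}^{<e}$ for some $u,v$; but this product lies in the two-sided ideal $\alg{A}^{\leq d}$, whose expansion in the cellular basis involves only elements $C^{f}(p,q)$ with $f\leq d$, and for its class modulo $\alg{A}^{<e}$ to be non-zero one of those surviving basis elements must be some $C^{e}(u',v)$, forcing $e\leq d$. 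Running this in both directions gives $d=e$ with no comparability assumption. For the algebras of this paper the point is moot, since $\Delta_{n,k}$ is totally ordered, but the proposition is stated for arbitrary cellular algebras and your argument does not cover that generality as written.
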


%
%
\subsection{The cellular data for $\bnk{n,k}$}\label{sub:bnkIsCellular}

The seam algebra $\bnk{n,k}(\beta=q+q^{-1})$ with parameters as in \eqref{eq:restPara} is cellular. In fact it inherits this property and its cellular datum from the cellularity of $\tl{n+k}(\beta=q+q^{-1})$ described above. That it is cellular follows immediately from the following proposition. 

\begin{proposition}[K\"onig and Xi, Prop.~4.3,  \cite{KonigXiCell}] \label{prop:cellularityidempotent}
Let $\alg{A}$ be a cellular $R$-algebra with cellular datum $(\Lambda_{\alg{A}},M_{\alg{A}},C_{\alg{A}},*_{\alg{A}})$ and $e^2=e\in\alg{A}$ be an idempotent fixed by the involution, that is $e^{*_{\alg{A}}} =e$. The algebra $\alg{B}=e\alg{A}e$ is cellular. 
\end{proposition}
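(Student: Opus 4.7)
The plan is to build the cellular datum for $\alg B = e\alg A e$ by restricting that of $\alg A$ through the idempotent $e$. Take $*_{\alg B}$ to be the restriction of $*$ to $\alg B$, which is well-defined since $e^* = e$ gives $(eae)^* = e a^* e \in \alg B$. For each $d \in \Lambda_{\alg A}$, the left action of $e$ on the cell module $S^d$ produces a subspace $eS^d$ spanned by $\{ev_s : s \in M_{\alg A}(d)\}$. Set $\Lambda_{\alg B} := \{d \in \Lambda_{\alg A} : eS^d \neq 0\}$ with the induced partial order, and for each $d \in \Lambda_{\alg B}$ choose a subset $M_{\alg B}(d) \subseteq M_{\alg A}(d)$ such that $\{ev_s : s \in M_{\alg B}(d)\}$ is a basis of $eS^d$. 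Finally define $C_{\alg B}^d(s,t) := e C^d(s,t) e$ for $s, t \in M_{\alg B}(d)$. Axiom \eqref{eq:axiomcellularityinvolution} follows immediately from $(C^d(s,t))^* = C^d(t,s)$ and $e^* = e$.

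The real content is showing that the image of $C_{\alg B}$ is a basis of $\alg B$, which I would establish by induction on the poset $\Lambda_{\alg A}$ via the cellular filtration. Applying axiom \eqref{eq:axiomcellularity} once to $eC^d(u,v)$ from the left and then \eqref{eq:staronaxiomcellularity} to multiply by $e$ on the right yields
\[
eC^d(u,v)e \equiv \sum_{u', v' \in M_{\alg A}(d)} r_e(u', u)\, r_e(v', v)\, C^d(u', v') \pmod{\alg A^{<d}},
\]
so that under the identification $C^d(u',v') \leftrightarrow v_{u'} \otimes v_{v'}$, the class of $eC^d(u,v)e$ in $\alg A^{\leq d}/\alg A^{<d}$ corresponds to $(ev_u) \otimes (ev_v) \in S^d \otimes S^d$. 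Since $\{ev_s : s \in M_{\alg B}(d)\}$ is a basis of $eS^d$, the family $\{C_{\alg B}^d(s,t) : s, t \in M_{\alg B}(d)\}$ is linearly independent modulo $\alg A^{<d}$ and spans the subspace $e(\alg A^{\leq d}/\alg A^{<d})e$. Iterating downward through the poset gives the basis property in $\alg B$ together with the identification $\alg B^{<d} = e\alg A^{<d}e$.

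For axiom \eqref{eq:axiomcellularity} on $\alg B$: given $b \in \alg B$ and $s, t \in M_{\alg B}(d)$, use $b = eb = be$ to rewrite $b \cdot C_{\alg B}^d(s,t) = ebC^d(s,t)e$, apply the cellular axiom of $\alg A$ to $bC^d(s,t)$, and then expand each resulting $eC^d(s',t)e$ via $ev_{s'} = \sum_{p \in M_{\alg B}(d)} \alpha_{p,s'} ev_p$ in $eS^d$. Because $t$ already lies in $M_{\alg B}(d)$, the right-hand factor contributes no further expansion, and the coefficient of $C_{\alg B}^d(p,t)$ modulo $\alg B^{<d}$ is $\sum_{s' \in M_{\alg A}(d)} r_b(s',s)\, \alpha_{p, s'}$, which depends only on $p$ and $s$, as required; this coefficient is nothing but the matrix entry of the action of $b$ on the basis $\{ev_p\}$ of $eS^d$.

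The main obstacle is the basis claim: $\alg B$ is not a direct summand of $\alg A$ as a vector space, so the naive spanning set $\{eC^d(u,v)e : u, v \in M_{\alg A}(d)\}$ is heavily overcomplete, and the selection $M_{\alg B}(d)$ must be made with care. The interpretation of each cellular quotient $\alg A^{\leq d}/\alg A^{<d}$ as (a summand of) $S^d \otimes S^d$ on which two-sided multiplication by $e$ acts as $v_u \otimes v_v \mapsto ev_u \otimes ev_v$ is the key to making the dimensions and the filtration agree. Every other axiom then reduces to routine bookkeeping.
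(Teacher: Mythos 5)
Your proof is correct, but it takes a genuinely different route from the paper's. The paper only proves the special case $e=P_k$, $\alg A=\tl{n+k}$: it sets $M_{\alg B}(d)=M_{\alg A}(d)\setminus N(d)$, where $N(d)$ is the set of $s$ with $P_kC^d(s,t)P_k\equiv 0 \modSB \alg A^{<d}$ for all $t$, and proves linear independence of the surviving $C_{\alg B}^d(s,t)$ by a unitriangularity argument specific to the Wenzl-Jones projector (since $P_k=\Id+\sum_i\alpha_iv_i$ with every $v_i$ carrying links between boundary points, $P_kC^d(s,t)P_k$ equals $C^d(s,t)$ plus diagrams having such links). That recipe amounts to keeping exactly the $s$ with $ev_s\neq 0$, which for a general idempotent need not index a linearly independent family; your choice --- extract from the spanning set $\set{ev_s}$ a basis of $eS^d$, and read everything off the identification of $\alg A^{\leq d}/\alg A^{<d}$ with $S^d\otimes S^d$ on which $x\mapsto exe$ acts as $v_u\otimes v_v\mapsto ev_u\otimes ev_v$ --- is the robust one and proves the proposition for an arbitrary $*$-fixed idempotent. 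What the paper's hands-on argument buys is an explicit cell datum (monic diagrams with no links between boundary points), which is what the rest of the article actually uses; what yours buys is generality and a cleaner structural reason why the filtration and dimensions match. Two small points are worth recording: the extraction of $M_{\alg B}(d)\subseteq M_{\alg A}(d)$ uses that $R$ is a field (here $\CC$), since over a general commutative ring $eS^d$ is only projective and the datum must be set up differently; and in the iteration down the poset, the identity $e\alg A^{\leq d}e\cap\alg A^{<d}=e\alg A^{<d}e$ (immediate from $x=exe$) is the step that justifies $\alg B^{<d}=e\alg A^{<d}e$ and deserves to be stated explicitly.
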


We provide a proof of their theorem in the special case when the idempotent is the Wenzl-Jones projector $P_k$ and $\alg{A}=\tl{n+k}$ and $\alg{B}=\bnk{n,k}$. Even though it is only a special case of their more general result, the proof displays the cell datum of $\bnk{n,k}$.

\noindent{\em Proof.} For $d\in\Delta_{\alg{A}}$, define the set
$$N(d)=\{ s\in M_{\alg{A}}(d)\ |\ P_kC^d_{\alg{A}}(s,t)P_k\equiv 0 \modY \alg{A}^{<d} \textrm{ for all }t\in M_{\alg{A}}\}\subset M_{\alg{A}}(d).$$
With this definition, the cell datum for ${\alg{B}}$ is as follows. The poset is
$$\Delta_{\alg{B}} = \{ d\in \Delta_{\alg{A}}\ |\ N(d)\neq M_{\alg{A}}(d)\}$$
together with the restriction of partial order on $\Delta_{\alg{A}}$. The sets $M_{\alg{B}}(d)$ are simply $M_{\alg{A}}(d)\setminus N(d)$, for $d\in \Delta_{\alg{B}}$. Finally the involution $*_{\alg{B}}$ is the restriction of $*_{\alg{A}}$ to ${\alg{B}}$ and the map $C_{\alg{B}}=\bigsqcup_{d\in\Delta_{\alg{B}}\times \Delta_{\alg{B}}} M_{\alg{B}}(d)\times M_{\alg{B}}(d) \to \alg{B}$ is
$$C^d_{\alg{B}}(s,t)=P_kC^d_{\alg{A}}(s,t)P_k.$$
The rest of the proof is devoted to showing that $(\Delta_{\alg{B}}, M_{\alg{B}}, *_{\alg{B}}, C_{\alg{B}})$ is a cellular datum for ${\alg{B}}$.

First, $\Delta_{\alg{B}}$ is a finite set, and so are the $M_{\alg{B}}(d)$'s for all $d\in\Delta_{\alg{B}}$. The image of the map $C_{\alg{B}}$ is a spanning set for ${\alg{B}}$ since $P_k(\im C_{\alg{A}})P_k$ is. It is also a basis. This rests on the nature of the diagrams that appear in $P_k$. By its recursive construction, $P_k$ has the form $\Id+\sum_i\alpha_iv_i$ where $\alpha_i\in \mathbb C$ and $v_i$ are $(n+k,n+k)$-diagrams whose top $n$ sites are joined by the identity diagram with $n$ points. Moreover these $v_i$'s have at least one link tying two boundary left points and another tying two right ones. (Recall that boundary points are the $k$ bottom points of an $(n+k,n+k)$-diagram.) 

Let $w$ be an element in the image of $C_{\alg{B}}$. It is of the form $P_kC_{\alg{A}}(s,t)P_k$ for some $s,t\in M_{\alg{B}}(d)$, $d\in\Delta_{\alg{B}}$. The diagram $C_{\alg{A}}(s,t)$ cannot have any link tying two boundary points, nor one tying right ones, as then $P_kC_{\alg{A}}P_k$ would be zero because of \eqref{eq:propWJ}. Thus $w=C_{\alg{A}}(s,t)+\sum_i \gamma_iw_i$ with $\gamma_i\in\mathbb C$ and where $C_{\alg{A}}(s,t)$ has no link between left boundary points and none between right ones, and all the $w_i$'s have such links on either the left or right side, or both. Suppose that the linear combination $\sum_{(s,t)}\alpha_{(s,t)}C_{\alg{B}}(s,t)$ vanishes. (The sum over pairs $(s,t)$ may include pairs of different $M_{\alg{B}}(d)\times M_{\alg{B}}(d)$.) Then, the coefficients of diagrams with no links between boundary points (either on the left side or on the right) must vanish. By the previous observations, this requirement amounts to $\sum_{(s,t)}\alpha_{(s,t)}C_{\alg{A}}(s,t)=0$ which forces $\alpha_{(s,t)}=0$ for all pairs $(s,t)$, since the image of $C_{\alg{A}}$ is a basis of ${\alg{A}}$. The image of $C_{\alg{B}}$ is thus a basis of ${\alg{B}}$.

The generators $E_i$ of $\tl{n+k}$ are clearly invariant under reflection through a vertical mirror. A quick recursive proof shows that $P_k$ is also invariant: $P_k^{*_{\alg{A}}}=P_k$. Then
\begin{align*}
 C_{\alg{B}}(s,t)^{*_{\alg{B}}} = \left( P_k C_{\alg{A}}(s,t)P_k\right)^{*_{\alg{B}}} = P_k^{*_{\alg{A}}}C_{\alg{A}}(s,t)^{*_{\alg{A}}} P_k^{*_{\alg{A}}} = P_kC_{\alg{A}}(t,s)P_k= C_{\alg{B}}(t,s).
\end{align*}
The axiom \eqref{eq:axiomcellularityinvolution} is thus verified for $*_{\alg{B}}$.

It remains to check axiom \eqref{eq:axiomcellularity}. Let $b\in \alg{B}$. There exists an $A\in \alg{A}$ such that $b=P_k AP_k$. For $d\in \Delta_{\alg{B}}$ and $s,t\in M_{\alg{B}}(d)$, the axiom $\eqref{eq:axiomcellularity}$ is proven using $P_k^2=P_k$ and the axiom $\eqref{eq:axiomcellularity}$ for $\alg{A}$:
\begin{align*}
 P_kAP_k C_{\alg{B}}(s,t) =
  P_kAP_k^2C_{\alg{A}}(s,t)P_k 
    &\stackrel{\eqref{eq:axiomcellularity}}{\equiv} P_k\Big( \sum_{s'\in M_{\alg{A}}(d)} r_{P_kAP_k}(s',s) C_{\alg{A}}^{d}(s',t) \Big)P_k \modSB A^{<d}\\
    &\equiv \sum_{s'\in M_{\alg{A}}(d)} r_{P_kAP_k}(s',s) P_kC_{\alg{A}}^{d}(s',t)P_k \modSB \alg{A}^{<d}\\
    &\equiv \sum_{s'\in M_{\alg{B}}(d)} r_{b}(s',s) C_{\alg{B}}^{d}(s',t) \modSB \alg{B}^{<d}
\end{align*}
which closes the proof.\hfill\qedsymbol

\begin{corollary}The seam algebra $\bnk{n,k}(\beta=q+q^{-1})$, with parameters $n,k$ and $q$ constrained by \eqref{eq:restPara}, is cellular.
\end{corollary}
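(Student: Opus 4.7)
The plan is to show that the corollary is an immediate consequence of Proposition \ref{prop:cellularityidempotent} (together with the explicit construction carried out just above), applied to $\alg{A} = \tl{n+k}(\beta)$ with idempotent $e = P_k$. So the first thing I would do is list the hypotheses that need checking: (a) $\tl{n+k}(\beta)$ must be cellular; (b) $P_k$ must be a well-defined element of $\tl{n+k}(\beta)$; (c) $P_k$ must satisfy $P_k^2 = P_k$; (d) $P_k$ must be fixed by the cellular anti-involution $*$ of $\tl{n+k}(\beta)$; and (e) one must have $\bnk{n,k}(\beta) \simeq P_k\, \tl{n+k}(\beta)\, P_k$.

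Next I would dispatch each item in turn. Point (a) is the content of Section \ref{sub:tlnIsCellular}. Point (b) is ensured by the parameter restriction \eqref{eq:restPara}(ii), which forbids $q$ from being a root of unity of order low enough to make any of the $\qnum{j}$ with $2 \leq j \leq k$ appearing in the recursion for $P_k$ vanish; hence no division by zero occurs. Points (c) and (e) are both recorded in Section \ref{sec:defbnk} when $\bnk{n,k}(\beta)$ is introduced; (c) is moreover the first identity of \eqref{eq:propWJ}.

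Point (d) is the only ingredient not quite handed to us verbatim in the preceding material. I would establish it by induction on $k$: the base case $P_1 = \Id$ is manifestly $*$-invariant, since the vertical-mirror reflection fixes the identity diagram, and the inductive step follows from the recursion
\[
P_k = P_{k-1} - \frac{\qnum{k-1}}{\qnum{k}}\, P_{k-1}\, E_{n-k+1}\, P_{k-1},
\]
together with the elementary facts that $*$ is an anti-homomorphism and that each generator $E_i$ equals its own vertical reflection. Applying $*$ to both sides and invoking the inductive hypothesis $P_{k-1}^* = P_{k-1}$ yields $P_k^* = P_k$.

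Once (a)--(e) are all in hand, Proposition \ref{prop:cellularityidempotent} (or equivalently the explicit construction just given) directly furnishes a cellular datum on $P_k \tl{n+k}(\beta) P_k \simeq \bnk{n,k}(\beta)$, which is precisely the claim. There is no genuine obstacle in this corollary; the substantive work has already been done in producing the cellular datum for the corner algebra $e \alg{A} e$, and what remains is bookkeeping, with the mild verification of the $*$-invariance of $P_k$ as the only computational item.
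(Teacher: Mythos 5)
Your proposal is correct and follows essentially the same route as the paper: the corollary is obtained by applying Proposition \ref{prop:cellularityidempotent} (for which the paper gives an explicit proof in the special case $e=P_k$, $\alg{A}=\tl{n+k}$) once one knows that $P_k$ is a well-defined idempotent fixed by the anti-involution, and your inductive verification of $P_k^{*}=P_k$ is exactly the ``quick recursive proof'' the paper alludes to inside that argument. No gaps.
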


The above proof has revealed the cellular datum of $\alg B=\bnk{n,k}=P_k\tl{n+k}P_k\subset \alg A=\tl{n+k}$. The set $\Delta_{\alg B}$ is a subset of $\Delta_{\alg A}$. It may coincide with or be distinct of $\Delta_{\alg A}$. For example, $\Delta_{\tl 6}=\{0,2,4,6\}$, but $\Delta_{\bnk{2,4}}=\{2,4,6\}$, because the set 
\begin{equation*}
 N(0) = \left\{ \
 \begin{tlNdiagram}{6}
  \linkV{L}{1}{6}
  \linkV{L}{2}{5}
  \linkV{L}{3}{4}
 \end{tlNdiagram}\ ,\
 \begin{tlNdiagram}{6}
  \linkV{L}{1}{6}
  \linkV{L}{2}{3}
  \linkV{L}{5}{4}
 \end{tlNdiagram}\ , \
 \begin{tlNdiagram}{6}
  \linkV{L}{1}{4}
  \linkV{L}{6}{5}
  \linkV{L}{3}{2}
 \end{tlNdiagram}\ , \
\begin{tlNdiagram}{6}
  \linkV{L}{1}{2}
  \linkV{L}{4}{5}
  \linkV{L}{3}{6}
 \end{tlNdiagram}\ , \
 \begin{tlNdiagram}{6}
  \linkV{L}{1}{2}
  \linkV{L}{6}{5}
  \linkV{L}{3}{4}
 \end{tlNdiagram}\
 \right\}
\end{equation*}
equals $M_{\alg A}(0)$ and, thus $0\not\in \Delta_{\bnk{2,4}}$. Indeed each of these diagrams has two adjacent boundary points tied by a link and the Wenzl-Jones $P_4$ projects each of them to zero. This example shows the way to a simpler characterization of the datum $\Delta_{\bnk{n,k}}$. For $d\in\Delta_{\tl{n+k}}$ to be an element of $\Delta_{\bnk{n,k}}$, there must be a monic $(n+k,d)$-diagram without links between boundary points. The monicity of the diagram takes $d$ points that can all be put at the bottom of the diagram. That way one gets the minimum number $(k-d)$ of boundary points that need to be joined pairwise with some other points. To avoid creating links between boundary points, all of these $(k-d)$ points must be paired with some of the top $n$ points. This is possible if and only if $n\geq k-d$. Thus
\begin{align}\label{eq:deltaDeBnk}
\Delta_{n,k}=\Delta_{\bnk{n,k}}&=\{d\in\Delta_{\tl{n+k}}\ |\ n+d\geq k\}\notag \\
&= \{ d\in\mathbb N\ |\ 0\leq d\leq n+k, d\equiv n+k\modY 2\textrm{ and }n+d\geq k\}.
\end{align}
From now on, the shorter $\Delta_{n,k}$ will be used instead of $\Delta_{\bnk{n,k}}$. Similarly $\Delta_n$ will mean $\Delta_{\tl n}$. These shorter notations match those used in sections \ref{sec:repTLn} and \ref{sec:repbnk}.
The points in the shadowed region of the Bratteli diagram in figure \ref{fig:bratteli2} are those excluded from the $\Delta_{\bnk{n,k}}$.

A basis of cellular modules $\Cell n d$ over $\tl n$ was identified to the set of monic $(n,d)$-diagrams. Similar bases for the cellular modules $\Cell{n,k}d$ over $\bnk{n,k}$ are easily identified: a basis for $\Cell{n,k}d$ is the subset of monic $(n,d)$-diagrams that have no links between boundary points. These bases can also be identified to (or are in one-to-one correspondence with) the sets $M_{\bnk{n,k}}(d)$. For $\bnk{3,2}$, the bases for the three cellular modules $\Cell{3,2}5,\Cell{3,2}3$ and $\Cell{3,2}1$ are respectively:
\begin{equation*}
\begin{gathered}
  \base_{3,2}^{5} = \left\{\  
    \begin{tlNdiagram}{5}
      \linkH{1} \linkH{2} \linkH{3} \linkH{4}\linkH{5}
      \tlWeld{M}{1}{2}
    \end{tlNdiagram}\ 
    \right\}, \quad 
    \base_{3,2}^{3} = \left\{ \
    \begin{tlNdiagram}{5}
      \linkD{1}{2} \linkD{2}{3}\linkD{3}{4}
      \linkV{L}{4}{5}
      \tlWeld{L}{1}{2}
    \end{tlNdiagram}\ , \
    \begin{tlNdiagram}{5}
      \linkD{1}{2} \linkD{2}{3}\linkD{5}{4}
      \linkV{L}{4}{3}
      \tlWeld{L}{1}{2}
    \end{tlNdiagram}\ , \ 
    \begin{tlNdiagram}{5}
      \linkD{1}{2} \linkD{4}{3}\linkD{5}{4}
      \linkV{L}{2}{3}
      \tlWeld{L}{1}{2}
    \end{tlNdiagram} \
    \right\}, \quad
    \base_{3,2}^{1} = \left\{\
    \begin{tlNdiagram}{5}
      \linkD{1}{3} 
      \linkV{L}{4}{5} \linkV{L}{2}{3}
      \tlWeld{L}{1}{2}
    \end{tlNdiagram}\ , \
    \begin{tlNdiagram}{5}
      \linkD{1}{3} 
      \linkV{L}{2}{5} \linkV{L}{3}{4}
      \tlWeld{L}{1}{2}
    \end{tlNdiagram} \ , \ 
    \begin{tlNdiagram}{5}
      \linkD{5}{3}
      \linkV{L}{4}{1}\linkV{L}{2}{3}
      \tlWeld{L}{1}{2}
    \end{tlNdiagram}\
    \right\}.
\end{gathered}
\end{equation*}
 
As for $\tl n$, the action of $\bnk{n,k}$ on $\Cell{n,k}d$ obtained formally through definition \ref{dfn:CellularModuleGeneral} coincides with the composition of diagrams (see section \ref{sec:defTLn}) with, again, the rule that the the concatenation does not yield $P_kw$ with $w$ a monic $(n+k,d)$-diagram. A few examples are useful. The first example, the action of $e_1$ on the third vector of $\Cell{3,2}3$, gives zero because monicity is lost through concatenation:
\begin{equation*}
  \begin{tlNdiagram}{5}
    \linkV{L}{4}{5} \linkV{R}{4}{5}
    \linkH{1}\linkH{2}\linkH{3}
    \tlWeld{L}{1}{2}\tlWeld{R}{1}{2}
  \end{tlNdiagram}\ \cdot \ 
  \begin{tlNdiagram}{5}
      \linkH{5}\linkH{4}\linkH{1}
      \linkV{L}{2}{3}
      \tlWeld{L}{1}{2}
  \end{tlNdiagram} \ = \ 
  \begin{tlNdiagram}{5}
    \linkV{L}{4}{5} \linkV{R}{4}{5}
    \linkH{1}\linkH{2}\linkH{3}
    \tlWeld{L}{1}{2}\tlWeld{R}{1}{2}
  \end{tlNdiagram}%
  \begin{tlNdiagram}{5}
      \linkH{5}\linkH{4}\linkH{1}
      \linkV{L}{2}{3}
      \tlWeld{L}{1}{2}
  \end{tlNdiagram}\ 
    = \ \begin{tlNdiagram}{5}
      \linkH{1}
      \linkV{L}{4}{5} \linkV{L}{2}{3}
      \linkV{R}{4}{5}
      \tlWeld{L}{1}{2}
    \end{tlNdiagram}\ =\ 0.
\end{equation*}
The two other examples are in $\Cell{3,2}1$.
\begin{equation*}
  \begin{tlNdiagram}{5}
    \linkV{L}{2}{3}\linkV{R}{2}{3} \linkV{L}{4}{5} \linkV{R}{4}{5}
    \linkH{1}
    \tlWeld{L}{1}{2}\tlWeld{R}{1}{2}
  \end{tlNdiagram}\ \cdot \ 
  \begin{tlNdiagram}{5}
      \linkD{5}{3}
      \linkV{L}{4}{1}\linkV{L}{2}{3}
      \tlWeld{L}{1}{2}
  \end{tlNdiagram} \ = \ 
  \begin{tlNdiagram}{5}
    \linkV{L}{2}{3}\linkV{R}{2}{3}\linkV{R}{4}{5}\linkV{L}{4}{5}
    \linkH{1}
    \tlWeld{L}{1}{2}\tlWeld{R}{1}{2}
  \end{tlNdiagram}%
  \begin{tlNdiagram}{5}
      \linkD{5}{3}
      \linkV{L}{4}{1}\linkV{L}{2}{3}
      \tlWeld{L}{1}{2}
    \end{tlNdiagram}\ 
    = \ {\qnum{3}\over\qnum{2}} \
    \begin{tlNdiagram}{5}
      \linkD{1}{3}
      \linkV{L}{4}{5}\linkV{L}{2}{3}
      \tlWeld{L}{1}{2}
    \end{tlNdiagram}\ \qquad \textrm{and}\qquad
  \begin{tlNdiagram}{5}
    \linkV{L}{3}{4}\linkV{R}{3}{4} 
    \linkH{1} \linkH{2} \linkH{5}
    \tlWeld{L}{1}{2}\tlWeld{R}{1}{2}
  \end{tlNdiagram}\ \cdot \ 
  \begin{tlNdiagram}{5}
      \linkD{5}{3}
      \linkV{L}{4}{1}\linkV{L}{2}{3}
      \tlWeld{L}{1}{2}
  \end{tlNdiagram} \ = 
  \begin{tlNdiagram}{5}
    \linkV{L}{3}{4}\linkV{R}{3}{4} 
    \linkH{1} \linkH{2} \linkH{5}
    \tlWeld{L}{1}{2}\tlWeld{R}{1}{2}
  \end{tlNdiagram} 
  \begin{tlNdiagram}{5}
      \linkD{5}{3}
      \linkV{L}{4}{1}\linkV{L}{2}{3}
      \tlWeld{L}{1}{2}
  \end{tlNdiagram} \ = \ 0.
\end{equation*}
On the left, the closed loop intersected by the projector $P_2$ is removed by using the explicit expression of the Wenzl-Jones projector and, on the right, the result is zero since a link is created tying the two points of the rightmost $P_2$.

Finally definition \ref{def:biliGeneric} gives the bilinear form $\langle\,\cdot\, ,\, \cdot\, \rangle_{\bnk{n,k}}^d=\langle\,\cdot\, ,\, \cdot\, \rangle_{n,k}^d$. Its expression in the basis $\base_{n,k}^d$ will be denoted by $\gram_{n,k}^d$, and also be called the Gram matrix. With the ordered bases given above, one gets the following matrices:
\begin{align*}
\gram_{3,2}^5 &= \left( 1 \right), &
\gram_{3,2}^3&= \begin{pmatrix}
\qnum{2} & 1 & 0 \\
1& \qnum{2} & 1\\
0&1& {\qnum{3}\over \qnum{2}}
\end{pmatrix},& \gram_{3,2}^1 &= \begin{pmatrix}
\qnum{3} & {\qnum{3}\over \qnum{2}} & {\qnum{3}\over \qnum{2}}\\
{\qnum{3}\over \qnum{2}} & \qnum{3} & 0\\
{\qnum{3}\over \qnum{2}} & 0 & \qnum{3}
\end{pmatrix}.
\end{align*}
The computation of each element requires some practice and we give two examples:
$$\Big\langle \ 
\begingroup
\renewcommand{\tlScale}{0.35}
    \begin{tlNdiagram}{5}
      \linkD{1}{3} 
      \linkV{L}{4}{5} \linkV{L}{2}{3}
      \tlWeld{L}{1}{2}
    \end{tlNdiagram}\ ,\ 
    \begin{tlNdiagram}{5}
      \linkD{5}{3}
      \linkV{L}{4}{1}\linkV{L}{2}{3}
      \tlWeld{L}{1}{2}
    \end{tlNdiagram}
\ \Big\rangle_{3,2}^1=\begin{tlNdiagram}{5}[\tlScale=0.1]
      \linkD{3}{1} 
      \linkV{R}{4}{5} \linkV{R}{2}{3}
      \tlWeld{R}{1}{2}
    \end{tlNdiagram}
    \begin{tlNdiagram}{5}
      \linkD{5}{3}
      \linkV{L}{4}{1}\linkV{L}{2}{3}
      \tlWeld{L}{1}{2}
    \end{tlNdiagram}
\renewcommand{\tlScale}{\tlScaleDefault} 
\endgroup 
\ = \ 
\begin{tikzpicture}[baseline={(current bounding box.center)},scale=0.4]
\draw[line width=0.3mm] (0.5,0.5) -- (0.5,3.5);
\draw[line width=0.3mm] (2.5,0.5) -- (2.5,3.5);
\draw[line width=0.3mm] (0.5,1) -- (2.5,1);
\draw[line width=0.3mm] (2.,2.5) arc (0:360:0.5);
\fill[fill= lightgray] (1.4,2.5) -- (1.4,0.5) -- (1.6,0.5) -- (1.6,2.5) -- cycle;
\draw[line width=0.3mm](1.4,2.5) -- (1.4,0.5) -- (1.6,0.5) -- (1.6,2.5) -- cycle;
\end{tikzpicture}\ =\ 
\begin{tikzpicture}[baseline={(current bounding box.center)},scale=0.4]
\draw[line width=0.3mm] (0.5,0.5) -- (0.5,3.5);
\draw[line width=0.3mm] (2.5,0.5) -- (2.5,3.5);
\draw[line width=0.3mm] (0.5,1) -- (2.5,1);
\draw[line width=0.3mm] (2.,2.5) arc (0:360:0.5);
\fill[fill= lightgray] (1.4,1.5) -- (1.4,0.5) -- (1.6,0.5) -- (1.6,1.5) -- cycle;
\draw[line width=0.3mm](1.4,1.5) -- (1.4,0.5) -- (1.6,0.5) -- (1.6,1.5) -- cycle;
\end{tikzpicture}\ - \ \frac{\qnum{1}}{\qnum{2}}
\begin{tikzpicture}[baseline={(current bounding box.center)},scale=0.4]
\draw[line width=0.3mm] (0.5,0.5) -- (0.5,3.5);
\draw[line width=0.3mm] (3.5,0.5) -- (3.5,3.5);
\draw[line width=0.3mm] (0.5,1) -- (1.5,1);
\draw[line width=0.3mm] (1.5,2) arc (270:90:0.5);
\draw[line width=0.3mm] (1.5,1) arc (270:450:0.5);
\draw[line width=0.3mm] (1.5,3) -- (2.7,3);
\draw[line width=0.3mm] (2.7,1) -- (3.5,1);
\draw[line width=0.3mm] (2.7,3) arc (450:270:0.5);
\draw[line width=0.3mm] (2.7,1) arc (270:90:0.5);
\fill[fill= lightgray] (1.4,1.5) -- (1.4,0.5) -- (1.6,0.5) -- (1.6,1.5) -- cycle;
\draw[line width=0.3mm](1.4,1.5) -- (1.4,0.5) -- (1.6,0.5) -- (1.6,1.5) -- cycle;
\end{tikzpicture}\ =\ 
\big(\qnum 2-\frac{1}{\qnum{2}}\big) \ 
\begin{tikzpicture}[baseline={(current bounding box.center)},scale=0.4]
\draw[line width=0.3mm] (0.5,0.5) -- (0.5,1.5);
\draw[line width=0.3mm] (2.5,0.5) -- (2.5,1.5);
\draw[line width=0.3mm] (0.5,1) -- (2.5,1);
\fill[fill= lightgray] (1.4,1.5) -- (1.4,0.5) -- (1.6,0.5) -- (1.6,1.5) -- cycle;
\draw[line width=0.3mm](1.4,1.5) -- (1.4,0.5) -- (1.6,0.5) -- (1.6,1.5) -- cycle;
\end{tikzpicture}\ =\ \frac{\qnum{3}}{\qnum{2}}\begin{tikzpicture}[baseline={(current bounding box.center)},scale=0.4]
\draw[line width=0.3mm] (0.5,0.5) -- (0.5,1.5);
\draw[line width=0.3mm] (2.5,0.5) -- (2.5,1.5);
\draw[line width=0.3mm] (0.5,1) -- (2.5,1);
\fill[fill= lightgray] (1.4,1.5) -- (1.4,0.5) -- (1.6,0.5) -- (1.6,1.5) -- cycle;
\draw[line width=0.3mm](1.4,1.5) -- (1.4,0.5) -- (1.6,0.5) -- (1.6,1.5) -- cycle;
\end{tikzpicture},
$$
where the explicit expression of $P_2$ was used, and
$$\Big\langle \ 
\begingroup
\renewcommand{\tlScale}{0.35}
    \begin{tlNdiagram}{5}
      \linkD{1}{3} 
      \linkV{L}{2}{5} \linkV{L}{3}{4}
      \tlWeld{L}{1}{2}
    \end{tlNdiagram}\ ,\ 
    \begin{tlNdiagram}{5}
      \linkD{5}{3}
      \linkV{L}{4}{1}\linkV{L}{2}{3}
      \tlWeld{L}{1}{2}
    \end{tlNdiagram} 
\ \Big\rangle_{3,2}^1
\ =\ 
    \begin{tlNdiagram}{5}
      \linkD{3}{1} 
      \linkV{R}{2}{5} \linkV{R}{3}{4}
      \tlWeld{R}{1}{2}
    \end{tlNdiagram}
    \begin{tlNdiagram}{5}
      \linkD{5}{3}
      \linkV{L}{4}{1}\linkV{L}{2}{3}
      \tlWeld{L}{1}{2}
    \end{tlNdiagram}
\renewcommand{\tlScale}{\tlScaleDefault} 
\endgroup 
\ = \ 
\begin{tikzpicture}[baseline={(current bounding box.center)},scale=0.4]
\draw[line width=0.3mm] (0.5,0.5) -- (0.5,3.5);
\draw[line width=0.3mm] (2.5,0.5) -- (2.5,3.5);
\draw[line width=0.3mm] (0.5,1) -- (1.5,1);
\draw[line width=0.3mm] (1.6,1) arc (270:450:0.5);
\draw[line width=0.3mm] (1.4,2) arc (270:90:0.5);
\draw[line width=0.3mm] (1.4,3) -- (2.5,3);
\fill[fill= lightgray] (1.4,2.5) -- (1.4,0.5) -- (1.6,0.5) -- (1.6,2.5) -- cycle;
\draw[line width=0.3mm](1.4,2.5) -- (1.4,0.5) -- (1.6,0.5) -- (1.6,2.5) -- cycle;
\end{tikzpicture}\ =\ 0,
$$
because of the second relation of \eqref{eq:propWJ}.

%
%
\subsection{The recursive structure of the bilinear form on the cellular $\bnk{n,k}$-modules}\label{sub:bilinearForm}
The goal of this section is to reveal the recursive structure of the Gram matrix $\gram_{n,k}^d$ of the bilinear form $\langle\,\cdot\, ,\,\cdot\,\rangle_{n,k}^d$ on the cellular $\bnk{n,k}$-module $\Cell{n,k}d$. Even though computing the determinant \eqref{eq:Gramdeterminantbnk} of these Gram matrices was an impressive feat and the result will be used below, the recursive form of $\gram_{n,k}^d$ given below in lemma \ref{prop:recursivegram} is one more tool essential to understand the cellular modules. The method is inspired from techniques found in \cite{WesRep95} and \cite{RSA}. The reader might find it worthwhile to have a look at figure~\ref{fig:GramGraphi1} below for a graphical interpretation of the change of basis of the following lemma.

\begin{lemma}\label{prop:recursivegram}
 When $\qnum{d+1} \neq 0$ and $d\geq 1$, there exists a unitriangular change of basis matrix $\mathcal{U}$ such that
 \begin{equation}\label{eq:changeofbasisgrambnk}
  \mathcal{U}^T \gram_{n,k}^d \mathcal{U} = 
 	\begin{pmatrix}
 		\gram_{n-1,k}^{d-1} & 0\\
		0& {\qnum{d+2}\over \qnum{d+1}} \gram_{n-1,k}^{d+1}
	\end{pmatrix}.
 \end{equation}
\end{lemma}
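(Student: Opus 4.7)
The plan is to adapt the Temperley-Lieb recursion argument of \cite{RSA} to the seam setting. The key observation is that the Wenzl-Jones projector $P_k$ sits on the $k$ bottom (boundary) points, while the recursion operates on the top-most bulk point, so $P_k$ plays a purely passive role throughout. I would partition the basis as $\base_{n,k}^d = \base^+ \sqcup \base^-$, where $\base^+$ consists of diagrams whose top-left bulk point is joined by a through-line to the top-right bulk point, and $\base^-$ consists of those whose top-left point is linked by a half-arc to another left point (necessarily lying below). Removing the top through-line defines a bijection $\phi_+ \colon \base^+ \to \base_{n-1,k}^{d-1}$; removing the top point and promoting its former partner to a defect defines a bijection $\phi_- \colon \base^- \to \base_{n-1,k}^{d+1}$. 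Both operations preserve the absence of boundary-to-boundary links, so the codomains are correct, and the assumption $d \geq 1$ is what makes $\base^+$ potentially non-empty.

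Next, I would compute the four blocks of $\gram_{n,k}^d$ with respect to this partition. For $v, w \in \base^+$, the concatenation $v^* w$ differs from $\phi_+(v)^* \phi_+(w)$ only by a passive top through-line, giving $\langle v, w \rangle_{n,k}^d = \langle \phi_+(v), \phi_+(w) \rangle_{n-1,k}^{d-1}$. For $v, w \in \base^-$, the concatenation produces a cap-cup configuration at the top of the reduced diagram; a standard Temperley-Lieb loop-reduction (unaffected by $P_k$, which is far from the action) yields the factor $\qnum{d+2}/\qnum{d+1}$, so $\langle v, w \rangle_{n,k}^d = (\qnum{d+2}/\qnum{d+1}) \langle \phi_-(v), \phi_-(w) \rangle_{n-1,k}^{d+1}$. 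The cross blocks $(+,-)$ and $(-,+)$ are generically non-zero, but can be eliminated: keeping each $u \in \base^+$ fixed and replacing each $v \in \base^-$ by $\tilde v = v + \sum_{u \in \base^+} \alpha_{u,v}\, u$, the coefficients $\alpha_{u,v}$ are chosen to enforce $\langle u, \tilde v \rangle_{n,k}^d = 0$ for every $u \in \base^+$. Ordering the basis with $\base^+$ first, the resulting change-of-basis matrix $\mathcal{U}$ is unitriangular and conjugates $\gram_{n,k}^d$ into the claimed block-diagonal form.

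The main obstacle will be the $(-,-)$ block calculation, where the factor $\qnum{d+2}/\qnum{d+1}$ must be extracted uniformly across all configurations of half-arcs at the top. Depending on how these half-arcs nest with the rest of the diagram, the cap-cup reduction may require several Temperley-Lieb moves and a delicate loop count; organising these so that a single factor emerges is the technically demanding part. The hypothesis $\qnum{d+1} \neq 0$ enters precisely here, both to make the factor well-defined and to guarantee that the triangular system defining the $\alpha_{u,v}$ is solvable. Throughout, the passivity of $P_k$ on the boundary means that no new ideas beyond those of the purely Temperley-Lieb case are required: the work is in verifying that the standard diagrammatic reductions go through in the presence of the projector.
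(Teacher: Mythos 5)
Your partition of $\base_{n,k}^d$ into diagrams with a defect at the top bulk point (your $\base^+$, the paper's $\Fam_1$) and diagrams with an arc there (your $\base^-$, the paper's $\Fam_2$), and your treatment of the $(+,+)$ block, agree with the paper's proof. The gap is in the other three blocks. Your formula $\langle v,w\rangle_{n,k}^d=\frac{\qnum{d+2}}{\qnum{d+1}}\langle\phi_-(v),\phi_-(w)\rangle_{n-1,k}^{d+1}$ for $v,w\in\base^-$ is false \emph{in the original basis}: in the worked example of $\gram_{4,2}^2$ the lower-right $3\times 3$ block before any change of basis is $\bigl(\begin{smallmatrix}\qnum{3}&\qnum{2}&1\\ \qnum{2}&\qnum{2}^2&\qnum{2}\\ 1&\qnum{2}&\qnum{2}^2\end{smallmatrix}\bigr)$, which is not $\frac{\qnum{4}}{\qnum{3}}\gram_{3,2}^{3}$. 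The factor $\qnum{d+2}/\qnum{d+1}$ cannot come from ``standard Temperley--Lieb loop reduction'' (closed loops only ever contribute powers of $\beta$ or, through $P_k$, the factors of \eqref{eq:mdrr}); it is the eigenvalue of a loop closed around a Wenzl--Jones projector $P_{d+1}$, and no such projector is present in $v^*w$ for $v,w\in\Fam_2$.

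Your elimination of the cross blocks by a generic Gram--Schmidt step is also unjustified: the coefficients $\alpha_{u,v}$ must solve a linear system whose matrix is the $(+,+)$ block $\gram_{n-1,k}^{d-1}$, which is singular in exactly the non-semisimple situations the lemma is meant to handle, and $\qnum{d+1}\neq 0$ does not make it invertible. Moreover, even when the $\alpha_{u,v}$ exist, the substitution $w\mapsto\tilde w$ changes the $(-,-)$ entries to $\langle v,w\rangle+\sum_u\alpha_{u,w}\langle v,u\rangle$, so a computation of that block in the original basis would not determine the lower-right block of $\mathcal{U}^T\gram_{n,k}^d\,\mathcal{U}$ in any case. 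The missing idea is the \emph{explicit} new basis: replace each $w\in\Fam_2$ by the diagram obtained by letting $P_{d+1}$ act on its top arc together with its $d$ defects. Then the expansion \eqref{eq:WJDecompositionFormula} shows the change of basis is unitriangular; the annihilation properties \eqref{eq:propWJ} force exact orthogonality between $\Fam_1$ and the new family, with no linear system to solve; and closing a loop around $P_{d+1}$ via \eqref{eq:mdrr} produces the uniform factor $\qnum{d+2}/\qnum{d+1}$, which is where $\qnum{d+1}\neq 0$ is genuinely used. This projector insertion is also what the Temperley--Lieb argument of \cite{RSA} that you propose to adapt actually does.
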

\begin{proof}The proof is broken into several steps, each being given a short title.

\noindent{\em The new basis.}\ ---\ The original basis $\base_{n,k}^d$ is the set of monic $(n+k,d)$-diagrams without link between boundary points and multiplied on the left by $P_k$. It is first partitioned into the set $\Fam_1$ of diagrams that have a defect at position $1$, and the set $\Fam_2$ of diagrams that have a link tying the top point on the left to another below. The new basis ${\base_{n,k}^d}\hskip-3pt '$ keeps the set $\Fam_1$ unchanged, but replaces the elements of $\Fam_2$ by diagrams where the link starting at point $1$ and the $d$ defects are acted upon by the projector $P_{d+1}$. These new elements form a set $\Fam_2'$ and the ordered basis ${\base_{n,k}^d}\hskip-3pt '$ puts the diagrams of $\Fam_1$ before those of $\Fam_2'$. For example, here is the basis ${\base_{4,2}^2}\hskip-3pt '$:
\begin{equation*}
\left\{ \quad
\begin{tlNdiagram}{6}
\linkH{5}
\linkH{6}
\linkV{L}{2}{3}
\linkV{L}{1}{4}
\tlWeld{L}{1}{2}
\end{tlNdiagram}\ ,\quad
\begin{tlNdiagram}{6}
\linkH{1}
\linkH{6}
\linkV{L}{2}{3}
\linkV{L}{5}{4}
\tlWeld{L}{1}{2}
\end{tlNdiagram}\ ,\quad
\begin{tlNdiagram}{6}
\linkH{1}
\linkH{6}
\linkV{L}{2}{5}
\linkV{L}{3}{4}
\tlWeld{L}{1}{2}
\end{tlNdiagram}\ ,\quad
\begin{tlNdiagram}{6}
\linkV{L}{2}{3}
\linkH{1}
\linkH{4}
\draw[line width=\tlLineWidth] (0,5) -- (1,5);
\draw[line width=\tlLineWidth] (0,6) -- (1,6);
\draw[line width=\tlLineWidth] (1,5) .. controls(1.8,5)and(1.8,6) .. (1,6);
\fill[fill= \tlWeldFill] (1,0.5) -- (1.2,0.5) -- (1.2,5.5) -- (1,5.5) -- cycle;
\draw[line width=\tlLineWidth] (1,0.5) -- (1.2,0.5) -- (1.2,5.5) -- (1,5.5) -- cycle;
\tlWeld{L}{1}{2}
\end{tlNdiagram}\ ,\quad
\begin{tlNdiagram}{6}
\linkV{L}{3}{4}
\linkH{1}
\linkH{2}
\draw[line width=\tlLineWidth] (0,5) -- (1,5);
\draw[line width=\tlLineWidth] (0,6) -- (1,6);
\draw[line width=\tlLineWidth] (1,5) .. controls(1.8,5)and(1.8,6) .. (1,6);
\fill[fill= \tlWeldFill] (1,0.5) -- (1.2,0.5) -- (1.2,5.5) -- (1,5.5) -- cycle;
\draw[line width=\tlLineWidth] (1,0.5) -- (1.2,0.5) -- (1.2,5.5) -- (1,5.5) -- cycle;
\tlWeld{L}{1}{2}
\end{tlNdiagram}\ ,\quad
\begin{tlNdiagram}{6}
\linkV{L}{5}{4}
\linkH{1}
\linkH{2}
\draw[line width=\tlLineWidth] (0,3) -- (1,3);
\draw[line width=\tlLineWidth] (0,6) -- (1,6);
\draw[line width=\tlLineWidth] (1,3) .. controls(1.95,3)and(1.95,6) .. (1,6);
\fill[fill= \tlWeldFill] (1,0.5) -- (1.2,0.5) -- (1.2,3.5) -- (1,3.5) -- cycle;
\draw[line width=\tlLineWidth] (1,0.5) -- (1.2,0.5) -- (1.2,3.5) -- (1,3.5) -- cycle;
\tlWeld{L}{1}{2}
\end{tlNdiagram}\quad
\right\}\ .
\end{equation*}
In this example the first three elements form $\Fam_1$ and the last three, $\Fam_2'$. The added projector $P_{d+1}=P_3$ appears on their right.

\noindent{\em The matrix $\mathcal{U}$ is unitriangular.}\ ---\ To prove the unitriangularity of $\mathcal{U}$, it is sufficient to show that any element of $\Fam_2'$ differs from its corresponding one in $\Fam_2$ by an element in $\Fam_1$ only. This is done by using the identity \eqref{eq:WJDecompositionFormula} on the projector $P_{d+1}$ and tracking down what the top link becomes. Here is an example on the first element of $\Fam_2'$ of ${\base_{4,2}^2}\hskip-3pt '$ (with $d=2$) where the use of \eqref{eq:WJDecompositionFormula} is confined to the interior of the dotted box:
\begin{equation*}
\begin{tikzpicture}[baseline={(current bounding box.center)},scale=0.4]
\draw[line width=0.3mm] (0,0.5) -- (0,6.5);
\draw[line width=0.3mm] (5,0.5) -- (5,6.5);
\draw[line width=0.3mm] (0,3) .. controls(1.2,3)and(1.2,4) .. (0,4);
\fill [black] (1.5,1) circle (4pt); \fill [black] (4,1) circle (4pt);
\fill [black] (1.5,2) circle (4pt); \fill [black] (4,2) circle (4pt);
\fill [black] (1.5,5) circle (4pt); \fill [black] (4,5) circle (4pt);
\draw[dotted, line width=0.3mm] (1.5,0.25) -- (1.5,5.75) -- (4,5.75) -- (4,0.25) -- (1.5,0.25);
\draw[line width=0.3mm] (0,5) -- (4,5);
\draw[line width=0.3mm] (0,6) -- (4,6);
\draw[line width=0.3mm] (4,5) .. controls(5,5)and(5,6) .. (4,6);
\draw[line width=0.3mm] (0,1) -- (5,1);
\draw[line width=0.3mm] (0,2) -- (5,2);
\fill[fill= lightgray] (2.65,0.5) -- (2.65,5.5) -- (2.85,5.5) -- (2.85,0.5) -- cycle;
\draw[line width=0.3mm](2.65,0.5) -- (2.65,5.5) -- (2.85,5.5) -- (2.85,0.5) -- cycle;
\fill[fill= lightgray] (0.5,0.5) -- (0.5,2.5) -- (0.7,2.5) -- (0.7,0.5) -- cycle;
\draw[line width=0.3mm](0.5,0.5) -- (0.5,2.5) -- (0.7,2.5) -- (0.7,0.5) -- cycle;
\end{tikzpicture}\quad = \quad
\frac{\qnum d}{\qnum d}
\begin{tikzpicture}[baseline={(current bounding box.center)},scale=0.4]
\draw[line width=0.3mm] (0,0.5) -- (0,6.5);
\draw[line width=0.3mm] (5,0.5) -- (5,6.5);
\draw[line width=0.3mm] (0,3) .. controls(1.2,3)and(1.2,4) .. (0,4);
\fill [black] (1.5,1) circle (4pt); \fill [black] (4,1) circle (4pt);
\fill [black] (1.5,2) circle (4pt); \fill [black] (4,2) circle (4pt);
\fill [black] (1.5,5) circle (4pt); \fill [black] (4,5) circle (4pt);
\draw[dotted, line width=0.3mm] (1.5,0.25) -- (1.5,5.75) -- (4,5.75) -- (4,0.25) -- (1.5,0.25);
\draw[line width=0.3mm] (0,5) -- (4,5);
\draw[line width=0.3mm] (0,6) -- (4,6);
\draw[line width=0.3mm] (4,5) .. controls(5,5)and(5,6) .. (4,6);
\draw[line width=0.3mm] (0,1) -- (5,1);
\draw[line width=0.3mm] (0,2) -- (5,2);
\fill[fill= lightgray] (2.65,0.5) -- (2.65,2.5) -- (2.85,2.5) -- (2.85,0.5) -- cycle;
\draw[line width=0.3mm](2.65,0.5) -- (2.65,2.5) -- (2.85,2.5) -- (2.85,0.5) -- cycle;
\fill[fill= lightgray] (0.5,0.5) -- (0.5,2.5) -- (0.7,2.5) -- (0.7,0.5) -- cycle;
\draw[line width=0.3mm](0.5,0.5) -- (0.5,2.5) -- (0.7,2.5) -- (0.7,0.5) -- cycle;
\end{tikzpicture}\  
-\frac{\qnum {d-1}}{\qnum d}
\begin{tikzpicture}[baseline={(current bounding box.center)},scale=0.4]
\draw[line width=0.3mm] (0,0.5) -- (0,6.5);
\draw[line width=0.3mm] (5,0.5) -- (5,6.5);
\draw[line width=0.3mm] (0,3) .. controls(1.2,3)and(1.2,4) .. (0,4);
\fill [black] (1.5,1) circle (4pt); \fill [black] (4,1) circle (4pt);
\fill [black] (1.5,2) circle (4pt); \fill [black] (4,2) circle (4pt);
\fill [black] (1.5,5) circle (4pt); \fill [black] (4,5) circle (4pt);
\draw[dotted, line width=0.3mm] (1.5,0.25) -- (1.5,5.75) -- (4,5.75) -- (4,0.25) -- (1.5,0.25);
\draw[line width=0.3mm] (0,5) -- (1.5,5);
\draw[line width=0.3mm] (0,6) -- (4,6);
\draw[line width=0.3mm] (4,5) .. controls(5,5)and(5,6) .. (4,6);
\draw[line width=0.3mm] (0,1) -- (1.5,1); \draw[line width=0.3mm] (4,1) -- (5,1);
\draw[line width=0.3mm] (0,2) -- (1.5,2); \draw[line width=0.3mm] (4,2) -- (5,2);
\draw[line width=0.3mm] (1.5,2) .. controls(2.7,2)and(2.7,5) .. (1.5,5);
\draw[line width=0.3mm] (4,2) .. controls(2.8,2)and(2.8,5) .. (4,5);
\draw[line width=0.3mm] (1.5,1) -- (4,1);
\fill[fill= lightgray] (1.75,0.5) -- (1.75,2.5) -- (1.95,2.5) -- (1.95,0.5) -- cycle;
\draw[line width=0.3mm](1.75,0.5) -- (1.75,2.5) -- (1.95,2.5) -- (1.95,0.5) -- cycle;
\fill[fill= lightgray] (0.5,0.5) -- (0.5,2.5) -- (0.7,2.5) -- (0.7,0.5) -- cycle;
\draw[line width=0.3mm](0.5,0.5) -- (0.5,2.5) -- (0.7,2.5) -- (0.7,0.5) -- cycle;
\end{tikzpicture}\ 
+\frac{\qnum {d-2}}{\qnum d}
\begin{tikzpicture}[baseline={(current bounding box.center)},scale=0.4]
\draw[line width=0.3mm] (0,0.5) -- (0,6.5);
\draw[line width=0.3mm] (5,0.5) -- (5,6.5);
\draw[line width=0.3mm] (0,3) .. controls(1.2,3)and(1.2,4) .. (0,4);
\fill [black] (1.5,1) circle (4pt); \fill [black] (4,1) circle (4pt);
\fill [black] (1.5,2) circle (4pt); \fill [black] (4,2) circle (4pt);
\fill [black] (1.5,5) circle (4pt); \fill [black] (4,5) circle (4pt);
\draw[dotted, line width=0.3mm] (1.5,0.25) -- (1.5,5.75) -- (4,5.75) -- (4,0.25) -- (1.5,0.25);
\draw[line width=0.3mm] (0,5) -- (1.5,5);
\draw[line width=0.3mm] (0,6) -- (4,6);
\draw[line width=0.3mm] (4,5) .. controls(5,5)and(5,6) .. (4,6);
\draw[line width=0.3mm] (0,1) -- (1.5,1); \draw[line width=0.3mm] (4,1) -- (5,1);
\draw[line width=0.3mm] (0,2) -- (1.5,2); \draw[line width=0.3mm] (4,2) -- (5,2);
\draw[line width=0.3mm] (1.5,2) .. controls(2.7,2)and(2.7,5) .. (1.5,5);
\draw[line width=0.3mm] (1.5,1) .. controls(3.5,1)and(2,5) .. (4,5);
\draw[line width=0.3mm] (4,1) .. controls(2.8,1)and(2.8,2) .. (4,2);
\fill[fill= lightgray] (1.75,0.5) -- (1.75,2.5) -- (1.95,2.5) -- (1.95,0.5) -- cycle;
\draw[line width=0.3mm](1.75,0.5) -- (1.75,2.5) -- (1.95,2.5) -- (1.95,0.5) -- cycle;
\fill[fill= lightgray] (0.5,0.5) -- (0.5,2.5) -- (0.7,2.5) -- (0.7,0.5) -- cycle;
\draw[line width=0.3mm](0.5,0.5) -- (0.5,2.5) -- (0.7,2.5) -- (0.7,0.5) -- cycle;
\end{tikzpicture}\ .
\end{equation*}
Whichever element of $\Fam_2'$ is chosen, the first term of the expansion is the corresponding element in $\Fam_2$. This is clearly the case in the above example as the two remaining projectors $P_2$ can be multiplied to give a single $P_2$, because $P_2$ is an idempotent. The general case is more complicated: there will be $d$ defects attached to points above and to the boundary. Relation \eqref{eq:WJDecompositionFormula} needs to be used repeatedly until only defects attached to the boundary remain so that the remaining projector can be pushed into the $P_k$. All but one of the terms created by repetitive use of \eqref{eq:WJDecompositionFormula} have a link joining two points to the right; the remaining one is corresponding to the original element of $\Fam_2$. The second term of the expansion is always an element of $\Fam_1$. Finally all the following terms have only $d-2$ through lines and they are not monic, that is, they are set to zero.

\noindent{\em The sets $\Fam_1$ and $\Fam_2'$ are mutually orthogonal.}\ ---\ Any pair $v\in\Fam_1$ and $w\in\Fam_2'$ is orthogonal. The following diagram, drawn for $\langle v, w\rangle$ with $v$ and $w$ being the third and fourth elements of ${\base_{4,2}^2}\hskip-3pt '$, may help follow the argument:
\begin{equation*}
\begin{tlNdiagram}{6}
\linkH{1}
\linkH{6}
\linkV{R}{2}{5}
\linkV{R}{3}{4}
\tlWeld{R}{1}{2}
\end{tlNdiagram}
\begin{tlNdiagram}{6}
\linkV{L}{2}{3}
\linkH{1}
\linkH{4}
\draw[line width=\tlLineWidth] (0,5) -- (1,5);
\draw[line width=\tlLineWidth] (0,6) -- (1,6);
\draw[line width=\tlLineWidth] (1,5) .. controls(1.8,5)and(1.8,6) .. (1,6);
\fill[fill= \tlWeldFill] (1,0.5) -- (1.2,0.5) -- (1.2,5.5) -- (1,5.5) -- cycle;
\draw[line width=\tlLineWidth] (1,0.5) -- (1.2,0.5) -- (1.2,5.5) -- (1,5.5) -- cycle;
\tlWeld{L}{1}{2}
\end{tlNdiagram}\ .
\end{equation*}
The top through line of $v^*$ enters the projector $P_{d+1}$ from the right. There are thus only $(d-1)$ through lines left in $v^*$ to cross $P_{d+1}$. The two remaining left positions of $P_{d+1}$ must therefore be linked and, by \eqref{eq:propWJ}, $\langle v, w\rangle=0$.

\noindent{\em The restriction of $\langle\,\cdot\, ,\,\cdot\,\rangle_{n,k}^d$ to $\Fam_1$ is $\gram_{n-1,k}^{d-1}$.}\ ---\ Let $v,w\in\Fam_1$ and let $v'$ and $w'$ be the elements of $\base_{n-1,k}^{d-1}$ obtained from $v$ and $w$ respectively by deleting the top through line. The top through line of $v^*w$ accounts for the monicity of the $(d,d)$-diagram but does not contribute to any factor (it does not close a loop). It can thus be removed and $\langle v,w\rangle_{n,k}^d=\langle v',w'\rangle_{n-1,k}^{d-1}$.

\noindent{\em The restriction of $\langle\,\cdot\, ,\,\cdot\,\rangle_{n,k}^d$ to $\Fam_2'$ is $\alpha\gram_{n-1,k}^{d+1}$ with $\alpha=\qnum{d+2}/\qnum{d+1}$.}\ ---\ Let $v',w'\in\Fam_2'$. The argument will be split according to whether $\langle v',w'\rangle_{n,k}^d$ is zero or not. Recall that when $\langle v',w'\rangle_{n,k}^d$ is non-zero, its value comes from numerical factors that appear through the closing of loops. In the case of $\bnk{n,k}$, these loops are of two types: those that are intercepted by the projector $P_k$ and those that are not. Each one of the latter type produces a factor $\beta$ and those of the former type are taken care all at once by the identity (equation (D.9) of \cite{MDRR}) that can be shown recursively using \eqref{eq:propWJ}:
\begin{equation}\label{eq:mdrr}
\begin{tikzpicture}[baseline={(current bounding box.center)},scale=0.35]
\draw[line width=0.3mm] (-1,-0.5) -- (-1,7.5);
\draw[line width=0.3mm] (3,-0.5) -- (3,7.5);
\draw[line width=0.3mm] (-1,0) -- (3,0);
\draw[line width=0.3mm] (-1,1) -- (3,1);
\draw[line width=0.3mm] (-1,7) -- (3,7);
\draw[line width=0.3mm] (-1,6) -- (3,6);
\draw[line width=0.3mm] (1,3) .. controls(2,3)and(2,4) .. (1,4);
\draw[line width=0.3mm] (1,3) .. controls(0,3)and(0,4) .. (1,4);
\draw[line width=0.3mm] (1,2) .. controls(2.5,2)and(2.5,5) .. (1,5);
\draw[line width=0.3mm] (1,2) .. controls(-0.5,2)and(-0.5,5) .. (1,5);
\draw[line width=0.1mm] (3.5,7.5) -- (4.0,7.5); 
\draw[<->, line width=0.1mm] (3.75,7.5) -- (3.75,3.5);\node at (4.25,5.5) {$n$};
\draw[line width=0.1mm] (3.5,3.5) -- (4.0,3.5);
\draw[<->, line width=0.1mm] (3.75,1.5) -- (3.75,3.5);\node at (4.25,2.5) {$j$};
\draw[line width=0.1mm] (3.5,1.5) -- (4.0,1.5);
\draw[<->, line width=0.1mm] (3.75,1.5) -- (3.75,-0.5);\node at (4.95,0.5) {$k-j$};
\draw[line width=0.1mm] (3.5,-0.5) -- (4.0,-0.5);
\node at (0,0.5) {$\fcolon$};
\node at (2,0.5) {$\fcolon$};
\node at (1,4.5) {$\fcolon$};
\node at (1,6.5) {$\fcolon$};
\fill[fill= lightgray] (0.9,-0.5) -- (0.9,3.5) -- (1.1,3.5) -- (1.1,-0.5) -- cycle;
\draw[line width=0.3mm](0.9,-0.5) -- (0.9,3.5) -- (1.1,3.5) -- (1.1,-0.5) -- cycle;
\end{tikzpicture} = 
\frac{\qnum{k+1}}{\qnum{k-j+1}}\ 
\begin{tikzpicture}[baseline={(current bounding box.center)},scale=0.35]
\draw[line width=0.3mm] (0,-0.5) -- (0,7.5);
\draw[line width=0.3mm] (2,-0.5) -- (2,7.5);
\draw[line width=0.3mm] (0,0) -- (2,0);
\draw[line width=0.3mm] (0,1) -- (2,1);
\draw[line width=0.3mm] (0,7) -- (2,7);
\draw[line width=0.3mm] (0,6) -- (2,6);
\draw[line width=0.3mm] (0,2) -- (2,2);
\draw[line width=0.3mm] (0,3) -- (2,3);
\draw[line width=0.1mm] (2.5,7.5) -- (3.0,7.5); 
\draw[<->, line width=0.1mm] (2.75,7.5) -- (2.75,1.5);\node at (3.95,4.5) {$n+j$};
\draw[line width = 0.1mm] (2.5,1.5) -- (3.0,1.5);
\draw[<->, line width=0.1mm] (2.75,1.5) -- (2.75,-0.5);\node at (3.95,0.5) {$k-j$};
\draw[line width=0.1mm] (2.5,-0.5) -- (3.0,-0.5);
\node at (0.5,0.5) {$\fcolon$};
\node at (1.5,0.5) {$\fcolon$};
\node at (1,4.75) {$\vdots$};
\fill[fill= lightgray] (0.9,-0.5) -- (0.9,1.5) -- (1.1,1.5) -- (1.1,-0.5) -- cycle;
\draw[line width=0.3mm](0.9,-0.5) -- (0.9,1.5) -- (1.1,1.5) -- (1.1,-0.5) -- cycle;
\end{tikzpicture}.
\end{equation}
Note that $\qnum{k-j+1}$ is never zero under the constraints \eqref{eq:restPara}.

There is a bijection $\psi:\base_{n-1,k}^{d+1}\to \Fam_2'$ obtained as follows: from a diagram in $\base_{n-1,k}^{d+1}$, a diagram of $\Fam_2'$ is given by acting on the right by the Wenzl-Jones projector $P_{d+1}$ and then closing the topmost defect into an arc with a point added at the top of the left side. Let $v,w\in\base_{n-1,k}^{d+1}$ and let $v'=\psi(v), w'=\psi(w)\in\Fam_2'$ be their image under $\psi$. The $(d+1,d+1)$-diagram $v^*w$ may contain closed loops, say $m$ loops that do not intersect $P_k$ and $j$ that do. So 
\begin{equation}\label{eq:apresLesBoucles}
v^*w=\beta^m \frac{\qnum{k+1}}{\qnum{k-j+1}} \ \ 
\begin{tikzpicture}[baseline={(current bounding box.center)},scale=0.4]
\draw[line width=0.3mm] (0,0.5) -- (0,4.5);
\draw[line width=0.3mm] (0,1) -- (0.3,1); \draw[line width=0.3mm] (0,3) -- (0.3,3); 
\draw[line width=0.3mm] (0,4) -- (0.3,4); \node at (0.35,2.25) {$\vdots$};
\draw[line width=0.3mm] (2,0.5) -- (2,4.5); 
\draw[line width=0.3mm] (2,4) -- (1.7,4); \draw[line width=0.3mm] (2,3) -- (1.7,3);
\draw[line width=0.3mm] (2,1) -- (1.7,1); \node at (1.65,2.25) {$\vdots$};
\draw[line width=0.3mm] (0.9,1) -- (0.6,1);\draw[line width=0.3mm] (0.9,1.75) -- (0.6,1.75);
\draw[line width=0.3mm] (1.1,1) -- (1.4,1);\draw[line width=0.3mm] (1.1,1.75) -- (1.4,1.75);
\fill[fill= lightgray] (0.9,0.5) -- (0.9,2.) -- (1.1,2.) -- (1.1,0.5) -- cycle;
\draw[line width=0.3mm](0.9,0.5) -- (0.9,2.) -- (1.1,2.) -- (1.1,0.5) -- cycle;
\end{tikzpicture}\ ,
\end{equation}
where the diagram $D$ on the right-hand side has: $d+1$ points on each of its sides, $d+1$ links joining pairwise these $2(d+1)$ points, a projector $P_{k-j}$ in its middle part, and no loop. The loops that were removed in this exer\-cise also appear in $v'^*w'$ and their removal can be done before or after applying $\psi$ to $v$ and $w$, with the same result. In other words, if $v^*w$ vanishes because the factor $\beta^m\qnum{k+1}/\qnum{k-j+1}$ is zero, then so does $v'^*w'$, and vice versa. If this numerical factor is zero, the identity is thus proved.

Assume now that the factor $\beta^m\qnum{k+1}/\qnum{k-j+1}$ is not zero. The comparison must then focus on the diagram $D$ on the right-hand side of \eqref{eq:apresLesBoucles} and the diagram $D'$ obtained from it by multiplying both sides by $P_{d+1}$ and closing the top defects coming out of the two $P_{d+1}$. With the removal of closed loops from $v^*w$, the links in $D$ can be deformed to be of one of the five types present in the following diagram: it can contain a through line avoiding $P_{k-j}$, as in (1), or crossing the projector (2); or a link between points on the same side avoiding the projector, as in (3); or crossing it partially (4), or totally (5). 
\begin{equation*} 
\begin{tikzpicture}[baseline={(current bounding box.center)},scale=0.3]
\draw[line width=0.3mm] (0,0.5) -- (0,8.5);
\draw[line width=0.3mm] (3,0.5) -- (3,8.5);
\draw[line width=0.3mm] (0,8) -- (3,8);
\draw[line width=0.3mm] (0,1) -- (3,1);
\draw[line width=0.3mm] (0,7) .. controls(1,7)and(1,6) .. (0,6);
\draw[line width=0.3mm] (3,7) .. controls(2,7)and(2,6) .. (3,6);
\draw[line width=0.3mm] (0,5) -- (1.5,5);
\draw[line width=0.3mm] (0,4) -- (1.5,4);
\draw[line width=0.3mm] (1.5,5) .. controls(2.5,5)and(2.5,4) .. (1.5,4);
\draw[line width=0.3mm] (3,3) -- (1.5,3);
\draw[line width=0.3mm] (3,2) -- (1.5,2);
\draw[line width=0.3mm] (1.5,2) .. controls(0.5,2)and(0.5,3) .. (1.5,3);
\fill[fill= lightgray] (1.4,0.5) -- (1.4,4.5) -- (1.6,4.5) -- (1.6,0.5) -- cycle;
\draw[line width=0.3mm](1.4,0.5) -- (1.4,4.5) -- (1.6,4.5) -- (1.6,0.5) -- cycle;
\node[fill=white] at (-0.75,8) {\scriptsize{1}};
\node[fill=white] at (-0.75,1) {\scriptsize{2}};
\node[fill=white] at (-0.75,6.5) {\scriptsize{3}}; \node[fill=white] at (3.75,6.5) {\scriptsize{3}};
\node[fill=white] at (-0.75,4.5) {\scriptsize{4}};
\node[fill=white] at (3.75,2.5) {\scriptsize{5}};
\end{tikzpicture}
\end{equation*}
The pairing $\langle v,w\rangle_{n-1,k}^{d+1}$ will be zero if and only if there is at least one link of type (3), (4) or (5), as these are the only ones breaking its monicity. But this statement is also true for $D'$, even after the closing of the top through line: this is immediate for both (3) and (5) because of the second relation in \eqref{eq:propWJ} (left drawing below) and, for (4), the projector $P_{k-j}$ can be absorbed into $P_{d+1}$ because of the third equation of \eqref{eq:propWJ} (right drawing):
\begin{equation*} 
\begin{tikzpicture}[baseline={(current bounding box.center)},scale=0.3]
\draw[line width=0.3mm] (0,3.5) -- (0,7.5);
\draw[line width=0.3mm] (3,3.5) -- (3,7.5);
\draw[line width=0.3mm] (0,7) .. controls(1,7)and(1,6) .. (0,6);
\draw[line width=0.3mm] (3,7) .. controls(2,7)and(2,6) .. (3,6);
\draw[line width=0.3mm] (3,5) -- (1.5,5);
\draw[line width=0.3mm] (3,4) -- (1.5,4);
\draw[line width=0.3mm] (1.5,4) .. controls(0.5,4)and(0.5,5) .. (1.5,5);
\fill[fill= lightgray] (1.4,3.5) -- (1.4,5.5) -- (1.6,5.5) -- (1.6,3.5) -- cycle;
\draw[line width=0.3mm] (1.4,3.5) -- (1.4,5.5);
\draw[line width=0.3mm] (1.6,3.5) -- (1.6,5.5);
\fill[fill= lightgray] (0,3.5) -- (0,7.5) -- (-0.2,7.5) -- (-0.2,3.5) -- cycle;
\draw[line width=0.3mm] (0,3.5) -- (0,7.5);
\draw[line width=0.3mm] (-0.2,3.5) -- (-0.2,7.5);
\fill[fill= lightgray] (3,3.5) -- (3,7.5) -- (3.2,7.5) -- (3.2,3.5) -- cycle;
\draw[line width=0.3mm] (3,3.5) -- (3,7.5);
\draw[line width=0.3mm] (3.2,3.5) -- (3.2,7.5);
\node at (-0.75,6.5)  {\scriptsize{3}};
\node at (3.75,4.8) {\scriptsize{5}};
\end{tikzpicture}
\qquad\textrm{and}\qquad
\begin{tikzpicture}[baseline={(current bounding box.center)},scale=0.3]
\draw[line width=0.3mm] (0,3.5) -- (0,7.5);
\draw[line width=0.3mm] (3,3.5) -- (3,7.5);
\draw[line width=0.3mm] (0,5) -- (1.5,5);
\draw[line width=0.3mm] (0,6) -- (1.5,6);
\draw[line width=0.3mm] (1.5,6) .. controls(2.5,6)and(2.5,5) .. (1.5,5);
\fill[fill= lightgray] (1.4,3.5) -- (1.4,5.5) -- (1.6,5.5) -- (1.6,3.5) -- cycle;
\draw[line width=0.3mm] (1.4,3.5) -- (1.4,5.5);
\draw[line width=0.3mm] (1.6,3.5) -- (1.6,5.5);
\fill[fill= lightgray] (0,3.5) -- (0,7.5) -- (-0.2,7.5) -- (-0.2,3.5) -- cycle;
\draw[line width=0.3mm] (0,3.5) -- (0,7.5);
\draw[line width=0.3mm] (-0.2,3.5) -- (-0.2,7.5);
\fill[fill= lightgray] (3,3.5) -- (3,7.5) -- (3.2,7.5) -- (3.2,3.5) -- cycle;
\draw[line width=0.3mm] (3,3.5) -- (3,7.5);
\draw[line width=0.3mm] (3.2,3.5) -- (3.2,7.5);
\node at (-0.75,5.5) {\scriptsize{4}};
\end{tikzpicture}\ =\ 
\begin{tikzpicture}[baseline={(current bounding box.center)},scale=0.3]
\draw[line width=0.3mm] (0,3.5) -- (0,7.5);
\draw[line width=0.3mm] (3,3.5) -- (3,7.5);
\draw[line width=0.3mm] (0,6) .. controls(1,6)and(1,5) .. (0,5);
\fill[fill= lightgray] (0,3.5) -- (0,7.5) -- (-0.2,7.5) -- (-0.2,3.5) -- cycle;
\draw[line width=0.3mm] (0,3.5) -- (0,7.5);
\draw[line width=0.3mm] (-0.2,3.5) -- (-0.2,7.5);
\fill[fill= lightgray] (3,3.5) -- (3,7.5) -- (3.2,7.5) -- (3.2,3.5) -- cycle;
\draw[line width=0.3mm] (3,3.5) -- (3,7.5);
\draw[line width=0.3mm] (3.2,3.5) -- (3.2,7.5);
\end{tikzpicture}\ .
\end{equation*}
Therefore the $(d+1,d+1)$-diagram $D$ is monic (and thus leads to a non-zero $\langle v,w\rangle_{n-1,k}^{d+1}$) if and only if $D'$ is monic. It remains to compute the factor between $D$ and $D'$ in the case $D$ is monic. But then, the relations \eqref{eq:propWJ} and \eqref{eq:mdrr} for a projector $P_{d+1}$ with one loop give
$$D=\ 
\begin{tikzpicture}[baseline={(current bounding box.center)},scale=0.35]
\draw[line width=0.3mm] (0,3.5) -- (0,8.5);
\draw[line width=0.3mm] (3,3.5) -- (3,8.5);
\draw[line width=0.3mm] (0,8) -- (3,8);
\draw[line width=0.3mm] (0,7) -- (3,7);
\draw[line width=0.3mm] (0,6) -- (3,6);
\draw[line width=0.3mm] (0,5) -- (3,5);
\draw[line width=0.3mm] (0,4) -- (3,4);
\node at (1.5,6.5) {$\fcolon$};
\node at (0.75,4.5) {$\fcolon$};
\node at (2.25,4.5) {$\fcolon$};
\fill[fill= lightgray] (1.4,3.5) -- (1.4,5.5) -- (1.6,5.5) -- (1.6,3.5) -- cycle;
\draw[line width=0.3mm](1.4,3.5) -- (1.4,5.5) -- (1.6,5.5) -- (1.6,3.5) -- cycle;
\end{tikzpicture}\qquad\longrightarrow\qquad
D'=\ 
\begin{tikzpicture}[baseline={(current bounding box.center)},scale=0.35]
\draw[line width=0.3mm] (0,3.5) -- (0,8.5);
\draw[line width=0.3mm] (3,3.5) -- (3,8.5);
\draw[line width=0.3mm] (0,8) -- (3,8);
\draw[line width=0.3mm] (0,7) -- (3,7);
\draw[line width=0.3mm] (0,6) -- (3,6);
\draw[line width=0.3mm] (0,5) -- (3,5);
\draw[line width=0.3mm] (0,4) -- (3,4);
\draw[line width=0.3mm] (0,4) -- (-0.5,4); \draw[line width=0.3mm] (3,4) -- (3.5,4);
\draw[line width=0.3mm] (0,5) -- (-0.5,5); \draw[line width=0.3mm] (3,5) -- (3.5,5);
\draw[line width=0.3mm] (0,6) -- (-0.5,6); \draw[line width=0.3mm] (3,6) -- (3.5,6);
\draw[line width=0.3mm] (0,7) -- (-0.5,7); \draw[line width=0.3mm] (3,7) -- (3.5,7);
\draw[line width=0.3mm] (0,8) .. controls(-1,8)and(-1,9) .. (0,9);
\draw[line width=0.3mm] (3,8) .. controls(4,8)and(4,9) .. (3,9);
\draw[line width=0.3mm] (0,9) -- (3,9);
\node at (1.5,6.5) {$\fcolon$};
\node at (0.75,4.5) {$\fcolon$};
\node at (2.25,4.5) {$\fcolon$};
\fill[fill= lightgray] (1.4,3.5) -- (1.4,5.5) -- (1.6,5.5) -- (1.6,3.5) -- cycle;
\draw[line width=0.3mm](1.4,3.5) -- (1.4,5.5) -- (1.6,5.5) -- (1.6,3.5) -- cycle;
\fill[fill= lightgray] (0,3.5) -- (0,8.5) -- (-0.2,8.5) -- (-0.2,3.5) -- cycle;
\draw[line width=0.3mm](0,3.5) -- (0,8.5) -- (-0.2,8.5) -- (-0.2,3.5) -- cycle;
\fill[fill= lightgray] (3,3.5) -- (3,8.5) -- (3.2,8.5) -- (3.2,3.5) -- cycle;
\draw[line width=0.3mm](3,3.5) -- (3,8.5) -- (3.2,8.5) -- (3.2,3.5) -- cycle;
\end{tikzpicture}
\ =\ 
\begin{tikzpicture}[baseline={(current bounding box.center)},scale=0.35]
\draw[line width=0.3mm] (0.5,4) -- (-0.7,4); 
\draw[line width=0.3mm] (0.5,5) -- (-0.7,5); 
\draw[line width=0.3mm] (0.5,6) -- (-0.7,6); 
\draw[line width=0.3mm] (0.5,7) -- (-0.7,7); 
\draw[line width=0.3mm] (-0.2,8) .. controls(-1.2,8)and(-1.2,9) .. (-0.2,9);
\draw[line width=0.3mm] (0,8) .. controls(1,8)and(1,9) .. (0,9);
\draw[line width=0.3mm] (0,9) -- (-0.2,9);
\node at (0.4,5.5) {$\fcolon$};
\node at (-0.6,5.5) {$\fcolon$};
\fill[fill= lightgray] (0,3.5) -- (0,8.5) -- (-0.2,8.5) -- (-0.2,3.5) -- cycle;
\draw[line width=0.3mm](0,3.5) -- (0,8.5) -- (-0.2,8.5) -- (-0.2,3.5) -- cycle;
\end{tikzpicture}
\ = \frac{\qnum{d+2}}{\qnum{d+1}}\ 
\begin{tikzpicture}[baseline={(current bounding box.center)},scale=0.35]
\draw[line width=0.3mm] (0.5,4) -- (-0.7,4); 
\draw[line width=0.3mm] (0.5,5) -- (-0.7,5); 
\draw[line width=0.3mm] (0.5,6) -- (-0.7,6); 
\draw[line width=0.3mm] (0.5,7) -- (-0.7,7); 
\node at (0.4,5.5) {$\fcolon$};
\node at (-0.6,5.5) {$\fcolon$};
\fill[fill= lightgray] (0,3.5) -- (0,7.5) -- (-0.2,7.5) -- (-0.2,3.5) -- cycle;
\draw[line width=0.3mm](0,3.5) -- (0,7.5) -- (-0.2,7.5) -- (-0.2,3.5) -- cycle;
\end{tikzpicture}\ ,
$$
where $\qnum{d+1}$ is non-zero by hypothesis. The factor $\frac{\qnum{d+2}}{\qnum{d+1}}$ is independent of $n$ and $k$. This ends the proof.
\end{proof}
\noindent Note that, in the previous proof, each step establishing that $\langle v,w\rangle_{n-1,k}^{d+1}=\frac{\qnum{d+2}}{\qnum{d+1}}\langle v',w'\rangle_{n,k}^{d}$, $v,w\in\base_{n-1,k}^{d+1}$ actually proves that $\langle v,w\rangle_{n-1,k}^{d+1}$ and $\langle v',w'\rangle_{n,k}^{d}$ are either both zero or non-zero, except for the last step. Indeed the factor $\qnum{d+2}$ could be zero.

Here is an example of the factorisation for the Gram matrix $\gram_{4,2}^2$. Figure \ref{fig:GramGraphi1} displays the diagrams to be evaluated in the two bases: those on the left are concatenation of elements of the original basis $\base_{4,2}^2$, those on the right of the new one ${\base_{4,2}^2}\hskip-3pt '$. With the use of \eqref{eq:mdrr}, it is easy to evaluate each matrix element. The resulting matrices are respectively
\[\begin{pmatrix}
\qnum{3} & \frac{\qnum{3}}{\qnum{2}} & 0 & 0 & 0 & 1 \\
\frac{\qnum{3}}{\qnum{2}} & \qnum{3} & \frac{\qnum{3}}{\qnum{2}} &\frac{\qnum{3}}{\qnum{2}} & 1 & \qnum{2} \\
0 & \frac{\qnum{3}}{\qnum{2}} & \qnum{3} & 1 & \qnum2 & 1\\
0 & \frac{\qnum{3}}{\qnum{2}} & 1 & \qnum3 & \qnum2 & 1 \\
0 & 1 & \qnum2 & \qnum2 & \qnum2^2 & \qnum2\\
1 & \qnum2 & 1 & 1 & \qnum2 & \qnum2^2\end{pmatrix}\qquad\textrm{and}\qquad
\begin{pmatrix}
\qnum{3} & \frac{\qnum{3}}{\qnum{2}} & 0 & 0 & 0 & 0 \\
\frac{\qnum{3}}{\qnum{2}} & \qnum{3} & \frac{\qnum{3}}{\qnum{2}} & 0 & 0 & 0 \\
0 & \frac{\qnum{3}}{\qnum{2}} & \qnum{3} & 0 & 0 & 0\\
0 & 0 & 0 & \frac{\qnum{3}}{\qnum{2}}\frac{\qnum{4}}{\qnum{3}} & \frac{\qnum{4}}{\qnum{3}} & 0 \\
0 & 0 & 0 & \frac{\qnum{4}}{\qnum{3}} & \qnum2\frac{\qnum{4}}{\qnum{3}} & \frac{\qnum{4}}{\qnum{3}}\\
0 & 0 & 0 & 0 & \frac{\qnum{4}}{\qnum{3}} & \qnum2\frac{\qnum{4}}{\qnum{3}}\end{pmatrix}.
\]

The characteristics of the recursive form in lemma \ref{prop:recursivegram} appear clearly in the second one: the unchanged upper left $3\times 3$ block, the factor $\qnum{d+2}/\qnum{d+1}=\qnum{4}/\qnum{3}$ common to all factors in the $3\times 3$ lower right block and, of course, the two vanishing off-diagonal $3\times 3$ blocks. The determinants of both matrices agree with that given by formula \eqref{eq:Gramdeterminantbnk} (as they must) and, even though none of their elements contains a factor $\qnum5$, are equal to $\qnum5\qnum4^4/\qnum2^4$.
\begin{figure}[h]
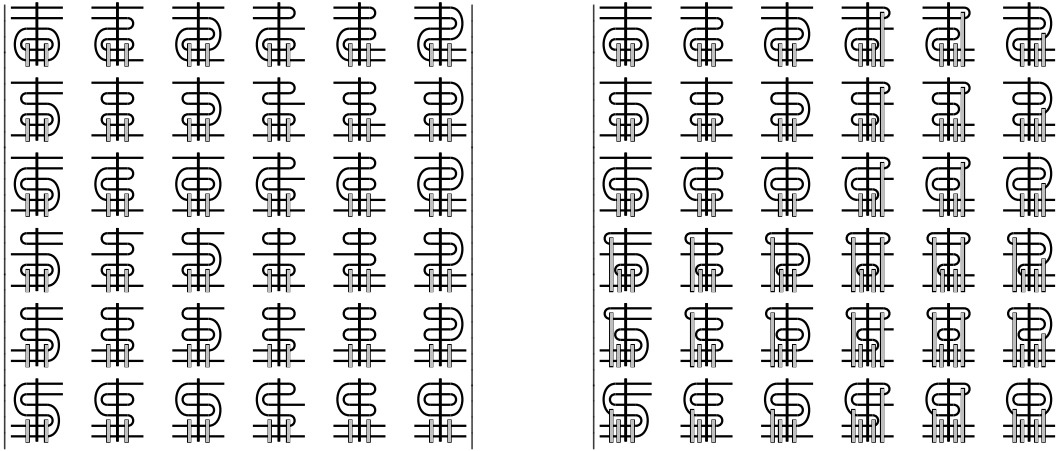

	\[
		\begin{vmatrix}
			\vecteurar\!\vecteura & \vecteurar\!\vecteurb & \vecteurar\!\vecteurc & \vecteurar\!\vecteurd & \vecteurar\!\vecteure & \vecteurar\!\vecteurf \\[1.25em]
			\vecteurbr\!\vecteura & \vecteurbr\!\vecteurb & \vecteurbr\!\vecteurc & \vecteurbr\!\vecteurd & \vecteurbr\!\vecteure & \vecteurbr\!\vecteurf \\[1.25em]
			\vecteurcr\!\vecteura & \vecteurcr\!\vecteurb & \vecteurcr\!\vecteurc & \vecteurcr\!\vecteurd & \vecteurcr\!\vecteure & \vecteurcr\!\vecteurf \\[1.25em]
			\vecteurdr\!\vecteura & \vecteurdr\!\vecteurb & \vecteurdr\!\vecteurc & \vecteurdr\!\vecteurd & \vecteurdr\!\vecteure & \vecteurdr\!\vecteurf \\[1.25em]
			\vecteurer\!\vecteura & \vecteurer\!\vecteurb & \vecteurer\!\vecteurc & \vecteurer\!\vecteurd & \vecteurer\!\vecteure & \vecteurer\!\vecteurf \\[1.25em]
			\vecteurfr\!\vecteura & \vecteurfr\!\vecteurb & \vecteurfr\!\vecteurc & \vecteurfr\!\vecteurd & \vecteurfr\!\vecteure & \vecteurfr\!\vecteurf \\[1.25em]
		\end{vmatrix}\ \qquad \qquad \begin{vmatrix}
			\vvecteurar\!\vvecteura & \vvecteurar\!\vvecteurb & \vvecteurar\!\vvecteurc & \vvecteurar\!\vvecteurd & \vvecteurar\!\vvecteure & \vvecteurar\!\vvecteurf \\[1.25em]
			\vvecteurbr\!\vvecteura & \vvecteurbr\!\vvecteurb & \vvecteurbr\!\vvecteurc & \vvecteurbr\!\vvecteurd & \vvecteurbr\!\vvecteure & \vvecteurbr\!\vvecteurf \\[1.25em]
			\vvecteurcr\!\vvecteura & \vvecteurcr\!\vvecteurb & \vvecteurcr\!\vvecteurc & \vvecteurcr\!\vvecteurd & \vvecteurcr\!\vvecteure & \vvecteurcr\!\vvecteurf \\[1.25em]
			\vvecteurdr\!\vvecteura & \vvecteurdr\!\vvecteurb & \vvecteurdr\!\vvecteurc & \vvecteurdr\!\vvecteurd & \vvecteurdr\!\vvecteure & \vvecteurdr\!\vvecteurf \\[1.25em]
			\vvecteurer\!\vvecteura & \vvecteurer\!\vvecteurb & \vvecteurer\!\vvecteurc & \vvecteurer\!\vvecteurd & \vvecteurer\!\vvecteure & \vvecteurer\!\vvecteurf \\[1.25em]
			\vvecteurfr\!\vvecteura & \vvecteurfr\!\vvecteurb & \vvecteurfr\!\vvecteurc & \vvecteurfr\!\vvecteurd & \vvecteurfr\!\vvecteure & \vvecteurfr\!\vvecteurf \\[1.25em]
		\end{vmatrix}
    \]
\caption{Graphical representation of the determinant of Gram matrix $\gram_{4,2}^2$ both in its usual form (left) and after applying the change of basis of lemma~\ref{prop:recursivegram} (right).}\label{fig:GramGraphi1}
\end{figure}

\begin{proposition} If $n,k$ are constrained by \eqref{eq:restPara} and $d\in\Delta_{n,k}^0$ is such that $\qnum{d+1}\neq0$, then 
\begin{equation}\label{eq:recursivedetbnk}
\det \gram_{n,k}^d = {\qnum{d+2}\over \qnum{d+1}}\det \gram_{n-1,k}^{d-1}\det \gram_{n-1,k}^{d+1}.
\end{equation}
\end{proposition}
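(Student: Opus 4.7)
The plan is to read off \eqref{eq:recursivedetbnk} directly from lemma~\ref{prop:recursivegram}, so the proposition is a short determinantal corollary. I would take the determinant of each side of the change-of-basis identity \eqref{eq:changeofbasisgrambnk}. On the left, $\mathcal{U}$ is unitriangular, so $\det\mathcal{U}=1$, and hence $\det(\mathcal{U}^{T}\gram_{n,k}^d\mathcal{U})$ reduces to $\det\gram_{n,k}^d$. On the right, the matrix is block-diagonal, so its determinant factorizes as the product of the determinants of the two diagonal blocks: $\det\gram_{n-1,k}^{d-1}$ from the upper-left block, and $\det\bigl(\tfrac{\qnum{d+2}}{\qnum{d+1}}\gram_{n-1,k}^{d+1}\bigr)$ from the lower-right one. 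Extracting the scalar from the second block produces the claimed recursion alongside $\det\gram_{n-1,k}^{d+1}$.

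It remains to verify that the hypotheses of lemma~\ref{prop:recursivegram} are in force. The condition $\qnum{d+1}\neq 0$ is explicit in the proposition; the constraint $d\geq 1$ in the lemma corresponds to needing $d-1\in\Delta_{n-1,k}$ for the right-hand side to make sense, and when $d=0$ the recursion is vacuous (or the first factor drops). One should also note that both $d-1$ and $d+1$ have the correct parity $(n-1)+k\pmod 2$ and satisfy the inequality $(n-1)+(d\pm 1)\geq k$ under the standing assumptions, so the smaller Gram matrices on the right-hand side are indeed defined via \eqref{eq:DeltaBnk}.

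The main obstacle has already been overcome in the proof of lemma~\ref{prop:recursivegram}; once the block decomposition is in hand, the present statement is pure bookkeeping. An alternative (but much longer) route would be to derive \eqref{eq:recursivedetbnk} directly from the closed product formula \eqref{eq:Gramdeterminantbnk}, expanding both sides and comparing exponents term-by-term using Pascal-type identities on the dimensions $\dim\Cell{n,k}{d'}$ versus $\dim\Cell{n-1,k}{d'}$; this would serve as a consistency check but adds no conceptual content beyond lemma~\ref{prop:recursivegram}.
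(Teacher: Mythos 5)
Your route is exactly the paper's: its entire proof reads ``Follows from the unitriangularity of the matrix of change of basis $\mathcal{U}$ and the previous lemma,'' and you have simply spelled out the two ingredients (taking determinants in \eqref{eq:changeofbasisgrambnk}, $\det\mathcal U=1$, block-diagonal factorization). So there is no methodological difference to report.

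There is, however, one step where your write-up asserts something false, and it is precisely the step you dismiss as ``pure bookkeeping.'' The lower-right block of \eqref{eq:changeofbasisgrambnk} is the \emph{scalar multiple} $\frac{\qnum{d+2}}{\qnum{d+1}}\gram_{n-1,k}^{d+1}$ of a matrix of size $\dim\Cell{n-1,k}{d+1}$, so its determinant is $\bigl(\frac{\qnum{d+2}}{\qnum{d+1}}\bigr)^{\dim\Cell{n-1,k}{d+1}}\det\gram_{n-1,k}^{d+1}$, not $\frac{\qnum{d+2}}{\qnum{d+1}}\det\gram_{n-1,k}^{d+1}$. ``Extracting the scalar'' therefore does \emph{not} produce the recursion as displayed in \eqref{eq:recursivedetbnk}; it produces the version with the exponent. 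The paper's own worked example confirms this: for $\gram_{4,2}^2$ one has $\det\gram_{3,2}^1=\qnum{3}^3\qnum{4}/\qnum{2}^3$ and $\det\gram_{3,2}^3=\qnum{5}/\qnum{2}$, and only $\bigl(\frac{\qnum4}{\qnum3}\bigr)^{3}\cdot\frac{\qnum{3}^3\qnum{4}}{\qnum{2}^3}\cdot\frac{\qnum{5}}{\qnum{2}}$ reproduces the stated value $\qnum{5}\qnum{4}^4/\qnum{2}^4$, whereas the formula with a single factor gives $\qnum{3}^2\qnum{4}^2\qnum{5}/\qnum{2}^4$. So the displayed identity is missing the exponent (a slip the paper's one-line proof also glosses over); the alternative ``consistency check'' against \eqref{eq:Gramdeterminantbnk} that you proposed but did not carry out would have caught it immediately. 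This does not damage the rest of the paper, since all later uses of the lemma only need the block structure and which blocks are singular, but your proof as written does not establish the statement as written. Two smaller remarks: your parity/membership checks for $d\pm1\in\Delta_{n-1,k}$ are sensible additions not present in the paper; on the other hand the case $d=0$ (allowed by the proposition since $\qnum{1}\neq0$, but excluded from lemma \ref{prop:recursivegram}) deserves more than the parenthetical ``vacuous,'' since the lemma you are invoking simply does not apply there.
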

\begin{proof}Follows from the unitriangularity of the matrix of change of basis $\mathcal{U}$ and the previous lemma.\end{proof}

With this proposition, it is possible to have a recursive definition of the determinant stated only in terms of seam algebras modules when $\qnum{d+1}\neq 0$. This result shows that, for a {\em generic} $q$, the cellular modules $\Cell{n,k}d$ are all irreducible. Indeed it will be checked that the determinant $\det \mathcal G_{n,k}^d$ is non-zero for all $d\in\Delta_{n,k}$ (first paragraph of the proof of proposition \ref{prop:ExactSequenceMorphiGLbnk}). This implies that the radical of $\Cell{n,k}d$ is $0$ and thus that $\Cell{n,k}d$ is irreducible for all $d\in\Delta_{n,k}$. The case when $q$ is a root of unity requires more work. The next section is devoted to this problem.

However before closing the present section, two seemingly unrelated questions are answered: when are $\Delta_{n,k}$ and $\Delta_{n,k}^0$ distinct sets? And are the cellular modules $\Cell{n,k}d$ cyclic? Identity \eqref{eq:mdrr}, used to prove the recursive form of $\gram_{n,k}^d$, is also key toward the answer of the first question.
\begin{proposition}\label{thm:deltazero} Let $d\in\Delta_{n,k}$. The bilinear form $\langle\,\cdot\, ,\,\cdot\,\rangle_{n,k}^d$ is identically zero if and only if $d<k$ and $k+1\equiv 0\modY \ell$.
\end{proposition}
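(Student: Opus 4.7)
The plan is to prove the biconditional by induction on $n$ (with $k$ fixed and the root-of-unity status of $q$ as in \eqref{eq:restPara}), combining the block decomposition of $\gram_{n,k}^d$ supplied by lemma~\ref{prop:recursivegram} with a direct diagrammatic computation for the base case $d=0$.

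For the sufficiency direction, I assume $d<k$ and $\qnum{k+1}=0$ and show that every entry of $\gram_{n,k}^d$ vanishes. Since $d+1\leq k<\ell$, we have $\qnum{d+1}\neq 0$, so lemma~\ref{prop:recursivegram} applies for $d\geq 1$ and yields
\[
\mathcal{U}^T\gram_{n,k}^d\mathcal{U}=\begin{pmatrix}\gram_{n-1,k}^{d-1}&0\\ 0&\frac{\qnum{d+2}}{\qnum{d+1}}\gram_{n-1,k}^{d+1}\end{pmatrix}.
\]
The upper block vanishes by the inductive hypothesis since $d-1<k$. For the lower block, either $d+1<k$ and the hypothesis applies again, or $d+1=k$ and then $\qnum{d+2}=\qnum{k+1}=0$ kills the block outright. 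The base case $d=0$ must be treated directly: every element $\tilde v\in\base_{n,k}^0$ pairs each of its $k$ boundary points with a distinct bulk point via an arc, so in $\tilde v^{*}P_k\tilde w$ all $k$ boundary positions on each side of $P_k$ are closed by external arcs passing through the identity strings of $P_k$. Iterating the identity $E_{n-k}P_kE_{n-k}=\frac{\qnum{k+1}}{\qnum k}E_{n-k}P_{k-1}$, or invoking \eqref{eq:mdrr} directly, turns these closures into a factor $\qnum{k+1}/\qnum{j}$ with $1\leq j\leq k$, times a power of $\beta$. Since $\qnum{k+1}=0$ and all $\qnum{j}$ with $j\leq k<\ell$ are nonzero, the pairing vanishes.

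For the necessity direction, I argue by contraposition: if $d\geq k$ or $\qnum{k+1}\neq 0$, I exhibit a nonzero pairing. When $d\geq k$, choose $v\in\base_{n,k}^d$ with all $k$ boundary points as through lines, $d-k$ bulk through lines, and $(n-d+k)/2$ nested bulk-to-bulk arcs among the remaining bulk points. Then $v^{*}P_kv$ closes no arc on $P_k$ and reduces to $\beta^{(n-d+k)/2}$ times the monic $(d,d)$-diagram. This is nonzero because $\ell>k\geq 2$ forces $\ell\geq 3$ and hence $\beta=\qnum 2\neq 0$. When $d<k$ and $\qnum{k+1}\neq 0$, choose $v\in\base_{n,k}^d$ with the minimum $k-d$ boundary-to-bulk arcs and the remaining $(n-k+d)/2$ arcs bulk-to-bulk; then the arc-closure analysis gives $\langle v,v\rangle=\beta^{(n-k+d)/2}\cdot\qnum{k+1}/\qnum{d+1}$, which is nonzero since $\beta$, $\qnum{k+1}$ and $\qnum{d+1}$ are all nonzero under the hypotheses.

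The main obstacle is verifying the base case diagrammatically when $\tilde v\neq\tilde w$: the bijections between boundary and bulk induced by the arcs of $\tilde v$ and $\tilde w$ need not match, so the external closure of $P_k$'s boundary need not immediately present the nested cup-cap configuration required by \eqref{eq:mdrr}. Planarity of the noncrossing diagrams in $\tl{n+k}$ nevertheless ensures that the induced closure can be reshaped by Temperley-Lieb moves into a form amenable to successive applications of \eqref{eq:mdrr}, each step extracting a $\qnum{k+1}$ factor with a nonvanishing denominator, and routine bookkeeping then confirms that the resulting scalar always carries $\qnum{k+1}$ as a factor.
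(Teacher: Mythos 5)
Your necessity direction is sound and is essentially the paper's own argument: exhibit, for each $d$ with $d\geq k$ or $\qnum{k+1}\neq 0$, an explicit diagram whose self-pairing is a nonzero product of powers of $\beta$ and a single factor $\qnum{k+1}/\qnum{k-j+1}$. (The paper is slightly more careful and arranges the witness so that \emph{no} closed loop avoids $P_k$, making the value exactly $\qnum{k+1}/\qnum{k-j+1}$ with no power of $\beta$; your version leans on $\beta\neq 0$, which does hold under \eqref{eq:restPara} since $\ell>k\geq 2$, so this is only a stylistic difference.)

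The gap is in the sufficiency direction. Your induction via lemma \ref{prop:recursivegram} is a genuinely different route from the paper's, and the inductive step itself is fine, but it funnels all of the difficulty into the base case — precisely the point you flag as ``the main obstacle'' and then dismiss as routine. Two concrete problems. First, the claim that iterating $E_{n-k}P_kE_{n-k}=\frac{\qnum{k+1}}{\qnum{k}}E_{n-k}P_{k-1}$ extracts ``a $\qnum{k+1}$ factor at each step'' is false: successive closures produce $\frac{\qnum{k+1}}{\qnum{k}},\frac{\qnum{k}}{\qnum{k-1}},\dots$, which telescope to a \emph{single} $\qnum{k+1}$ in the numerator (this is exactly \eqref{eq:mdrr}); the conclusion survives, but the mechanism you invoke does not. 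Second, and more seriously, when the boundary-to-bulk matchings of $\tilde v$ and $\tilde w$ differ, the closure of $P_k$'s boundary strands is not of the nested form to which \eqref{eq:mdrr} applies, and no amount of isotopy will make it so; what actually saves you is that any other planar closure joins two boundary strands of the projector on the same side, whereupon $P_kE_i=0$ from \eqref{eq:propWJ} annihilates the term. Once you state the resulting dichotomy — every pairing $\langle v,w\rangle_{n,k}^d$ is either $0$ or of the form $\beta^i\,\qnum{k+1}/\qnum{k-j+1}$, with $j\geq 1$ forced whenever $d<k$ because not all $k$ boundary strands can then be defects — the sufficiency direction for \emph{all} $d<k$ follows in one line, with $\qnum{k-j+1}\neq 0$ since $k-j+1\leq k<\ell$. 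That is how the paper argues. So the induction is not wrong, but it buys nothing: its base case requires the very observation that settles the general case directly, and you have not actually supplied that observation.
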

\begin{proof}For any pair of basis elements $v,w\in \base_{n,k}^{d}$, the product $\langle v,w\rangle$ is either zero or of the form $\beta^i\frac{\qnum{k+1}}{\qnum{k-j+1}}$ because of the identity \eqref{eq:mdrr}. Here $i$ is the number of closed loops that do not go through the projector $P_k$ and $j$, the number of closed loops that do. Recall that, for the product $\langle v,w\rangle$ to be non-zero, the diagram $v^*w$ needs to be proportional to the $(d,d)$-diagram identity, and then the product is the factor multiplying the identity. Amongst all possible diagrams $v^*w$, there always exists at least one that has $i=0$. Indeed a pair of a $(d,n+k)$-diagram $v^*$ and an $(n+k,d)$-diagram $w$ can be constructed that has this property. Draw first $\min(d,k-1)$ through lines on the bottom of the diagram (these all go through $P_k$) and if $d\geq k$, the remaining $d-k+1$ at the top of the bulk. (Recall that the bulk are the upper $n$ points, the boundary the lower $k$ points.) 

If $k>d$, then this has left untaken the $n$ sites of the bulk and $k-d$ sites of the boundary (below the dashed line in the examples below). Since elements of $\Delta_{n,k}$ satisfy $n+d\geq k$, then the number of untaken boundary sites is smaller than that of the untaken bulk ones. Since $d$ shares the parity of $n+k$, there are an even number of sites left and it is possible to go through them all, bulk and boundary, by drawing $k-d$ loops, each intersecting once $P_k$. (The figure on the left gives an example of the case $k>d$. The diagram on the left of the vertical line is $v^*$ and on the right $w$.)
$$
\begin{tikzpicture}[baseline={(current bounding box.center)},scale=0.35]
\fill [black] (1,3) circle (4pt); \fill [black] (1,4) circle (4pt); 
\fill [black] (1,5) circle (4pt); \fill [black] (1,6) circle (4pt);
\draw[line width=0.3mm] (1,-0.5) -- (1,6.5);
\draw[line width=0.3mm] (0,0) -- (2,0);
\draw[line width=0.3mm, dashed] (0,2.5) -- (2,2.5);
\draw[line width=0.3mm] (1,2) arc (-90:270:0.5);
\draw[line width=0.3mm] (1,5) arc (90:270:0.5);
\draw[line width=0.3mm] (1,5) arc (-90:90:0.5);
\draw[line width=0.3mm] (1,1) .. controls(3,1)and(3,4) .. (1,4);
\draw[line width=0.3mm] (1,1) .. controls(-1,1)and(-1,6) .. (1,6);
\fill[fill= lightgray] (0.9,-0.25) -- (0.9,2.25) -- (1.1,2.25) -- (1.1,-0.25) -- cycle;
\draw[line width=0.3mm] (0.9,-0.25) -- (0.9,2.25) -- (1.1,2.25) -- (1.1,-0.25) -- cycle;
\node at (1,-1.4) {$n=4,k=3,d=1$};
\node at (1,-2.6) {$(k>d)$};
\end{tikzpicture}
\qquad\qquad
\begin{tikzpicture}[baseline={(current bounding box.center)},scale=0.35]
\fill [black] (1,3) circle (4pt); \fill [black] (1,4) circle (4pt); 
\fill [black] (1,5) circle (4pt); \fill [black] (1,6) circle (4pt);\draw[line width=0.3mm] (1,-0.5) -- (1,6.5);
\draw[line width=0.3mm] (0,0) -- (2,0);
\draw[line width=0.3mm] (0,1) -- (2,1);
\draw[line width=0.3mm] (0,6) -- (2,6);
\draw[line width=0.3mm, dashed] (0,2.5) -- (2,2.5);
\draw[line width=0.3mm] (1,2) arc (-90:90:0.5);
\draw[line width=0.3mm] (1,4) arc (90:270:0.5);
\draw[line width=0.3mm] (1,4) arc (-90:90:0.5);
\draw[line width=0.3mm] (1,2) .. controls(-1,2)and(-1,5) .. (1,5);
\fill[fill= lightgray] (0.9,-0.25) -- (0.9,2.25) -- (1.1,2.25) -- (1.1,-0.25) -- cycle;
\draw[line width=0.3mm] (0.9,-0.25) -- (0.9,2.25) -- (1.1,2.25) -- (1.1,-0.25) -- cycle;
\node at (1,-1.4) {$n=4,k=3,d=3$};
\node at (1,-2.6) {$(k\leq d)$};
\end{tikzpicture}
$$

The case $k\leq d$ is split in two. If $d=n+k$, then the module is one-dimensional and the bilinear form non-zero. Assume then that $d\leq n+k-2$. The drawing of the $d$ defects has thus left untouched $n-(d-k+1)$ bulk sites (which is an odd positive integer) and $1$ boundary site. It is thus possible to draw one loop going through all the remaining sites. (The figure on the right gives an example of the case $k\leq d$.) In these two cases, the product $\langle v,w\rangle$ is equal to $\frac{\qnum{k+1}}{\qnum{k-j+1}}$ for some $j$. (In the above examples, $j=2$ for the case $k>d$ and $j=1$ for $k\leq d$.) The existence of such a pair $v,w\in \base_{n,k}^{d}$ shows that, if $\qnum{k+1}\neq 0$, then $\langle\,\cdot\, ,\,\cdot\,\rangle_{n,k}^d$ is not identically zero. This statement holds whether or not $\beta=0$.

The form $\langle\,\cdot\, ,\,\cdot\,\rangle_{n,k}^d$ may then be identically zero only if $\qnum{k+1}=0$ which is equivalent to $k+1\equiv 0\modY \ell$. It will thus be identically zero if and only if {\em all} diagrams $v^*w$, with $v,w\in\base_{n,k}^d$, contain a loop going through the projector $P_k$. This situation occurs only when the boundary sites of any diagrams $v^*w$ cannot all be occupied by through lines, that is, if $d<k$.
\end{proof}

Note that it is possible for $\beta$ and $\qnum{k+1}$ to be simultaneously zero.  Then $q=\pm i$ and $k=1$, because $\ell =2$ muse be greater than $k$. The case $k=1$ was omitted by \eqref{eq:restPara} for this reason; it corresponds to $\bnk{n,1}(0)=\tl{n+1}(0)$ and is treated in \cite{RSA}. Finally the condition $k+1\equiv 0\modY \ell$ is simply $\ell=k+1$ due to the constraint {\em (ii)} in \eqref{eq:restPara}.

\begin{proposition}\label{thm:cyclic}The cellular modules $\Cell{n,k}d$ over $\bnk{n,k}$ are cyclic.
\end{proposition}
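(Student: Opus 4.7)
The plan is to deduce cyclicity of $\Cell{n,k}d$ over $\bnk{n,k}$ from the submodule structure of $\Cell{n+k}d$ over $\tl{n+k}$ via the identification $\Cell{n,k}d = P_k \Cell{n+k}d$ (inherent in the cellular datum of section \ref{sub:bnkIsCellular}). The initial step is to observe that every $v \in \Cell{n,k}d$ satisfies $P_k v = v$ under the $\tl{n+k}$-action, since $P_k$ is idempotent; hence for any $v_0 \in \Cell{n,k}d$,
\begin{equation*}
\bnk{n,k} \cdot v_0 \;=\; (P_k \tl{n+k} P_k) \cdot v_0 \;=\; P_k \cdot (\tl{n+k} \cdot v_0).
\end{equation*}
So cyclicity of $\Cell{n,k}d$ over $\bnk{n,k}$ reduces to choosing $v_0$ such that $P_k \cdot (\tl{n+k} \cdot v_0) = \Cell{n,k}d$.

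By Theorem \ref{thm:tln}, the $\tl{n+k}$-submodule lattice of $\Cell{n+k}d$ contains at most three elements, with $\Radi{n+k}d$ irreducible whenever it is non-zero. Hence $\tl{n+k} \cdot v_0$ equals $\Cell{n+k}d$ if $v_0 \notin \Radi{n+k}d$, and equals $\Radi{n+k}d$ if $0 \neq v_0 \in \Radi{n+k}d$. A second ingredient is the equivalence of radicals: for $v \in \Cell{n,k}d$, $v \in \Radi{n+k}d$ if and only if $v \in \Radi{n,k}d$. This follows from the identity $\langle v, x\rangle_{n+k}^d = \langle v, P_k x\rangle_{n,k}^d$, valid for $v \in \Cell{n,k}d$ and $x \in \Cell{n+k}d$, combined with the observation that $\{P_k x : x \in \Cell{n+k}d\} = \Cell{n,k}d$.

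Two cases then emerge. If the seam form on $\Cell{n,k}d$ is not identically zero, I pick $v_0 \in \Cell{n,k}d \setminus \Radi{n,k}d$; the equivalence gives $v_0 \notin \Radi{n+k}d$, so $\tl{n+k} \cdot v_0 = \Cell{n+k}d$ and $\bnk{n,k} \cdot v_0 = P_k \Cell{n+k}d = \Cell{n,k}d$. Otherwise, Proposition \ref{thm:deltazero} forces $d < k$ and $\ell = k+1$, which makes $\tl{n+k}$ non-semisimple and ensures $\Radi{n+k}d$ is non-zero (hence irreducible, since $d$ is then non-critical and $d^+ = 2k-d$ lies in $\Delta_{n+k}$). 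Any non-zero $v_0 \in \Cell{n,k}d \subseteq \Radi{n+k}d$ then satisfies $\tl{n+k} \cdot v_0 = \Radi{n+k}d$; and since $\Cell{n,k}d \subseteq \Radi{n+k}d \subseteq \Cell{n+k}d$, applying $P_k$ sandwiches $P_k \Radi{n+k}d$ between $P_k \Cell{n,k}d = \Cell{n,k}d$ and $P_k \Cell{n+k}d = \Cell{n,k}d$, so $\bnk{n,k} \cdot v_0 = \Cell{n,k}d$ as desired.

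The principal subtlety is the equivalence of radicals, which relies on the precise interplay between the seam and Temperley-Lieb bilinear forms on $P_k$-invariant elements. Once this equivalence is established, the remainder is a careful bookkeeping of the submodule lattice of $\Cell{n+k}d$ transferred through the projector $P_k$.
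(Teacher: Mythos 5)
Your proof is correct, but it takes a genuinely different route from the paper's. The paper works entirely inside $\bnk{n,k}$: when $\langle\,\cdot\, ,\,\cdot\,\rangle_{n,k}^d$ is not identically zero it picks $y,z$ with $\langle y,z\rangle_{n,k}^d=1$ and writes any $x$ as $(xy^*)z$; when the form vanishes identically it builds an explicit generator $z$ (nested arcs) and, for each basis diagram $v$, an explicit diagram $a_v$ with $a_vz=v$ --- a multi-step diagrammatic construction that occupies most of the proof. You instead transport the problem to $\tl{n+k}$ via $\Cell{n,k}{d}=\res_{P_k}\Cell{n+k}{d}$ and $\bnk{n,k}\cdot v_0=P_k(\tl{n+k}\cdot v_0)$, and exploit the known submodule lattice of $\Cell{n+k}{d}$: by proposition \ref{prop:radicalTLn} the radical $\Radi{n+k}{d}$ is the unique maximal submodule, and by theorem \ref{thm:tln} it is irreducible when non-zero. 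Your ``equivalence of radicals'' is exactly lemma \ref{lem:RadeViseRadV}, which the paper proves later by the same invariance computation; its proof does not use cyclicity, so there is no circularity in invoking it here. Your argument is shorter, more conceptual, and is the one that generalizes most readily to other truncations $P\alg{A}P$ (the theme of section \ref{sec:conclusion}) whenever the structure of the ambient cell modules is known; what it gives up is the explicit generator and the observation, made after the paper's proof, that each $a_v$ may be taken to be a single diagram rather than a linear combination. Two points are worth making explicit in your write-up: (i) proposition \ref{prop:radicalTLn} and the unique-maximal-submodule property require $d\in\Delta_{n+k}^0$, which holds here because the constraints \eqref{eq:restPara} force $\beta\neq 0$ (if $\beta=0$ then $\ell=2\leq k$), so the Temperley--Lieb form is never identically zero; and (ii) in your second case the membership $d^+=2k-d\in\Delta_{n+k}$ amounts to $n+d\geq k$, i.e.\ to $d\in\Delta_{n,k}$, which is precisely why $\Radi{n+k}{d}\simeq\Irre{n+k}{d^+}$ is non-zero and irreducible there.
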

\begin{proof} 
The proof is split according to whether the bilinear form $\langle\,\cdot\, ,\,\cdot\,\rangle_{n,k}^d$ is identically zero or not. Suppose first that it is not zero. Then, for any non-zero element $z$ in $\Cell{n,k}d\setminus\Radi{n,k}d$, there exists a $y\in\Cell{n,k}d$ such that $\langle y,z\rangle_{n,k}^d\neq 0$. In fact, since $\bnk{n,k}$ is considered as an algebra over $\mathbb C$, an element $y$ can be chosen such that $\langle y,z\rangle_{n,k}^d=1$. Let $x\in\Cell{n,k}d$ be any other element. Then
$$x=x\langle y,z\rangle_{n,k}^d=x(y^*z)=(xy^*)z\in \bnk{n,k}z.$$
The module $\Cell{n,k}d$ is thus cyclic and any non-zero element in $\Cell{n,k}d\setminus\Radi{n,k}d$, a generator.

Now suppose that $\langle\,\cdot\, ,\,\cdot\,\rangle_{n,k}^d$ is identically zero. The preceding proposition puts the following four constraints on the the three integers $n,k$ and $d$: $0\leq d\leq n+k$; $k\leq n+d$; $d<k$, and $\qnum{k+1}=0$. The proof consists here on showing that the element $z\in\base_{n,k}^d$ constructed as follows is a generator of $\Cell{n,k}d$. Since $z$ has $d$ defects, it must have $m=(n+k-d)/2$ arcs. These $m$ arcs are drawn as nested arcs joining, if $m\leq k$, the first $m$ boundary and last $m$ bulk sites and, if $m>k$, the bottom $2m$ of the $n+k$ points of $z$. Defects take over the remaining points of $z$. Here are two examples, one for each case:
$$
\begin{tikzpicture}[baseline={(current bounding box.center)},scale=0.35]
\draw[line width=0.3mm] (0,2.5) -- (0,11.5);
\draw[line width=0.3mm] (2,2.5) -- (2,11.5);
\draw[line width=0.3mm] (0,10) -- (2,10);
\draw[line width=0.3mm] (0,11) -- (2,11);
\draw[line width=0.3mm] (0,3) -- (2,3);
\draw[line width=0.3mm] (0,6) .. controls(1,6)and(1,7) .. (0,7);
\draw[line width=0.3mm] (0,5) .. controls(1.3,5)and(1.3,8) .. (0,8);
\draw[line width=0.3mm] (0,4) .. controls(1.6,4)and(1.6,9) .. (0,9);
\fill[fill= lightgray] (0.35,2.5) -- (0.35,6.5) -- (0.55,6.5) -- (0.55,2.5) -- cycle;
\draw[line width=0.3mm] (0.35,2.5) -- (0.35,6.5) -- (0.55,6.5) -- (0.55,2.5) -- cycle;
\node at (1,1.5) {$z$ in $\base_{5,4}^3$ ($m\leq k$)};
\end{tikzpicture},\qquad\qquad
\begin{tikzpicture}[baseline={(current bounding box.center)},scale=0.35]
\draw[line width=0.3mm] (0,3.5) -- (0,10.5);
\draw[line width=0.3mm] (2,3.5) -- (2,10.5);
\draw[line width=0.3mm] (0,10) -- (2,10);
\draw[line width=0.3mm] (0,6) .. controls(1,6)and(1,7) .. (0,7);
\draw[line width=0.3mm] (0,5) .. controls(1.3,5)and(1.3,8) .. (0,8);
\draw[line width=0.3mm] (0,4) .. controls(1.6,4)and(1.6,9) .. (0,9);
\fill[fill= lightgray] (0.35,3.5) -- (0.35,5.5) -- (0.55,5.5) -- (0.55,3.5) -- cycle;
\draw[line width=0.3mm] (0.35,3.5) -- (0.35,5.5) -- (0.55,5.5) -- (0.55,3.5) -- cycle;
\node at (1,2) {$z$ in $\base_{5,2}^1$  ($m>k$)};
\end{tikzpicture} .
$$

The rest of the proof constructs, for a given $v\in\base_{n,k}^d$, an element $a_v\in\bnk{n,k}$ such that $a_vz=v$. This will establish that $z$ is a generator and thus that $\Cell{n,k}d$ is cyclic. Here are the few steps of this construction. They will be exemplified for $z$ and the following $v$ in $\base_{13,5}^4$:  
$$z\ =\ 
\begin{tikzpicture}[baseline={(current bounding box.center)},scale=0.25]
\draw[line width=0.3mm] (0,-0.5) -- (0,17.5);
\draw[line width=0.3mm] (0,16) -- (2,16);
\draw[line width=0.3mm] (0,15) -- (2,15);
\draw[line width=0.3mm] (0,14) -- (2,14);
\draw[line width=0.3mm] (0,17) -- (2,17);
\draw[line width=0.3mm] (0,6) .. controls(1,6)and(1,7) .. (0,7);
\draw[line width=0.3mm] (0,5) .. controls(1.3,5)and(1.3,8) .. (0,8);
\draw[line width=0.3mm] (0,4) .. controls(1.6,4)and(1.6,9) .. (0,9);
\draw[line width=0.3mm] (0,3) .. controls(1.9,3)and(1.9,10) .. (0,10);
\draw[line width=0.3mm] (0,2) .. controls(2.2,2)and(2.2,11) .. (0,11);
\draw[line width=0.3mm] (0,1) .. controls(2.5,1)and(2.5,12) .. (0,12);
\draw[line width=0.3mm] (0,0) .. controls(2.8,1)and(2.8,12) .. (0,13);
\fill[fill= lightgray] (0.35,-0.5) -- (0.35,4.5) -- (0.55,4.5) -- (0.55,-0.5) -- cycle;
\draw[line width=0.3mm] (0.35,-0.5) -- (0.35,4.5) -- (0.55,4.5) -- (0.55,-0.5) -- cycle;
\end{tikzpicture},
\qquad\qquad
v\ =\ 
\begin{tikzpicture}[baseline={(current bounding box.center)},scale=0.25]
\draw[line width=0.3mm] (4,-0.5) -- (4,17.5);
\draw[line width=0.3mm] (4,17) -- (6,17);
\draw[line width=0.3mm] (4,16) .. controls(5,16)and(5,15) .. (4,15);
\draw[line width=0.3mm] (4,14) -- (6,14);
\draw[line width=0.3mm] (4,12) .. controls(5,12)and(5,11) .. (4,11);
\draw[line width=0.3mm] (4,9) .. controls(5,9)and(5,8) .. (4,8);
\draw[line width=0.3mm] (4,6) .. controls(5,6)and(5,5) .. (4,5);
\draw[line width=0.3mm] (4,4) .. controls(5.5,4)and(5.5,7) .. (4,7);
\draw[line width=0.3mm] (4,3) .. controls(6,3)and(6,10) .. (4,10);
\draw[line width=0.3mm] (4,2) .. controls(6.5,2)and(6.5,13) .. (4,13);
\draw[line width=0.3mm] (4,0) -- (6,0);
\draw[line width=0.3mm] (4,1) -- (6,1);
\fill[fill= lightgray] (4.35,-0.5) -- (4.35,4.5) -- (4.55,4.5) -- (4.55,-0.5) -- cycle;
\draw[line width=0.3mm] (4.35,-0.5) -- (4.35,4.5) -- (4.55,4.5) -- (4.55,-0.5) -- cycle;
\end{tikzpicture}.
$$
The integers $n=13$, $k=5$ and $d=4$ satisfy the inequalities recalled in the previous paragraph.


For any diagram $v$, the diagram $a_v$ will have its $k$ boundary nodes joined by through lines to those of $z$ so that  no closed loop intersecting $P_k$ are created. This can be seen as the {\em zeroth} step of the construction. The {\em first} step puts the links in $a_v$ that will give to $a_vz$ the bulk defects of $v$.  In the {\em second} step, if $v$ has more defects in the boundary than $z$, then arcs on the right side of $a_v$ are added to connect the upper defects of $z$ to the associated boundary defects of $v$. (The result of these two first steps are shown in the leftmost diagram below.) Note that the defects of $v$ are now reproduced by the concatenation of $a_v$ and $z$.
$$ 
\begin{tikzpicture}[baseline={(current bounding box.center)},scale=0.25]
\fill [black] (-3,16) circle (5pt);\fill [black] (-3,15) circle (5pt); 
\fill [black] (-3,13) circle (5pt);\fill [black] (-3,12) circle (5pt);
\fill [black] (-3,11) circle (5pt);\fill [black] (0,11) circle (5pt); 
\fill [black] (-3,10) circle (5pt);\fill [black] (-3,9) circle (5pt); 
\fill [black] (-3,8) circle (5pt);\fill [black] (-3,7) circle (5pt); 
\fill [black] (-3,6) circle (5pt);\fill [black] (-3,5) circle (5pt); 
\fill [black] (0,10) circle (5pt);\fill [black] (0,9) circle (5pt); 
\fill [black] (0,8) circle (5pt);\fill [black] (0,7) circle (5pt); 
\fill [black] (0,6) circle (5pt);\fill [black] (0,5) circle (5pt); 
\draw[line width=0.3mm] (-3,17) -- (0,17);
\draw[line width=0.3mm] (-3,-0.5) -- (-3,17.5);
\draw[line width=0.3mm] (-3,14) .. controls(-1.5,14)and(-1.5,16) .. (0,16);
\draw[line width=0.3mm] (0,14) .. controls(-1,14)and(-1,13) .. (0,13);
\draw[line width=0.3mm] (0,12) .. controls(-1.5,12)and(-1.5,15) .. (0,15);
\draw[line width=0.3mm] (-3,0) -- (0,0);
\draw[line width=0.3mm] (-3,1) -- (0,1);
\draw[line width=0.3mm] (-3,2) -- (0,2);
\draw[line width=0.3mm] (-3,3) -- (0,3);
\draw[line width=0.3mm] (-3,4) -- (0,4);
\draw[line width=0.3mm] (0,-0.5) -- (0,17.5);
\draw[line width=0.3mm] (0,17) -- (2,17);
\draw[line width=0.3mm] (0,16) -- (2,16);
\draw[line width=0.3mm] (0,15) -- (2,15);
\draw[line width=0.3mm] (0,14) -- (2,14);
\draw[line width=0.3mm] (0,6) .. controls(1,6)and(1,7) .. (0,7);
\draw[line width=0.3mm] (0,5) .. controls(1.3,5)and(1.3,8) .. (0,8);
\draw[line width=0.3mm] (0,4) .. controls(1.6,4)and(1.6,9) .. (0,9);
\draw[line width=0.3mm] (0,3) .. controls(1.9,3)and(1.9,10) .. (0,10);
\draw[line width=0.3mm] (0,2) .. controls(2.2,2)and(2.2,11) .. (0,11);
\draw[line width=0.3mm] (0,1) .. controls(2.5,1)and(2.5,12) .. (0,12);
\draw[line width=0.3mm] (0,0) .. controls(2.8,1)and(2.8,12) .. (0,13);
\fill[fill= lightgray] (-2.65,-0.5) -- (-2.65,4.5) -- (-2.45,4.5) -- (-2.45,-0.5) -- cycle;
\draw[line width=0.3mm] (-2.65,-0.5) -- (-2.65,4.5) -- (-2.45,4.5) -- (-2.45,-0.5) -- cycle;
\fill[fill= lightgray] (-0.65,-0.5) -- (-0.65,4.5) -- (-0.45,4.5) -- (-0.45,-0.5) -- cycle;
\draw[line width=0.3mm] (-0.65,-0.5) -- (-0.65,4.5) -- (-0.45,4.5) -- (-0.45,-0.5) -- cycle;
\fill[fill= lightgray] (0.35,-0.5) -- (0.35,4.5) -- (0.55,4.5) -- (0.55,-0.5) -- cycle;
\draw[line width=0.3mm] (0.35,-0.5) -- (0.35,4.5) -- (0.55,4.5) -- (0.55,-0.5) -- cycle;
\node at (-5.5,7) {$a_vz\ =\ $};
\node at (-1,-2) {$a_vz$ after};
\node at (-1,-3.5) {steps $0, 1$ \& $2$};
\end{tikzpicture},
\qquad\quad
\begin{tikzpicture}[baseline={(current bounding box.center)},scale=0.25]
\fill [black] (-3,12) circle(5pt); \fill [black] (-3,11) circle (5pt); 
\fill [black] (-3,10) circle (5pt);\fill [black] (-3,9) circle (5pt); 
\fill [black] (-3,8) circle (5pt);\fill [black] (-3,7) circle (5pt); 
\fill [black] (-3,6) circle (5pt);\fill [black] (-3,5) circle (5pt); 
\fill [black] (0,11) circle (5pt);\fill [black] (0,5) circle (5pt); 
\fill [black] (0,10) circle (5pt);\fill [black] (0,9) circle (5pt); 
\fill [black] (0,8) circle (5pt);\fill [black] (0,7) circle (5pt); 
\fill [black] (0,6) circle (5pt);\fill [black] (0,5) circle (5pt); 
\draw[line width=0.3mm] (-3,17) -- (0,17);
\draw[line width=0.3mm] (-3,-0.5) -- (-3,17.5);
\draw[line width=0.3mm] (-3,15) .. controls(-2,15)and(-2,16) .. (-3,16);
\draw[line width=0.3mm] (-3,14) .. controls(-1.5,14)and(-1.5,16) .. (0,16);
\draw[line width=0.3mm] (0,14) .. controls(-1,14)and(-1,13) .. (0,13);
\draw[line width=0.3mm] (0,12) .. controls(-1.5,12)and(-1.5,15) .. (0,15);
\draw[line width=0.4mm] (-3+0.3,13+0.3) -- (-3-0.3,13-0.3);
\draw[line width=0.4mm] (-3-0.3,13+0.3) -- (-3+0.3,13-0.3);
\draw[line width=0.3mm] (-3,0) -- (0,0);
\draw[line width=0.3mm] (-3,1) -- (0,1);
\draw[line width=0.3mm] (-3,2) -- (0,2);
\draw[line width=0.3mm] (-3,3) -- (0,3);
\draw[line width=0.3mm] (-3,4) -- (0,4);
\draw[line width=0.3mm] (0,-0.5) -- (0,17.5);
\draw[line width=0.3mm] (0,17) -- (2,17);
\draw[line width=0.3mm] (0,16) -- (2,16);
\draw[line width=0.3mm] (0,15) -- (2,15);
\draw[line width=0.3mm] (0,14) -- (2,14);
\draw[line width=0.3mm] (0,6) .. controls(1,6)and(1,7) .. (0,7);
\draw[line width=0.3mm] (0,5) .. controls(1.3,5)and(1.3,8) .. (0,8);
\draw[line width=0.3mm] (0,4) .. controls(1.6,4)and(1.6,9) .. (0,9);
\draw[line width=0.3mm] (0,3) .. controls(1.9,3)and(1.9,10) .. (0,10);
\draw[line width=0.3mm] (0,2) .. controls(2.2,2)and(2.2,11) .. (0,11);
\draw[line width=0.3mm] (0,1) .. controls(2.5,1)and(2.5,12) .. (0,12);
\draw[line width=0.3mm] (0,0) .. controls(2.8,1)and(2.8,12) .. (0,13);
\fill[fill= lightgray] (-2.65,-0.5) -- (-2.65,4.5) -- (-2.45,4.5) -- (-2.45,-0.5) -- cycle;
\draw[line width=0.3mm] (-2.65,-0.5) -- (-2.65,4.5) -- (-2.45,4.5) -- (-2.45,-0.5) -- cycle;
\fill[fill= lightgray] (-0.65,-0.5) -- (-0.65,4.5) -- (-0.45,4.5) -- (-0.45,-0.5) -- cycle;
\draw[line width=0.3mm] (-0.65,-0.5) -- (-0.65,4.5) -- (-0.45,4.5) -- (-0.45,-0.5) -- cycle;
\fill[fill= lightgray] (0.35,-0.5) -- (0.35,4.5) -- (0.55,4.5) -- (0.55,-0.5) -- cycle;
\draw[line width=0.3mm] (0.35,-0.5) -- (0.35,4.5) -- (0.55,4.5) -- (0.55,-0.5) -- cycle;
\node at (-5.5,7) {$a_vz\ =\ $};
\node at (-1,-2) {$a_vz$ after};
\node at (-1,-3.5) {step $3$};
\end{tikzpicture},
\qquad\quad
\begin{tikzpicture}[baseline={(current bounding box.center)},scale=0.25]
 \fill [black] (-0,9) circle (5pt);
 \fill [black] (-0,8) circle (5pt);
 \fill [black] (-0,7) circle(5pt);
 \fill [black] (-0,6) circle (5pt); 
 \fill [black] (-0,5) circle (5pt); 
\fill [black] (-3,7) circle(5pt); 
\draw[line width=0.3mm] (-3,17) -- (0,17);
\draw[line width=0.3mm] (-3,-0.5) -- (-3,17.5);
\draw[line width=0.3mm] (-3,15) .. controls(-2,15)and(-2,16) .. (-3,16);
\draw[line width=0.3mm] (-3,14) .. controls(-1.5,14)and(-1.5,16) .. (0,16);
\draw[line width=0.3mm] (-3,13) .. controls(-1.5,13)and(-1.5,11) .. (0,11);
\draw[line width=0.3mm] (0,14) .. controls(-1,14)and(-1,13) .. (0,13);
\draw[line width=0.3mm] (0,12) .. controls(-1.5,12)and(-1.5,15) .. (0,15);
\draw[line width=0.3mm] (-3,10) -- (0,10);
\draw[line width=0.3mm] (-3,11) .. controls(-2,11)and(-2,12) .. (-3,12);
\draw[line width=0.3mm] (-3,8) .. controls(-2,8)and(-2,9) .. (-3,9);
\draw[line width=0.3mm] (-3,6) .. controls(-2,6)and(-2,5) .. (-3,5);
\draw[line width=0.3mm] (-3,0) -- (0,0);
\draw[line width=0.3mm] (-3,1) -- (0,1);
\draw[line width=0.3mm] (-3,2) -- (0,2);
\draw[line width=0.3mm] (-3,3) -- (0,3);
\draw[line width=0.3mm] (-3,4) -- (0,4);
\draw[line width=0.3mm] (0,-0.5) -- (0,17.5);
\draw[line width=0.3mm] (0,17) -- (2,17);
\draw[line width=0.3mm] (0,16) -- (2,16);
\draw[line width=0.3mm] (0,15) -- (2,15);
\draw[line width=0.3mm] (0,14) -- (2,14);
\draw[line width=0.3mm] (0,6) .. controls(1,6)and(1,7) .. (0,7);
\draw[line width=0.3mm] (0,5) .. controls(1.3,5)and(1.3,8) .. (0,8);
\draw[line width=0.3mm] (0,4) .. controls(1.6,4)and(1.6,9) .. (0,9);
\draw[line width=0.3mm] (0,3) .. controls(1.9,3)and(1.9,10) .. (0,10);
\draw[line width=0.3mm] (0,2) .. controls(2.2,2)and(2.2,11) .. (0,11);
\draw[line width=0.3mm] (0,1) .. controls(2.5,1)and(2.5,12) .. (0,12);
\draw[line width=0.3mm] (0,0) .. controls(2.8,1)and(2.8,12) .. (0,13);
\fill[fill= lightgray] (-2.65,-0.5) -- (-2.65,4.5) -- (-2.45,4.5) -- (-2.45,-0.5) -- cycle;
\draw[line width=0.3mm] (-2.65,-0.5) -- (-2.65,4.5) -- (-2.45,4.5) -- (-2.45,-0.5) -- cycle;
\fill[fill= lightgray] (-0.65,-0.5) -- (-0.65,4.5) -- (-0.45,4.5) -- (-0.45,-0.5) -- cycle;
\draw[line width=0.3mm] (-0.65,-0.5) -- (-0.65,4.5) -- (-0.45,4.5) -- (-0.45,-0.5) -- cycle;
\fill[fill= lightgray] (0.35,-0.5) -- (0.35,4.5) -- (0.55,4.5) -- (0.55,-0.5) -- cycle;
\draw[line width=0.3mm] (0.35,-0.5) -- (0.35,4.5) -- (0.55,4.5) -- (0.55,-0.5) -- cycle;
\node at (-5.5,7) {$a_vz\ =\ $};
\node at (-1,-2) {$a_vz$ after};
\node at (-1,-3.5) {step $4$};
\end{tikzpicture},
\qquad\quad
\begin{tikzpicture}[baseline={(current bounding box.center)},scale=0.25]
\draw[line width=0.3mm] (-3,-0.5) -- (-3,17.5);
\draw[line width=0.3mm] (-3,17) -- (0,17);
\draw[line width=0.3mm] (-3,15) .. controls(-2,15)and(-2,16) .. (-3,16);
\draw[line width=0.3mm] (-3,14) .. controls(-1.5,14)and(-1.5,16) .. (0,16);
\draw[line width=0.3mm] (-3,13) .. controls(-1.5,13)and(-1.5,11) .. (0,11);
\draw[line width=0.3mm] (0,14) .. controls(-1,14)and(-1,13) .. (0,13);
\draw[line width=0.3mm] (0,12) .. controls(-1.5,12)and(-1.5,15) .. (0,15);
\draw[line width=0.3mm] (-3,12) .. controls(-2,12)and(-2,11) .. (-3,11);
\draw[line width=0.3mm] (-3,10) -- (0,10);
\draw[line width=0.3mm] (0,9) .. controls(-1,9)and(-1,8) .. (0,8);
\draw[line width=0.3mm] (-3,7) -- (0,7);
\draw[line width=0.3mm] (0,6) .. controls(-1,6)and(-1,5) .. (0,5);
\draw[line width=0.3mm] (-3,8) .. controls(-2,8)and(-2,9) .. (-3,9);
\draw[line width=0.3mm] (-3,6) .. controls(-2,6)and(-2,5) .. (-3,5);
\draw[line width=0.3mm] (-3,0) -- (0,0);
\draw[line width=0.3mm] (-3,1) -- (0,1);
\draw[line width=0.3mm] (-3,2) -- (0,2);
\draw[line width=0.3mm] (-3,3) -- (0,3);
\draw[line width=0.3mm] (-3,4) -- (0,4);
\draw[line width=0.3mm] (0,-0.5) -- (0,17.5);
\draw[line width=0.3mm] (0,17) -- (2,17);
\draw[line width=0.3mm] (0,16) -- (2,16);
\draw[line width=0.3mm] (0,15) -- (2,15);
\draw[line width=0.3mm] (0,14) -- (2,14);
\draw[line width=0.3mm] (0,6) .. controls(1,6)and(1,7) .. (0,7);
\draw[line width=0.3mm] (0,5) .. controls(1.3,5)and(1.3,8) .. (0,8);
\draw[line width=0.3mm] (0,4) .. controls(1.6,4)and(1.6,9) .. (0,9);
\draw[line width=0.3mm] (0,3) .. controls(1.9,3)and(1.9,10) .. (0,10);
\draw[line width=0.3mm] (0,2) .. controls(2.2,2)and(2.2,11) .. (0,11);
\draw[line width=0.3mm] (0,1) .. controls(2.5,1)and(2.5,12) .. (0,12);
\draw[line width=0.3mm] (0,0) .. controls(2.8,1)and(2.8,12) .. (0,13);
\fill[fill= lightgray] (-2.65,-0.5) -- (-2.65,4.5) -- (-2.45,4.5) -- (-2.45,-0.5) -- cycle;
\draw[line width=0.3mm] (-2.65,-0.5) -- (-2.65,4.5) -- (-2.45,4.5) -- (-2.45,-0.5) -- cycle;
\fill[fill= lightgray] (-0.65,-0.5) -- (-0.65,4.5) -- (-0.45,4.5) -- (-0.45,-0.5) -- cycle;
\draw[line width=0.3mm] (-0.65,-0.5) -- (-0.65,4.5) -- (-0.45,4.5) -- (-0.45,-0.5) -- cycle;
\fill[fill= lightgray] (0.35,-0.5) -- (0.35,4.5) -- (0.55,4.5) -- (0.55,-0.5) -- cycle;
\draw[line width=0.3mm] (0.35,-0.5) -- (0.35,4.5) -- (0.55,4.5) -- (0.55,-0.5) -- cycle;
\node at (3,7) {$=$};
\draw[line width=0.3mm] (4,-0.5) -- (4,17.5);
\draw[line width=0.3mm] (4,17) -- (6,17);
\draw[line width=0.3mm] (4,16) .. controls(5,16)and(5,15) .. (4,15);
\draw[line width=0.3mm] (4,14) -- (6,14);
\draw[line width=0.3mm] (4,11) .. controls(5,11)and(5,12) .. (4,12);
\draw[line width=0.3mm] (4,9) .. controls(5,9)and(5,8) .. (4,8);
\draw[line width=0.3mm] (4,6) .. controls(5,6)and(5,5) .. (4,5);
\draw[line width=0.3mm] (4,4) .. controls(5.5,4)and(5.5,7) .. (4,7);
\draw[line width=0.3mm] (4,3) .. controls(6,3)and(6,10) .. (4,10);
\draw[line width=0.3mm] (4,2) .. controls(6.5,2)and(6.5,13) .. (4,13);
\draw[line width=0.3mm] (4,0) -- (6,0);
\draw[line width=0.3mm] (4,1) -- (6,1);
\fill[fill= lightgray] (4.35,-0.5) -- (4.35,4.5) -- (4.55,4.5) -- (4.55,-0.5) -- cycle;
\draw[line width=0.3mm] (4.35,-0.5) -- (4.35,4.5) -- (4.55,4.5) -- (4.55,-0.5) -- cycle;
\node at (-5.25,7) {$a_vz\ =\ $};
\node at (7.25,7) {$=v$};
\node at (-1,-2) {$a_vz$ after};
\node at (-1,-3.5) {last step};
\end{tikzpicture}.
$$

For the next step, find the highest point  $\pointx$  reached by an arc in $v$ that goes through the boundary. (We have marked the point in the second diagram above by such a $\pointx$.) All arcs in $v$ above this point $\pointx$ need to be in $a_vz$ and the {\em third} step simply draws them on the left side of $a_v$. (The result is shown in the second diagram above.) The feature of $v$ that remain to be reproduced in $a_vz$ are the arcs, either joining two bulk nodes, or one bulk and one boundary nodes, that are not above $\pointx$.  A direct check shows that the numbers of free points on either sides of $a_v$ have the same parity. (In the example, they are respectively 9 and 7.) These conditions insure that the {\em fourth} step can proceed. The arcs of $v$ with both extremities in the bulk are reproduced in the left part of $a_v$. These will not be modified upon concatenation. Then all but one of the remaining sites are joined to the topmost boundary arcs of $z$ still not connected to $a_v$. As they must not intersect, there is only one way to do so. The {\em fifth} and last step is to draw a curve that starts at the only remaining site on the left of $a_v$ and visits all the remaining points on its right side before reaching its final destination, the highest node amongst the remaining ones of $z$. 
This is always possible because, when there is a single boundary arc remaining to close, the number of free sites on the right side of $a_v$ is odd and thus a ``snake'' can be drawn to visit them all. (In the example, this number is $5$.) There might be more than one such snaking curve, but there is always at least one because of the nestedness of the remaining arcs in $z$. The resulting $a_v$ of the example is seen in the rightmost diagram. It is now clear that all features of $v$ have been reproduced. The concatenation $a_vz$ has closed no loops and the equality $a_vz=v$ is strict, that is, no factor $\beta^i\qnum{k+1}/\qnum{k-j+1}$ (that might have been vanishing) has appeared. Thus $z$ is a generator of $\Cell{n,k}{d}$.
\end{proof}

It will be proved later on that, when $d\not\in\Delta_{n,k}^0$, the cellular module $\Cell{n,k}d$ is irreducible. Therefore any $z\in\base_{n,k}^{d}$, or any non-zero element of the module for that matter, is a generator. The advantage of the one chosen in the proof is that, for any diagram $v$, the corresponding $a_v$ can be chosen to be a diagram of $\bnk{n,k}$, and not a linear combination of diagrams.


%
%

\section{The representation theory of $\bnk{n,k}(\beta=q+q^{-1})$ at $q$ a root of unity}\label{sec:qARootOfUnity}

In this section, $q$ will be a root of unity and $\ell$ the smallest positive integers such that $q^{2\ell}=1$. Throughout the parameters $n,k$ and $q$ are constrained by \eqref{eq:restPara}.

%
%
\subsection{Dimensions of radicals and irreducibles}\label{sub:dimensions}
Lemma \ref{prop:recursivegram} leads naturally to a recursive formula for the dimensions of the radical $\Radi{n,k}d$ of the cellular modules $\Cell{n,k}d$, and thus of the irreducible quotients $\Irre{n,k}d=\Cell{n,k}d/\Radi{n,k}d$. Of course, lemma \ref{prop:recursivegram} may be used as long as the constraint $\qnum{d+1}\neq 0$ is satisfied. The case $\qnum{d+1}=0$ must be dealt with separately and, for this goal, formula \eqref{eq:Gramdeterminantbnk} is needed. 
\begin{proposition}\label{prop:criticalissimple}
 When $d$ is critical, that is, when $\qnum{d+1}=0$, the cellular module $\Cell{n,k}{d}$ is irreducible. 
\end{proposition}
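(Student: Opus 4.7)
The plan is to deduce irreducibility of $\Cell{n,k}d$ from that of $\Cell{n+k}d$ via the ``corner'' relation $\bnk{n,k} \simeq P_k\tl{n+k}P_k$, using the observation that $\Cell{n,k}d$ naturally identifies with $P_k\Cell{n+k}d$ as a left $\bnk{n,k}$-module. This identification is essentially built into the definitions: a basis of $\Cell{n,k}d$ is the set $\base_{n,k}^d$ of non-zero elements of the form $P_k w$ with $w$ a monic $(n+k,d)$-diagram (hence a basis element of $\Cell{n+k}d$), and for $b = P_k a P_k\in \bnk{n,k}$ one has $b\cdot P_k w = P_k(a\cdot P_k w)$, so that the $\bnk{n,k}$-action agrees with the restriction of the $\tl{n+k}$-action to $P_k\Cell{n+k}d$.

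Next I would verify that for critical $d$, Theorem~\ref{thm:tln}(b) guarantees $\Cell{n+k}d = \Irre{n+k}d$ is simple as a $\tl{n+k}$-module. Critical $d$ always lies in $\Delta_{n+k}^0$, since $d=0$ cannot be critical (this would require $\ell\mid 1$, contradicting $\ell\geq 2$ in \eqref{eq:restPara}), so the only case in which $\Delta_{n+k}^0 \subsetneq \Delta_{n+k}$ does not concern us. Moreover, the hypothesis $d\in\Delta_{n,k}$ together with the characterization \eqref{eq:deltaDeBnk} ensures that some monic $(n+k,d)$-diagram without a link between boundary points exists, so that $P_k\Cell{n+k}d \neq 0$.

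The argument then concludes by invoking the standard fact about corner algebras: if $A$ is a unital algebra, $e\in A$ an idempotent, and $S$ a simple left $A$-module with $eS\neq 0$, then $eS$ is a simple left $eAe$-module. The short proof is that any non-zero $v\in eS$ satisfies $v=ev$; simplicity of $S$ gives $Av=S$; hence $eAe\cdot v = eA\cdot ev = eAv = eS$, so $v$ generates $eS$ over $eAe$. Applied with $A=\tl{n+k}$, $e=P_k$, and $S=\Cell{n+k}d$, this yields that $\Cell{n,k}d\simeq P_k\Cell{n+k}d$ is simple over $\bnk{n,k}$, as claimed. The main (indeed essentially only) obstacle is to make the identification $\Cell{n,k}d\simeq P_k\Cell{n+k}d$ watertight, which amounts to carefully comparing bases and left actions; once this is in place, the rest is a formal consequence of the known Temperley--Lieb theory recalled in Theorem~\ref{thm:tln}. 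An alternative, more computational route would be to apply Proposition~\ref{prop:kernelbiliisJacobsonRadical} and verify directly that the Gram determinant formula \eqref{eq:Gramdeterminantbnk} evaluates to a non-zero limit at critical $d$: the zeros of $\qnum{d+j+1}$ and $\qnum{j}$ in the second product occur precisely for $j\equiv 0\modY\ell$ and are both simple, so the ratios remain finite and non-zero, while the first product is manifestly non-vanishing thanks to $\ell>k$.
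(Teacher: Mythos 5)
Your proposal is correct, but its main argument follows a genuinely different route from the paper's. The paper proves the proposition by a direct computation: it takes the Gram determinant formula \eqref{eq:Gramdeterminantbnk}, observes that the first product cannot vanish because $\ell>k$, and shows that in the second product every vanishing numerator $\qnum{d+j+1}$ (which requires $j\equiv 0\modY\ell$) is cancelled by the simultaneously vanishing denominator $\qnum{j}$, the ratio being $\qnumber{m+m'}{q^{\ell}}/\qnumber{m'}{q^{\ell}}\neq 0$; the non-vanishing determinant forces $\Radi{n,k}d=0$ and irreducibility follows from proposition \ref{prop:kernelbiliisJacobsonRadical}. This is exactly the ``alternative, more computational route'' you sketch in your last sentences. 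Your primary argument instead transports simplicity through the idempotent: identify $\Cell{n,k}d$ with $P_k\Cell{n+k}d$, invoke theorem \ref{thm:tln}(b) for the simplicity of $\Cell{n+k}d$ at critical $d$, check $P_k\Cell{n+k}d\neq 0$, and apply the standard corner-algebra fact that $eS$ is simple over $eAe$ whenever $S$ is simple over $A$ and $eS\neq 0$. This is sound — indeed it anticipates the restriction-functor machinery ($\res_{P_k}$, proposition \ref{prop:restriction} and lemma \ref{lem:RadeViseRadV}) that the paper only deploys later for the non-critical case — and it is shorter and free of $q$-number manipulations. What the paper's computation buys in exchange is slightly more information with no external input beyond proposition \ref{prop:MDRRGramDetFormula}: it directly establishes that the Gram form at critical $d$ is non-degenerate, hence that $d\in\Delta_{n,k}^0$ and $\dim\Radi{n,k}d=0$, facts fed into proposition \ref{prop:RecursiveRelationRadibnk}; with your route these follow only after the additional observation (via proposition \ref{thm:deltazero}) that the form is not identically zero at critical $d$, so that the radical is a proper, hence zero, submodule.
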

\begin{proof}
 The condition $\qnum{d+1}=0$ implies the existence of an $m\in\NN$ such that $d+1=m\ell$.  The determinant of the Gram matrix \eqref{eq:Gramdeterminantbnk} is given by two products. As $k<\ell$, the numerators $\qnum{j}$ of the first product are never 0 since their range is limited to $j\leq\lfloor k/2 \rfloor < \ell$. The numerators in the second product are of the form $\qnum{d+j+1}$ and, again, an integer $m'\in\mathbb N$ such that $j=m'\ell$ should exist for $\qnum{d+j+1}$ to vanish. However, when such an integer exists, the quotient turns out to be non-zero:
 \begin{align*}
  {\qnum{d+1+j}\over \qnum{j}} = {\qnum{(m+m')\ell}\over \qnum{m'\ell}} = {\qnumber{m+m'}{q^{\ell}}\over\qnumber{m'}{q^{\ell}}}{\qnum{\ell}\over\qnum{\ell}},
 \end{align*}
because for any $r\in\NN$,
\begin{align*}
 \qnum{r\ell} = {q^{r\ell}-q^{-r\ell}\over q-q^{-1}} = \left( {(q^{\ell})^r-(q^{\ell})^{-r}\over q-q^{-1}}\right) \left( {q^{\ell} - q^{-\ell}\over q^{\ell} - q^{-\ell}}\right) = \qnumber{r}{q^{\ell}}\qnum{\ell}. 
\end{align*}
As $q^{\ell} = \pm 1$, the number $\qnumber{m}{q^{\ell}}$ (defined as $\lim_{q^\ell\rightarrow \pm 1}[m]_{q^\ell}$) is non-zero. The determinant is thus non-zero and the radical of $\Cell{n,k}d$ is trivial. The result then follows from proposition \ref{prop:kernelbiliisJacobsonRadical}.
\end{proof}

Using lemma \ref{prop:recursivegram}, recursive formulas are obtained for the dimension of the radical.
\begin{proposition}\label{prop:RecursiveRelationRadibnk}
The dimension of the radical $\Radi{n,k}{d}$ of the cellular module $\Cell{n,k}{d}$, for $d\in\Delta_{n,k}^0$, is given by
 \begin{equation}\label{eq:RecursiveFormulaRadibnk}
  \dim\Radi{n,k}{d} = 
  \begin{cases}
   0, & \text{if }\qnum{d+1}=0;\\
   \dim \Radi{n-1,k}{d-1} + \dim\Cell{n-1,k}{d+1}, & \text{if } \qnum{d+1}\neq 0 \text{ and } \qnum{d+2}=0;\\
   \dim \Radi{n-1,k}{d-1} + \dim\Radi{n-1,k}{d+1}, & \text{if } \qnum{d+1}\neq 0 \text{ and } \qnum{d+2}\neq 0.
  \end{cases}
 \end{equation}
\end{proposition}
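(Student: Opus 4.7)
The plan is to split on the three hypotheses of \eqref{eq:RecursiveFormulaRadibnk}. The first case, $\qnum{d+1}=0$, is immediate from Proposition~\ref{prop:criticalissimple}: $d$ is then critical, so $\Cell{n,k}d$ is irreducible and its radical has dimension $0$. The other two cases differ only through the behaviour of the scalar $\qnum{d+2}/\qnum{d+1}$ appearing in the block-diagonal form of Lemma~\ref{prop:recursivegram}, so they will be handled together and the case-split only invoked at the end.

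For the cases $\qnum{d+1}\neq 0$, I would begin by recalling that $\dim \Radi{n,k}d$ equals the corank of the Gram matrix $\gram_{n,k}^d$: this is the definition of $\Radi{n,k}d$ as the kernel of the bilinear form on $\Cell{n,k}d$. Lemma~\ref{prop:recursivegram} supplies a unitriangular, hence invertible, matrix $\mathcal{U}$ for which $\mathcal{U}^T\gram_{n,k}^d\mathcal{U}$ is block diagonal with blocks $\gram_{n-1,k}^{d-1}$ and $\tfrac{\qnum{d+2}}{\qnum{d+1}}\gram_{n-1,k}^{d+1}$. Because congruence by an invertible matrix preserves rank, the rank of $\gram_{n,k}^d$ is the sum of the ranks of these two blocks.

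The key dimension identity
\begin{equation*}
\dim \Cell{n,k}d = \dim \Cell{n-1,k}{d-1} + \dim \Cell{n-1,k}{d+1}
\end{equation*}
is extracted from the partition $\base_{n,k}^d=\Fam_1\sqcup\Fam_2$ used inside the proof of Lemma~\ref{prop:recursivegram}: erasing the top through line bijects $\Fam_1$ with $\base_{n-1,k}^{d-1}$, while the map $\psi$ bijects $\Fam_2$ with $\base_{n-1,k}^{d+1}$. Combining this additivity with rank $=$ dim $-$ dim radical applied to each of the two blocks gives the third case directly. For the second case, the prefactor $\qnum{d+2}/\qnum{d+1}$ vanishes, so the second block is the zero matrix of size $\dim\Cell{n-1,k}{d+1}$, its rank is $0$, and the full dimension $\dim\Cell{n-1,k}{d+1}$ is transferred to the radical, producing the advertised formula.

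The main, and rather minor, obstacle is bookkeeping at the boundary. Lemma~\ref{prop:recursivegram} requires $d\geq 1$, and the recursion refers to the neighbours $d\pm 1$, which may fail to lie in $\Delta_{n-1,k}$; in such cases I would adopt the standard convention that the corresponding $\Cell{n-1,k}{d\pm 1}$ and $\Radi{n-1,k}{d\pm 1}$ are zero, and verify that the argument above yields the right answer with this convention. The case $d=0$ (necessarily non-critical, since $\qnum{1}=1$) can be handled by the same basis-change procedure with $\Fam_1=\emptyset$, which collapses the recursion to its second summand and causes no genuine difficulty.
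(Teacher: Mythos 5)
Your proposal is correct and follows essentially the same route as the paper: the critical case is dispatched by the irreducibility result (Proposition~\ref{prop:criticalissimple}), and the two non-critical cases are read off from the block-diagonalization of the Gram matrix in Lemma~\ref{prop:recursivegram}, with the corank of the lower block jumping to the full dimension $\dim\Cell{n-1,k}{d+1}$ exactly when the prefactor $\qnum{d+2}/\qnum{d+1}$ vanishes. Your extra bookkeeping about the boundary conventions for $d\pm1\notin\Delta_{n-1,k}$ and the case $d=0$ is sensible but not part of the paper's (terser) argument.
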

\begin{proof}
  The case $\qnum{d+1}=0$ has been dealt with in the previous proposition. The dimension of the radical is the dimension of the kernel of the Gram matrix. Lemma \ref{prop:recursivegram} has put the Gram matrix in block-diagonal form and the dimension of its kernel is the sum of those of the kernels of the two diagonal blocks. If $\qnum{d+2}\neq 0$, it is simply the sum $\dim \Radi{n-1,k}{d-1} + \dim\Radi{n-1,k}{d+1}$. If $\qnum{d+2}= 0$, then the lower block $\qnum{d+2}/\qnum{d+1} \gram_{n-1,k}^{d+1}$ is zero and the dimension is $\dim \Radi{n-1,k}{d-1} + \dim\Cell{n-1,k}{d+1}$.
\end{proof}
Lemma \ref{prop:recursivegram} has shown (somewhat implicitly) that $\dim\Cell{n,k}d=\dim\Cell{n-1,k}{d-1}+\dim\Cell{n-1,k}{d+1}$ since these are the sizes of the diagonal blocks appearing in \eqref{eq:changeofbasisgrambnk}. Because $\Irre{n,k}d=\Cell{n,k}d/\Radi{n,k}d$, and thus $\dim \Irre{n,k}{d} = \dim \Cell{n,k}{d} - \dim \Radi{n,k}{d}$, the previous proposition has an immediate consequence.
\begin{corollary}\label{coro:RecursiveRelationSimplebnk}
The dimension of the irreducible module $\Irre{n,k}{d}$, for $d\in\Delta_{n,k}^0$, is given by
 \begin{equation}\label{eq:RecursiveFormulaSimplebnk}
  \dim \Irre{n,k}{d} = 
  \begin{cases}
   \dim \Cell{n,k}{d}, & \text{if } \qnum{d+1} = 0;\\
   \dim \Irre{n-1,k}{d-1}, & \text{if } \qnum{d+1} \neq 0 \text{ and } \qnum{d+2} =0;\\
   \dim \Irre{n-1,k}{d-1} + \dim\Irre{n-1,k}{d+1}, & \text{if } \qnum{d+1} \neq 0 \text{ and } \qnum{d+2}\neq 0.
  \end{cases}
 \end{equation}
\end{corollary}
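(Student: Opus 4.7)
The plan is to derive the recursion for $\dim\Irre{n,k}{d}$ from the identity $\dim\Irre{n,k}{d}=\dim\Cell{n,k}{d}-\dim\Radi{n,k}{d}$, combining Proposition \ref{prop:RecursiveRelationRadibnk} with a parallel recursion for $\dim\Cell{n,k}{d}$ that is implicit in Lemma \ref{prop:recursivegram}. Since the lemma produces a unitriangular (hence invertible) change of basis $\mathcal{U}$ whose conjugation yields a block-diagonal matrix with blocks $\gram_{n-1,k}^{d-1}$ and $(\qnum{d+2}/\qnum{d+1})\,\gram_{n-1,k}^{d+1}$, matching sizes on both sides of \eqref{eq:changeofbasisgrambnk} gives
\[\dim\Cell{n,k}{d}=\dim\Cell{n-1,k}{d-1}+\dim\Cell{n-1,k}{d+1}\]
whenever $d\geq 1$ and $\qnum{d+1}\neq 0$, with the convention $\dim\Cell{n-1,k}{d\pm 1}=0$ if $d\pm 1\notin\Delta_{n-1,k}$.

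With that in hand I would treat the three cases of \eqref{eq:RecursiveFormulaSimplebnk} in the same order as in \eqref{eq:RecursiveFormulaRadibnk}. If $\qnum{d+1}=0$, Proposition \ref{prop:criticalissimple} gives $\Radi{n,k}{d}=0$, so trivially $\dim\Irre{n,k}{d}=\dim\Cell{n,k}{d}$. If $\qnum{d+1}\neq0$ and $\qnum{d+2}=0$, the second line of \eqref{eq:RecursiveFormulaRadibnk} combined with the cellular recursion above gives
\[\dim\Irre{n,k}{d}=\bigl(\dim\Cell{n-1,k}{d-1}+\dim\Cell{n-1,k}{d+1}\bigr)-\bigl(\dim\Radi{n-1,k}{d-1}+\dim\Cell{n-1,k}{d+1}\bigr),\]
the $\Cell{n-1,k}{d+1}$ terms cancel, and what remains is exactly $\dim\Irre{n-1,k}{d-1}$. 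If both $\qnum{d+1}$ and $\qnum{d+2}$ are non-zero, the third line of \eqref{eq:RecursiveFormulaRadibnk} combined with the cellular recursion yields a separate subtraction on each summand, giving $\dim\Irre{n-1,k}{d-1}+\dim\Irre{n-1,k}{d+1}$ at once.

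The argument is purely bookkeeping on dimensions once the two underlying recursions are in place, so no real obstacle is anticipated. The only minor care needed is to confirm that the formula remains consistent at the boundary of $\Delta_{n,k}$: when $d-1$ or $d+1$ falls outside $\Delta_{n-1,k}$ (in particular when $d=0$ or $d$ is near the maximum $n+k$) the corresponding dimensions are interpreted as zero, which matches the block decomposition of Lemma \ref{prop:recursivegram} since such a block simply does not appear.
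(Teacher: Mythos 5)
Your proposal is correct and follows essentially the same route as the paper: the text preceding the corollary notes precisely that Lemma \ref{prop:recursivegram} implies $\dim\Cell{n,k}{d}=\dim\Cell{n-1,k}{d-1}+\dim\Cell{n-1,k}{d+1}$ and that the result then follows from $\dim\Irre{n,k}{d}=\dim\Cell{n,k}{d}-\dim\Radi{n,k}{d}$ together with Proposition \ref{prop:RecursiveRelationRadibnk}. Your case-by-case bookkeeping, including the critical case via Proposition \ref{prop:criticalissimple}, matches the intended argument.
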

\noindent These results parallel those for the Temperley-Lieb algebras (proposition 5.1 and corollary 5.2 of \cite{RSA}). They will play a similar role in the characterization of non-trivial submodules of cellular modules when $q$ is a root of unity.

%
%
\subsection{Graham's and Lehrer's morphisms}\label{sub:grahamLehrer}

The previous section has shown that, when $q$ is a root of unity, some cellular modules $\Cell{n,k}d$ are reducible. This raises the question of characterizing their structure. In particular, is the radical $\Radi{n,k}d$ itself reducible? In their study of the affine Temperley-Lieb algebras, Graham and Lehrer \cite{GL-Aff} constructed a non-trivial morphisms between cellular $\tl n$-modules at $q$ a root of unity. Because of the relationship between $\bnk{n,k}$ and $\tl{n+k}$, this family of morphisms will answer the question of the structure of the radical $\Radi{n,k}d$.

The morphism of Graham and Lehrer is now recalled. The first definition describes a partial order on the links of a $(t,s)$-diagram.
\begin{definition}
 Let $D$ be a $(t,s)$-diagram and $F(D)$ the set of links of $D$. Two elements $x,y\in F(D)$ are ordered $x\leq y$ if $x$ lies in the convex hull of $y$, namely if $y$, as an arc, contains $x$ or if $y$, as a through line, is below $x$.
\end{definition}
The definition is easier to understand through an example. Let $D$ be the $(5,3)$-diagram:
\begin{equation*}
  D = \begin{tlNdiagram}{5}
    \linkV{L}{5}{2}
    \linkV{L}{4}{3}
    \linkD{1}{4}
    \linkV{R}{2}{3}
    \tlLabel{L}{3}{x}
    \tlLabel{L}{2}{y}
    \tlLabel{L}{1}{z}
    \tlLabel{R}{2}{w}
           \end{tlNdiagram}\quad\longrightarrow \quad
\begin{tikzpicture}[baseline={(current bounding box.center)},scale=0.5]
\draw[line width=0.3mm] (0.5,0) -- (5.4,0); 
\draw[line width=0.3mm] (5.6,0) -- (8.5,0);
\draw[line width=0.3mm] (3,0) arc (180:360:0.5);
\draw[line width=0.3mm] (2,0) .. controls(2,-1)and(5,-1) .. (5,0);
\draw[line width=0.3mm] (1,0) .. controls(1,-1.5)and(6,-1.5) .. (6,0);
\draw[line width=0.3mm] (7,0) arc (180:360:0.5);
\node at (1,0.5) {$z$};
\node at (2,0.5) {$y$};
\node at (3,0.5) {$x$};
\node at (8,0.5) {$w$};
\end{tikzpicture}\ .
\end{equation*}
The diagram on the right was obtained by turning the left side of $D$ clockwise by $90^\circ$ and its right one counterclockwise by the same angle.
Here $F(D)=\{x,y,z,w\}$ and an element is smaller or equal to another if the former is contained in the latter. So, for this $D$, the set $F(D)$ is partially ordered by
\begin{align*}
 x&\leq x, & x&\leq y, & x&\leq z, & y&\leq y, & y&\leq z, & z&\leq z, & w&\leq w.
\end{align*}
The set $F(D)$ endowed with such a partial order is an example of a forest, that is, if $x\leq y$ and $x\leq z$, then $y\leq z$ or $z\leq y$. The following proposition due to Stanley \cite{Stanley72} states a remarkable property of forests. %
\begin{proposition}\label{prop:Stanley}
 Let $P$ be a forest of cardinality $n$; for $y\in P$, denote the number of elements of $P$ which are less than or equal to $y$ by $h_y$. The rational function
 \begin{equation}\label{eq:StanleyFunctionH}
   H_P(q) := {\qnumber{n}{q}!\over \prod_{y\in P} \qnumber{h_y}{q}}
 \end{equation}
is a Laurent polynomial with integer coefficients. 
\end{proposition}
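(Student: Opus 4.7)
\medskip
\noindent\emph{Plan.} My plan is to argue by induction on $n=|P|$, with trivial base case $H_{\emptyset}(q)=1$ (empty product). The induction will rely on the fact that $H_P(q)$ transforms simply under two natural operations on forests: disjoint union, and removal of the unique maximum of a tree.

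First I would establish multiplicativity under disjoint union. If $P=P_1\sqcup P_2$ with $|P_i|=n_i$, then elements of distinct components are incomparable, so each $h_y$ is computed within its own component, and therefore $\prod_{y\in P}[h_y]_q=\prod_{y\in P_1}[h_y]_q\cdot\prod_{y\in P_2}[h_y]_q$. Consequently
\begin{equation*}
H_P(q)=\binom{n_1+n_2}{n_1}_q H_{P_1}(q)\, H_{P_2}(q),
\end{equation*}
where $\binom{n}{k}_q:=[n]_q!/([k]_q!\,[n-k]_q!)$ is the Gaussian binomial coefficient. It is classical that, under the symmetric convention $[m]_q=(q^m-q^{-m})/(q-q^{-1})$ used in the paper, this $q$-binomial is a Laurent polynomial in $q$ with non-negative integer coefficients; one proof is via the Pascal-type recursion $\binom{n}{k}_q = q^{-k}\binom{n-1}{k}_q + q^{n-k}\binom{n-1}{k-1}_q$ together with the obvious base cases.

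Next, if $P$ is a single rooted tree $T$ with (unique) maximum $r$, then every element is $\leq r$, so $h_r=n$. The factor $[n]_q$ in the denominator of $H_T(q)$ cancels the leading factor of $[n]_q!$ in the numerator, and since the hook lengths of all other elements are unaffected by deleting $r$ (one has $r\not\leq y$ for $y\neq r$), this yields $H_T(q)=H_{T\setminus\{r\}}(q)$. The poset $T\setminus\{r\}$ is itself a forest of cardinality $n-1$ (the children of $r$ become the new roots), to which the induction hypothesis will apply.

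Every non-empty forest falls into exactly one of the two previous cases, so combining the base case, the two identities above, and the closure of integer-coefficient Laurent polynomials under multiplication, the induction closes. The only genuinely non-trivial ingredient is the $q$-binomial claim; everything else is bookkeeping about how the hook lengths $h_y$ behave under disjoint union and root removal. I expect the main conceptual point to justify carefully will be that the forest property (i.e.\ that the Hasse diagram is a disjoint union of rooted trees each with a unique maximal element) is exactly what licenses the two-case dichotomy driving the recursion.
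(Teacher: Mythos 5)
The paper gives no proof of this proposition at all: it is quoted from Stanley's 1972 memoir and used as a black box in the construction of Graham's and Lehrer's morphism, so there is nothing in-paper to compare your argument against line by line. Your self-contained induction is, however, correct. The two structural facts driving it check out: for a single tree the unique maximum $r$ has $h_r=n$, and since $r$ is maximal it lies below no $y\neq r$, so deleting it leaves every other $h_y$ untouched and $H_T(q)=H_{T\setminus\{r\}}(q)$; for a disjoint union the hook lengths are computed componentwise, whence $H_{P_1\sqcup P_2}(q)=\binom{n_1+n_2}{n_1}_q\,H_{P_1}(q)H_{P_2}(q)$. The dichotomy is exhaustive because the forest axiom forces every up-set to be a chain terminating in a maximal element, so a forest with a single maximal element is connected, i.e.\ a tree. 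The only genuinely nontrivial input is, as you say, that the balanced $q$-binomial $\qnumber{n}{q}!/\bigl(\qnumber{k}{q}!\,\qnumber{n-k}{q}!\bigr)$ is an integer Laurent polynomial; your Pascal-type recursion $\binom{n}{k}_q=q^{-k}\binom{n-1}{k}_q+q^{n-k}\binom{n-1}{k-1}_q$ follows from the identity $\qnum{n}=q^{n-k}\qnum{k}+q^{-k}\qnum{n-k}$ and settles this (alternatively, the balanced $q$-binomial is $q^{-k(n-k)}$ times the ordinary Gaussian binomial in the variable $q^{2}$). In short, you have supplied a complete elementary proof of a statement the paper merely cites; Stanley's original argument is essentially the same hook-length recursion, phrased in terms of counting $P$-partitions or linear extensions, so your route buys self-containedness at no real cost in length.
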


\noindent Recall that $\qnum{n}$ is the $q$-number $(q^n-q^{-n})/(q-q^{-1})$, and $\qnum{n}!$, the $q$-factorial $\prod_{1\leq j\leq n}\qnum j$.
With these definitions and results, Graham's and Lehrer's morphism can be described.
\begin{theorem}[Graham and Lehrer, Coro.~3.6, \cite{GL-Aff}]\label{prop:MorphiGL}
Let $q\in \CC^{\times}$ be a root of unity, $\ell$ the smallest positive integer such that $q^{2\ell}=1$, and $\tl{n}(\beta)$ be the Temperley-Lieb algebra. If $t,s\in \Lambda_{n}$ are such that $t<s<t+2\ell$ and $s+t\equiv -2\modSB 2\ell$, then there exists a morphism $\theta : \Cell{n}{s} \to \Cell{n}{t}$ given, for $v\in\Cell{n}{s}$, by
    \begin{equation}
      \theta(v) = \sum_{\substack{w : s\leftarrow t \\ \text{monic}}} 
         \sg(w)H_{F(w)}(q) v w,
    \end{equation}
where $\sg(w)$ is a sign\footnote{The sign $\sg(w)$ will not play any role in the following. It is sufficient to know that it can be recovered from the $\CC$-algebras isomorphism between $\tl n(q+q^{-1})$ (used here) and $\tl n(-q-q^{-1})$ (used by Graham and Lehrer) knowing the sign is always $+1$ in the $-q-q^{-1}$ case.}. Moreover $\theta(\Cell n s)=\rad\Cell n t$, with the exception that, if $t=0$ and $\beta=0$, then $\theta(\Cell n s)=\Cell n t$.
\end{theorem}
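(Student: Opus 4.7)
The plan is to split the verification into four parts: (i) $\theta$ is well-defined as a map into $\Cell{n}{t}$, (ii) $\theta$ intertwines the $\tl{n}$-action, (iii) $\theta(\Cell{n}{s})\subset \rad\Cell{n}{t}$, and (iv) equality of dimensions forces $\theta(\Cell{n}{s})=\rad\Cell{n}{t}$. Well-definedness is immediate: if $v$ is a monic $(n,s)$-diagram and $w$ is a monic $(s,t)$-diagram, then every right-point of $v$ sits on a through-line, so no closed loops can arise when concatenating, and $vw$ is a monic $(n,t)$-diagram.

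The main obstacle is (ii): verifying $E_j \theta(v) = \theta(E_j v)$ on each Temperley–Lieb generator. When $E_jv$ stays monic in $\Cell{n}{s}$, the identity is a direct consequence of the associativity of diagram concatenation. The subtle case is when $E_jv$ drops through-lines, forcing $E_jv=0$ in $\Cell{n}{s}$, while each individual product $E_jvw$ may still be a non-zero element of $\Cell{n}{t}$. One must therefore establish that $\sum_w \sg(w)\, H_{F(w)}(q)\, E_jvw = 0$ in $\Cell{n}{t}$. My approach here is to group the monic $(s,t)$-diagrams $w$ into equivalence classes according to how the arc created by $E_j$ on the right of $v$ is absorbed into the cap structure of $w$. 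Within each class the forests $F(w)$ differ only in how a fixed run of nested arcs is rearranged, and Stanley's formula \eqref{eq:StanleyFunctionH} rewrites the ratios $H_{F(w)}(q)/H_{F(w')}(q)$ as quotients of $q$-numbers that assemble into a Gaussian binomial $\bigl[\begin{smallmatrix}s-t\\ r\end{smallmatrix}\bigr]_q$ (for suitable $r$); the hypothesis $s+t\equiv -2 \pmod{2\ell}$ ensures precisely that this binomial vanishes at the root of unity $q$, giving the needed cancellation. This combinatorial identity is what makes the Graham–Lehrer construction delicate; everything else in the proof is essentially formal.

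Once $\theta$ is known to be a $\tl{n}$-homomorphism, its image is a submodule of $\Cell{n}{t}$. For (iii) I would compute $\langle \theta(v), u\rangle_n^t$ directly: expanding via bilinearity gives $\sum_w \sg(w)\, H_{F(w)}(q)\, \langle vw,u\rangle_n^t$, and an application of the invariance of the form combined with a second instance of the same $q$-number identity shows this vanishes, placing the image inside $\rad \Cell{n}{t}$. For the reverse inclusion in (iv), my strategy is to count dimensions: the Gram-determinant recursion and the explicit formula for $\det\mathcal G_n^t$ quoted in the Temperley–Lieb case give $\dim\rad\Cell{n}{t}=\dim\Cell{n}{s}$ whenever $s$ is the unique value in the admissible range $t<s<t+2\ell$ with $s+t\equiv -2 \pmod{2\ell}$; thus it suffices to show $\theta$ is injective on $\Cell{n}{s}$, which follows from the observation that the coefficient of $vw_0$ (for $w_0$ the unique monic $(s,t)$-diagram with all caps on the right) is $\pm 1$, so $\theta$ has a triangular leading term on a suitable basis. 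The exceptional case $t=0,\beta=0$ is handled separately: here $\langle\,\cdot,\cdot\,\rangle_n^0$ vanishes identically, so $\rad\Cell{n}{0}=\Cell{n}{0}$, and the dimension count produces an equality with $\Cell{n}{t}$ rather than a proper submodule.
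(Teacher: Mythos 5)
First, a point of reference: the paper does not prove this statement at all --- it is imported verbatim from Graham and Lehrer \cite{GL-Aff}, Cor.~3.6 --- so there is no internal proof to compare against; I am assessing your argument on its own merits. Parts of it are fine: well-definedness in (i) is correct (every middle point of $vw$ lies on a through line of $v$, so $vw$ is a genuine monic $(n,t)$-diagram), and you have correctly isolated where the difficulty lives, namely the cancellation $\sum_w \sg(w)H_{F(w)}(q)\,E_jvw=0$ when $E_jv=0$ in $\Cell ns$. But that cancellation is the entire content of the theorem, and you only assert it. Moreover, the mechanism you propose is not credible as stated: you claim the sum over a class of $w$'s assembles into a Gaussian binomial $\left[\begin{smallmatrix}s-t\\ r\end{smallmatrix}\right]_q$ which ``the hypothesis $s+t\equiv-2\modY 2\ell$ ensures vanishes.'' A binomial in $s-t$ is insensitive to translating the pair $(s,t)$, whereas the hypothesis pins down $s+t$ (equivalently $\qnum{(s+t)/2+1}=0$, i.e.\ the midpoint of $s$ and $t$ is critical); the quantity that must vanish has to see $s+t$ --- it enters through the $\qnum{(s+t)/2}!$ in the numerator of $H_{F(w)}$ and the heights $h_y$ --- so either the identification is wrong or the essential details are missing. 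Either way the key step is unproven, and the same remark applies to the ``second instance of the same identity'' invoked in (iii).

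Step (iv) contains an outright error. It is not true that $\dim\rad\Cell nt=\dim\Cell ns$, and $\theta$ is not injective in general: by theorem~\ref{thm:tln}, $\rad\Cell nt\simeq\Irre n{s}$ (here $s=t^+$), whose dimension is $\dim\Cell ns-\dim\Radi ns$, and $\Radi ns\neq 0$ whenever $s^+\in\Delta_n$ (e.g.\ $\ell=3$, $n=8$, $t=0$, $s=4$, $s^+=6$); the kernel of $\theta$ is exactly $\Radi ns$. Your triangular-leading-term observation can only show $\theta\neq0$, which is what you actually need. Once $\theta$ is known to be a nonzero module map, the clean finish is structural rather than numerical: upper unitriangularity of the decomposition matrix (proposition~\ref{prop:Disunitriangular}) gives $[\Cell ns:\Irre nt]=0$ for $s>t$, so $\im\theta$, being a quotient of $\Cell ns$, cannot surject onto the head of $\Cell nt$ and hence lies in $\rad\Cell nt$ (this replaces your step (iii) with no extra identity, and also explains the exceptional case $t=0$, $\beta=0$, where $\Irre nt=0$ and the obstruction disappears --- note that there $\rad$ must mean the Jacobson radical, which your reading via the bilinear form conflates with the whole module); then $\rad\Cell nt\simeq\Irre n{t^+}$ is irreducible, so the nonzero submodule $\im\theta$ equals it. But that last isomorphism is part (b) of theorem~\ref{thm:tln}, which in the standard treatments is proved in tandem with the existence of $\theta$ by induction on $n$; the Gram determinant alone does not give you $\dim\rad\Cell nt$, only the order of vanishing of $\det\gram_n^t$, which is an upper bound.
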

\noindent Note that the condition on the integers $s$ and $t$ is precisely that they be immediate neighbors in an orbit under reflection through vertical critical lines (see section \ref{sec:repTLn}). Indeed the midpoint between $s$ and $t$ sits at $(s+t)/2=(-2+2 m\ell)/2=m\ell-1$, for some $m$, and thus on a critical vertical line. In the notation of section \ref{sec:repTLn}, $s^-=t$ or $t^+=s$.

The next step is to construct a similar morphism between $\bnk{n,k}$-modules. The following result, characterizing the {\em restriction functor}, a standard tool in the representation theory of associative algebras, will be the key element. (See \cite{Assem} for a standard treatment of the theory.)
\begin{proposition}\label{prop:restriction}
  Let $\alg{A}$ be an algebra, $e\in\alg{A}$ be an idempotent and $\alg{B} = e \alg{A}e$. The functor $\res_{e} : \Rmod{A} \to \Rmod{B}$ that sends a (finitely-generated) module $V$ to $eV$ and a morphism $f:V\to W$ to $\res_e(f):eV \to eW$ defined by $ev\mapsto ef(v)$ is exact.
\end{proposition}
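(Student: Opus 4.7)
The plan is to give a direct, elementary argument that $\res_e$ preserves short exact sequences. First I would check well-definedness: for a (left, say) $\alg{A}$-module $V$, the subset $eV$ is stable under $\alg{B} = e\alg{A}e$ since $(eae)(ev) = e(aev) \in eV$, and for any $\alg{A}$-linear $f:V\to W$ the assignment $\res_e(f)(ev) := ef(v) = f(ev)$ is well-defined (independent of the representative $v$, since the two expressions agree by $\alg A$-linearity), $\alg B$-linear, and compatible with composition and identities. So $\res_e$ is a functor.

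Then the main step is to show exactness on short exact sequences. Given $\ses{V'}{V}{V''}$ with maps $\iota:V'\to V$ and $\pi:V\to V''$, I would verify the three conditions for $\ses{eV'}{eV}{eV''}$ one at a time. Injectivity of $\res_e(\iota)$ is immediate from injectivity of $\iota$. Surjectivity of $\res_e(\pi)$ uses $e^2=e$: any $ev'' \in eV''$ equals $\pi(v)$ for some $v\in V$, hence $ev'' = e\pi(v) = \pi(ev) \in \pi(eV)$. Exactness in the middle follows similarly: $\res_e(\pi)\circ \res_e(\iota) = 0$ is clear, and conversely if $ev \in \ker \res_e(\pi)$ then $\pi(ev)=0$, so $ev = \iota(v')$ for some $v'\in V'$, and then $ev = e(ev) = e\iota(v') = \iota(ev') \in \iota(eV') = \im \res_e(\iota)$.

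The argument requires no hard computation and no obstacle of substance; the only point worth being careful about is to apply $e$ idempotence at the right moment to ensure the preimage found in $V'$ actually lies in $eV'$ (the last line of the previous paragraph). Alternatively, a slicker proof would observe that $\alg{A}e$ is a direct summand of $\alg{A}$ as a left $\alg{A}$-module (via $\alg{A} = \alg{A}e \oplus \alg{A}(1-e)$) and hence projective, so $\Hom_{\alg{A}}(\alg{A}e, -)$ is exact; the natural isomorphism $\Hom_{\alg{A}}(\alg{A}e, V) \xrightarrow{\sim} eV$, $\phi \mapsto \phi(e)$, of left $\alg{B}$-modules then identifies $\res_e$ with this Hom functor and delivers exactness at once. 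I would likely include both the direct argument (for transparency) and a brief mention of the projective-module viewpoint (since it explains \emph{why} exactness holds and is the reason the lemma applies verbatim in the sequel when $\alg{A} = \tl{n+k}$, $e = P_k$, $\alg{B} = \bnk{n,k}$).
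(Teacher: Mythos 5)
Your argument is correct and complete: the paper itself gives no proof of this proposition, stating it as a standard fact with a pointer to the literature on representations of associative algebras, so there is nothing to diverge from. Both of your routes are the standard ones — the direct verification on a short exact sequence (with idempotence of $e$ invoked exactly where it is needed, to see that the middle-kernel preimage $\iota(v')=ev$ forces $v'\in eV'$ up to replacing $v'$ by $ev'$) and the identification of $\res_e$ with the exact functor $\Hom_{\alg A}(\alg A e,-)$ via projectivity of the direct summand $\alg A e$. The only cosmetic point is the left/right module convention: the paper writes $\Rmod{A}$ but then forms $eV$ with $e$ acting on the left; your parenthetical ``(left, say)'' handles this, and the argument is symmetric in any case.
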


The importance of this functor in the present case is partially revealed by the following result.
\begin{lemma}\label{lem:RadeViseRadV} The restriction functor $\res_{P_k}$ establishes an isomorphism between the restriction of the radical of the $\tl{n+k}$-module $\Cell{n+k}d$ and the $\bnk{n,k}$-module $\Radi{n,k}d$:
\begin{equation*}
  \Radi{n,k}d \simeq \res_{P_k}(\Radi{n+k}{d}),\qquad\textrm{ for all }d\in\Delta_{n,k}.
\end{equation*}
\end{lemma}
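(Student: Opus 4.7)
The strategy is to identify $\Cell{n,k}d$ with the restricted module $\res_{P_k}(\Cell{n+k}d) = P_k\Cell{n+k}d$ as $\bnk{n,k}$-modules in such a way that the two bilinear forms match, and then to transfer the radical from one side to the other using self-adjointness and idempotency of $P_k$.

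First I would establish the identification $\iota : \Cell{n,k}d \xrightarrow{\sim} P_k\Cell{n+k}d$ sending the basis element $P_kw \in \base_{n,k}^d$ to $P_kw \in \Cell{n+k}d$. Surjectivity follows because any $P_kv$ with $v = \sum_i\alpha_i w_i$ a $\CC$-linear combination of monic $(n+k,d)$-diagrams reduces to a sum over those $w_i$ without boundary-to-boundary links, since $P_k$ annihilates the others by \eqref{eq:propWJ}. Injectivity is immediate from the basis description of $\Cell{n,k}d$ in section \ref{sub:bnkIsCellular}. The $\bnk{n,k}$-linearity of $\iota$ follows because in both descriptions the action is by composition of diagrams on the left.

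Next I would verify that the cellular bilinear forms are compatible: for $v,w\in\base_{n,k}^d$,
\begin{equation*}
\langle P_kv, P_kw\rangle_{n,k}^d \;=\; \langle P_kv, P_kw\rangle_{n+k}^d.
\end{equation*}
Both sides equal the coefficient of the identity $(d,d)$-diagram in $v^*P_kw$: on the left by Proposition \ref{prop:MDRRGramDetFormula} and the definition of $\langle\,\cdot\,,\,\cdot\,\rangle_{n,k}^d$; on the right because $(P_kv)^*(P_kw)=v^*P_k^2w=v^*P_kw$, using $P_k^{*}=P_k$ (proved recursively in section \ref{sub:bnkIsCellular}) and $P_k^2=P_k$.

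Then I would use invariance to match the radicals. For the inclusion $P_k\Radi{n+k}d \subseteq \iota(\Radi{n,k}d)$: if $x\in\Radi{n+k}d$, then for every $z\in P_k\Cell{n+k}d$ one has $\langle P_kx,z\rangle_{n+k}^d=\langle x,P_kz\rangle_{n+k}^d=\langle x,z\rangle_{n+k}^d=0$, where $P_kz=z$ because $z$ lies in the image of $P_k$. Conversely, if $y\in P_k\Cell{n+k}d$ represents an element of $\Radi{n,k}d$, then for any $v\in\Cell{n+k}d$, $\langle y,v\rangle_{n+k}^d=\langle P_ky,v\rangle_{n+k}^d=\langle y,P_kv\rangle_{n+k}^d$, and $P_kv\in P_k\Cell{n+k}d$ by construction, so the previous step forces this to vanish; hence $y\in\Radi{n+k}d$, and $y=P_ky\in P_k\Radi{n+k}d$. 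Combining the two inclusions gives the desired isomorphism $\Radi{n,k}d\simeq P_k\Radi{n+k}d = \res_{P_k}(\Radi{n+k}d)$.

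The only delicate point is checking that the two bilinear forms truly coincide under $\iota$; once this is in hand, the rest is a formal manipulation using $P_k^{*}=P_k$ and $P_k^2=P_k$ together with the invariance property of Proposition \ref{prop:Propertiescellbili}(ii). Exactness of $\res_{P_k}$ (Proposition \ref{prop:restriction}) is used only implicitly, to justify that $P_k\Radi{n+k}d$ is indeed a $\bnk{n,k}$-submodule of $\res_{P_k}(\Cell{n+k}d)$.
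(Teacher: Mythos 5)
Your proposal is correct and follows essentially the same route as the paper: both inclusions are obtained from the same chain of equalities using $P_k^2=P_k$, $P_k^{*}=P_k$, the invariance of the bilinear form, and the fact that $\langle\,\cdot\,,\,\cdot\,\rangle_{n,k}^d$ is the restriction of $\langle\,\cdot\,,\,\cdot\,\rangle_{n+k}^d$. The only cosmetic differences are that you make the identification $\Cell{n,k}d\simeq P_k\Cell{n+k}d$ and the compatibility of the forms explicit (the paper takes these as given from its earlier definitions), and you handle the identically-zero-form case uniformly rather than separately, which is fine since the radical is defined as the annihilator of the form in all cases.
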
 

\begin{proof}If the bilinear form $\langle\,\cdot\, ,\,\cdot\, \rangle_{n+k}^d$ on the $\tl{n+k}$-module $\Cell{n+k}d$ is identically zero, then so is $\langle\,\cdot\, ,\,\cdot\, \rangle_{n,k}^d$ on the $\bnk{n,k}$-module $\Cell{n,k}d$. Then $\Radi{n,k}d=\Cell{n,k}d=\res_{P_k}\Cell{n+k}d=\res_{P_k}\Radi{n+k}d$. Suppose then that the bilinear form on $\Cell{n+k}d$ is not identically zero. Let $v\in\Radi{n+k}d$. Then 
$$\langle P_kv,P_kw\rangle_{n,k}^d=\langle P_kv,P_kw\rangle_{n+k}^d=\langle v,P_kw\rangle_{n+k}^d=0$$
because of the invariance property in proposition \ref{prop:Propertiescellbili} and the fact that the bilinear form on the $\bnk{n,k}$-module is the restriction of the one on the $\tl{n+k}$-module. Hence $\Radi{n,k}d \supset \res_{P_k}(\Radi{n+k}{d})$. Now if $P_kv\in \Radi{n,k}d$, then for any $w\in\Radi{n+k}d$
$$\langle P_kv,w\rangle_{n+k}^d=\langle P_kv,P_kw\rangle_{n,k}^d=0$$
and $\Radi{n,k}d \subset \res_{P_k}(\Radi{n+k}{d})$.
\end{proof}

If $\rad \mathsf M$ denotes the Jacobson radical of the module $\mathsf M$, then the previous lemma can be reformulated as
\begin{equation*}
  \rad (\res_{P_k} \Cell{n+k}d) \simeq \res_{P_k}( \rad \Cell{n+k}{d}).
\end{equation*}
The restriction functor of proposition \ref{prop:restriction} carries the morphism $\theta$ of theorem \ref{prop:MorphiGL} into one between $\bnk{n,k}$-modules. But is it a non-trivial morphism? This will be the difficult part of the proof ahead. 

\begin{proposition}\label{prop:MorphiGLbnk}
Let $q\in \CC^{\times}$ be a root of unity, $\ell>1$ the smallest positive integer such that $q^{2\ell}=1$, and $\bnk{n,k}(\beta)$ be the seam algebra. If $t,s\in \Lambda_{n,k}$ are such that $t<s<t+2\ell$ and $s+t\equiv -2\modSB 2\ell$, then Graham's and Lehrer's morphism $\theta$ gives rise to a non-trivial morphism $\res_{P_k}(\theta) : \Cell{n,k}{s} \to \Cell{n,k}{t}$ given, for $v\in\Cell{n,k}{s}$, by
\begin{equation}
\res_{P_k}(\theta)(v) = \sum_{\substack{w : s\leftarrow t \\ \text{monic}}} \sg(w) H_{F(w)}(q) P_k vw. 
\end{equation}
\end{proposition}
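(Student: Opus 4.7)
My plan is to obtain $\res_{P_k}(\theta)$ by applying the restriction functor of Proposition~\ref{prop:restriction} to the Graham--Lehrer morphism of Theorem~\ref{prop:MorphiGL}, and then to establish its non-triviality by identifying its image with the radical $\Radi{n,k}{t}$ and checking that the latter is non-zero. The identification $\Cell{n,k}{d} = P_k \Cell{n+k}{d} = \res_{P_k}(\Cell{n+k}{d})$ is built into the construction of cellular modules for $\bnk{n,k}$ carried out in Section~\ref{sub:bnkIsCellular}. Since $\theta$ is $\tl{n+k}$-linear and $P_k\in \tl{n+k}$, the claimed formula
\begin{equation*}
\res_{P_k}(\theta)(v) \;=\; P_k\,\theta(v) \;=\; \sum_{\substack{w\,:\,s\leftarrow t \\ \text{monic}}} \sg(w)\, H_{F(w)}(q)\, P_k\, v\, w,
\end{equation*}
for $v\in \Cell{n,k}{s}$, follows at once from the definition of the restriction functor ($ev \mapsto e\,f(v)$) together with the identity $\theta(P_k v') = P_k \theta(v')$.

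For the non-triviality, Proposition~\ref{prop:restriction} provides the exactness of $\res_{P_k}$, so the image of $\res_{P_k}(\theta)$ coincides with $\res_{P_k}$ applied to the image of $\theta$. By Theorem~\ref{prop:MorphiGL}, the image of $\theta$ equals $\Radi{n+k}{t}$ in general, or the full $\Cell{n+k}{t}$ in the exceptional case $t=0,\beta=0$. In the exceptional case, $\res_{P_k}(\Cell{n+k}{t}) = \Cell{n,k}{t}$ is manifestly non-zero; in the generic case, Lemma~\ref{lem:RadeViseRadV} identifies $\res_{P_k}(\Radi{n+k}{t})$ with $\Radi{n,k}{t}$, so that non-triviality amounts to $\Radi{n,k}{t} \neq 0$.

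To prove $\Radi{n,k}{t} \neq 0$, the hypothesis $t<s<t+2\ell$ together with $s+t\equiv -2\modY 2\ell$ guarantees that $(s+t)/2$ is the unique critical wall between $t$ and $s$, so in the orbit structure of $\Delta_{n+k}$ one has $s=t^+$, and Theorem~\ref{thm:tln}(b) yields $\Radi{n+k}{t}\simeq \Irre{n+k}{s}$. Consequently $\Radi{n,k}{t} \simeq \res_{P_k}(\Irre{n+k}{s}) = P_k\Irre{n+k}{s}$, which vanishes exactly when $\Cell{n,k}{s} = P_k\Cell{n+k}{s} \subset \Radi{n+k}{s}$; using invariance of $\langle\,\cdot\, ,\,\cdot\,\rangle_{n+k}^{s}$ and $P_k^*=P_k$, this containment is equivalent to $\langle\,\cdot\, ,\,\cdot\,\rangle_{n,k}^{s}\equiv 0$. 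By Proposition~\ref{thm:deltazero}, the latter can happen only if $s<k$ and $\ell=k+1$, a case excluded by the hypothesis: $s+t\geq 2\ell -2 = 2k$ combined with $s>t$ forces $s>k$. The main obstacle I anticipate is the orbit-theoretic bookkeeping behind $s=t^+$ in $\Delta_{n+k}$, and more generally the careful verification that the various isomorphisms linking $\Radi{n,k}{t}$, $\res_{P_k}(\Radi{n+k}{t})$, and $P_k\Irre{n+k}{s}$ are compatible with the submodule structure inside $\Cell{n,k}{t}$; an alternative route, should this become cumbersome, is to exhibit the zero factor $\qnum{(s+t)/2+1}/\qnum{(s-t)/2}$ at $j=(s-t)/2$ in the Gram determinant formula \eqref{eq:Gramdeterminantbnk}, which has positive exponent $\dim\Cell{n,k}{s}$ and a non-zero denominator since $0<s-t<2\ell$.
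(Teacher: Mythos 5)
Your proof is correct, and while it sets up $\res_{P_k}(\theta)$ and identifies its image with $\Radi{n,k}{t}$ exactly as the paper does (exactness of the restriction functor plus lemma \ref{lem:RadeViseRadV}), your argument for the non-triviality is genuinely different. The paper runs the recursion of proposition \ref{prop:RecursiveRelationRadibnk} down the Bratteli diagram until it hits a critical line, bounding $\dim\Radi{n,k}{t}$ below by $\dim\Cell{n-m-1,k}{t+m+1}$ and checking at each step that the index stays in $\Delta_{n-m-1,k}$. You instead note that the hypotheses force $s=t^{+}$ in $\Delta_{n+k}$, so $\Radi{n+k}{t}\simeq\Irre{n+k}{s}$ by theorem \ref{thm:tln}(b), hence $\Radi{n,k}{t}\simeq\res_{P_k}(\Irre{n+k}{s})$, which vanishes precisely when $\langle\,\cdot\,,\,\cdot\,\rangle_{n,k}^{s}\equiv 0$; proposition \ref{thm:deltazero} excludes this because $\ell=k+1$ together with $s+t\equiv-2\modY 2\ell$ and $s>t$ forces $s>k$. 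This is shorter and replaces the combinatorial descent by a single appeal to the classification of vanishing bilinear forms; the paper's route has the side benefit of yielding explicit lower bounds on $\dim\Radi{n,k}{t}$. One caveat on your proposed fallback via the Gram determinant \eqref{eq:Gramdeterminantbnk}: when $(n+k-t)/2\geq\ell$, some denominators $\qnum{j}$ in the second product also vanish (with non-vanishing numerators, since $t$ is non-critical), so the product cannot be evaluated factor by factor and concluding $\det\gram_{n,k}^{t}=0$ from the single zero at $j=(s-t)/2$ would require comparing total orders of vanishing. Since your main argument does not rely on this fallback, the proof stands.
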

\begin{proof}
The restriction functor $\res_{P_k}$ is applied to Graham's and Lehrer's morphism $\theta:\Cell{n+k}{s} \to \Cell{n+k}{t}$. Functoriality gives the existence of a morphism between $\Cell{n,k}{s}$ and $\Cell{n,k}{t}$ such that the following diagram commutes:
\begin{equation*}
 \begin{tikzcd}
  \Cell{n+k}{s} \arrow[r,"\theta"] \arrow[d,swap,"\res_{P_k}"] & \Cell{n+k}{t} \arrow[d,"\res_{P_k}"]\\
  \Cell{n,k}{s} \arrow[r,swap,"\res_{P_k} \theta"]& \Cell{n,k}{t}
 \end{tikzcd}.
\end{equation*}
Hence
\begin{align*}
\Radi{n,k}t& \simeq \res_{P_k} (\Radi{n+k}t),\qquad \textrm{by lemma \ref{lem:RadeViseRadV} }\\
& = \res_{P_k}(\rad \Cell{n+k}t), \qquad \text{as }\ell>1\\ 
& \simeq \res_{P_k}(\theta(\Cell{n+k}s)), \qquad\textrm{by theorem \ref{prop:MorphiGL}}\\
& \simeq \res_{P_k}(\theta)(\res_{P_k} \Cell{n+k}s), \qquad\textrm{by functoriality}
\\
&=\res_{P_k}(\theta)(\Cell{n,k}s)=\im (\res_{P_k}(\theta))
\end{align*}
and there is an isomorphism between the image of $\res_{P_k}\theta$ and $\Radi{n,k}{t}$; it will thus be sufficient to prove that $\dim \Radi{n,k}{t}$ is not zero to prove that $\res_{P_k}(\theta)$ is non-trivial. 

Proposition \ref{prop:RecursiveRelationRadibnk} indicates that three cases are to be considered. First $\qnum{t+1}=0$, that is, $t$ is critical. This cannot happen as $t$ and $s$ are in a non-critical orbit under reflection. Second $\qnum{t+2} =0$ and thus, by \eqref{eq:RecursiveFormulaRadibnk}, \begin{equation}
  \dim \Radi{n,k}{t} = \dim\Radi{n-1,k}{t-1}+\dim\Cell{n-1,k}{t+1}.
\end{equation}
As $t+1\leq s-1\leq n+k-1$, the dimension $\dim \Radi{n,k}{t}\geq \Cell{n-1,k}{t+1}$
is surely not 0.
Third $\qnum{t+2}\neq 0$. In this last case, equation \eqref{eq:RecursiveFormulaRadibnk} gives
\begin{equation}
  \dim \Radi{n,k}{t} = \dim\Radi{n-1,k}{t-1} + \dim\Radi{n-1,k}{t+1}.
\end{equation}
The upper index of $\Radi{n-1,k}{t+1}$ can further increase by $m+2$ uses of the same relation, such that $\qnum{t+m+2}=0$: 
\begin{equation}
  \dim \Radi{n,k}{t} = \dim\Radi{n-1,k}{t-1} + \dim \Radi{n-2,k}{t} + \dots + \dim\Radi{n-m-1,k}{t+m-1} +\dim \Cell{n-m-1,k}{t+m+1}.
\end{equation}
(An example of this recursive process follows the proof.)
The method to get the term $\dim\Cell{n-m-1,k}{t+m+1}$ insures that $t+m+1$ belongs to $\Delta_{n-m-1,k}$. To see this, note first that the condition $n+d\geq k$ in the definition \eqref{eq:deltaDeBnk} of $\Delta_{n,k}$ remains satisfied at each step. Second, since $\qnum{t+m+2}=0$, the index $t+m+1$ is critical and $s$ is thus equal to $t+2(m+1)$. Since $s\in\Delta_{n,k}$, the condition $s\leq n+k$ implies $t+m+1\leq n-m-1+k$ and $t+m+1\in\Delta_{n-m-1,k}$. Hence, $\dim\Radi{n,k}{t}$ is non-zero as $\Cell{n-m-1,k}{t+m+1}$ is non-trivial. The non-triviality of $\res_{P_k}(\theta)$ follows.
\end{proof}

Figure \ref{fig:bratteli3} provides an example of the recursive process used in the proof. It is drawn for $\bnk{4,3}$ with $\ell = 5$, $s=7$ and $t=1$. The morphism $\res_{P_3}\theta : \Cell{4,3}{7} \to \Cell{4,3}{1}$ is non-trivial as the dimension of $\Radi{4,3}{1}$, isomorphic to its image, is larger or equal to $\dim \Cell{1,3}{4}=1$, as can be seen by multiple applications of \eqref{eq:RecursiveFormulaRadibnk}:
\begin{align*}
  \dim\Radi{4,3}{1} &= \dim\Radi{3,3}{0} + \dim\Radi{3,3}{2}\\ 
    &= \dim\Radi{3,3}{0} + \dim \Radi{2,3}{1} + \dim\Radi{2,3}{3}\\
    &= \dim\Radi{3,3}{0} + \dim \Radi{2,3}{1} + \dim\Radi{1,3}{2} + \dim \Cell{1,3}{4}.
\end{align*}
The following Bratteli diagram displays the process with full lines representing applications of proposition \ref{prop:RecursiveRelationRadibnk}
and circles the indices of the modules (radicals or cellular) appearing in the last line of the above equation. As before, the dashed vertical line is critical.
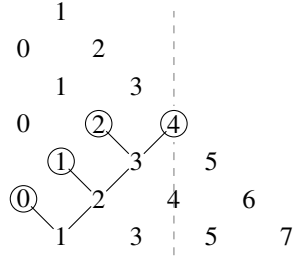
\begin{figure}[h!]
  \begin{tikzpicture}[scale=0.5]
    \foreach \l in {2,4,6}
{
	\foreach \k in {0,2,...,\l}
	{
		\draw (\k,-\l) node{$\k$};
	}
}
\foreach \l in {3,5,7}
{
	\foreach \k in {1,3,...,\l}
	{
		\draw (\k,-\l) node{$\k$};
	}
}
\draw (1,-1) node{$1$};
\draw[dashed,gray] (4,-7.5)--(4,-1);
\node[fill=white] at (4,-4) {$4$}; 
\draw (1.2,-6.8) -- (1.8,-6.2);
\draw (0.8,-6.8)--(0.2,-6.2);
\draw (2.2,-5.8) -- (2.8,-5.2);
\draw (1.8,-5.8) -- (1.2,-5.2);
\draw (3.2,-4.8) -- (3.8,-4.2);
\draw (2.8,-4.8)--(2.2,-4.2);
\draw (4,-4) circle(10 pt); 
\draw (2,-4) circle(10 pt);
\draw (1,-5) circle(10 pt);
\draw (0,-6) circle(10 pt);
\end{tikzpicture}
\caption{Recursive process for $\bnk{4,3}$ with $\ell =5$, $s=7$ and $t=1$.}\label{fig:bratteli3}
\end{figure}

The existence of this family of morphims leads to the structure of the cellular modules over $\bnk{n,k}$. Proposition \ref{prop:criticalissimple} showed that $\Cell{n,k}d$ is irreducible if $d$ is critical. The next proposition studies the case $d$ non-critical.
\begin{proposition}\label{prop:ExactSequenceMorphiGLbnk}
If $d\in\Delta_{n,k}$ is non-critical, then the short sequence of $\bnk{n,k}$-modules
 \begin{equation}\label{eq:sesbnk}
   0 \longrightarrow  \Irre{n,k}{d^+} \longrightarrow\Cell{n,k}{d} \longrightarrow \Irre{n,k}{d} \longrightarrow 0
 \end{equation}
is exact and non-split and thus $\Radi{n,k}{d}\simeq\Irre{n,k}{d^+}$. (Note that, if $d\not\in\Delta_{n,k}^0$, then $\Radi{n,k}d=\Cell{n,k}d$ and the module $\Irre{n,k}d$ in the above short sequence is understood to be $0$.)
\end{proposition}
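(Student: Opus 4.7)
The plan is to transport the analogous Temperley--Lieb statement (Theorem \ref{thm:tln}(b)) to the seam algebra via the exact restriction functor $\res_{P_k}$ of Proposition \ref{prop:restriction}. Under the constraints \eqref{eq:restPara}, $\ell > k \geq 2$ forces $\ell \geq 3$, hence $\beta \neq 0$, so $\Delta_{n+k}^0 = \Delta_{n+k}$ and Theorem \ref{thm:tln}(b) applies to every non-critical $d \in \Delta_{n+k} \supseteq \Delta_{n,k}$. This delivers the non-split short exact sequence
\[
0 \longrightarrow \Irre{n+k}{d^+} \longrightarrow \Cell{n+k}d \longrightarrow \Irre{n+k}d \longrightarrow 0
\]
of $\tl{n+k}$-modules.

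Applying $\res_{P_k}$ and using $\res_{P_k}\Cell{n+k}d = \Cell{n,k}d$ produces
\[
0 \longrightarrow \res_{P_k}\Irre{n+k}{d^+} \longrightarrow \Cell{n,k}d \longrightarrow \res_{P_k}\Irre{n+k}d \longrightarrow 0.
\]
To identify the outer terms with $\Irre{n,k}{d^{\pm}}$, I would restrict, for each $e \in \{d, d^+\}$, the canonical sequence $0 \to \Radi{n+k}e \to \Cell{n+k}e \to \Irre{n+k}e \to 0$; combining exactness of $\res_{P_k}$ with Lemma \ref{lem:RadeViseRadV} (which gives $\res_{P_k}\Radi{n+k}e \simeq \Radi{n,k}e$) yields $\res_{P_k}\Irre{n+k}e \simeq \Cell{n,k}e/\Radi{n,k}e = \Irre{n,k}e$, consistent with the convention that the right-hand side is zero when $e \notin \Delta_{n,k}^0$ or $e \notin \Delta_{n,k}$. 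Substituting produces the exact sequence \eqref{eq:sesbnk}, and reading off the kernel gives the isomorphism $\Radi{n,k}d \simeq \Irre{n,k}{d^+}$. As an alternative route, one could instead invoke Proposition \ref{prop:MorphiGLbnk} directly: the Graham--Lehrer morphism $\varphi = \res_{P_k}(\theta)\colon \Cell{n,k}{d^+} \to \Cell{n,k}d$ surjects onto $\Radi{n,k}d$, and since $\Cell{n,k}{d^+}$ has irreducible head $\Irre{n,k}{d^+}$ by Proposition \ref{prop:kernelbiliisJacobsonRadical}, its kernel lies in $\Radi{n,k}{d^+}$; a dimension comparison using Proposition \ref{prop:RecursiveRelationRadibnk} and Corollary \ref{coro:RecursiveRelationSimplebnk} then promotes the induced surjection to an isomorphism.

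For non-splitting, I argue as follows in the meaningful case where both $\Irre{n,k}{d^+}$ and $\Irre{n,k}d$ are non-zero: a splitting would realize $\Cell{n,k}d$ as the direct sum of two simple modules, hence semisimple, forcing its Jacobson radical to vanish. But Proposition \ref{prop:kernelbiliisJacobsonRadical} identifies this Jacobson radical with $\Radi{n,k}d \simeq \Irre{n,k}{d^+} \neq 0$, a contradiction. When one of the two factors is zero, the sequence collapses to an isomorphism and the question is vacuous.

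The most delicate point will be the identification $\res_{P_k}\Irre{n+k}e \simeq \Irre{n,k}e$ at its boundary of validity: when $e \in \Delta_{n+k}$ but $e \notin \Delta_{n,k}^0$ (that is, when $e < k$ and $\ell = k+1$), the target $\Irre{n,k}e$ vanishes by convention whereas $\Irre{n+k}e \neq 0$. The restricted canonical sequence reduces this to showing $\Radi{n,k}e = \Cell{n,k}e$ in this regime, which is exactly Proposition \ref{thm:deltazero}. A parallel check is needed when $e \notin \Delta_{n,k}$, where $\Cell{n,k}e = 0$ forces the restricted irreducible to collapse.
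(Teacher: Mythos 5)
Your argument is correct, and while it leans on the same three ingredients as the paper's proof --- exactness of $\res_{P_k}$, lemma \ref{lem:RadeViseRadV}, and theorem \ref{thm:tln}(b) --- it assembles them differently in two places, both to good effect. First, you identify $\res_{P_k}(\Irre{n+k}{e})\simeq\Irre{n,k}{e}$ directly for each $e$ by restricting the canonical sequence $\ses{\Radi{n+k}{e}}{\Cell{n+k}{e}}{\Irre{n+k}{e}}$ and quoting lemma \ref{lem:RadeViseRadV}; the paper instead runs an induction leftward along the orbit $[d]$, which forces it to seed the induction at the rightmost element $d_r$ with a separate Gram-determinant computation showing $\Cell{n,k}{d_r}$ is irreducible. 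Your route makes that computation unnecessary, and the boundary case you flag as most delicate is in fact benign on one side: since $d\in\Delta_{n,k}$ gives $n+d\geq k$ and $d^+>d$, one has $n+d^+>k$, so $d^+\notin\Delta_{n,k}$ can only happen when $d^+>n+k$, i.e.\ when $d^+\notin\Delta_{n+k}$ and $\Irre{n+k}{d^+}$ is already zero in the Temperley--Lieb sequence; the only genuinely delicate case is $d\in\Delta_{n,k}\setminus\Delta_{n,k}^0$, which lemma \ref{lem:RadeViseRadV} and proposition \ref{thm:deltazero} dispose of exactly as you say. Second, for non-splitting the paper proves outright indecomposability of $\Cell{n,k}{d}$ from its cyclicity (proposition \ref{thm:cyclic}, a diagrammatic construction), whereas you observe that a splitting into two simples would force $\rad\Cell{n,k}{d}=0$, contradicting $\Radi{n,k}{d}\simeq\Irre{n,k}{d^+}\neq 0$ via proposition \ref{prop:kernelbiliisJacobsonRadical}; this is shorter and suffices here, though the paper's cyclicity result earns its keep elsewhere (it yields indecomposability even for $d\notin\Delta_{n,k}^0$, where the bilinear form vanishes identically and proposition \ref{prop:kernelbiliisJacobsonRadical} is unavailable). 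Your preliminary observation that $\ell>k\geq 2$ forces $\beta\neq 0$, hence $\Delta_{n+k}^0=\Delta_{n+k}$, is also a point the paper leaves implicit and is worth stating.
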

\begin{proof}Denote by $[d]$ the orbit of $d$ under the reflection through mirrors at critical integers and by $d_r\in\Delta_{n,k}$ the rightmost (or the largest) integer in this orbit. By definition $d_r^+$ is not in $\Delta_{n,k}$. Let $m\in\mathbb N$ be such that $(m-1)\ell-1<d_r<m\ell-1$. The reflection $d_r^+$ of $d_r$ through the critical integer $m\ell-1$ is determined by $(d_r+d_r^+)/2=m\ell-1$. Since $d_r^+$ is not in $\Delta_{n,k}$, then $n+k<d_r^+=2m\ell-2-d_r$ and thus
$$\frac{n+k+d_r}2+1<m\ell.$$ 
The second product of $\det\mathcal G_{n,k}^{d_r}$ (see \eqref{prop:MDRRGramDetFormula}) is the only one that can vanish. The numerators in this product run from $\qnum{d_r+2}$ to $\qnum{(n+k+d_r)/2+1}$ and thus never vanishes. Hence $\Cell{n,k}{d_r}$ is irreducible. Again, if $d_r^+$ is not in $\Delta_{n,k}$, it is not  in $\Delta_{n+k}$ either so that the cellular $\tl{n+k}$-module $\Cell{n+k}{d_r}$ is also irreducible. In this case, the short exact sequence of theorem \ref{thm:tln} is simply $0\to 0\to \Cell{n+k}{d_r}\to \Irre{n+k}{d_r}\to 0$ and applying the exact restriction functor $\res_{P_k}$ to it gives
$$0\longrightarrow 0\longrightarrow \Cell{n,k}{d_r}\longrightarrow \res_{P_k}(\Irre{n+k}{d_r})\longrightarrow 0$$
which proves that $\res_{P_k}(\Irre{n+k}{d_r})\simeq\Irre{n,k}{d_r}$. If $d_r^-$ is not in $\Delta_{n,k}$, then the proof is finished for this orbit.

If the left neighbor $d_r^-$ belongs to $\Delta_{n,k}$, then the short exact sequence for $\Cell{n+k}{d_r^-}$ given in theorem \ref{thm:tln} is
$$0\longrightarrow \Irre{n+k}{d_r}\longrightarrow \Cell{n+k}{d_r^-}\longrightarrow \Irre{n+k}{d_r^-}\longrightarrow 0,$$
where $\Irre{n+k}{d_r}\simeq \Radi{n+k}{d_r^-}$, also by theorem \ref{thm:tln}. The restriction functor thus gives
$$0\longrightarrow \res_{P_k}(\Irre{n+k}{d_r})\longrightarrow \Cell{n,k}{d_r^-}\longrightarrow \res_{P_k}(\Irre{n+k}{d_r^-})\longrightarrow 0.$$
It has just been proved that $\res_{P_k}(\Irre{n+k}{d_r})\simeq\Irre{n,k}{d_r}$ and lemma \ref{lem:RadeViseRadV} states that $\Radi{n,k}{d_r^-}\simeq\res_{P_k}(\Radi{n+k}{d_r^-})=\res_{P_k}(\Irre{n+k}{d_r})$, thus showing that $\Radi{n,k}{d_r^-}\simeq\Irre{n,k}{d_r}$. The first isomorphism theorem gives
$$\res_{P_k}(\Irre{n+k}{d_r^-})\simeq\Cell{n,k}{d_r^-}/\res_{P_k}(\Irre{n+k}{d_r})\simeq\Cell{n,k}{d_r^-}/\Radi{n,k}{d_r^-}\,\overset{\textrm{def}}{=}\,\Irre{n,k}{d_r^-},$$
thus giving
$$0\longrightarrow \Irre{n,k}{d_r}\longrightarrow \Cell{n,k}{d_r^-}\longrightarrow \Irre{n,k}{d_r^-}\longrightarrow 0.$$ If $d_r^-\not\in \Delta_{n,k}^0$, then $\Radi{n,k}{d_r^-}$ is the whole module $\Cell{n,k}{d_r^-}$ and $\Irre{n,k}{d_r^-}$ should be understood as $0$ in the sequence, therefore giving  $\Cell{n,k}{d_r^-} \simeq \Irre{n,k}{d_r}$. Otherwise, the proof of the exactness of this sequence for $\Cell{n,k}{d_r^-}$ has given $\res_{P_k}(\Irre{n+k}{d_r^-})\simeq\Irre{n,k}{d_r^-}$, like the proof for the existence of an exact sequence for $\Cell{n,k}{d_r}$ had given $\res_{P_k}(\Irre{n+k}{d_r})\simeq\Irre{n,k}{d_r}$. So the present paragraph can be repeated for any element $d$ of the orbit $[d_r]$, thus closing the proof of the existence of the exact sequence \eqref{eq:sesbnk}. 

It remains to show that these short exact sequences do not split. Proposition \ref{thm:cyclic} has shown that the cellular module $\Cell{n,k}d$ is cyclic for all $d\in\Delta_{n,k}$. Suppose then that $z$ is a generator and that $\Cell{n,k}d$ is a direct sum $\mathsf A\oplus \mathsf B$ of two submodules. The element $z$ can be written as $z_{\mathsf A}+z_{\mathsf B}$ with $z_{\mathsf A}\in\mathsf A$ and $z_{\mathsf B}\in\mathsf B$. If both $z_{\mathsf A}$ and $z_{\mathsf B}$ were in $\Radi{n,k}d$, then $\bnk{n,k}z=\bnk{n,k}z_{\mathsf A}\oplus\bnk{n,k}z_{\mathsf B}\subset\Radi{n,k}d$, contradicting the fact the $z$ is a generator of $\Cell{n,k}d$. Then, one of $z_{\mathsf A}$ and $z_{\mathsf B}$ is not in $\Radi{n,k}d$ and, by the argument above, is a generator of $\Cell{n,k}d$. If it is $z_{\mathsf A}$, then $\mathsf B$ must be zero, and if it is $z_{\mathsf B}$, then it is $\mathsf A$ that must be zero. Hence $\Cell{n,k}d$ is indecomposable and the exact sequence \eqref{eq:sesbnk} does not split.
\end{proof}

Here is an example. The algebra $\bnk{4,2}$ with $\ell=3$ has four cellular modules. Its line in the Bratteli diagram reads:
\begin{equation*}
\begin{tikzpicture}[baseline={(current bounding box.center)},scale=0.75]
\draw[dashed] (5,-1) -- (5,0.6);
\draw[dashed] (2,0.6) -- (2,0.4);
\draw[dashed] (2,-0.5) -- (2,-1);
\draw[rounded corners] (4,-0.4) -- (4,-0.7) -- (6,-0.7) -- (6,-0.4) ;
\draw[rounded corners] (4,-0.4) -- (4,-0.7) -- (0,-0.7) -- (0,-0.4) ;
\draw (0,0) node{$\Cell{4,2}{0}$};
\draw (2,0) node{$\Cell{4,2}{2}$};
\draw (4,0) node{$\Cell{4,2}{4}$};
\draw (6,0) node{$\Cell{4,2}{6}$};
\end{tikzpicture}\ .
\end{equation*}
The modules $\Cell{4,2}{2}$ and $\Cell{4,2}{6}$ are irreducible, the first because the integer $2$ is critical, the second because $6$ does not have a right neighbor in its orbit. 

Because $k+1=\ell$, the bilinear form $\langle\,\cdot\, ,\, \cdot\, \rangle_{4,2}^0$ is identically zero. But the ``renormalized'' Gram matrix is not. For example, at $q=e^{2\pi i/3}$, 
$$\lim_{q\to e^{2\pi i/3}} \frac{\gram_{4,2}^0}{\qnum{k+1}}=\begin{pmatrix}
-1 & 1 & 0\\
 1 &-1 & 1\\
 0 & 1 &-1\end{pmatrix}
$$
whose determinant is $1$. This shows that $\Cell{4,2}0$ is also irreducible. The pairs $(t,s)=(0,4)$ and $(4,6)$ both satisfy the hypotheses of proposition \ref{prop:MorphiGLbnk} and there are thus two morphisms
\begin{align*}
  \theta_1 : \Cell{4,2}{6} &\longrightarrow \Cell{4,2}{4}, &
  \theta_2 : \Cell{4,2}{4} & \longrightarrow \Cell{4,2}{0}.
\end{align*}
We shall use the following bases of $\Cell{4,2}{0}$, $\Cell{4,2}{4}$ and $\Cell{4,2}{6}$: 
\renewcommand{\tlScaleDefault}{0.3} 
 \begin{gather*}
\base_{4,2}^{0} = \left\{
 \begin{tlNdiagram}{6}
  \linkV{L}{5}{6}  \linkV{L}{4}{1}  \linkV{L}{3}{2}
  \tlWeld{L}{1}{2}
 \end{tlNdiagram}\ , \ \ 
 \begin{tlNdiagram}{6}
  \linkV{L}{5}{4}  \linkV{L}{6}{1}  \linkV{L}{3}{2}
  \tlWeld{L}{1}{2}
 \end{tlNdiagram}\ , \ \ 
 \begin{tlNdiagram}{6}
  \linkV{L}{3}{4}  \linkV{L}{6}{1}  \linkV{L}{5}{2}
  \tlWeld{L}{1}{2} 
 \end{tlNdiagram}
 \right\},\qquad
\base_{4,2}^4 = \left\{ 
 \begin{tlNdiagram}{6}
  \linkV{L}{5}{6}
  \linkD{4}{5}  \linkD{3}{4}  \linkD{2}{3}  \linkD{1}{2}
  \tlWeld{L}{1}{2}
 \end{tlNdiagram} \ , \ \ 
 \begin{tlNdiagram}{6}
  \linkV{L}{5}{4}
  \linkD{6}{5}  \linkD{3}{4}  \linkD{2}{3}  \linkD{1}{2}
  \tlWeld{L}{1}{2} 
 \end{tlNdiagram} \ , \ \ 
 \begin{tlNdiagram}{6}
  \linkV{L}{3}{4}
  \linkD{6}{5}  \linkD{5}{4}  \linkD{2}{3}  \linkD{1}{2}
  \tlWeld{L}{1}{2}
 \end{tlNdiagram} \ , \ \ 
 \begin{tlNdiagram}{6}
  \linkD{6}{5}  \linkD{5}{4}  \linkD{1}{2}\linkD{4}{3}
  \linkV{L}{3}{2}
  \tlWeld{L}{1}{2}
 \end{tlNdiagram} \right\}\quad\textrm{and}\quad
\base_{4,2}^6 = \left\{ 
 \begin{tlNdiagram}{6}
  \linkH{1}  \linkH{2}  \linkH{3}  \linkH{4}  \linkH{5}  \linkH{6}
  \tlWeld{M}{1}{2}
 \end{tlNdiagram}\right\},\quad 
\end{gather*}
as well as the sets of monic $(4,0)$- and $(6,4)$-diagrams:
\begin{gather*}
 \base_{4 \leftarrow 0} = \left\{
 \begin{tlNdiagram}{4}
  \linkV{L}{1}{2}  \linkV{L}{3}{4}
 \end{tlNdiagram}\ , \ \ 
 \begin{tlNdiagram}{4}
  \linkV{L}{1}{4}  \linkV{L}{2}{3}
 \end{tlNdiagram}
 \right\}, \qquad 
 \base_{6\leftarrow 4} = \left\{
 \begin{tlNdiagram}{6}
  \linkV{L}{5}{6}
  \linkD{4}{5}  \linkD{3}{4}  \linkD{2}{3}  \linkD{1}{2}
 \end{tlNdiagram} \ , \ \ 
 \begin{tlNdiagram}{6}
  \linkV{L}{5}{4}
  \linkD{6}{5}  \linkD{3}{4}  \linkD{2}{3}  \linkD{1}{2}
 \end{tlNdiagram} \ , \  \ 
 \begin{tlNdiagram}{6}
  \linkV{L}{3}{4}
  \linkD{6}{5}  \linkD{5}{4}  \linkD{2}{3}  \linkD{1}{2}
 \end{tlNdiagram} \ , \  \ 
 \begin{tlNdiagram}{6}
  \linkV{L}{2}{3}
  \linkD{6}{5}  \linkD{5}{4}  \linkD{4}{3}  \linkD{1}{2}
 \end{tlNdiagram}\ , \ \ 
 \begin{tlNdiagram}{6}
  \linkV{L}{1}{2}
  \linkD{6}{5} \linkD{5}{4}  \linkD{4}{3}  \linkD{3}{2}
 \end{tlNdiagram} \right\}.
 \end{gather*}
\renewcommand{\tlScaleDefault}{0.5}
The action of the morphism $\theta_1$ on the unique element $v$ of the basis $\base_{4,2}^6$ is given by a sum over the five elements $w_i$ of the set $\base_{6\leftarrow 4}$:
\begin{equation*}
\theta_1(v) = \sum_{i=1}^5 \sg(w_i)H_{F(w_i)}(q) vw_i.
\end{equation*}
The coefficients given by proposition \ref{prop:Stanley} are 
\begin{align*}
  H_{F(w_1)}(q) &= { \qnum{5}!\over  \qnum{1} \qnum{2} \qnum{3} \qnum{4} \qnum{5} } = 1, &
  H_{F(w_2)}(q) &= { \qnum{5}!\over  \qnum{1} \qnum{1} \qnum{3} \qnum{4} \qnum{5} } = \qnum{2}, \\
  H_{F(w_3)}(q) &= { \qnum{5}!\over  \qnum{1} \qnum{2} \qnum{1} \qnum{4} \qnum{5} } = \qnum{3}, &
 H_{F(w_4)}(q) &= { \qnum{5}!\over  \qnum{1} \qnum{2} \qnum{3} \qnum{1} \qnum{5} } = \qnum{4},\\
 H_{F(w_5)}(q) &= { \qnum{5}!\over  \qnum{1} \qnum{2} \qnum{3} \qnum{4} \qnum{1} } = \qnum{5}. &&
\end{align*}
Note that the contribution of $w_5$ will cancel because $P_2vw_5=0$. With the proper signs and replacing the value of the $q$-numbers at $q=e^{\pi i /3}$ ($\qnum{2} = 1, \qnum{3} = 0$ and $\qnum{4} =-1$), the morphism is 
\renewcommand{\tlScaleDefault}{0.3} 
\begin{align*}
\theta_1(v) &= 
\begin{tlNdiagram}{6}
  \linkV{L}{5}{6}
  \linkD{4}{5}  \linkD{3}{4}  \linkD{2}{3}  \linkD{1}{2}
  \tlWeld{L}{1}{2}
\end{tlNdiagram}\ -\    
\begin{tlNdiagram}{6}
  \linkV{L}{5}{4}
  \linkD{6}{5}  \linkD{3}{4}  \linkD{2}{3}  \linkD{1}{2}
  \tlWeld{L}{1}{2} 
 \end{tlNdiagram} \ + \  
 \begin{tlNdiagram}{6}
    \linkD{6}{5}  \linkD{5}{4}  \linkD{1}{2}  \linkD{4}{3}
    \linkV{L}{3}{2}
    \tlWeld{L}{1}{2}
 \end{tlNdiagram}\ .
\end{align*}
The morphism $\theta_2$ is given on the four elements $x_i$ of the basis $\base_{4,2}^4$ by a sum on the two elements $z_1$, $z_2$ of the set $\base_{4\leftarrow 0}$. As $H_{F(z_1)} = \qnum{2}$ and $H_{F(z_2)} = 1$, the morphism is defined by
\begin{gather*}
\theta_2(v_1) =  \qnum{2}  \
\begin{tlNdiagram}{6}
  \linkV{L}{5}{6}
  \linkD{4}{5}  \linkD{3}{4}  \linkD{2}{3}  \linkD{1}{2}
  \tlWeld{L}{1}{2}
\end{tlNdiagram}%
\begin{tlNdiagram}{6}
  \linkV{L}{3}{2}  \linkV{L}{5}{4}
\end{tlNdiagram}\ -\ 
\begin{tlNdiagram}{6}
  \linkV{L}{5}{6}
  \linkD{4}{5}  \linkD{3}{4}  \linkD{2}{3}  \linkD{1}{2}  \tlWeld{L}{1}{2}
\end{tlNdiagram}%
\begin{tlNdiagram}{6}
  \linkV{L}{2}{5}
  \linkV{L}{3}{4}
\end{tlNdiagram}
= -\ 
\begin{tlNdiagram}{6}
  \linkV{L}{5}{6}  \linkV{L}{4}{1}  \linkV{L}{2}{3}
  \tlWeld{L}{1}{2}
\end{tlNdiagram}\ ,\quad
\theta_2(v_2) =-\ 
\begin{tlNdiagram}{6}
  \linkV{L}{5}{4}  \linkV{L}{6}{1}  \linkV{L}{2}{3}
  \tlWeld{L}{1}{2}
\end{tlNdiagram}\ , \quad
\theta_2(v_3) = -\ 
\begin{tlNdiagram}{6}
  \linkV{L}{3}{4}  \linkV{L}{6}{1}  \linkV{L}{2}{5}
  \tlWeld{L}{1}{2}
\end{tlNdiagram}\ , \quad
\theta_2(v_4) = \qnum{2}\ 
\begin{tlNdiagram}{6}
  \linkV{L}{5}{6}  \linkV{L}{4}{1}  \linkV{L}{2}{3}
  \tlWeld{L}{1}{2}
\end{tlNdiagram}\ -\ 
\begin{tlNdiagram}{6}
  \linkV{L}{5}{4}  \linkV{L}{6}{1}  \linkV{L}{2}{3}
  \tlWeld{L}{1}{2}
\end{tlNdiagram} \  .
\end{gather*}
Note that the image of $\theta_1$ lies in kernel of $\theta_2$:
 \begin{align*}
 \theta_2(\theta_1(v)) = \theta_2(v_1-v_2+v_4) = \theta_2(v_1) - \theta_2(v_2) +\theta_2(v_4) = 0. 
 \end{align*}
\renewcommand{\tlScaleDefault}{0.5} 

%
%
\subsection{Projective covers} \label{sub:projective} 
This last section is devoted to the study of projective modules, or more precisely, the indecomposable ones, also called the principals. The theory of cellular algebras over an algebraically closed field like $\mathbb C$ provides key information for this task. The standard definitions will be recalled, an example for $\bnk{4,2}$ with $\ell = 3$ will be worked through, and the main theorem will then be proved.

Let $\mathsf M$ be a $\alg{A}$-module of finite dimension. A filtration of $\mathsf M$  
\begin{equation*}
0 = \mathsf M_0 \subset \mathsf M_1 \subset \dots \subset \mathsf M_m =\mathsf M
\end{equation*}
is a {\em composition series} if each quotient $\mathsf M_i/\mathsf M_{i-1}$ is an irreducible module. These quotients are called {\em composition factors}. The {\em composition multiplicity} $[\mathsf M:\Irre{}{}]$ of an irreducible $\alg{A}$-module $\Irre{}{}$ in $\mathsf M$ is the number of composition factors isomorphic to $\Irre{}{}$ in a composition series of $\mathsf M$. The Jordan-H\"older theorem assures that it is well-defined. 

The regular module $_{\alg A}A$ is the algebra $\alg{A}$ seen as a left module on itself. In its decomposition as a sum of indecomposable modules
\begin{equation}
_\alg{A} \alg{A}\simeq \mathsf A_1 \oplus \dots \oplus \mathsf A_m,
\end{equation}
each summand $\mathsf A_i$ is called a {\em principal (indecomposable) modules}. They are projective modules. To each irreducible module $\Irre{}{}$, there is one and only one, up to isomorphism, principal module $\Proj{}{}$ such that $\Irre{}{}\simeq \Proj{}{} / \rad \Proj{}{}$.

Let $\alg A$ be a cellular algebra. The composition multiplicities of its cellular and principal modules are intimately related. Define the decomposition matrix of its cellular modules by $\Decompmat = \left(\Decompmat_{d,e}:= [\CellGen{d}:\Irre{}{e}]\right)_{\{d \in \Delta, e\in\Delta^0\}}$. The order of both indices $d$ and $e$ respects the partial order on $\Delta_{\alg A}$. Axiom $\eqref{eq:axiomcellularity}$ constrains the structure of the matrix $\Decompmat$.

\begin{proposition}[Graham and Lehrer, Prop.~ 3.6, \cite{GraCel96}]\label{prop:Disunitriangular}
The matrix $\Decompmat$ is upper unitriangular: $\Decompmat_{d,e}=0$ if $d>e$ and $\Decompmat_{d,d}=1$.
\end{proposition}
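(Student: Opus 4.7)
The plan is to derive both claims directly from the cellular-algebra axioms and Proposition \ref{prop:Propertiescellbili}. First I would establish the key annihilation fact: the ideal $\alg{A}^{<d}$ acts as zero on $\Cell{}{d}$. This follows immediately from axiom \eqref{eq:axiomcellularity}, since for $a\in\alg{A}^{<d}$ the product $aC^{d}(s,t)$ lies in $\alg{A}^{<d}$ (the latter being a two-sided ideal by \eqref{eq:axiomcellularity} and its $*$-counterpart \eqref{eq:staronaxiomcellularity}), forcing all coefficients $r_a(s',s)$ to vanish, hence $a\cdot v_s=0$ by Definition \ref{dfn:CellularModuleGeneral}.

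For the off-diagonal vanishing $\Decompmat_{d,e}=0$ when $d>e$, I would argue by contradiction. Suppose $\Irre{}{e}$ appeared as a composition factor of $\Cell{}{d}$ for some $e<d$. Then $\Irre{}{e}$ inherits the annihilation property: it is killed by $\alg{A}^{<d}$ and, in particular, by $\alg{A}^{\leq e}\subseteq\alg{A}^{<d}$. However, the action of $C^{e}(s,t)$ on $\Cell{}{e}$ is given by $v_u\mapsto \langle v_t,v_u\rangle_{\alg A}^{e}\, v_s$ (Proposition \ref{prop:Propertiescellbili}(iii)), and this action descends to $\Irre{}{e}=\Cell{}{e}/\Radi{}{e}$ where the induced bilinear form is nondegenerate (since $e\in\Delta^0$). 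So at least one $C^{e}(s,t)\in\alg{A}^{\leq e}$ acts nontrivially on $\Irre{}{e}$, contradicting the annihilation.

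For the diagonal $\Decompmat_{d,d}=1$, one inequality is immediate: the short exact sequence $0\to\Radi{}{d}\to\Cell{}{d}\to\Irre{}{d}\to 0$ gives $\Decompmat_{d,d}\geq 1$. The remaining task is to exclude further copies of $\Irre{}{d}$ inside $\Radi{}{d}$. If $N_1\subsetneq N_2\subseteq\Radi{}{d}$ produced a subquotient $N_2/N_1\cong\Irre{}{d}$, then for any $x\in N_2$ and $s,t\in M(d)$, Proposition \ref{prop:Propertiescellbili}(iii) applied in $\Cell{}{d}$ gives $C^{d}(s,t)\cdot x=\langle v_t,x\rangle_{\alg A}^{d}\, v_s=0$, the pairing vanishing because $x\in\Radi{}{d}$. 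Thus every $C^{d}(s,t)$ acts as zero on $N_2/N_1\cong\Irre{}{d}$, again contradicting the nondegeneracy of the induced form on $\Irre{}{d}$.

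There is no substantial obstacle here: the argument flows entirely from the two-sided ideal structure of the filtration $\alg{A}^{<d}\subseteq\alg{A}^{\leq d}$, combined with the formula $C^{d}(s,t)\,x=\langle v_t,x\rangle\,v_s$ which cleanly separates the roles of the two indices. The only point requiring care is the bookkeeping distinction between $\alg{A}^{<d}$ and $\alg{A}^{\leq e}$ when passing between levels; since $\Delta_{n,k}$ is totally ordered, no subtleties involving incomparable elements of the poset intervene.
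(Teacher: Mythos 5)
Your argument is correct: the annihilation of $\CellGen{d}$ by the ideal $\alg{A}^{<d}$, combined with the formula $C^{d}(s,t)x=\langle v_t,x\rangle^d_{\alg A}\,v_s$ and the nondegeneracy of the induced form on $\Irre{}{e}$ for $e\in\Delta^0$, is exactly the standard route. The paper itself offers no proof, citing Graham and Lehrer, and your reasoning reproduces their original argument (including the multiplicity-one statement on the diagonal, which follows precisely as you say from $C^{d}(s,t)$ killing $\Radi{}{d}$), so there is nothing to add.
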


Indecomposable projective modules on cellular algebras admit a special filtration.
\begin{lemma}[Mathas, Lemma 2.19, \cite{Mathas99}] \label{lem:FiltrationProjectiveModuleCellular}
  Let $\Proj{}{}$ be any projective $\alg{A}$-module and let $\delta=|\Delta|$. The module $\Proj{}{}$ admits a filtration of $\alg{A}$-modules
  \begin{equation*}
   0 = \mathsf P_0 \subset \mathsf P_1 \subset \mathsf P_2 \subset \dots \subset \mathsf P_\delta = \mathsf P,
  \end{equation*}
in which each quotient $\mathsf P_i/\mathsf P_{i-1}$ is a direct sum of isomorphic copies of a given cellular module $\CellGen{d}$, with each $d\in\Lambda$ appearing once. (Some of these direct sums might be $0$.)
\end{lemma}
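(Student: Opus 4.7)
The plan is to lift the natural cell chain of $\alg{A}$ itself to the projective module $\mathsf{P}$, and to identify the successive subquotients by exploiting the idempotency built into any projective module.

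First, I would filter $\alg{A}$ itself by two-sided ideals. Choose a linear extension $d_1 < d_2 < \cdots < d_\delta$ of the poset $\Delta$ and set $\alg{A}_0 = 0$ and
\[ \alg{A}_i \;=\; \langle\, C^{d_j}(s,t) \;:\; j \leq i,\ s,t\in M(d_j)\,\rangle_{\mathbb{C}},\qquad i=1,\dots,\delta. \]
By \eqref{eq:axiomcellularity} and \eqref{eq:staronaxiomcellularity} each $\alg{A}_i$ is a two-sided ideal, and the quotient $\alg{A}_i/\alg{A}_{i-1}$ has basis $\{[C^{d_i}(s,t)]:s,t\in M(d_i)\}$. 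For fixed $t$, \eqref{eq:axiomcellularity} shows that the span of $\{[C^{d_i}(s,t)]\}_{s\in M(d_i)}$ is a left $\alg{A}$-submodule isomorphic to $\CellGen{d_i}$ via $[C^{d_i}(s,t)] \mapsto v_s$; the action coefficients $r_a(s',s)$ not depending on $t$, one obtains the decomposition
\[ \alg{A}_i/\alg{A}_{i-1}\;\simeq\;\bigoplus_{t\in M(d_i)}\CellGen{d_i} \]
as left $\alg{A}$-modules.

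Second, I would reduce to the case $\mathsf{P}=\alg{A}e$ for some idempotent $e\in\alg{A}$. Since $\alg{A}$ is finite-dimensional over $\mathbb{C}$, Krull--Schmidt guarantees that any finitely generated projective decomposes as a direct sum of principals $\alg{A}f$ with $f$ primitive; choosing mutually orthogonal such representatives, one may assume $\mathsf{P}=\alg{A}e$ with $e^2=e$. Define $\mathsf{P}_i := \alg{A}_i e$, giving a chain $0 = \mathsf{P}_0 \subseteq \mathsf{P}_1 \subseteq \cdots \subseteq \mathsf{P}_\delta = \mathsf{P}$.

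Third, I would compute the subquotients. The natural left-$\alg{A}$-linear map
\[ \phi_i\colon \mathsf{P}_i/\mathsf{P}_{i-1}\;\lra\;\alg{A}_i/\alg{A}_{i-1},\qquad ae+\mathsf{P}_{i-1}\,\longmapsto\, ae+\alg{A}_{i-1}, \]
is well-defined since $\alg{A}_{i-1}e\subseteq \alg{A}_{i-1}$, and is injective since $ae\in\alg{A}_{i-1}$ forces $ae=(ae)e\in\alg{A}_{i-1}e=\mathsf{P}_{i-1}$ by $e^2=e$. Its image coincides with the image of the $\alg{A}$-linear idempotent endomorphism $\bar\rho_e$ of $\alg{A}_i/\alg{A}_{i-1}$ defined by right multiplication by $e$. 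Applying \eqref{eq:staronaxiomcellularity} with $a=e^*$ yields
\[ C^{d_i}(s,t)\,e\;\equiv\;\sum_{t'\in M(d_i)} r_{e^*}(t',t)\, C^{d_i}(s,t') \pmod{\alg{A}^{<d_i}}, \]
so under the decomposition of the first step $\bar\rho_e$ restricts to the identity on each $\CellGen{d_i}$-factor while acting on the indexing set $M(d_i)$ by a single linear map $R_e \in \End(\mathbb{C}^{M(d_i)})$ with matrix $(r_{e^*}(t',t))_{t',t}$. Since $e^2=e$ forces $R_e^2=R_e$, the map $R_e$ is a projection onto some subspace $V_e$, and
\[ \mathsf{P}_i/\mathsf{P}_{i-1}\;\simeq\;\im\bar\rho_e\;\simeq\;\bigoplus_{j=1}^{\dim V_e}\CellGen{d_i}, \]
allowed to be zero if $V_e=0$.

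The main obstacle is exactly this third step: one has to notice that the ``starred'' cellularity axiom forces right multiplication by $e$ to respect the cell-factor/index split of $\alg{A}_i/\alg{A}_{i-1}$, and that its idempotency then makes it act as a projection on the index factor --- which is what gives a honest direct sum of cellular modules rather than merely a submodule of one. The remaining verifications (the inclusions $\mathsf{P}_i\subseteq\mathsf{P}_{i+1}$, the endpoint equalities $\mathsf{P}_0=0$ and $\mathsf{P}_\delta=\mathsf{P}$, and extension from $\alg{A}e$ to an arbitrary projective by direct sums) are routine.
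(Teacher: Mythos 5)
The paper does not prove this lemma; it is imported verbatim from Mathas (Lemma~2.19 of \cite{Mathas99}), so there is no in-paper argument to compare against. Your proof is correct and is essentially the standard one. The three ingredients are all in place: the cell chain $\alg{A}_0\subset\cdots\subset\alg{A}_\delta$ of two-sided ideals along a linear extension of $\Delta$, with $\alg{A}_i/\alg{A}_{i-1}\simeq\bigoplus_{t\in M(d_i)}\CellGen{d_i}$ because the coefficients $r_a(s',s)$ in \eqref{eq:axiomcellularity} are independent of $t$; the injectivity of $\phi_i$, which is exactly where $e^2=e$ enters (via $ae=(ae)e$); and the identification of $\im\bar\rho_e$ as $\CellGen{d_i}\otimes\im R_e$, using \eqref{eq:staronaxiomcellularity} to see that right multiplication by $e$ acts only on the second index and $\bar\rho_e^2=\bar\rho_{e^2}=\bar\rho_e$ to make $R_e$ a projection (on a nonzero layer, so that $\mathrm{id}\otimes R_e^2=\mathrm{id}\otimes R_e$ really forces $R_e^2=R_e$). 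Two small points worth making explicit: the lemma should be read for finitely generated projectives (as the paper does elsewhere for the restriction functor), and the reduction to $\mathsf P=\alg{A}e$ covers only summands of the regular module, so the final passage to an arbitrary finitely generated projective genuinely needs the direct-sum remark you relegate to the last sentence --- taking componentwise filtrations of the summands and summing them term by term. With that made explicit, the argument is complete.
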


Let $\Proj{}{d}$ be the principal module associated to $d\in\Delta^0$. It is the projective cover of $\Irre{}d$: $\Proj{} d/\rad \Proj{}d\simeq \Irre{}d$. Consider now the Cartan matrix $\Cartanmat = \left(\Cartanmat_{d, e} :=[\Proj{}{d}:\Irre{}{e}]\right)_{\{d \in \Delta^0, e\in\Delta^0\}}$. Here is the relation between composition multiplicities of  cellular and principal modules for the cellular algebra $\alg A$.

\begin{theorem}[Graham and Lehrer, Thm.~ 3.7, \cite{GraCel96}]\label{thm:CartanisDTD} The matrices $\Cartanmat$ and $\Decompmat$ are related by
$\Cartanmat = \Decompmat^t\Decompmat$.
\end{theorem}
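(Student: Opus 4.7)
The plan is to compute the Cartan entry $\Cartanmat_{d,e}=[\Proj{}{d}:\Irre{}{e}]$ in two different ways, using the cell filtration of the principal module $\Proj{}{d}$ furnished by Lemma~\ref{lem:FiltrationProjectiveModuleCellular}, and then to identify the multiplicities of that filtration with entries of the decomposition matrix. First, since $\mathbb{C}$ is algebraically closed one has $\End_{\alg A}(\Irre{}{e})=\mathbb{C}$, so the standard projectivity identity yields $\dim\Hom_{\alg A}(\Proj{}{e},\mathsf M)=[\mathsf M:\Irre{}{e}]$ for every finite-dimensional $\mathsf M$, giving $\dim\Hom(\Proj{}{e},\CellGen{f})=\Decompmat_{f,e}$ and $\dim\Hom(\Proj{}{e},\Proj{}{d})=\Cartanmat_{d,e}$. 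Writing the cell filtration of $\Proj{}{d}$ as $0=\mathsf P_0\subset\cdots\subset\mathsf P_\delta=\Proj{}{d}$ with $\mathsf P_i/\mathsf P_{i-1}\cong \CellGen{f_i}^{\oplus m_{f_i,d}}$ (each $f\in\Delta$ appearing exactly once in the list $f_1,\dots,f_\delta$), applying the exact functor $\Hom_{\alg A}(\Proj{}{e},-)$ to the short exact sequences $0\to\mathsf P_{i-1}\to\mathsf P_i\to \CellGen{f_i}^{\oplus m_{f_i,d}}\to 0$ yields
\begin{equation*}
  \Cartanmat_{d,e}=\sum_{f\in\Delta}m_{f,d}\,\Decompmat_{f,e}.
\end{equation*}

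The core step is then to show $m_{f,d}=\Decompmat_{f,d}$, a cellular analogue of BGG reciprocity. Here the plan is structural: write $\Proj{}{d}=\alg A\eta_d$ for a primitive idempotent $\eta_d$ and lift Mathas' module filtration to the algebra level. The cellular axiom~\eqref{eq:axiomcellularity}, together with its $*$-image~\eqref{eq:staronaxiomcellularity}, provides a two-sided ideal filtration $\alg A^{<f}\subset\alg A^{\leq f}$ whose subquotient is isomorphic, as a bimodule, to $\CellGen{f}\otimes_{\mathbb{C}}(\CellGen{f})^\circ$, where $(\CellGen{f})^\circ$ denotes the right cell module on the same index set $M(f)$. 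Right-multiplying this two-sided filtration by $\eta_d$ converts it into a cell filtration of $\Proj{}{d}$ whose $f$-th subquotient is $\CellGen{f}\otimes (\CellGen{f})^\circ\eta_d$, so
\begin{equation*}
  m_{f,d}=\dim (\CellGen{f})^\circ\eta_d=[(\CellGen{f})^\circ:(\Irre{}{d})^\circ].
\end{equation*}
The anti-involution $*$ of the cellular datum exchanges left and right modules, sending $\CellGen{f}$ and $\Irre{}{d}$ to $(\CellGen{f})^\circ$ and $(\Irre{}{d})^\circ$ respectively while preserving the cellular bilinear form of Definition~\ref{def:biliGeneric}; hence $[(\CellGen{f})^\circ:(\Irre{}{d})^\circ]=[\CellGen{f}:\Irre{}{d}]=\Decompmat_{f,d}$.

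Substituting $m_{f,d}=\Decompmat_{f,d}$ into the formula obtained in the first paragraph yields $\Cartanmat_{d,e}=\sum_{f\in\Delta}\Decompmat_{f,d}\Decompmat_{f,e}=(\Decompmat^t\Decompmat)_{d,e}$, which is the desired identity. The hard part is the middle step: verifying that the abstract cell filtration of $\Proj{}{d}$ guaranteed by Mathas coincides with the one induced by the cellular ideal filtration of $\alg A$ after multiplication by $\eta_d$, and that the $*$-involution translates left-module composition multiplicities into right-module ones. Once these structural compatibilities are in place, the final matrix identity is an immediate arithmetic consequence of the two expressions for $\Cartanmat_{d,e}$.
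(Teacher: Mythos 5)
The paper does not actually prove this statement --- it is imported verbatim as Theorem~3.7 of Graham and Lehrer \cite{GraCel96} --- so there is no in-paper argument to compare against; judged on its own, your proof is correct and is essentially the standard (and, up to presentation, the original) one. The two halves are exactly right: additivity of $[\,\cdot\,:\Irre{}{e}]$ along a cell filtration of $\Proj{}{d}$ gives $\Cartanmat_{d,e}=\sum_{f}m_{f,d}\Decompmat_{f,e}$, and the reciprocity $m_{f,d}=\Decompmat_{f,d}$ follows from the bimodule isomorphism $\alg{A}^{\leq f}/\alg{A}^{<f}\cong \CellGen{f}\otimes_{\CC}(\CellGen{f})^{\circ}$ supplied by axioms \eqref{eq:axiomcellularity} and \eqref{eq:staronaxiomcellularity}, since right multiplication by the primitive idempotent $\eta_d$ is exact, $\dim\bigl((\CellGen{f})^{\circ}\eta_d\bigr)=[(\CellGen{f})^{\circ}:(\Irre{}{d})^{\circ}]$, and the symmetry of the cellular form guarantees that $*$ carries $\rad\CellGen{d}$ onto the radical of $(\CellGen{d})^{\circ}$, hence heads to heads, so that right- and left-module multiplicities with the same labels agree. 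The one step you single out as delicate --- matching your induced filtration with the abstract one of Lemma~\ref{lem:FiltrationProjectiveModuleCellular} --- is not actually needed: any single cell filtration of $\Proj{}{d}$ suffices for the first displayed formula, and in any case the unitriangularity of $\Decompmat$ (Proposition~\ref{prop:Disunitriangular}) forces the multiplicities $m_{f,d}$ to be independent of the chosen filtration. So there is no gap, only routine verifications to write out.
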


Examples for $\bnk{n,k}$ are given before the general results are proved. When $\bnk{n,k}$ is semisimple, then each cellular module is irreducible and the Wedderburn-Artin theorem gives the decomposition of the algebra, as a module over itself: 
\begin{equation*}
  \bnk{n,k} \simeq \bigoplus_{d\in\Delta_{n,k}} (\dim\Cell{n,k}{d})\Cell{n,k}{d}.
\end{equation*}
Each cellular module is thus a principal module when the algebra is semisimple.

We pursue the example of the algebra $\bnk{4,2}$ with $\ell =3$. The corresponding line in the Bratteli diagram was given in the previous section. It was noted earlier that the three modules $\Cell{4,2}{0}$,$\Cell{4,2}{2}$ and $\Cell{4,2}{6}$ are irreducible.
The proposition \ref{prop:ExactSequenceMorphiGLbnk} gives the isomorphisms 
\begin{equation*}
  \rad\Cell{4,2}{4} = \Radi{4,2}{4} \simeq \Irre{4,2}{6} = \Cell{4,2}{6} \qquad \text{and} \qquad  \Cell{4,2}{0} \simeq  \Irre{4,2}{4}.
\end{equation*}
A word of warning: the use of only $\Cell{4,2}{0}$ instead of $\Radi{4,2}0$ is deliberate, as the bilinear form $\langle\,\cdot\, ,\,\cdot\,\rangle_{4,2}^0$ is identically zero. Thus $0\not\in \Delta_{4,2}^0$ and proposition \ref{prop:kernelbiliisJacobsonRadical} cannot be used. This information can be condensed in the two following short exact sequences
\begin{equation*}
  \begin{tikzcd}
    0 \arrow[r] & \Irre{4,2}{6} \arrow[r] & \Cell{4,2}{4} \arrow[r] & \Irre{4,2}{4} \arrow[r] & 0,
  \end{tikzcd}
\end{equation*}
\begin{equation*}
  \begin{tikzcd}
    0 \arrow[r] & \Irre{4,2}{4} \arrow[r] & \Cell{4,2}{0} \arrow[r] & 0.
  \end{tikzcd}
\end{equation*}
The computation of the matrix $\Decompmat$ is now straightfoward. The (ordered) sets $\Delta_{4,2}=\{0,2,4,6\}$ and $\Delta_{4,2}^0=\{2,4,6\}$ index the rows and the columns respectively and $\Decompmat$ is thus $4\times 3$. Since both $\Cell{4,2}{2}$ and $\Cell{4,2}{6}$ are irreducible, the lines $d=2$ and $d=6$ of $\Decompmat$ contain a single non-zero element: $\Decompmat_{2,2}=\Decompmat_{6,6}=1$. From the first exact sequence above, the composition series is $0\subset \Radi{4,2}{4} \subset \Cell{4,2}{4}$ with quotients $\Cell{4,2}{4}/ \Radi{4,2}{4} \simeq \Irre{4,2}{4}$ and $\Radi{4,2}{4} \simeq \Irre{4,2}{6}$, in other words, $\Decompmat_{4,4} =\Decompmat_{4,6} =1$. The second sequence provides $\Decompmat_{0,4}=1$. Every other term is 0 and that completes the search for the composition matrix, which in turn gives the Cartan matrix via theorem \ref{thm:CartanisDTD}:
\begin{equation*}
  \Decompmat = 
\begin{pmatrix}
                0 & 1 & 0\\
		1 & 0 & 0\\
		0 & 1 & 1\\
		0 & 0 & 1
\end{pmatrix}, \qquad \textrm{and}\qquad \Cartanmat = \Decompmat^t\Decompmat = 
\begin{pmatrix}
	       1 & 0 & 0\\
	       0 & 2 & 1\\
	       0 & 1 & 2
\end{pmatrix}.
\end{equation*}
Lemma \ref{lem:FiltrationProjectiveModuleCellular} gives, for the projective $\Proj{4,2}{6}$, a filtration
\begin{equation*}
  0 \subset \mathsf M_1 \subset \mathsf M_2 \subset \Proj{4,2}{6},
\end{equation*}
with at most two intermediate modules $\mathsf M_1$ and $\mathsf M_2$, because $\Cartanmat$ gives three composition factors: $\Irre{4,2}{4}$ (once) and $\Irre{4,2}{6}$ (twice). Since $\Proj{4,2}6$ is the projective cover of $\Irre{4,2}6$, the rightmost quotient $\Proj{4,2}6/\mathsf M_2$ must be a cellular module whose head is the irreducible $\Irre{4,2}6$. There is only one choice possible and $\Proj{4,2}6/\mathsf M_2\simeq\Cell{4,2}6$. That leaves two composition factors: $\Irre{4,2}{6}$ and $\Irre{4,2}{4}$. The irreducible $\Irre{4,2}{6}$ appears as composition factor only in $\Cell{4,2}4$ and $\Cell{4,2}6$. However the next quotient $\mathsf M_2/\mathsf M_1$ cannot be $\Cell{4,2}6$ because the cellular module with $d=6$ has already appeared and cannot appear again according to lemma \ref{lem:FiltrationProjectiveModuleCellular}. Moreover $\mathsf M_2/\mathsf M_1$ cannot be either $\Irre{4,2}4$ which is not cellular. So $\mathsf M_1$ must be zero and the quotient $\mathsf M_2/\mathsf M_1\simeq \Cell{4,2}4$. The filtration is thus $0=\mathsf M_1\subset\mathsf M_2=\Cell{4,2}4\subset\Proj{4,2}6$. (Note that this filtration, given by lemma \ref{lem:FiltrationProjectiveModuleCellular}, is not a composition series as $\Cell{4,2}{4}$ is reducible. But $0\subset \Radi{4,2}{4}\subset \Cell{4,2}{4}\subset \Proj{4,2}{6}$ is such a composition series where indeed $\Radi{4,2}{4}\simeq \Irre{4,2}{6}$, $\Cell{4,2}{4}/\Radi{4,2}{4} = \Irre{4,2}{4}$ and $\Proj{4,2}{6}/\Cell{4,2}{4} \simeq \Irre{4,2}{6}$.) The filtration $0\subset \Cell{4,2}{4} \subset \Proj{4,2}{6}$ indicates that the short sequence 
\begin{equation*}
  \begin{tikzcd}
    0 \longrightarrow \Cell{4,2}{4} \longrightarrow \Proj{4,2}{6}\longrightarrow \Cell{4,2}{6} \longrightarrow 0
  \end{tikzcd}
\end{equation*}
is exact, and it does not split since the projective cover of $\Irre{4,2}6$ is indecomposable.
The same reasoning for $\Proj{4,2}{4}$ yields the filtration $0\subset \Cell{4,2}0\subset\Proj{4,2}4$ and the short non-split exact sequence:
\begin{equation*}
    0 \longrightarrow \Cell{4,2}{0} \longrightarrow \Proj{4,2}{4} \longrightarrow \Cell{4,2}{4} \longrightarrow 0.
\end{equation*}
These examples cover the main ideas of the proof of the following theorem.

\begin{proposition}\label{prop:projBnk}
  The set $\{\Proj{n,k}{d} \mid d\in \Delta_{n,k}^0\}$ forms a complete set of non-isomorphic indecomposable projective modules of $\bnk{n,k}$. When $d$ is critical or when there is no $d^-$ forming a symmetric pair with $d$, then $\Proj{n,k}{d}\simeq \Cell{n,k}{d}$; otherwise, $\Proj{n,k}{d}$ satisfies the non-split short exact sequence
\begin{equation}\label{eq:sesProjBnk}
    0 \longrightarrow \Cell{n,k}{d^-} \longrightarrow\Proj{n,k}{d} \longrightarrow\Cell{n,k}{d} \longrightarrow 0.
\end{equation}
\end{proposition}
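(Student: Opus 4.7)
The plan is to combine Graham and Lehrer's identity $\Cartanmat = \Decompmat^t \Decompmat$ (Theorem \ref{thm:CartanisDTD}) with Mathas's filtration lemma (Lemma \ref{lem:FiltrationProjectiveModuleCellular}), following exactly the model worked out above for $\bnk{4,2}(\beta)$ at $\ell = 3$.

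First I would compute the decomposition matrix $\Decompmat$ directly from Propositions \ref{prop:criticalissimple} and \ref{prop:ExactSequenceMorphiGLbnk}. Row $d$ of $\Decompmat$ carries the single entry $\Decompmat_{d,d}=1$ when $d$ is critical; for $d$ non-critical, the exact sequence $0 \to \Irre{n,k}{d^+} \to \Cell{n,k}d \to \Irre{n,k}d \to 0$, together with the convention that $\Irre{n,k}e$ is set to zero whenever $e \notin \Delta_{n,k}^0$, yields $\Decompmat_{d,d}=1$ whenever $d\in\Delta_{n,k}^0$ and $\Decompmat_{d,d^+}=1$ whenever $d^+\in\Delta_{n,k}^0$, with every other entry of the row vanishing. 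The rows indexed by $d\in\Delta_{n,k}\setminus\Delta_{n,k}^0$ (which arise only in the exceptional case $\ell = k+1$ with $d<k$, per Proposition \ref{thm:deltazero}) still contribute a single off-diagonal $1$ and must be retained when assembling the Cartan matrix.

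Computing $\Cartanmat_{e,e'} = \sum_d \Decompmat_{d,e}\Decompmat_{d,e'}$ then gives $\Cartanmat_{e,e}=2$ when $e$ has a left neighbor $e^-$ in $\Delta_{n,k}$, $\Cartanmat_{e,e}=1$ otherwise (in particular for $e$ critical), and $\Cartanmat_{e,e'}=1$ precisely when $e$ and $e'$ are immediate neighbors in the orbit $[e]$. This fixes the composition multiplicities $[\Proj{n,k}d:\Irre{n,k}e]$ once and for all. I would then apply Mathas's lemma to $\Proj{n,k}d$: it admits a filtration whose successive quotients are cellular modules with each $\Cell{n,k}e$ appearing at most once. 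Since $\Proj{n,k}d$ is the projective cover of $\Irre{n,k}d$, its head is $\Irre{n,k}d$, and the top quotient of the filtration is therefore the unique cellular module whose head is $\Irre{n,k}d$, namely $\Cell{n,k}d$.

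When $d$ is critical or has no $d^-$ in $\Delta_{n,k}$, the Cartan row already accounts for every composition factor of $\Cell{n,k}d$, so no further filtration step is needed and $\Proj{n,k}d \simeq \Cell{n,k}d$. Otherwise, subtracting the composition factors of $\Cell{n,k}d$ (namely $\Irre{n,k}d$, plus $\Irre{n,k}{d^+}$ if $d^+\in\Delta_{n,k}^0$) from those of $\Proj{n,k}d$ leaves exactly $\Irre{n,k}d$ together with $\Irre{n,k}{d^-}$ (where the latter is understood to vanish when $d^-\notin\Delta_{n,k}^0$), which are precisely the composition factors of $\Cell{n,k}{d^-}$. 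The filtration therefore reduces to the two nonzero pieces $\Cell{n,k}{d^-}$ and $\Cell{n,k}d$, producing the short exact sequence \eqref{eq:sesProjBnk}; non-splitting is automatic from the indecomposability of $\Proj{n,k}d$. Completeness of $\{\Proj{n,k}d \mid d\in\Delta_{n,k}^0\}$ is then standard: indecomposable projectives are in bijection with simples, and the latter are classified by $\Delta_{n,k}^0$ in Proposition \ref{prop:kernelbiliisJacobsonRadical}. I do not expect a serious obstacle here; the one step requiring care is ruling out that some other cellular $\Cell{n,k}{d'}$ could appear in the filtration, which however is forced by the vanishing of $\Cartanmat_{d,d'}$ for any $d'$ not a neighbor of $d$.
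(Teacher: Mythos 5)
Your proposal follows essentially the same route as the paper's own proof: compute $\Decompmat$ from Propositions \ref{prop:criticalissimple} and \ref{prop:ExactSequenceMorphiGLbnk}, obtain $\Cartanmat=\Decompmat^t\Decompmat$ via Theorem \ref{thm:CartanisDTD}, and pin down the cellular filtration of $\Proj{n,k}{d}$ with Lemma \ref{lem:FiltrationProjectiveModuleCellular}, the head condition, and indecomposability for non-splitting. Your explicit bookkeeping of the rows indexed by $\Delta_{n,k}\setminus\Delta_{n,k}^0$ (the $\ell=k+1$, $d<k$ case) is a welcome touch of care, but the argument is the same.
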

\begin{proof}
  When $d$ is critical or $d$ is alone in its orbit $[d]$, $\Cell{n,k}{d}$ is irreducible and appears as composition factor in no other cellular modules by proposition \ref{prop:ExactSequenceMorphiGLbnk}. Its line in the matrix $\Cartanmat$ thus contains a single non-zero element and $\Proj{n,k}d=\Cell{n,k}d=\Irre{n,k}d$.
 
Let $d$ be non-critical and suppose that $[d]$ contains at least one element distinct from $d$. Proposition \ref{prop:ExactSequenceMorphiGLbnk} gives the non-zero composition multiplicities: $\Decompmat_{d,e}$ is non-zero and equal to $1$ if and only if $e$ is either $d$ or $d^+$. (Of course $d^+$ must belong to $\Delta_{n,k}$ for $\Decompmat_{d,d^+}$ to be non-zero.) Theorem \ref{thm:CartanisDTD} gives
\begin{equation}\label{eq:SumFormulaCartanDecompositionMatrix}
    \Cartanmat_{d,e} = [\Proj{n,k}{d} : \Irre{n,k}{e}] = \sum_{f=0}^{\min(d,e)} \Decompmat_{f,d} \Decompmat_{f,e}.
\end{equation}

Suppose first that $d^-$ is not in $\Delta_{n,k}$. Then $\Decompmat_{f,d}$ is non-zero only for $f=d$  and
$$\Cartanmat_{d,d}=\Decompmat_{d,d}\times\Decompmat_{d,d}=1,\qquad\textrm{and}\qquad
\Cartanmat_{d,d^+}=\Decompmat_{d,d}\times\Decompmat_{d,d^+}=1$$
and all other $\Cartanmat_{d,f}$, $f\in\Delta_{n,k}^0$, are zero. The projective $\Proj{n,k}{d}$ will thus have precisely two composition factors, $\Irre{n,k}{d}$ and $\Irre{n,k}{d^+}$, and $\Irre{n,k}d$ is the head of $\Proj{n,k}d$ since the latter is the projective cover of the former. One possibility for the filtration given in lemma \ref{lem:FiltrationProjectiveModuleCellular} is $0\subset \Irre{n,k}{d^+}\subset\Proj{n,k}d$, but the quotient $\Proj{n,k}d/\Irre{n,k}{d^+}\simeq\Irre{n,k}{d}$ is not a cellular as required by the lemma. The only other possibility is $0\subset\Proj{n,k}d$ and the quotient $\Proj{n,k}d/0$ must be isomorphic to a cellular module, according again to lemma \ref{lem:FiltrationProjectiveModuleCellular}. It can be only $\Cell{n,k}d$ and $\Proj{n,k}d=\Cell{n,k}d$.

Suppose finally that $d^-$ belongs to $\Delta_{n,k}^0$.  As $\Decompmat_{d,d} =1$ and $\Decompmat_{d^-,d}=1$, the sum \eqref{eq:SumFormulaCartanDecompositionMatrix} gives 
$$\Cartanmat_{d,d}=\Decompmat_{d^-,d}\times\Decompmat_{d^-,d}+\Decompmat_{d,d}\times\Decompmat_{d,d}=2\qquad\textrm{and}\qquad\Cartanmat_{d,d^-}=\Decompmat_{d^-,d}\times\Decompmat_{d^-,d^-}=1.$$
Moreover, if $d^+\in\Delta_{n,k}$, then there will also be a contribution $\Cartanmat_{d, d^+}=1$ as in the previous case. This means that $\Proj{n,k}{d}$ has up to four composition factors : $\Irre{n,k}{d^-}$, $\Irre{n,k}{d}$ twice and, if $d^+\in\Delta_{n,k}$, $\Irre{n,k}{d^+}$. The filtration $0\subset M_1 \subset \dots \subset \mathsf M_{\delta-1} \subset \mathsf M_{\delta} = \Proj{n,k}{d}$  of lemma \ref{lem:FiltrationProjectiveModuleCellular} is needed to close the argument. As $\Proj{n,k}{d}$ is the projective cover of $\Irre{n,k}{d}$, it follows that the quotient $\Proj{n,k}{d}/\mathsf M_{\delta-1}$ must be a sum of isomorphic copies of $\Cell{n,k}{d}$. The composition factors of $\Cell{n,k}{d}$ are $\Irre{n,k}{d}$ and, if $d^+\in\Delta_{n,k}$, $\Irre{n,k}{d^+}$. If $d^+\not\in\Delta_{n,k}$, the quotient $\Proj{n,k}{d}/\mathsf M_{\delta-1}$ cannot be a sum of two copies of $\Cell{n,k}{d}$ as the only composition factor left would be $\Irre{n,k}{d^-}$ which is not cellular. So this first quotient $\Proj{n,k}{d}/\mathsf M_{\delta-1}$ contains precisely one copy of $\Cell{n,k}{d}$, leaving the composition factors $\Irre{n,k}{d^-}$ and $\Irre{n,k}{d}$ to be accounted in the next quotients. Neither is by itself a cellular module, so they must form the next quotient $\mathsf M_{\delta-1}/\mathsf M_{\delta-2}$ and this quotient must be $\Cell{n,k}{d^-}$. The filtration then reads $0\subset \Cell{n,k}{d^-} \subset \Proj{n,k}{d}$. The exactnesss of sequence \eqref{eq:sesProjBnk} is thus proved and, since $\Proj{n,k}d$ is indecomposable, it does not split.
\end{proof}

Propositions \ref{prop:criticalissimple}, \ref{eq:sesbnk} and \ref{prop:projBnk} end the proof of theorem \ref{thm:bnk}.


%
%

\section{Concluding remarks}\label{sec:conclusion}

The main results of this paper are described in section \ref{sec:repbnk} and will not be repeated here. Instead the present remarks are devoted to list the main steps used to reach the results. A list of these key steps might help in the study of other algebras obtained from a cellular algebra $\alg A$ by left- and right-multiplication by an idempotent $P$. Here are these steps.

\noindent (1) Assuming that $\alg A$ is cellular, the algebra $\alg B=P\alg A P$ will be too by proposition \ref{prop:cellularityidempotent} from K\"onig's and Xi's original result if $P^*=P$. The easy construction of the cellular datum for $\alg B$ relies however on further hypothesis on $P$, namely that the non-zero elements of $P\im\, C_{\alg A}P$ form a basis of $\alg B$. In the case of $\bnk{n,k}$, this property was not too difficult to verify because the idempotent was a sum of the identity and elements with less through lines.

\noindent (2) The explicit formula \eqref{prop:MDRRGramDetFormula} for the determinant of the Gram matrix was crucial. So was also the recursive expression of lemma \ref{prop:recursivegram} for the bilinear form $\langle\,\cdot\, ,\,\cdot\,\rangle_{\alg B}^d$. This recursive formula played a role at several steps: the computation of dimensions of radicals and irreducible modules, the existence of non-zero morphisms inherited from those defined by Graham and Lehrer for $\tl n$ and, in a roundabout way, the identification of the structure of the cellular modules in proposition \ref{prop:ExactSequenceMorphiGLbnk}.

\noindent (3) The proof of proposition \ref{thm:cyclic} on the cyclicity of the cellular modules in the new algebra was used to get the non-split condition on the exact sequences of proposition \ref{prop:ExactSequenceMorphiGLbnk}. It relied heavily on a diagrammatic construction. Having all cellular modules to be cyclic is a remarkable property to hold and indeed Geetha and Goodman introduced the notion of cyclic cellular algebras \cite{Geetha2013} to describe such cellular algebras. Most interesting cellular algebras are cyclic, for example: Temperley-Lieb algebras, Hecke algebras of type $A_{n-1}$, cyclotomic Hecke algebra and $q$-Schur algebras. But, is $\alg B=P\alg A P$ cyclic if $\alg A$ is and $P$ is one of its idempotents? Or are there further conditions needed on the commutative ring $R$ or on the idempotent $P$?

Of course applying the present method to other algebras of the form $P\alg A P$, in particular when $\alg A$ is not a Temperley-Lieb algebra, may run into other difficulties. But the above three steps appear to be the main stumbling blocks.

%
%

\section*{Acknowledgements}

We thank Alexi Morin-Duchesne and David Ridout for their interest in the project and useful comments, Eveliina Peltola, Nicolas Cramp\'e and Lo\"\i{}c Poulain d'Andecy for explaining their recent results, and Pierre-Alexandre Mailhot for his help encoding the diagrams. We also wish to thank the referees for bringing to our attention Jacobsen's and Saleur's early work on what would become the seam algebra and for their careful reading of the manuscript. ALR holds scholarships from the Fonds de recherche Nature et technologies (Qu\'ebec) and from EOS Research Project 30889451, and YSA a grant from the Natural Sciences and Engineering Research Council of Canada. This support is gratefully acknowledged.

\raggedright
\singlespacing
\nocite{*} 
\bibliographystyle{SciPost_bibstyle}
\bibliography{ALR_YSA_rep_theo_seam_alg_v2}

\end{document}